\newtheorem{theorem}{Theorem}[chapter]
\newtheorem{definition}[theorem]{Definition}
\newtheorem{prop}[theorem]{Proposition}
\newtheorem{lemma}[theorem]{Lemma}
\newtheorem{corol}[theorem]{Corollary}
\newtheorem{remark}[theorem]{Remark}
\newtheorem{example}[theorem]{Example}
\newcommand{\dem}{\noindent{\normalsize \textsl{Proof}:\;}}
\newcommand{\proof}{\noindent{\normalsize \textsl{Proof}:\;}}
\newcommand{\qed}{\hfill$\square$}
\newcommand{\rqed}{\hfill$\diamond$}
\newcommand{\ds}{\displaystyle}
\newcommand{\R}{\mathbb{R}} 
\newcommand{\ak}{1\leq \alpha \leq k}
\newcommand{\n}{1\leq i \leq n}
\renewcommand{\emph}[1]{{\bfseries\itshape{#1}}}
\def\r{\ensuremath{\mathbb{R}}}
\def\rk{{\mathbb R}^{k}}
\def\tkq{T^1_kQ}
\def\tkqh{(T^1_k)^*Q}
\def\rktkq{\rk\times T^1_kQ}
\def\rktkqh{\rk\times (T^1_k)^*Q}
\def\rkq{\rk \times Q}
\def\derpar#1#2{\ds\frac{\partial{#1}}{\partial{#2}}}
\def\derpars#1#2#3{\ds\frac{\partial^2{#1}}{\partial{#2}{\partial
{#3}}}}
\def\vf{\mathfrak{X}}
\def\d{{\rm d}}
\def\df{{\mit\Omega}}
\def\Lie#1#2{{\mathcal{L}}_{#1}#2}
\def\Cinfty{{\rm C}^\infty} 
\def\longto{\longrightarrow}
\def\Lag{\mathcal L}
\newcounter{epigrafe}[subsection]
\newcounter{subepigrafe}[epigrafe]
\def\r{\ensuremath{\mathbb{R}}}
\def\rk{{\mathbb R}^{k}}
\def\tkq{T^1_kQ}
\def\tkqh{(T^1_k)^*Q}
\def\rktkq{\rk\times T^1_kQ}
\def\rktkqh{\rk\times (T^1_k)^*Q}
\def\rkq{\rk \times Q}
\def\rkq{\rk \times Q}
\def\beq{\begin{equation}}
\def\eeq{\end{equation}}
\def\bea{\begin{eqnarray}}
\def\eea{\end{eqnarray}}
\def\beann{\begin{eqnarray*}}
\def\eeann{\end{eqnarray*}}
\def\beasn{\begin{sneqnarray}}
\def\eeasn{\end{sneqnarray}}
\def\ben{\begin{enumerate}}
\def\een{\end{enumerate}}
\def\bit{\begin{itemize}}
\def\eit{\end{itemize}}
    \title{Methods of Differential Geometry \\ in Classical Field
    Theories:\\ $k$-symplectic and $k$-cosymplectic approaches}
    \author{M. de Le\'{o}n, M. Salgado $\&$ S. Vilari\~{n}o}
\begin{document}

\parskip=8pt


    \frontmatter

            \pagestyle{empty}

            \maketitle

            \newpage\mbox{}\newpage

            \pagestyle{plain}

\chapter*{Glossary of symbols}

\begin{tabular}{ll}
 $V, W,\ldots $ & Vector spaces\\
 $Q, M, N,...$ & $\mathcal{C}^\infty$ finite-dimensional manifolds \\
 $\mathfrak{X}(M)$ & Set of vector fields on $M$\\
 $\mathfrak{X}^k(M)$ & Set of $k$-vector fields on $M$\\
 $\Lambda^lM\to M$ & Bundle of $l$-forms\\
 $\tau\colon TQ\to Q$ & Tangent bundle\\
 $\pi\colon T^*Q\to Q$ & Cotangent bundle\\
 $(q^i)$ & local coordinate system on $Q$\\
 $(q^i,p_i)$ & local coordinate system on $T^*Q$\\
 $(q^i,v^i)$ & local coordinate system on $TQ$\\
 $f\colon U\subset N\to M$ & smooth $(\mathcal{C}^\infty)$ mapping\\
 $f_*$ or $Tf$ & Tangent map to $f\colon M\to N$\\
 $d$ & Exterior derivative\\
 $\iota_v$ & Inner product\\
 $\mathcal{L}_X$ & Lie derivative\\
 $\theta$ & Liouville $1$-form\\
 $\omega$ & Canonical symplectic form\\
 $H$ & Hamiltonian function\\
 $X_H$ & Hamiltonian vector field\\
 $(X_q)^v_{v_q}$ & Vertical lift of $X_q$ to $TQ$ at $v_q$\\
 $\Delta$ & Liouville vector field\\
 $J$ & Vertical endomorphism\\
 $\Gamma$ & Second order differential equation\\
 $L$ & Lagrangian function\\
 $E_L$ & Energy\\
 $FL$ & Legendre transformation\\
 $\theta_L$ & Pullback of $\theta$ by $FL$\\
 $\omega_L$ & Pullback of $\omega$ by $FL$\\
 $\pi^k\colon(T^1_k)^*Q\to Q$ & Cotangent bundle of $k^1$-covelocities\\
 $\pi^{k,\alpha}\colon (T^1_k)^*Q\to T^*Q$ & Projection over $\alpha$ copy of $T^*Q$\\
 $\tau^k\colon T^1_kQ\to Q$ & Tangent bundle of $k^1$-velocities\\
 $(q^i,p_i^\alpha)$ & local coordinate system on $(T^1_k)^*Q$\\
 $(q^i,v^i_\alpha)$ & local coordinate system on $T^1_kQ$\\
 $\phi^{(1)}$ & First prolongation of maps to $T^1_kQ$\\
 $\phi^{[1]}$ & First prolongation to $\rktkq$\\
 $\{J^1,\ldots, J^k\}$ & $k$-tangent structure\\
 $Z^C$ & Complete lift of a vector field $Z$\\

\end{tabular}

 \newpage
\mbox{}
\thispagestyle{empty} 
        \pagestyle{plain}

        \tableofcontents

            \pagestyle{plain}

\chapter{Introduction}

As it is well know, symplectic geometry is the natural arena to develop classical mechanics; indeed, a symplectic manifold is locally as a cotangent bundle $T^*Q$ of a manifold $Q$, so that the canonical coordinates $(q^i,p_i)$ can be used as coordinates for the position $(q^i)$ and the momenta $(p_i)$. The symplectic form is just $\omega=dq^i\wedge dp_i$, and a simple geometric tool permits to obtain the Hamiltonian vector field $X_H$ for a Hamiltonian $H=H(q^i,p_i)$. The integral curves of $X_H$ are just the solution of the Hamilton equations
\[
    \frac{dq^i}{dt} = \frac{\partial H}{\partial p_i},\quad \frac{dp_i}{dt}=-\frac{\partial H}{\partial q^i}\,.
\]

In classical field theory, the Hamiltonian function is of the form
\[
    H=H(x^\alpha, q^i,p^\alpha_i)
\]
where $(x^1,\ldots, x^k)\in \mathbb{R}^k$, $q^i$ represent the components of the fields and $p^\alpha_i$ are the conjugate momenta. In the Lagrangian description, the Lagrangian function is
\[
    L=L(x^\alpha, q^i, v^i_\alpha)
\]
where now $v^i_\alpha$ represent the derivations of the fields with respect to the space-time variables $(x^\alpha)$.

At the end of the sixties and the beginning of the seventies of the past century, there are some attempts to develop a convenient geometric framework to study classical field theories. This geometric setting was the so-called multisymplectic formalism, developed in a parallel but independent way by the Polish School led by W.M. Tulczyjew (see, for instance, \cite{{KT-79}Snia,Tulczy1}); H. Goldschmidt and S. Sternberg \cite{gs} and the Spanish School by P.L. Garcia and A. P\'{e}rez-Rend\'{o}n \cite{GP1, GP2}.

The idea was to consider, instead of the cotangent bundle $T^*Q$ of a manifold $Q$, its bundle of $k$-forms, $\bigwedge^kQ$. Indeed, $ \bigwedge^k Q$ is equipped with a tautological $k$-form where its differential is just a multisymplectic form. This approach was revised, among others, by G. Martin \cite{mar, mar2} and M. Gotay \textit{et al} \cite{Go1,Go2,Go3,Gymmsy,Gymmsy2} and, more recently, by F. Cantrijn \textit{et al} \cite{CIL-1996,CIL-1999} or M. Mu\~{n}oz-Lecanda \textit{et al} \cite{EMR-1996, EMR-1998,EMR-1999,EMR-00, ELMMR-2004}, among others.

An alternative approach is the so-called $k$-symplectic geometry, which is based on the Whitney sum of $k$ copies of the cotangent bundle $T^*Q$ instead of the bundle of exterior $k$-forms $\bigwedge^kQ$. The $k$-symplectic formalism is a natural generalization to field theories of the standard symplectic formalism in Mechanics. This formalism was developed in a parallel way in equivalent presentations by  C. G\"{u}nther in \cite{Gu-1987}, A. Awane \cite{Awane-1992, Awane-1994, Awane-2000}, L. K. Norris \cite{MN-2000,No2,No3,No4,No5} and de M. de Le\'{o}n \textit{et al} \cite{LMM-2002,LMS-88, LMS-91}.  In this sense, the $k$-symplectic formalism is used to give a geometric description of certain kinds of field theories: in a local description, those theories whose Lagrangian does not depend on the base coordinates,  $(x^1,\ldots, x^k)$ (typically, the space-time coordinates); that is, the $k$-symplectic formalism is only valid for Lagrangians $L(q^i,v^i_\alpha)$ and Hamiltonians $H(q^i,p^\alpha_i)$ that depend on the field coordinates $q^i$ and on the partial derivatives of the field $v^i_\alpha$, or the corresponding momenta $p^\alpha_i$.

G\"{u}nther's paper \cite{Gu-1987} gave a geometric Hamiltonian formalism for field theories. The crucial device is the introduction of a vector-valued generalization of a symplectic form called a polysymplectic form. One of the advantages of this formalism is that one only needs the tangent and cotangent bundle of a manifold to develop it. In \cite{MRS-2004} this formalism  has been revised and clarified.

Let us remark here that the polysymplectic formalism developed by I.V. Kanatchikov \cite{Kana} and the polysymplectic formalism developed by G. Sardanashvily {\it et al} \cite{Sarda2, Sd-95, ms,Sarda, Sd-95b}, based on a vector-valued form defined on some associated fiber bundle, is a different description of classical field theories of first order that the polysymplectic (or $k$-symplectic) formalism proposed by C. G\"{u}nther.

This book is devoted to review two of the most relevant approaches to the study of classical field theories of first order, say $k$-symplectic and $k$-cosymplectic geometry.

The book is structured as follows. Chapter \ref{chapter: Mechanics} is devoted to review the fundamentals of Hamilton and Lagrangian Mechanics; therefore, the Hamilton and Euler-Lagrange equations are derived on the cotangent and tangent bundles of the configurations manifold, and both are related through the Legendre transformation.

In Part \ref{Part2} we develop the geometric machinery  behind the classical field theories of first order when the Hamiltonian or the Lagrangian function do not depend explicitly on the space-time variables. The geometric scenario is the so-called $k$-symplectic geometry. Indeed, instead to consider the cotangent bundle $T^*Q$ of a manifold $Q$, we take the Whitney sum of $k$-copies of  $T^*Q$ and investigate its geometry. This study led to the introduction to a $k$-symplectic structure as a family of $k$ closed $2$-forms and a distribution satisfying some compatibility relations.

$k$-symplectic geometry allows us to derive the Hamilton-de Donder-Weyl equations. A derivation of these equations using a variational method is also included for the sake of completeness.

This part of the book also discusses the case of Lagrangian classical theory. The key geometric structure here is the so-called tangent bundle of $k^1$-covelocities, which can be defined using theory of jets, or equivalently as the Whitney sum of $k$ copies of the tangent bundle $TQ$ of a manifold $Q$. This geometric bundle $TQ\oplus \stackrel{k)}{\ldots}\oplus TQ$ lead us to define a generalization of the notion of vector fields, that is, a $k$-vector field on $Q$ as a section of the canonical fibration  $TQ\oplus \stackrel{k)}{\ldots}\oplus TQ\to Q$. $k$-vector fields will play in classical field theories the same role that vector fields on classical mechanics.

Additionally, tangent bundles of $k$-velocities have its own geometry, which is a natural extension of the canonical almost tangent structures on tangent bundles. Both descriptions, Hamiltonian and Lagrangian ones, can be related by an appropriate extension of the Legendre transformation.

In this part we also include a recent result on the Hamilton-Jacobi theory for classical field theories in the framework of $k$-symplectic geometry.

Part \ref{part k-cosymp} is devoted to extend the results in Part \ref{Part2} to the case of Hamiltonian and Lagrangian functions depending explicitly on the space-time variables.

The geometric setting is the so-called $k$-cosymplectic manifolds, which is a natural extension of cosymplectic manifold. Let us recall that cosymplectic manifolds are the odd-dimensional counterpart of symplectic manifolds.

Finally, in Part \ref{relation_k-cosym_multi} we relate the $k$-symplectic and $k$-cosymplectic formalism with the multisymplectic theory.

The book ends with two appendices where the fundamentals notions on symplectic and cosymplectic manifolds are presented.

Along this book, manifolds are smooth, real, paracompact, connected and $\mathcal{C}^\infty$. Maps are $\mathcal{C}^\infty$. Sum over crossed repeated indices is understood.


    \mainmatter
        \pagestyle{myheadings}

            \renewcommand{\chaptermark}[1]
                {\markboth{{\rm\thechapter\quad #1}}{}}
            \renewcommand{\sectionmark}[1]
                {\markright{{\rm\thesection\quad #1}}}
            \renewcommand{\subsectionmark}[1]
                {\markright{{\rm\thesubsection\quad #1}}}


\part{A review of Hamiltonian and Lagrangian Mechanics}\label{Part1}

\chapter{Hamiltonian and Lagrangian Mechanics}\label{chapter: Mechanics}

In this chapter we present a brief review of Hamiltonian and Lagrangian Mechanics; firstly on the cotangent bundle of an arbitrary manifold $Q$ (the Hamiltonian formalism) and then on the tangent bundle (the Lagrangian formalism). Finally, we consider the general theory on an arbitrary symplectic manifold.

In the last part of this chapter we give a review of the non autonomous Mechanics using cosymplectic structures.

A complete description of Hamiltonian and Lagrangian Mechanics can be found in \cite{AM-1978,Arnold-1978,{Arnold-1998}, god, Goldstein, HSS-2009, Holm-2008,lm,lr}. There exists an alternative description of the Lagrangian and Hamiltonian dynamics using the notion of Lagrangian submanifold, this description can be found in \cite{T1,T2}.

\section{Hamiltonian Mechanics}\label{Ham mec}
\index{Hamiltonian Mechanics}

In this section we present a review of the Hamiltonian Mechanics on the cotangent bundle of an arbitrary manifold $Q$. Firstly we review some results on vector spaces.

\subsection{Algebraic preliminaries}

\index{Form!on a vector space}
    By an \emph{exterior form} (or simply a form) on a vector space $V$, we mean an alternating multilineal function on that space with values in the field of scalars. The contraction of a vector $v\in V$ and an exterior form $\omega$ on $V$ will be denoted by $\iota_v\omega$.

Let $V$ be a real vector space of dimension $2n$, and  $\omega:V\times V \to \r$ a skew-symmetric bilinear form.  This   form allows us to define the map
$$\begin{array}{rcl} \flat : V &\to & V^* \\ \noalign{\medskip}
                        v & \to &\flat (v)=\iota_v\omega=\omega(v,-)\,.
\end{array}$$

\index{Form!Symplectic}

If $\omega$ is non degenerate $(\textstyle{i.e.,}\, \omega(v,w)=0,\;  \forall w \Rightarrow v=0)$ then $\omega$ is called a \emph{symplectic form} and, $V$ is said to be a \emph{symplectic vector space}.

\index{Symplectic!Vector space}
\index{Symplectic!Form}

Let us observe that when $\omega$ is non degenerate, the map $\flat$ is injective. In fact,
 $$\flat(v)=0 \Leftrightarrow \omega(v,w)=0,  \quad \forall w\in V \Leftrightarrow v=0\, . $$

 In this case, since $\flat$ is an injective mapping between vector spaces of the same dimension, we deduce that it is an isomorphism.   Let us observe that the matrix of $\flat$ coincide with the matrix $(\omega_{ij})$ of  $\omega$ with respect to an arbitrary basis $\{e_i\}$ of $V$. The inverse isomorphism will be denoted by $ \sharp : V^* \to  V $.

The proof of the following proposition is a direct computation.

\begin{prop}\label{algprop}
Let $(V,\omega)$ be a symplectic vector space. Then there exists a basis (Darboux basis) $\{e_1, \ldots, e_n,u_1, \ldots, u_n\}$ of $V$, such that

 \begin{enumerate}
    \item $\omega =\ds\sum_{i=1}^n e^i\wedge u^i$.
     \item The isomorphisms $\flat$  and $\sharp$ associated with $\omega$ are characterized  by
$$
 \begin{array}{lcrlcr}
   \flat(e_i)&=&u^i, & \flat (u_i)&=&-e^i, \\ \noalign{\medskip}
   \sharp(e^i)&=&- u_i, & \sharp (u^i)&=&e_i\,.
     \end{array}
 $$
\end{enumerate}

\end{prop}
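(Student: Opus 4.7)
The plan is to prove the existence of a Darboux basis by induction on $n$ (where $\dim V = 2n$), and then to read off the second statement as a direct dual-basis computation. The existence of such a basis is the only nontrivial part; everything else follows by linear algebra using the definition of $\flat$ and $\sharp$.

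For the base case $n=1$, I would pick any nonzero $e_1\in V$; by nondegeneracy of $\omega$ there is some $v\in V$ with $\omega(e_1,v)\neq 0$, and setting $u_1 = v/\omega(e_1,v)$ gives $\omega(e_1,u_1)=1$. Since $\omega$ is skew-symmetric, $e_1$ and $u_1$ are linearly independent and hence form a basis of the $2$-dimensional space $V$. For the inductive step, assume the result for all symplectic vector spaces of dimension $2(n-1)$. Given $V$ of dimension $2n$, construct $e_1,u_1$ as above and let $W=\mathrm{span}(e_1,u_1)$, with symplectic complement
\[
W^{\perp}=\{v\in V:\omega(v,w)=0\text{ for all }w\in W\}.
\]
The key claim is that $V=W\oplus W^{\perp}$. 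For the intersection, writing $v=ae_1+bu_1\in W\cap W^{\perp}$ and evaluating $\omega(v,e_1)$ and $\omega(v,u_1)$ forces $a=b=0$. For the sum, given any $v\in V$ I set $a=\omega(v,u_1)$ and $b=-\omega(v,e_1)$, and check by direct substitution that $v-(ae_1+bu_1)$ lies in $W^{\perp}$. The restriction of $\omega$ to $W^{\perp}$ is again nondegenerate (otherwise a nonzero radical vector of $\omega|_{W^{\perp}}$ would lie in the radical of $\omega$ on $V$, contradicting nondegeneracy), so the induction hypothesis produces a Darboux basis $\{e_2,\ldots,e_n,u_2,\ldots,u_n\}$ of $W^{\perp}$, which together with $\{e_1,u_1\}$ gives the desired basis of $V$.

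Once the basis is in hand, statement (1) follows by verifying the equality on pairs of basis vectors: both sides give $\delta_{ij}$ on $(e_i,u_j)$ and vanish on $(e_i,e_j)$ and $(u_i,u_j)$. For statement (2), I would simply evaluate $\flat(e_i)$ and $\flat(u_i)$ on the basis using $\flat(v)(w)=\omega(v,w)$, obtaining $\flat(e_i)(e_j)=0$, $\flat(e_i)(u_j)=\delta_{ij}$, hence $\flat(e_i)=u^i$, and analogously $\flat(u_i)=-e^i$. The formulas for $\sharp$ then come from inverting these relations.

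The main obstacle in this argument is the splitting $V=W\oplus W^{\perp}$ together with the fact that $\omega|_{W^{\perp}}$ remains nondegenerate; everything else is essentially bookkeeping. I would make sure in the write-up to keep these two facts cleanly separated from the arithmetic of the dual-basis identification.
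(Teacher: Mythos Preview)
Your argument is correct and is the standard inductive construction of a Darboux basis via symplectic complements, followed by the expected dual-basis verification for $\flat$ and $\sharp$. The paper itself does not spell out a proof: it simply records that ``the proof of the following proposition is a direct computation,'' so you are supplying more detail than the text does while following exactly the approach one would expect.
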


\subsection{Canonical forms on the cotangent bundle.}

\index{Cotangent bundle}
\index{Cotangent bundle!Canonical projection}
Let $Q$ be a manifold of dimension $n$ and  $T^*Q$ the cotangent bundle of  $Q$, with canonical projection $\pi:T^*Q
\rightarrow Q$  defined by $ \,\pi(\nu_q)=q$.

If  $(q^i)$  is a coordinate system on  $U
\subseteq Q$, the induced fiber   coordinate system $(q^i,p_i)$ on $T^*U$ is defined as follows
\index{Cotangent bundle!coordinate system}
\begin{equation}
q^i \left( \nu_q \right)  =   q^i \left( q \right) \, , \quad
p_i \left( \nu_q \right)  =  \nu_q \left(
\frac{\partial}{\partial q^i} \Big\vert_q \right)\, , \quad 1 \leq i
\leq n \,,
\end{equation}
being  $\nu_q \in T^*U$.

\index{Form!Liouville $1$-form}
\index{Cotangent bundle!Liouville form}
The canonical \emph{Liouville $1$-form} $\theta  $ on $T^* Q$ is
defined by
\begin{equation}\label{Liouville}
\theta \left( \nu_q \right) \left( X_{\nu_q} \right)  =
\nu_q \left(\pi_* \left( \nu_q \right)
\left( X_{\nu_q} \right)\right)
\end{equation}
where $\nu_q \in T^*Q$, $X_{\nu_q} \in T_{\nu_q}
\left( T^*Q \right)$ and $ \pi_* \left( \nu_q
\right) : T_{\nu_q} \left( T^* Q \right) \rightarrow T_qQ$ is the tangent mapping of the canonical projection
 $\pi:T^*Q \to Q   $ at $\nu_q \in
T_q^*Q$.

In canonical coordinates, the Liouville $1$-form  $\theta$ is given by
\begin{equation}\label{theta0local}
\theta  =  p_i \, dq^i \, .
\end{equation}

\index{Cotangent bundle!Symplectic form}
\index{Form!Canonical symplectic form}
\index{Symplectic!Form}
The Liouville $1$-form  let us define  the closed two form
\begin{equation}\label{omega0}
\omega  =  -d\theta
\end{equation}
which is non degenerate (at each point of $T^*Q$), such that $(T_{\nu_q}(T^*Q), \omega(\nu_q))$ is a  symplectic vector space.
This $2$-form is called  the
\emph{canonical symplectic form} on the   cotangent bundle. From
(\ref{theta0local}) and (\ref{omega0})  we deduce that the local expression of $\omega$ is
\begin{equation}\label{omega0local}
\omega  =  dq^i \, \wedge \, dp_i \, .
\end{equation}

The manifold $T^*Q$ with its canonical symplectic form $\omega$ is the geometrical model of the {\bf symplectic manifolds} which will be studied in Appendix \ref{symma}.
\index{Symplectic!Manifold}

For each  $\nu_q\in T^*Q$,     $ \omega(\nu_q)$ is a bilinear form on the vector space  $T_{\nu_q}(T^*Q)$, and therefore we can define a vector bundle isomorphism
$$\begin{array}{rcl} \flat : T(T^*Q) &\to &  T^*(T^*Q) \\ \noalign{\medskip}
                        Z_{\nu_q }& \to &\flat_{\nu_q}  (Z_{\nu_q })=\iota_{ Z_{\nu_q }}\omega(\nu_q)=\omega(\nu_q)(Z_{\nu_q },-)
\end{array}$$
with inverse $ \sharp   :  T^*(T^*Q )  \to   T(T^*Q)\, .
 $

Thus we have an isomorphism of $C^\infty(T^*Q)$-modules  between the corresponding spaces of sections
\[
\begin{array}{rccl}
\flat \colon &  \mathfrak{X}(T^*Q) &\longrightarrow & \textstyle\bigwedge^1(T^*Q) \\\noalign{\medskip} & Z &\mapsto & \flat(Z)= \iota_Z\omega\end{array}\]
  and its inverse is denoted by
 $\quad \sharp: \bigwedge^1(T^*Q) \longrightarrow T(T^*Q)$.

 Taking into account Proposition \ref{algprop} (or by a direct computation) we deduce the following Lemma
  \begin{lemma} The isomorphisms $\flat$ and $\sharp$ are locally characterized by
 \begin{equation}\begin{array}{lcrlcr}
 \flat\big(\ds\frac{\partial}{\partial q^i}\big)&=&dp_i, &\flat\big(\ds\frac{\partial}{\partial p_i}\big)&=&-dq^i\,, \\ \noalign{\medskip}
 \sharp(dq^i)&=& -\ds\frac{\partial}{\partial p_i}, & \sharp(dp_i)&=&\ds\frac{\partial}{\partial q^i}\,.
 \end{array}\label{qp}
 \end{equation}
 \end{lemma}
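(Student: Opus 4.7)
The statement is a purely local identity, so the plan is to work pointwise in a Darboux chart and reduce everything to the algebraic Proposition \ref{algprop}.

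First I would fix a canonical chart $(q^i,p_i)$ on $T^*U \subset T^*Q$, where by equation \eqref{omega0local} the symplectic form reads $\omega = \sum_{i=1}^n dq^i \wedge dp_i$. At any point $\nu_q$, the coordinate vectors $\{\partial/\partial q^i|_{\nu_q},\; \partial/\partial p_i|_{\nu_q}\}$ form a basis of $T_{\nu_q}(T^*Q)$ whose dual basis of $T^{*}_{\nu_q}(T^*Q)$ is $\{dq^i|_{\nu_q},\; dp_i|_{\nu_q}\}$. Comparing with the canonical expression in Proposition \ref{algprop}(1), this is precisely a Darboux basis with the identification $e_i \leftrightarrow \partial/\partial q^i$, $u_i \leftrightarrow \partial/\partial p_i$, and correspondingly $e^i \leftrightarrow dq^i$, $u^i \leftrightarrow dp_i$. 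Part (2) of that proposition then immediately yields the four relations in \eqref{qp}.

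For readers who prefer an explicit verification rather than an appeal to the algebraic lemma, I would simply apply the derivation property of the interior product on a wedge of $1$-forms,
\[
\iota_{X}(\alpha \wedge \beta) = (\iota_{X}\alpha)\,\beta - \alpha\,(\iota_{X}\beta),
\]
to each summand of $\omega$. Using $\iota_{\partial/\partial q^j} dq^i = \delta^i_j$, $\iota_{\partial/\partial q^j} dp_i = 0$, $\iota_{\partial/\partial p_j} dq^i = 0$ and $\iota_{\partial/\partial p_j} dp_i = \delta^j_i$, one obtains
\[
\flat\!\left(\tfrac{\partial}{\partial q^j}\right) = \iota_{\partial/\partial q^j}\omega = dp_j, \qquad
\flat\!\left(\tfrac{\partial}{\partial p_j}\right) = \iota_{\partial/\partial p_j}\omega = -dq^j .
\]

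Finally, since $\sharp = \flat^{-1}$ by definition, the formulas for $\sharp$ follow at once by inverting the $2n \times 2n$ system above: $\sharp(dp_j) = \partial/\partial q^j$ and $\sharp(dq^j) = -\partial/\partial p_j$. There is no real obstacle here — the only point to be careful about is the sign produced by the anti-commutation of the two factors in $dq^i \wedge dp_i$ when contracting with $\partial/\partial p_j$, which is exactly the source of the minus signs appearing in \eqref{qp}.
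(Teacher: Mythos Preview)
Your proof is correct and follows exactly the approach the paper itself indicates: the paper states just before the lemma that it follows ``taking into account Proposition \ref{algprop} (or by a direct computation)'', and you have carried out both of these routes explicitly. There is nothing to add.
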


\subsection{Hamilton equations}\label{hamquaions}

\index{Hamiltonian!Function}
  Let $H:T^*Q \to \r$ be a function, usually called \emph{Hamiltonian function}. Then there exists
an unique  vector field   $X_H\in\mathfrak{X}(T^*Q)$  such that
\medskip
\begin{equation}\label{geoham}\flat(X_H)=\iota_{X_H}\omega = dH \end{equation}
or, equivalently, $X_H=\sharp(dH)$.

 From (\ref{omega0local}) and (\ref{geoham}) we deduce the local expression of $X_H$
\begin{equation}\label{locX_Hh}
X_H= \ds\frac{\partial H}{\partial p_i}\ds\frac{\partial}{\partial q^i} -\ds\frac{\partial H}{\partial q^i}
\ds\frac{\partial}{\partial p_i} \, .
\end{equation}

$X_H$ is called the \emph{Hamiltonian vector field}    corresponding to the Hamiltonian function $H$.
\index{Hamiltonian!Vector Field}

From (\ref{locX_Hh}) we obtain the following theorem.

\begin{prop} Let
     $c:\r \rightarrow T^*Q$ be a curve with local expression $c(t)=(q^i(t),p_i(t))$. Then $c$ is an integral curve of the
      vector field  $X_H$ if and only if $c(t)$ is solution of the following system of differential equations.
       \begin{equation}\label{ecHam}
\frac{d q^i}{d t}\Big\vert_{t}  =  \frac{\partial H}{\partial p_i}\Big\vert_{c(t)}  \, , \quad \frac{d p_i}{d t}\Big\vert_{t}   =  - \frac{\partial
H}{\partial q^i}\Big\vert_{c(t)}   \, , \quad 1 \leq i \leq n
\end{equation}
which are known as the \textbf{Hamilton equations of the Classical Mechanics}.
\index{Hamilton equations!Classical Mechanics}

\end{prop}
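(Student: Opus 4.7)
The plan is to unpack the definition of an integral curve and match coefficients in the canonical basis of $T_{c(t)}(T^*Q)$. By definition, $c$ is an integral curve of $X_H$ if and only if $\dot c(t) = X_H(c(t))$ for every $t$ in the domain of $c$. Since $(q^i, p_i)$ is a coordinate chart on $T^*Q$, the tangent vector $\dot c(t)$ admits the standard local expression
\[
\dot c(t) \;=\; \frac{dq^i}{dt}\Big\vert_t \frac{\partial}{\partial q^i}\Big\vert_{c(t)} \;+\; \frac{dp_i}{dt}\Big\vert_t \frac{\partial}{\partial p_i}\Big\vert_{c(t)}.
\]

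Next I would evaluate $X_H$ at the point $c(t)$ using the local formula (\ref{locX_Hh}) already derived from $\iota_{X_H}\omega = dH$ together with (\ref{omega0local}), obtaining
\[
X_H(c(t)) \;=\; \frac{\partial H}{\partial p_i}\Big\vert_{c(t)} \frac{\partial}{\partial q^i}\Big\vert_{c(t)} \;-\; \frac{\partial H}{\partial q^i}\Big\vert_{c(t)} \frac{\partial}{\partial p_i}\Big\vert_{c(t)}.
\]
Since $\{\partial/\partial q^i|_{c(t)},\, \partial/\partial p_i|_{c(t)}\}$ is a basis of $T_{c(t)}(T^*Q)$, the equality $\dot c(t) = X_H(c(t))$ is equivalent to the termwise identification of its coefficients, which yields precisely the system (\ref{ecHam}).

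There is essentially no obstacle here beyond bookkeeping: the work has already been done in establishing (\ref{locX_Hh}) from the non-degeneracy of $\omega$ and the defining equation $\flat(X_H)=dH$. The only point worth a line of care is that the local coordinate expression of $\dot c(t)$ in the basis induced by $(q^i, p_i)$ is $(dq^i/dt,\, dp_i/dt)$, which is immediate from applying $\dot c(t)$ to the coordinate functions. Consequently the proof reduces to citing (\ref{locX_Hh}), writing down $\dot c(t)$ in coordinates, and equating the $2n$ components, with both implications following from the same identification.
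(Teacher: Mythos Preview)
Your proof is correct and follows exactly the approach indicated in the paper, which simply states that the proposition is obtained from the local expression (\ref{locX_Hh}) of $X_H$; your write-up merely makes explicit the coefficient comparison that the paper leaves implicit.
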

\index{Hamilton equations}
So equation   (\ref{geoham}) is considered the geometric version of Hamilton equations.

 We recall that Hamilton equations can be also obtained from the Hamilton Principle, for more details see for instance \cite{AM-1978}.

\section{Lagrangian Mechanics}

\index{Lagrangian Mechanics}
The Lagrangian Mechanics allows us to obtain the Euler-Lagrange equations from a geometric approach. In this case we work over the tangent bundle of the configuration space. In this section we present a brief summary of the Lagrangian Mechanics; a complete description can be found in \cite{AM-1978,Arnold-1978, god, Goldstein, HSS-2009, Holm-2008}.

\subsection{Geometric preliminaries.}

\index{Tangent bundle}

In this section we recall the canonical geometric ingredients on the
tangent bundle, $TQ$, of a manifold $Q$, as well as other objects defined from a Lagrangian $L$. We denote by $\tau:TQ \rightarrow Q$ the
  canonical projection $\tau(v_q)=q$.

\index{Tangent bundle!Canonical coordinates}

  If $(q^i)$
 is a coordinate system on  $U \subseteq Q$ the induced
coordinate system $(q^i,v^i)$ on $TU
\subseteq TQ$ is given by
\begin{equation}\label{coortang}
q^i \left( v_q \right)  =   q^i \left( q \right) \, , \quad v^i
\left( v_q \right)  =  \left( dq^i \right)_q \left( v_q
\right)\,=v_q(q^i) , \quad 1 \leq i \leq n
\end{equation}
being $v_q \in TU$.

We now recall the definition of some geometric elements which are necessary for the geometric description of the Euler-Lagrange equations.

\noindent{\bf  Vertical lift of vector fields.}

\index{Tangent bundle!Vertical lift of vector fields}
The structure of vector space of each fibre $T_qQ$  of  $TQ$ allows us to define the vertical lifts of tangent vectors.

\begin{definition}
Let $X_q  \in T_qQ$ be a tangent vector  at the point $q \in Q$. We define the mapping
$$
\begin{array}{ccl}
T_qQ &  \longrightarrow &T_{v_q}(TQ)\\ \noalign{\medskip}
 v_q & \to & \, (X_q)^{\textsc{v}}_{v_q}=\displaystyle \frac{d}{dt}
\Big\vert_{t=0} \left( v_q + t X_q \right) \, .
\end{array}
$$
Then, the tangent vector $(X_q)^{\textsc{v}}_{v_q} $ is called
the \emph{vertical lift} of  $X_q$ to $TQ$ at the point
$v_q \in TQ$, and it is the tangent vector at $0\in\r$ to the curve $\alpha(t)=v_q+tX_q\in T_qQ\subset TQ$.
\end{definition}

In local coordinates, if $X_q=a^i\derpar{}{q^i}\Big\vert_{q}$, then
\begin{equation}\label{vvv}
(X_q)^{\textsc{v}}_{v_q}=a^i \, \, \ds\frac{\partial}{\partial v^i}\Big\vert_{v_q} \, .
\end{equation}

The definition can be extended for a vector field $X$ on $Q$ in the obvious manner.

\noindent{\bf The Liouville vector field.}

\index{Tangent bundle!Liouville vector field}

\begin{definition}\label{defcvl}
The \textbf{Liouville vector field} $\triangle$ on $TQ$ is the
 infinitesimal generator of the flow given by dilatations on each fiber, it is
$\Phi:( t, v_q )\in \r \times TQ  \longrightarrow e^t \, v_q\in TQ $
\end{definition}
Since  $\,\Phi_{v_q}(t)=(q^i,e^t\, v^i)$ we deduce that, in bundle coordinates,
   the Liouville  vector field  is given by
\begin{equation}\label{locliov}
\triangle  =  v^i \, \ds\frac{\partial}{\partial v^i} \, .
\end{equation}


\noindent{\bf Canonical tangent structure on    $TQ$.}

\index{Tangent bundle!Canonical tangent structure}
\index{Canonical tangent structure}
The vertical lifts let us construct a canonical tensor field   of type  $(1,1)$ on
$TQ$ in the following way

\begin{definition}\label{defj0}
A tensor field $J$ of type $(1,1)$ on $TQ$ is defined as follows
\begin{equation}\label{jtang}
\begin{array}{cccl}
  J(v_q): & T_{v_q}(TQ) & \to & T_{v_q}(TQ)\\ \noalign{\medskip}
   & Z_{v_q} & \to  & J(v_q) \left( Z_{v_q} \right)  =  \left(  \tau_*
\left( v_q \right) \left( Z_{v_q} \right) \right)^{\textsc{v}}(v_q)
\end{array}
\end{equation}
where $Z_{v_q} \in T_{v_q} \left( TQ \right)$ and $v_q \in
T_qQ$.
\end{definition}
This tensor field is called   the \emph{canonical tangent structure} or \emph{vertical endomorphism} of the tangent bundle
 $TQ$.
 \index{Tangent bundle!Vertical endomorphism}
 \index{Vertical endomorphism}

From (\ref{vvv}) and (\ref{jtang}) we deduce that in canonical coordinates $J$ is given by
\begin{equation}\label{j0local}
J\, = \frac{\partial}{\partial v^i} \, \otimes \,  dq^i \,     \,.
\end{equation}

\subsection{Second order differential equations.}

\index{Tangent bundle!SODE}
 In this section we shall describe a special kind of vector fields on $TQ$, known as second order differential equations, semisprays and semigerbes (in French) \cite{grif1,grif2,grif3,szilasi}. For short, we will use the term {\sc sode}s.

\begin{definition}
Let $\Gamma$ be a vector field on $TQ$, i.e. $\Gamma\in \mathfrak{X}(TQ)$.
$\Gamma$ is a  {\sc sode} if and only if it is a section of the map
  $ \tau_*:T(TQ) \rightarrow TQ$, that is
\begin{equation}\label{sode}
 \tau_* \circ \Gamma = id_{TQ}
\end{equation}
  where  $id_{TQ}$ is the  identity function on $TQ$ and $\tau:TQ
\to Q$ the canonical projection.
\end{definition}
The tangent lift of a curve $\alpha:I\subset\r \to Q$ is the curve $\dot{\alpha}:I \to TQ$ where  $\dot{\alpha}(t)$
is the tangent vector to the curve $\alpha$. Locally if  $\alpha(t)=(q^i(t))$ then $\dot{\alpha}(t)=(q^i(t),dq^i/dt)$.

 A direct computation show that the local expression of a  {\sc sode} is
   $$\Gamma=v^i \ds\frac{\partial}{\partial q^i}+\Gamma^i
\ds\frac{\partial}{\partial v^i}  ,$$ and as consequence of this local expression one obtains that its integral curves are tangent lifts of curves on $Q$.

\begin{prop}\label{defsemisp}
Let $\Gamma$ be a  vector field  on $TQ$.   $\Gamma$ is   a
   {\sc sode} if and only if its integral curves are tangent lifts of curves on $Q$.
\end{prop}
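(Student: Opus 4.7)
The plan is to prove both implications directly from the definition $\tau_{*}\circ\Gamma=\mathrm{id}_{TQ}$, using the fact (already noted in the text) that the tangent lift of $\alpha(t)=(q^i(t))$ is $\dot{\alpha}(t)=(q^i(t),dq^i/dt)$. The key identity I will exploit throughout is that for any smooth curve $\gamma\colon I\to TQ$ and its projection $\alpha=\tau\circ\gamma$, one has $\dot{\alpha}(t)=\tau_{*}(\dot{\gamma}(t))$ by the chain rule.

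For the forward implication, I would take an integral curve $\gamma\colon I\to TQ$ of $\Gamma$, so $\dot{\gamma}(t)=\Gamma(\gamma(t))$. Setting $\alpha=\tau\circ\gamma$, the chain rule gives
\[
\dot{\alpha}(t)=\tau_{*}(\dot{\gamma}(t))=\tau_{*}(\Gamma(\gamma(t)))=\gamma(t),
\]
where the last equality uses the \textsc{sode} condition $\tau_{*}\circ\Gamma=\mathrm{id}_{TQ}$. Hence $\gamma=\dot{\alpha}$ is the tangent lift of $\alpha$. Equivalently, in the coordinates $(q^i,v^i)$, writing $\gamma(t)=(q^i(t),v^i(t))$ and using the local form $\Gamma=v^i\partial/\partial q^i+\Gamma^i\partial/\partial v^i$ already derived, the integral curve equations read $dq^i/dt=v^i$ and $dv^i/dt=\Gamma^i(\gamma(t))$; the first batch is exactly the condition that $\gamma$ be the tangent lift of $t\mapsto(q^i(t))$.

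For the converse, I would fix an arbitrary point $v_q\in TQ$ and let $\gamma$ be the (local) integral curve of $\Gamma$ with $\gamma(0)=v_q$. By hypothesis $\gamma=\dot{\alpha}$ for some curve $\alpha$ on $Q$. Then $\tau\circ\gamma=\alpha$, and applying the chain rule at $t=0$,
\[
\tau_{*}(\Gamma(v_q))=\tau_{*}(\dot{\gamma}(0))=\dot{\alpha}(0)=\gamma(0)=v_q.
\]
Since $v_q$ was arbitrary, $\tau_{*}\circ\Gamma=\mathrm{id}_{TQ}$, i.e.\ $\Gamma$ is a \textsc{sode}.

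No step is really an obstacle: the whole proposition is a bookkeeping exercise once one observes that $\tau_{*}\circ\dot{\gamma}=\dot{(\tau\circ\gamma)}$. The only mild subtlety is making sure, in the converse direction, that integral curves exist through every $v_q\in TQ$ (standard ODE theory) so the hypothesis can be invoked pointwise everywhere; I would mention this but not dwell on it.
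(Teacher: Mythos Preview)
Your proposal is correct and follows essentially the same approach as the paper: both arguments reduce to the observation that, for an integral curve $\gamma$ with projection $\alpha=\tau\circ\gamma$, the chain rule gives $\dot{\alpha}=\tau_*\dot{\gamma}=\tau_*(\Gamma\circ\gamma)$, so the \textsc{sode} condition $\tau_*\circ\Gamma=\mathrm{id}_{TQ}$ is exactly $\gamma=\dot{\alpha}$. The paper carries this out purely in local coordinates (and along the way records the second-order system $d^2q^i/dt^2=\Gamma^i$, used later), whereas you phrase the same computation intrinsically first and then mention the coordinate version; your explicit treatment of the converse and the remark on ODE existence are welcome additions that the paper leaves to the reader.
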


\proof

Let us suppose $\Gamma$ a {\sc sode}, then locally  $$\Gamma =  v^i   \ds\frac{\partial}{\partial q^i}  + \Gamma^i \ds\frac{\partial}{\partial v^i}  $$
where $\Gamma^i\in \mathcal{C}^\infty(TQ)$, and let $\phi(t)=(q^i(t),v^i(t))$ be an integral curve of $\Gamma$. Then
$$
\ds\frac{dq^i}{dt}\Big\vert_{t}\derpar{}{q^i}\Big\vert_{\phi(t)}+
\ds\frac{dv^i}{dt}\Big\vert_{t}\derpar{}{v^i}\Big\vert_{\phi(t)}= \Gamma(\phi(t))
=
\phi_*(t)\big(\ds\frac{d}{dt}\Big\vert_{t}\big)=
v^i(\phi(t))\derpar{}{q^i}\Big\vert_{\phi(t)}+\Gamma^i(\phi(t)) \derpar{}{v^i}\Big\vert_{\phi(t)}\,,
$$
thus
$$
\ds\frac{dq^i}{dt}\Big\vert_{t}=v^i(\phi(t))=v^i(t) \, , \quad \Gamma^i(\phi(t))=\ds\frac{d^2q^i}{dt}\Big\vert_{t}
$$
and we deduce that $\phi(t)=\dot{\alpha}(t)$ where $\alpha(t)=(\tau\circ \phi)(t)=(q^i(t))$, and this curve $\alpha(t)$ is a solution of the  following
second order differential system
\begin{equation}\label{nn10}
\frac{\displaystyle d^2q^i} {\displaystyle dt^2}\Big\vert_{t}=
\Gamma^i\Big(q^i(t), \ds\frac{dq^i}{dt}\Big\vert_{t}\Big)\,,\quad 1\leq
i\leq n\, .
\end{equation}
 The converse is proved in an analogous way. \qed

As a consequence of (\ref{locliov}) and (\ref{j0local}), a {\sc sode} can be characterized using the tangent structure as follows.
\begin{prop}
A vector field  $X$ on $TQ$ is a  {\sc sode} if and only if
\begin{equation}
J \, X = \triangle
\end{equation}
  where  $\triangle$ is the  Liouville vector field and $J$ the vertical endomorphism on $TQ$.

\end{prop}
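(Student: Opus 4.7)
The plan is to give a direct proof by pairing the local coordinate formulas already established in this section. Recall that any vector field on $TQ$ has the local form $X = A^i \frac{\partial}{\partial q^i} + B^i \frac{\partial}{\partial v^i}$ for some functions $A^i, B^i \in \Cinfty(TQ)$. Using the expression $J = \frac{\partial}{\partial v^i} \otimes dq^i$ from (\ref{j0local}), one reads off immediately that $JX = A^i \frac{\partial}{\partial v^i}$. Comparing with $\triangle = v^i \frac{\partial}{\partial v^i}$ from (\ref{locliov}), the equation $JX = \triangle$ is equivalent to $A^i = v^i$ for every $i$, which is precisely the local characterization of a \textsc{sode} established just before the proposition. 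Both implications then follow in one line.

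For a more intrinsic presentation I would instead unwind the definition of $J$ in (\ref{jtang}). For any $v_q \in TQ$, applying $J(v_q)$ to $X(v_q) \in T_{v_q}(TQ)$ yields
\begin{equation*}
(JX)(v_q) \;=\; \bigl(\tau_*(v_q)(X(v_q))\bigr)^{\textsc{v}}(v_q).
\end{equation*}
On the other hand, because $\triangle$ is the infinitesimal generator of fiber dilations and the curve $t \mapsto e^t v_q$ has initial velocity equal to the vertical lift of $v_q$ to itself, one has $\triangle(v_q) = (v_q)^{\textsc{v}}(v_q)$. Since the vertical lift $T_qQ \to T_{v_q}(TQ)$ is a linear injection, the identity $JX = \triangle$ is then equivalent to $\tau_*(v_q)(X(v_q)) = v_q$ for all $v_q \in TQ$, which is exactly the condition $\tau_* \circ X = \mathrm{id}_{TQ}$ defining a \textsc{sode}.

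The only mildly non-routine step, in either version, is the observation that $\triangle(v_q) = (v_q)^{\textsc{v}}(v_q)$; this is really just the statement that the Liouville vector field evaluates, at each point $v_q$, to the vertical lift of $v_q$ to itself, and it can be verified either from the flow definition of $\triangle$ or directly from the coordinate expressions (\ref{vvv}) and (\ref{locliov}). I would include the local-coordinate proof as the main argument for consistency with the style used throughout the section, and mention the coordinate-free version as a remark. No genuine obstacle arises.
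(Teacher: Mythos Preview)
Your proposal is correct, and your local-coordinate argument is exactly the paper's approach: the paper simply records the result as a consequence of the local expressions (\ref{locliov}) and (\ref{j0local}) without writing out the one-line computation you give. Your additional coordinate-free version, including the identification $\triangle(v_q)=(v_q)^{\textsc{v}}_{v_q}$, is a bonus the paper does not provide (it appears later only in the $k$-velocity setting, Remark~\ref{remark1.32}), and it would fit nicely as a remark.
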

\qed


\subsection{Euler-Lagrange equations.}

\index{Euler-Lagrange equations}
In this subsection we shall give a geometric description of the Euler-Lagrange equations. Note that these equations can be also obtained from a variational principle.

\index{Lagrangian Mechanics!Poincar\'{e}-Cartan forms}
\index{Form!Poincar\'{e}-Cartan}

\noindent{\bf  The Poincar\'{e}-Cartan forms on $TQ$.}

Given a Lagrangian function, that is, a function $L\colon TQ \rightarrow \r$, we consider the
 $1$-form on $TQ$
\begin{equation}
\theta_L  =  dL\circ J
\end{equation}
that is
$$
\xymatrix{\theta_L(v_q): T_{v_q}(TQ) \ar[r]^-{J(v_q)} & T_{v_q}(TQ) \ar[r]^-{dL(v_q)}& \mathbb{R}}
$$
at each point  $v_q\in TQ$.

Now we  define the $2$-form on $TQ$
\begin{equation}\label{defll}
\omega_L  =  - \, d \theta_L \, .
\end{equation}

From (\ref{coortang}) and (\ref{j0local}) we deduce that
\begin{equation}\label{lllocal}
\theta_L  =  \frac{\partial L}{\partial v^i} \, dq^i  ,
\end{equation}
and from   (\ref{defll}) y (\ref{lllocal}) we obtain that
\begin{equation}\label{omegaLlocal}
\omega_L = \, dq^i\wedge d\Big(\ds\frac{\partial L}{\partial v^i }\Big) =  \frac{\partial^2 L}{\partial v^i \partial q^j} \,
dq^i \wedge dq^j \, + \, \frac{\partial^2 L}{\partial v^i
\partial v^j} \, dq^i \wedge dv^j \, .
\end{equation}

This   $2$-form   $\omega_L$ is closed,
and it is non degenerate if and only if the matrix $\left( \displaystyle
\frac{\partial^2 L}{\partial v^i \partial v^j} \right)$ is non singular; indeed the matrix of  $\omega_L$ is just
$$
\left(
  \begin{array}{cc}
   \ds\frac{\partial^2 L}{\partial q^j
\partial v^i}-\ds\frac{\partial^2 L}{\partial q^i
\partial v^j} & \ds\frac{\partial^2 L}{\partial v^i
\partial v^j} \\
    -\ds\frac{\partial^2 L}{\partial v^i
\partial v^j} & 0 \\
  \end{array}
\right).
$$

\begin{definition}
A Lagrangian function $L\colon TQ \rightarrow \r$ is said to be
\textbf{regular} if the matrix $\left( \displaystyle
\frac{\partial^2 L}{\partial v^i \partial v^j} \right)$ is non
singular.
\end{definition}
\index{Lagrangian!On tangent bundle}
\index{Lagrangian!Regular}
\index{Lagrangian Mechanics!Lagrangian function}

 When $L$ is regular, $\omega_L$ is non degenerate (and hence, symplectic) and thus we can  consider  the  isomorphism
$$\begin{array}{rccl}\flat_L\colon & \mathfrak{X}(TQ) & \longrightarrow &\textstyle\bigwedge^1TQ \\\noalign{\medskip} & Z& \mapsto & \flat_L(Z)=\iota_Z\omega_L\end{array}$$
with inverse mapping  $\sharp: \bigwedge^1(TQ) \longrightarrow \mathfrak{X}(TQ)$.

\index{Lagrangian Mechanics!Energy function}
\index{Energy function}
 \begin{definition} Given a Lagrangian function $L$, we    define the \emph{energy} function $E_L$ as the function
$$E_L=\Delta(L)-L\colon TQ \to \r.$$
\end{definition}
From  (\ref{locliov}) we deduce that $E_L$ has the local expression
\begin{equation}\label{locener}
E_L= v^i \ds\frac{\partial L}{\partial v^i} -L .
\end{equation}

 We now consider the equation
\begin{equation}\label{ecsimlag}
\flat_L(X_L)=\iota_{X_L}  \omega_L  =  d E_L\,.
\end{equation}

If we write locally $X_L$ as
\begin{equation}
X_L =  A^i \frac{\partial}{\partial  q^i} +
B^i\frac{\partial}{\partial v^i}\, ,
\end{equation}
where $A^i, B^i\in \mathcal{C}^\infty(TQ)$
then $X_L$ is solution of the equation (\ref{ecsimlag}) if and only if  $A^i$ and $B^i$ satisfy the following system of equations:
\begin{equation}\label{0alfa}
\begin{array}{rll}
\left(\ds \frac{\partial^2 L}{\partial q^i \partial v^j} -
\ds\frac{\partial^2 L}{\partial q^j \partial v^i} \right) \, A^j -
\ds \frac{\partial^2 L}{\partial v^i \partial v^j} \, B^j & = &
v^j \ds\frac{\partial^2 L}{\partial q^i\partial v^j} -
\ds\frac{\partial  L}{\partial q^i } \, ,
\\ \noalign{\medskip}
\ds\frac{\partial^2 L}{\partial v^i\partial v^j} \, A^j & = &
\ds\frac{\partial^2 L}{\partial v^i\partial v^j} \, v^j \, .
\end{array}
\end{equation}

If the Lagrangian is regular, then $A^i=v^i$ and we have
\begin{equation}\label{locel40}
\begin{array}{rll}
\ds\frac{\partial^2 L}{\partial q^j
\partial v^i} v^j + \ds\frac{\partial^2 L}{\partial v^i\partial
v^j}B^j = \ds\frac{\partial  L}{\partial q^i}
 \, .
\end{array}
\end{equation}

Therefore when $L$ is regular there exists an unique solution $X_L$,   and it  is a {\sc sode}. Let $\dot{\alpha}(t)=(q^i(t),\ds\frac{dq^i}{dt})$ be  an integral curve of $X_L$
where  $\alpha:t\in\r \to \alpha(t)=(q^i(t))\in Q$.

From (\ref{nn10}) we know that $$
\frac{\displaystyle d^2q^i} {\displaystyle dt^2}\Big\vert_{t} =
B^i(q^j(t), \ds\frac{dq^j}{d t }\Big\vert_{t} )
$$
and from (\ref{nn10})  and  (\ref{locel40}) we obtain that the curve $ \alpha(t)$
 satisfies the following system of equations
\begin{equation}\label{e-l-eq}
 \ds\frac{\partial^2 L}{\partial q^j
\partial v^i}\Big\vert_{\dot{\alpha}(t)} \ds\frac{dq^j}{dt}\Big\vert_{t} + \ds\frac{\partial^2 L}{\partial v^i\partial
v^j}\Big\vert_{\dot{\alpha}(t)} \frac{\displaystyle d^2q^j} {\displaystyle dt^2}\Big\vert_{t} = \ds\frac{\partial  L}{\partial q^i}\Big\vert_{\dot{\alpha}(t)} \quad 1\leq i\leq n\end{equation}
The above equations are known as the \emph{Euler-Lagrange equations}. Let us observe that   its solutions are curves on $Q$.

\index{Lagrangian Mechanics!Euler-Lagrange equations}
\index{Equations!Euler-Lagrange}
\index{Euler-Lagrange!Classical Mechanics}

\begin{prop}
If $L$ is regular then the vector field
  $X_L$   solution of  (\ref{ecsimlag})  is a
{\sc sode}, and its solutions are the solutions of the Euler-Lagrange equations.
\end{prop}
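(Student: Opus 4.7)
The plan is to read off both conclusions directly from the coordinate computation already carried out in equations~(\ref{0alfa})--(\ref{e-l-eq}), using regularity at exactly one point: to invert the Hessian $\bigl(\partial^2 L/\partial v^i\partial v^j\bigr)$.

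First I would establish that $X_L$ is a \textsc{sode}. Writing $X_L=A^i\,\partial/\partial q^i+B^i\,\partial/\partial v^i$, the equation $\iota_{X_L}\omega_L=dE_L$ is equivalent, by~(\ref{omegaLlocal}) and~(\ref{locener}), to the system~(\ref{0alfa}). The second block of equations reads
\[
\frac{\partial^2 L}{\partial v^i\partial v^j}\,A^j \;=\; \frac{\partial^2 L}{\partial v^i\partial v^j}\,v^j,
\]
so regularity (non-singularity of the Hessian) allows one to left-multiply by its inverse and conclude $A^j=v^j$. By the local characterization of \textsc{sode}s given just before Proposition~\ref{defsemisp}, $X_L$ is therefore a \textsc{sode}; equivalently, one could invoke $J\,X_L=\triangle$ using~(\ref{locliov}) and~(\ref{j0local}).

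Next I would identify the integral curves. By Proposition~\ref{defsemisp}, every integral curve of a \textsc{sode} is the tangent lift $\dot{\alpha}$ of a curve $\alpha(t)=(q^i(t))$ on $Q$, and along such a curve $v^i=dq^i/dt$ and $B^i=d^2q^i/dt^2$. Substituting $A^j=v^j$ into the first block of~(\ref{0alfa}) cancels the skew-symmetric terms $\bigl(\partial^2 L/\partial q^i\partial v^j-\partial^2 L/\partial q^j\partial v^i\bigr)v^j$ against $v^j\,\partial^2 L/\partial q^i\partial v^j$ on the right-hand side (after sign bookkeeping), leaving precisely~(\ref{locel40}). Evaluating this identity along $\dot{\alpha}$ yields~(\ref{e-l-eq}), which are the Euler--Lagrange equations for $\alpha$.

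The only subtle point — and the main place one must be careful — is the sign bookkeeping when reducing the first block of~(\ref{0alfa}) to~(\ref{locel40}) after imposing $A^j=v^j$; every other step is a direct application of regularity and of the already-proven characterization of \textsc{sode}s. Uniqueness of $X_L$ is then automatic since $\omega_L$ is non-degenerate, so $\flat_L$ is an isomorphism and $X_L=\sharp_L(dE_L)$ is well defined.
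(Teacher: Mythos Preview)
Your proposal is correct and follows essentially the same route as the paper: the text preceding the proposition already derives $A^i=v^i$ from the second block of~(\ref{0alfa}) via invertibility of the Hessian, then substitutes into the first block to obtain~(\ref{locel40}), and finally evaluates along the tangent lift $\dot\alpha$ to reach~(\ref{e-l-eq}). One small remark on your phrasing: the ``skew-symmetric term'' does not cancel as a whole against the right-hand side; rather, the piece $v^j\,\partial^2 L/\partial q^i\partial v^j$ on the left matches the identical term on the right, leaving $-v^j\,\partial^2 L/\partial q^j\partial v^i$ to survive and become the first summand of~(\ref{locel40}) after changing sign --- but you flagged this step as requiring care, and the outcome is correct.
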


Usually the
  Euler-Lagrange equations defined by  $L$ are written as
\begin{equation}\label{eceullag}
 \frac{d}{dt}\Big\vert_{t} \left( \frac{\partial L}{\partial v^i}\circ \dot{\alpha}
\right) \, - \, \frac{\partial L}{\partial q^i}\circ \alpha
   =  0 \, , \quad 1 \leq i \leq n
\end{equation}whose solutions are curves $ \alpha:\r \to Q$. Let us observe that (\ref{e-l-eq}) are just the same equations that (\ref{eceullag}), but written in an extended form.

Equation (\ref{ecsimlag}) is the geometric version of the Euler-Lagrange equations, which can be obtained from  Hamilton's principle, see for instance \cite{AM-1978}.

\section{Legendre transformation}

The Hamiltonian and   Lagrangian formulations of Mechanics are related by the Legendre transformation.

\index{Legendre transformation!Classical Mechanics}

\begin{definition}
Let $L\colon TQ \rightarrow \r$ be a  Lagrangian function; then  the
\emph{Legendre transformation} associated to   $L$ is the map
  $$\begin{array}{cccl}FL  :& TQ & \rightarrow & T^*Q \\ \noalign{\medskip}
                                & v_q & \to & FL(v_q):T_qQ\to \r
\end{array}$$ defined by
\begin{equation}
[FL(v_q)] \left( w_q \right)  =  \frac{d}{dt} \Big\vert_0 L \left(
v_q \, + \, t \, w_q \right)
\end{equation}
where $v_q,w_q \in TQ$.
\end{definition}

A direct  computation  shows that locally
\begin{equation}\label{FLlocal}
FL \left( q^i , v^i \right)  =  \left( q^i , \ds\frac{\partial
L}{\partial v^i} \right)
\end{equation}

From (\ref{omega0local}),
(\ref{omegaLlocal}) and (\ref{FLlocal}) we deduce the following relation between the canonical symplectic form and the Poincar\'{e}-Cartan $2$-form.

\begin{prop}
If $\omega$ is the canonical symplectic $2$-form of the cotangent bundle $T^*Q$ and $\omega_L$ is the Poincar\'{e}-Cartan $2$-form
defined  in (\ref{defll}) then
\begin{equation}
FL^* \, \omega  =  \omega_L \, .
\end{equation}
\end{prop}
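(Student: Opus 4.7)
The plan is to verify the identity in local coordinates, which is essentially a one-line computation given the formulas already derived in the excerpt. The key observation is that all three ingredients have been expressed locally: the canonical symplectic form is $\omega = dq^i \wedge dp_i$ by equation (\ref{omega0local}), the Poincaré--Cartan form is $\omega_L = dq^i \wedge d(\partial L/\partial v^i)$ by equation (\ref{omegaLlocal}), and the Legendre transformation is $FL(q^i, v^i) = (q^i, \partial L/\partial v^i)$ by equation (\ref{FLlocal}).

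First I would reduce the problem to showing the analogous statement at the level of the Liouville $1$-form, namely $FL^*\theta = \theta_L$. From this the claim follows immediately by naturality of the exterior derivative:
\[
FL^*\omega = FL^*(-d\theta) = -d(FL^*\theta) = -d\theta_L = \omega_L.
\]
So the work is really in the $1$-form identity, and this is where the local computation comes in.

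Next I would compute the pullbacks of the coordinate $1$-forms. From (\ref{FLlocal}) one has $FL^*(dq^i) = dq^i$ and $FL^*(dp_i) = d(\partial L/\partial v^i)$. Combined with $\theta = p_i\, dq^i$ from (\ref{theta0local}), this gives
\[
FL^*\theta = (p_i \circ FL)\, FL^*(dq^i) = \frac{\partial L}{\partial v^i}\, dq^i,
\]
which is precisely the local expression (\ref{lllocal}) for $\theta_L$. Passing to $-d$ of both sides then recovers the stated equality $FL^*\omega = \omega_L$.

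I do not anticipate a genuine obstacle here; the proof is purely a matter of bookkeeping with the coordinate formulas already collected in the preceding subsections. If one preferred a coordinate-free argument, the only subtle point would be to verify that $FL^*\theta = \theta_L$ intrinsically, using the definition (\ref{Liouville}) of $\theta$ together with the fact that $\pi \circ FL = \tau$ and a direct evaluation of $\theta_L = dL \circ J$ on tangent vectors of $TQ$; but this is a routine check and the coordinate proof suffices.
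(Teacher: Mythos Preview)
Your proof is correct and follows essentially the same approach as the paper, which simply observes that the identity is a direct consequence of the local expressions (\ref{omega0local}), (\ref{omegaLlocal}) and (\ref{FLlocal}). Your reduction to the $1$-form level, $FL^*\theta = \theta_L$, is a clean touch that the paper does not spell out, but it is the same local-coordinate verification at heart.
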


\begin{prop}
The following statements are equivalent
\begin{enumerate}
\item $L\colon TQ \rightarrow \r$ is a regular Lagrangian.

\item $FL\colon TQ \rightarrow T^*Q$ is a local diffeomorphism.
\item $\omega_L$ is a nondegenerate, and then, a symplectic form.
\end{enumerate}
\end{prop}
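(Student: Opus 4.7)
The strategy is to prove the equivalences by reducing each of the three conditions to the single algebraic criterion that the Hessian matrix
\[
W_{ij} := \frac{\partial^2 L}{\partial v^i \partial v^j}
\]
be non-singular at every point of $TQ$. Condition (1) is the definition of regularity, so it already is this criterion; thus it suffices to show (1) $\Leftrightarrow$ (2) and (1) $\Leftrightarrow$ (3).

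For (1) $\Leftrightarrow$ (2), I would start from the local expression of the Legendre transformation (\ref{FLlocal}), $FL(q^i,v^i) = \left(q^i, \partial L/\partial v^i\right)$, and compute its Jacobian in the induced coordinates $(q^i,v^i)$ on $TQ$ and $(q^i,p_i)$ on $T^*Q$. This Jacobian is block lower triangular,
\[
J(FL) = \begin{pmatrix} I_n & 0 \\ \ast & W \end{pmatrix},
\]
with determinant $\det W$. By the inverse function theorem, $FL$ is a local diffeomorphism at $v_q$ iff this determinant is non-zero at $v_q$, which is exactly regularity of $L$ at $v_q$.

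For (1) $\Leftrightarrow$ (3), I would use the matrix of $\omega_L$ in the basis $\{\partial/\partial q^i,\partial/\partial v^i\}$ displayed right after equation (\ref{omegaLlocal}), namely
\[
[\omega_L] = \begin{pmatrix} A & W \\ -W & 0 \end{pmatrix},
\qquad A_{ij} = \frac{\partial^2 L}{\partial q^j \partial v^i}-\frac{\partial^2 L}{\partial q^i \partial v^j}.
\]
By a standard block-determinant argument (e.g.\ row operations using the invertibility of the lower-left block when $W$ is invertible, or Schur complement applied from the bottom), one gets $\det[\omega_L] = (\det W)^2$. Hence $\omega_L$ is non-degenerate at $v_q$ iff $W$ is non-singular at $v_q$, i.e.\ iff $L$ is regular there. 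Since $\omega_L$ is already closed by construction ($\omega_L = -d\theta_L$), non-degeneracy gives the symplectic property.

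The only slightly delicate point is the block-determinant computation for $[\omega_L]$, since the top-left block $A$ is generally non-zero. However, this is dispatched by the identity
\[
\det\begin{pmatrix} A & W \\ -W & 0 \end{pmatrix} = \det(W)\cdot\det(W) = (\det W)^2,
\]
which follows immediately from expanding along the zero block or from Proposition \ref{algprop} applied pointwise to the symplectic vector space $(T_{v_q}(TQ),\omega_L(v_q))$. Alternatively, one can bypass the determinant altogether by noting that $FL^*\omega = \omega_L$ (as proved in the previous proposition) together with the non-degeneracy of $\omega$ gives (2) $\Rightarrow$ (3) directly, and the converse via $\omega_L(v_q)$ being non-degenerate forcing the pullback map $FL_*(v_q)$ to be injective, hence a linear isomorphism between equidimensional spaces. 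This second route is cleaner and avoids the block-matrix manipulation altogether.
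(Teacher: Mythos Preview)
Your proposal is correct and follows essentially the same approach as the paper: compute the Jacobian of $FL$ to get (1)$\Leftrightarrow$(2) via the inverse function theorem, and use the block form of the matrix of $\omega_L$ (already displayed after (\ref{omegaLlocal})) to get (1)$\Leftrightarrow$(3). The paper's own proof is terser---it simply quotes the Jacobian and refers back to the earlier remark that $\omega_L$ is non-degenerate iff the Hessian is non-singular---while you spell out the block-determinant identity $(\det W)^2$ and offer the cleaner alternative via $FL^*\omega=\omega_L$; both routes are valid and equivalent in spirit.
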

\proof
The Jacobian matrix of  $FL$ is
$$
\left(
  \begin{array}{cc}
    I_n & * \\ \noalign{\medskip}
    0 & \ds\frac{\partial^2
L}{\partial v^i \partial v^j} \\
  \end{array}
\right)
$$ thus $FL$ local diffeomorphism if and only if  $L$ is
regular.

On the other hand we know that $\omega_L$ is non degenerate if and only if  $L$ is
regular.\qed

\index{Lagrangian!hyperregular}

\begin{definition}
A Lagrangian $L\colon TQ \rightarrow \r$ is said to be
\emph{hyperregular} if the Legendre transformation
$FL\colon TQ \rightarrow T^*Q$ is a global diffeomorphism.
\end{definition}

The following result connects the Hamiltonian and Lagrangian formulations.

\begin{prop}\label{propeqhl}
Let $L\colon TQ \rightarrow \r$ be a hyperregular Lagrangian, then we define the Hamiltonian
$H:T^*Q \rightarrow \r$ by $H \circ FL = E_L$. Therefore, we have
\begin{equation}\label{tflh}
FL_* (X_L)  =  X_H \, .
\end{equation}

\noindent Moreover if $\alpha: \r \rightarrow TQ$  is an integral curve
of  $X_L$ then $FL \circ \alpha$ is an integral curve of
 $X_H$.
\end{prop}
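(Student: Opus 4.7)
The plan is to exploit the naturality of the interior product under pullback together with the already-established identity $FL^*\omega=\omega_L$, and the uniqueness of the Hamiltonian vector field for a symplectic form. Hyperregularity is essential because it makes $FL$ a global diffeomorphism, so the pushforward $FL_*(X_L)$ is a well-defined, globally-defined vector field on $T^*Q$, and pullback by $FL$ is an isomorphism on forms.

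First, I would set $Y:=FL_*(X_L)=TFL\circ X_L\circ FL^{-1}\in\vf(T^*Q)$, and compute $FL^*(\iota_Y\omega)$. Using that pullback commutes with contraction whenever the vector field is $FL$-related to itself (which holds by construction of $Y$) and that $FL^*\omega=\omega_L$, I get
\[
FL^*(\iota_Y\omega)=\iota_{X_L}(FL^*\omega)=\iota_{X_L}\omega_L=dE_L.
\]
On the other hand, from $H\circ FL=E_L$ it follows by taking exterior derivative that $FL^*(dH)=d(H\circ FL)=dE_L$. Hence $FL^*(\iota_Y\omega-dH)=0$, and since $FL$ is a diffeomorphism its pullback on $1$-forms is injective, giving $\iota_Y\omega=dH$. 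By the uniqueness of the Hamiltonian vector field associated with $H$ via the canonical symplectic form (equation (\ref{geoham})), we conclude $Y=X_H$, i.e.\ $FL_*(X_L)=X_H$.

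For the second assertion, if $\alpha:\mathbb{R}\to TQ$ is an integral curve of $X_L$, meaning $\dot\alpha(t)=X_L(\alpha(t))$, I would simply apply $TFL$ to both sides. Since $(FL\circ\alpha)\dot{\,}(t)=TFL(\dot\alpha(t))$, one obtains
\[
(FL\circ\alpha)\dot{\,}(t)=TFL\bigl(X_L(\alpha(t))\bigr)=\bigl(FL_*X_L\bigr)\bigl(FL(\alpha(t))\bigr)=X_H\bigl((FL\circ\alpha)(t)\bigr),
\]
which says exactly that $FL\circ\alpha$ is an integral curve of $X_H$.

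No real obstacle arises; the only point requiring care is justifying that $FL_*X_L$ is $FL$-related to $X_L$ so that $FL^*\iota_{FL_*X_L}\omega=\iota_{X_L}FL^*\omega$, but this is immediate from the definition of pushforward by a diffeomorphism. Everything else is a direct consequence of $FL^*\omega=\omega_L$, $H\circ FL=E_L$, and the uniqueness statements for the Hamiltonian and Lagrangian vector fields.
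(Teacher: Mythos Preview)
Your proof is correct and follows essentially the same approach as the paper: the paper's proof consists of the single sentence that the Euler--Lagrange equation $\iota_{X_L}\omega_L=dE_L$ transforms into the Hamilton equation $\iota_{X_H}\omega=dH$ via the Legendre transformation, and you have supplied precisely the rigorous justification of that claim using $FL^*\omega=\omega_L$, $FL^*dH=dE_L$, naturality of the interior product, and uniqueness of the Hamiltonian vector field. The integral-curve statement is handled identically.
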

\proof  (\ref{tflh})  is a  consequence of the following: \textit{the Euler-Lagrange equation (\ref{ecsimlag}) transforms into the Hamilton equation (\ref{geoham}) via the Legendre transformation}, and conversely.\qed

\section{Non autonomous Hamiltonian and La\-gran\-gian  Mechanics}

In this section we consider the case of time-dependent Mechanics. Now we shall give a briefly review of the geometric description of the dynamical equations in this case. As in the autonomous case this description can be extended to general cosymplectic manifolds. Thus, in Appendix \ref{cosymma} we recall the notion of cosymplectic manifolds.

\subsection{Hamiltonian Mechanics}

\index{Hamiltonian!time dependent}
Let $H \colon \r\times T^*Q \to \r$ be a time-dependent Hamiltonian. If $\pi:\r\times T^*Q\to T^*Q$ denotes the canonical projection, we consider
$\widetilde{\omega}=\pi^*\omega$ the pull-back of the canonical symplectic $2$-form on $T^*Q$. We shall consider bundles coordinates $(t,q^i,p_i)$ on
$\r\times T^*Q$.

 Let us take the equations
  \begin{equation}\label{h02}\iota_{E_H}dt  =   1 \;, \quad \iota_{E_H}\Omega =0\,,\end{equation}
where $\Omega=\widetilde{\omega}+dH\wedge dt$.

   A direct computation using that locally $\widetilde{\omega}=dq^i\wedge dp_i$  shows that
  $$
E_H= \ds\frac{\partial }{\partial t}+\ds\frac{\partial H}{\partial p_i}\ds\frac{\partial}{\partial q^i} -\ds\frac{\partial H}{\partial q^i}
\ds\frac{\partial}{\partial p_i} \, .
$$

\index{Evolution vector field}
$E_H$ is called the \emph{evolution vector field}    corresponding to Hamiltonian function $H$.
Consider now an integral curve $c(s)=(t(s),q^i(s), p_i(s))$ of the evolution vector field $E_H$: this implies that $c(s)$ should satisfy the following system of differential equations
\[
    \ds\frac{dt}{ds}=1,\quad \ds\frac{dq^i}{ds}=\ds\frac{\partial H}{\partial p_i},\quad \ds\frac{dp_i}{ds}= -\ds\frac{\partial H}{\partial q^i}\,.
\]

Since $\ds\frac{dt}{ds}=1$ implies $t(s)=s+ constant$, we deduce that
\[
    \ds\frac{dq^i}{dt} = \ds\frac{\partial H}{\partial p_i},\quad \ds\frac{dp_i}{dt}= -\ds\frac{\partial H}{\partial q^i}\,,
\]
since $t$ is an affine transformation of $s$, which are the \emph{ Hamilton equations} for  a non-autonomous Hamiltonian $H$.
\index{Hamilton equations!non-autonomous Mechanics}

\subsection{Lagrangian Mechanics}

\index{Lagrangian!time-dependent}
Let us consider that the Lagrangian $L(t,q^i,v^i)$ is time-dependent, then $L$ is a function   $\r\times TQ\to \r$.

\index{Liouville vector field}
 Let us denote also by $\Delta$   the \emph{canonical vector field (Liouville
vector field)} on $\r \times TQ$. This vector field
  is the infinitesimal generator of  the following flow
$$
\begin{array}{ccc}
\r \times ( \r\times TQ) & \longrightarrow & \r\times
TQ  \\ \noalign{\medskip} (s,(t,{v_1}_q,\ldots , {v_k}_q)) &
\longrightarrow & (t, e^s{v_1}_q, \ldots,e^s{v_k}_q)\, ,
\end{array}
$$
  and in local coordinates it has the form
 $\Delta =     v^i
\frac{\displaystyle\partial}{\displaystyle\partial v^i}\, .
$

Now we shall characterize the
 vector fields on $\r \times TQ$ such that their integral
curves are canonical prolongations of curves on $Q$.

\index{First prolongation}
\begin{definition}
 Let $\alpha:\r \rightarrow Q$  be  a curve, we define the \emph{first prolongation}
$\alpha^{[1]}$ of $\alpha$ as the map
$$
\begin{array}{rcl}
\alpha^{[1]}:\r  & \longrightarrow &   \r  \times TQ \\ t &
\longrightarrow &
(t,\dot{\alpha}(t))
\end{array}$$
  \end{definition}

In an obvious way we shall consider the extension of the tangent structure $J$    to $\r\times TQ$
which we   denote by $J$
and it has the same local expression $J=\derpar{}{v^i}\otimes dq^i$.

\index{SODE}
\index{Second order partial differential equation}
\begin{definition}\label{sode2}
A vector field  $X$ on   $\r \times
TQ$
   is  said to be a \emph{second order partial differential equation}  (SODE for short)  if  :
$$
\iota_X dt=1, \quad J(X)=\Delta\,.
$$
\end{definition}

   From a direct
computation in local coordinates  we obtain  that the local
expression of a {\sc sode} $X $ is
\begin{equation}\label{0localsode2r}
X(t,q^i,v^i)=\frac{\partial}{\partial
t}+v^i\frac{\displaystyle
\partial} {\displaystyle
\partial q^i}+
X^i \frac{\displaystyle\partial} {\displaystyle
\partial v^i}.
\end{equation}

   As in the autonomous case, one can prove the following
 \begin{prop}
$X$ is a {\sc sode} if and only if its integral curves are prolongations of curves on $Q$.
\end{prop}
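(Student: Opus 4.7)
The plan is to prove both implications by translating the two intrinsic conditions $\iota_X\,dt=1$ and $J(X)=\Delta$ into their local counterparts, and matching them against the ODE system whose solutions are integral curves of $X$. The essential observation is that the condition $\iota_X\,dt=1$ already forces the time parameter along an integral curve to coincide (up to an additive constant) with the coordinate $t$ on $\r\times TQ$, which is exactly what is needed to identify an integral curve with a first prolongation.

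For the forward direction, I would assume $X$ is a SODE and use the local expression \eqref{0localsode2r}, namely
\[
X = \frac{\partial}{\partial t} + v^i\,\frac{\partial}{\partial q^i} + X^i\,\frac{\partial}{\partial v^i},
\]
whose derivation has already been recorded. Then for an integral curve $\phi(s)=(t(s),q^i(s),v^i(s))$, the system of ODEs reads $dt/ds=1$, $dq^i/ds = v^i(\phi(s))$, $dv^i/ds = X^i(\phi(s))$. The first equation gives $t(s)=s+c$, so reparametrizing by $t$ one may assume $t(s)=s$; the second equation then forces $v^i(s)=dq^i/ds$. Hence $\phi(t)=(t,\dot\alpha(t))=\alpha^{[1]}(t)$, where $\alpha(t)=(q^i(t))$ is the projected curve on $Q$.

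For the converse, I would write $X$ in coordinates with unknown coefficients as $X=A\,\partial_t + B^i\,\partial_{q^i} + C^i\,\partial_{v^i}$, with $A,B^i,C^i\in\Cinfty(\r\times TQ)$. Fix an arbitrary point $(t_0,q_0^i,v_0^i)$ and consider the integral curve of $X$ through it. By hypothesis this curve is the first prolongation of some curve $\alpha$ on $Q$, that is $\phi(s)=(t(s),q^i(s),dq^i/dt\big|_{t(s)})$. Comparing the tangent vector of $\phi$ at $s=0$ with $X_{(t_0,q_0^i,v_0^i)}$ gives $A(t_0,q_0^i,v_0^i)=dt/ds=1$ and $B^i(t_0,q_0^i,v_0^i)= dq^i/ds = v_0^i$ (using $v_0^i = dq^i/dt|_{t_0}$). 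Since the point was arbitrary, $A\equiv 1$ and $B^i\equiv v^i$ on $\r\times TQ$, so $X$ has the local form \eqref{0localsode2r}, from which $\iota_X\,dt=1$ and $J(X)=v^i\,\partial_{v^i}=\Delta$ follow by inspection using the local expressions of $J$ and $\Delta$ recalled earlier.

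The main (and only) delicate point is ensuring that the converse produces the values of $A$ and $B^i$ at \emph{every} point of $\r\times TQ$, not just along the particular integral curves one considers. This is handled cleanly by the existence of an integral curve through every point, together with the identification $v_0^i=dq^i/dt|_{t_0}$ forced by the assumption that the integral curve is a first prolongation; all remaining steps are direct computations in canonical coordinates and parallel the autonomous case already treated in Proposition \ref{defsemisp}.
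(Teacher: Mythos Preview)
Your proposal is correct and follows essentially the same approach as the paper, which omits the details and simply refers back to the autonomous case (Proposition \ref{defsemisp}); your argument is the natural adaptation of that coordinate computation, with the additional bookkeeping for the $t$-component handled via $\iota_X\,dt=1$. The only cosmetic point is that in the converse you write the integral curve as $\phi(s)=(t(s),q^i(s),dq^i/dt|_{t(s)})$ and then assert $dt/ds=1$: it would be cleaner to note directly that a first prolongation has the form $\alpha^{[1]}(s)=(s,\alpha^i(s),d\alpha^i/ds)$, so $t(s)=s$ is immediate from the hypothesis.
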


In fact, if $\phi:\r \to \r\times TQ$ is an integral curve of $X$ then $\phi$ is the first prolongation of $\tau\circ \phi$.

The tensor $J$ allows us to introduce the forms $\Theta_L$ and $\Omega_L$ on $\r\times TQ$  as follows: $\Theta_L=dL\circ J$
and $\Omega_L=-d\Theta_L$ with local expressions
\begin{equation}\label{chicf}
\Theta_L= \derpar{L}{v^i}\, dq^i \quad  ,\quad \Omega_L= dq^i\wedge d\left(\derpar{L}{v^i}\right)\,.
 \end{equation}

 Let us consider the  equations
 \begin{equation}\label{lform_mec}\iota_Xdt=1 \;, \quad \iota_{X_L}\widetilde{\Omega}_L =0\,,\end{equation}
where $\widetilde{\Omega}_L=\Omega_L+dE_L\wedge dt$ is the \emph{Poincar\'{e}-Cartan $2$-form}.
\index{Poincar\'{e}-Cartan $2$-form}
The Lagrangian is said to be regular if $(\partial^2L/\partial v^i \partial v^j)$ is not singular. In this case, equations (\ref{lform_mec}) has a unique solution $X$.

 \begin{theorem} Let $L$ be a non-autonomous regular Lagrangian on
$\r\times TQ$ and $X$ the vector field given by (\ref{lform_mec}). Then $X$ is a {\sc sode} whose integral curves $\alpha^{[1]}(t)$ are the solutions of
$$
\ds\frac{d}{dt}\left(  \derpar{L}{v^i}\circ \alpha^{[1]}\right) =\derpar{L}{q^i}\circ \alpha^{[1]} \, ,
$$
which are \textbf{Euler-Lagrange equations} for  $L$.
   \end{theorem}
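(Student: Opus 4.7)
The plan is to work entirely in the canonical bundle coordinates $(t,q^i,v^i)$ on $\r\times TQ$, mirroring the autonomous argument given earlier in the chapter but keeping careful track of the $t$--component.

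First, I would write the unknown vector field as
$X=A\,\partial_t+A^i\,\partial_{q^i}+B^i\,\partial_{v^i}$
with $A,A^i,B^i\in\Cinfty(\r\times TQ)$. The first equation $\iota_X dt=1$ immediately gives $A=1$. Next, using (\ref{chicf}) and $E_L=v^i(\partial L/\partial v^i)-L$, I would expand
$\widetilde{\Omega}_L=\Omega_L+dE_L\wedge dt$ explicitly in coordinates, obtaining terms involving $dq^i\wedge dt$, $dq^i\wedge dq^j$, $dq^i\wedge dv^j$, $dv^j\wedge dt$. Plugging $X$ into $\iota_X\widetilde{\Omega}_L=0$ then produces a $1$-form whose $dt$, $dq^i$, $dv^i$ coefficients must each vanish.

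The decisive coefficient is the one of $dv^j$. A short computation shows it equals
$(A^i-v^i)\,\partial^2L/\partial v^i\partial v^j$,
so the regularity hypothesis forces $A^i=v^i$. Combined with $A=1$, this gives $\iota_X dt=1$ and $J(X)=v^i\partial_{v^i}=\Delta$, i.e.\ $X$ is a {\sc sode} in the sense of Definition~\ref{sode2}. By the non-autonomous analogue of Proposition~\ref{defsemisp} (stated just above the theorem), its integral curves are first prolongations $\alpha^{[1]}(t)=(t,\alpha(t),\dot\alpha(t))$ of curves $\alpha$ on $Q$.

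To finish, I would read off the remaining equation. With $A^i=v^i$ already imposed, the vanishing of the coefficient of $dq^i$ in $\iota_X\widetilde{\Omega}_L$ reduces, after cancellations between $\Omega_L$ and $dE_L\wedge dt$, to
\[
\frac{\partial^2 L}{\partial t\,\partial v^i}+v^j\frac{\partial^2 L}{\partial q^j\partial v^i}+B^j\frac{\partial^2 L}{\partial v^j\partial v^i}=\frac{\partial L}{\partial q^i}.
\]
Along an integral curve $\alpha^{[1]}(t)$ we have $v^j=dq^j/dt$ and $B^j=d^2q^j/dt^2$, so the left-hand side is exactly $(d/dt)(\partial L/\partial v^i\circ\alpha^{[1]})$, yielding the Euler--Lagrange equations. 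The only real obstacle is the bookkeeping in computing $\widetilde{\Omega}_L$ and correctly isolating the $dv^j$ and $dq^i$ coefficients; but this is a direct extension of the computation carried out in (\ref{0alfa})--(\ref{locel40}) for the autonomous case, with the sole novelty being the extra $\partial^2L/\partial t\,\partial v^i$ contribution coming from $dE_L\wedge dt$.
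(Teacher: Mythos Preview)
Your proposal is correct. The paper does not actually provide a proof of this theorem---it is stated and immediately followed by a remark---so your coordinate computation, which extends the autonomous argument (\ref{0alfa})--(\ref{locel40}) by tracking the additional $\partial^2 L/\partial t\,\partial v^i$ term coming from $dE_L\wedge dt$, is precisely the intended approach and fills the gap cleanly.
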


\begin{remark}
{\rm
The Lagrangian and Hamiltonian Mechanics can be obtained from the unified Skinner-Rusk approach, \cite{CMC}. On the other hand, in \cite{MR-2001} the authors study the non-autonomous Lagrangian invariant by a vector field.
\rqed}
\end{remark}

\part{$k$-symplectic formulation of Classical Field Theories}\label{Part2}

    The symplectic geometry allows us to give a geometric description of  Classical Mechanics (see chapter \ref{chapter: Mechanics}). On the contrary, there exist several alternative models for describing geometrically first-order Classical Field Theories. From a conceptual point of view, the simplest one is the $k$-symplectic formalism, which is a natural generalization to field theories of the standard symplectic formalism.

    The $k$-symplectic formalism (also called polysymplectic formalism of C. G\"{u}n\-ther in \cite{Gu-1987}) is used to give a geometric description of certain kind of Classical Field Theories: in a local description, those whose Lagrangian and Hamiltonian functions do not depend on the coordinates on the basis (that is, the space-time coordinates). Then, the $k$-symplectic formalism is only valid for Lagrangians and Hamiltonians that depend on the field coordinates $(q^i)$ and on the partial derivatives of the field $(v^i_\alpha)$ or the corresponding momenta $(p^\alpha_i)$. The foundations of the $k$-symplectic formalism are the $k$-symplectic manifolds introduced by A. Awane in \cite{Awane-1992, Awane-1994, Awane-2000}, the $k$-cotangent structures introduced by M. de Le\'{o}n {\it et al.} in \cite{LMM-2002,LMS-93, LMeS-97} or the $n$-symplectic structures on the frame bundle introduced by M. McLean and L.K. Norris \cite{MN-2000,No2,No3,No4,No5}.

    In a first chapter of this part of the book, we shall introduce the notion of $k$-symplectic manifold using as a model the cotangent bundle of $k^1$-covelocities of a manifold, that is, the Whitney sum of $k$-copies of the cotangent bundle. Later in chapter \ref{chapter: k-symplectic formalism} we shall describe the geometric equations using the $k$-symplectic structures. This formulation can be applied to the study of Classical Field Theories as we shall see in chapters \ref{chapter: K-SympHamaCFT} and \ref{chapter: K-SympLagCFT}. We present these formulations and several physical examples which can be described using this approach. Finally, we establish the equivalence between  the Hamiltonian and Lagrangian formulations when the Lagrangian function satisfies some regularity property. Moreover, we shall discuss the Hamilton-Jacobi equation in the $k$-symplectic setting (see chapter \ref{ksymp-HJ}).

 %
\newpage
\mbox{}
\thispagestyle{empty} 

\chapter{$k$-symplectic geometry}\label{chapter: k-symplectic manifolds}

The $k$-symplectic formulation is based in the so-called $k$-symplectic geometry. In this chapter we introduce the $k$-symplectic structure which is a generalization of the notion of symplectic structure.

\index{$k$-symplectic manifold}
        We first describe the geometric model of the called $k$-symplectic manifolds, that is the cotangent bundle of $k^1$-covelocities and we introduce the notion of canonical geometric structures on this manifold. The formal definition of the $k$-symplectic manifold is given in  Section \ref{section k-symp-manifold}.

  \section{The cotangent bundle of $k^1$-covelocities}\protect\label{section k-cotangent}

\index{Cotangent bundle of $k^1$-covelocities}

            We denote by $(T^1_k)^*Q$ the Whitney sum with itself of $k$-copies of the cotangent bundle of a manifold $Q$ of dimension $n$, that is,
                \[
                    (T^1_k)^*Q= T^*Q\oplus_Q \stackrel{k}{\dots} \oplus_Q T^*Q\,.
                \]

            An element $\nu_q$ of $\tkqh$ is a family $({\nu_1}_q,\ldots, {\nu_k}_q)$ of $k$ covectors at the same base point $q\in Q$. Thus one can consider the canonical projection
                \begin{equation}\label{pi k}
                    \begin{array}{rccl}
                        \pi^k\colon & \tkqh& \to & Q\\\noalign{\medskip}
                         &({\nu_1}_q,\ldots,{\nu_k}_q) & \mapsto & \pi^k({\nu_1}_q,\ldots,{\nu_k}_q)=q\,.
                    \end{array}
                \end{equation}

            If $(q^i)$, with ${1\leq i\leq n}$, is a local coordinate system defined on an open set $U \subseteq Q$,
            the induced local (bundle) coordinates system $(q^i, p^{\alpha}_i)$  on $(T^1_k)^*U=(\pi^k)^{-1}(U)$ is given by
                \begin{equation}\label{tkqh: natural coord}
                    q^i({\nu_1}_q,\ldots, {\nu_k}_q)=q^i( q),\qquad p^{\alpha}_i({\nu_1}_q,\ldots, {\nu_k}_q)= \nu_{\alpha_{q}}\left(\frac{\partial}{\partial q^i}\Big\vert_{q}\right)\,,
                \end{equation}
         for $\ak$ and $\n$.

\index{Cotangent bundle of $k^1$-covelocities!canonical coordinates}
            These coordinates are called the \emph{canonical coordinates} on $\tkqh$. Thus, $\tkqh$ is endowed with a smooth  structure of differentiable manifold of  dimension $n(k+1)$.

            The following diagram shows the notation which we shall use along this book:
                \[
                    \xymatrix@=15mm{
                        \tkqh\ar[r]^-{\pi^{k,\alpha}}\ar[rd]_-{ \pi^k}& T^*Q\ar[d]^-{\pi}\\ & Q
                    }
                \]
            where
                \begin{equation}\label{pikalpha}
                    \begin{array}{rccl}
                        \pi^{k,\alpha}\colon &\tkqh & \to & T^*Q\\\noalign{\medskip}
                         & ({\nu_1}_q,\ldots, {\nu_k}_q) & \mapsto & {\nu_\alpha}_q
                    \end{array}\,,
                \end{equation}
            is the canonical projection on each copy of the cotangent bundle $T^*Q$, for each $\ak$.

            \begin{remark}\label{j1qrk}
            {\rm
                 The manifold $\tkqh$  can be described using $1$-jets, (we refer to  \cite{Saunders-89} for more details about jets).

                Let $\sigma:U_q\subset Q \to   \rk  $  and $\tau:V_q\subset Q\to \rk $ be two maps defined in an open neighborhoods   $U_q$ and $V_q$ of $q\in Q$, respectively, such that $\sigma(q)=\tau(q)=0$.  We say that $\sigma$ and $\tau$ are related   at $0\in  \rk$ if $\sigma_*(q)=\tau_*(q)$, which means that the partial derivatives of $\sigma$ and $\tau$ coincide up to order one at $q\in Q$.

                         The equivalence classes determined by this relationship are called {\it jet of order 1}, or, simply, $1$-jets with source $q\in Q$ and the same target.

                         The $1$-jet of a map $\sigma:U_q\subset Q \to   \rk  $  is denoted by $j^1_{q,0}\sigma$ where $\sigma(q)=0$. The set of all $1$-jets at $q$ is denoted by
                         $$J^1(Q,\rk)_0= \ds\bigcup_{q\in Q}J^1_{q,\,0}(Q,\rk)=\ds\bigcup_{q\in Q}\{j^1_{ q,0}\sigma\, \vert\,\sigma:Q\to \rk\makebox{ smooth, }\sigma( q)=0\}\,.$$

                         The canonical projection $\beta:J^1(Q,\rk)_0\to Q$ is defined by $\beta( j^1_{q,0}\sigma )=q$ and $J^1(\rk,Q)_0$                  is called the   \emph{cotangent bundle of $k^1$-covelocities}, \cite{Ehresmann,{kms}}.

                         Let us observe that for $k=1$, $J^1(Q,\rk)_0$ is diffeomorphic to $T^*Q$.

                         We shall now describe the local coordinates on $J^1(\rk,Q)_0$. Let $U$ be a chart of $Q$ with local coordinates $(q^i)$,
                         $1\leq i\leq n$, $\sigma:U_0\subset Q \to \rk$ a mapping such that $q\in U$ and $\sigma^\alpha =x^\alpha                       \circ \sigma$. Then the $1$-jet $j^1_{q,0}\sigma$ is uniquely represented in $\beta^{-1}(U)$ by
                         $$
                         (q^i,p^1_i, \ldots , p^k_i) \; , \quad 1\leq i\leq n
                         $$
                         where
                         \begin{equation}\label{jetcoorl}
                         q^i(j^1_{q,0}\sigma)=q^i(q) \; ,  \quad p^\alpha_i(j^1_{q,0}\sigma)=
                         \derpar{\sigma^\alpha}{q^i}\Big\vert_{q}
                         =d\sigma^\alpha(q)\left( \derpar{}{q^i}\Big\vert_{q} \right)
                           .
                          \end{equation}

                        The manifolds $\tkqh$ and $J^1(\rk,Q)_0$ can be identified, via the diffeomorphism
                            \begin{equation}\label{difeo jiqrk-tkq}
                                \begin{array}{ccc}
                                    J^1(\r^k,Q)_0 & \equiv & T^*Q \oplus \stackrel{k}{\dots} \oplus T^*Q=\tkqh \\
                                    j^1_{q,0}\sigma  & \equiv & (d\sigma^1(q),\ldots, d\sigma^k(q))
                                \end{array}
                            \end{equation}
                              where $\sigma^\alpha= \pi^\alpha \circ \sigma:Q \longrightarrow \r$ is the $\alpha$-th component of $\sigma$ and $\pi^\alpha\colon \r^k \to \r$ the canonical projections for each $\ak$.
                  \rqed
                  }
            \end{remark}


            We now introduce certain canonical geometric structures on $\tkqh$. These structures will be used in the  description of the Hamiltonian $k$-symplectic formalism, see chapter  \ref{chapter: k-symplectic formalism}.

\index{Cotangent bundle of $k^1$-covelocities!canonical forms}
            \begin{definition}\label{section tkqh: canonical forms}
                 We define the \emph{canonical $1$-forms} $\theta^1,\ldots, \theta^k$ on $\tkqh$ as the pull-back of Liouville's $1$-form $\theta$ (see (\ref{Liouville})), by the canonical projection $\pi^{k,\alpha}$ (see (\ref{pikalpha})), that is, for each $\ak$
                    \[
                        \theta^\alpha = (\pi^{k,\alpha})^*\theta\,;
                    \]
                the \emph{canonical $2$-forms} $\omega^1,\ldots, \omega^k$ are defined by
                \[
                    \omega^\alpha=-d\theta^\alpha
                \]
                or equivalently by $\omega^\alpha=(\pi^{k,\alpha})^*\omega$ being $\omega$ the canonical symplectic form on the cotangent bundle $T^*Q$.
            \end{definition}

            If we consider the canonical coordinates $(q^i,p^{\alpha}_i)$ on $\tkqh$ (see (\ref{tkqh: natural coord})), then  the canonical forms $\theta^\alpha, \omega^\alpha$ have the following local expressions:
                \begin{equation}\label{Locformcan}
                    \theta^\alpha=p^\alpha_idq^i\,,\quad \omega^\alpha=dq^i\wedge dp^\alpha_i\,,
                \end{equation}
            with $\ak$.

            \begin{remark}{\rm
                An alternative definition of the canonical $1$-forms $\theta^1,\ldots, \theta^k$ is through  the composition:
                \[
                        \xymatrix@C=29mm{
                            T_{\nu_q}(\tkqh)\ar[r]^-{(\pi^k)_*(\nu_q)} \ar@/^{10mm}/[rr]^-{\theta^\alpha(\nu_q)}& T_{q}Q\ar[r]^-{{\nu_\alpha}_{q}}& \r
                        }
                \]
                That is,
                    \begin{equation}\label{theta}
                        \theta^\alpha(\nu_q)\big(X_{\nu_q}\big):={\nu_\alpha}_q\big((\pi^k)_*(\nu_q)(X_{\nu_q})\big)
                    \end{equation}
                for $X_{\nu_q}\in T_{\nu_q}(\tkqh)$, $\nu_q=({\nu_1}_q,\ldots, {\nu_k}_{q})\in \tkqh$ and $q\in Q$.
            \rqed}
            \end{remark}

        Let us observe that the canonical $2$-forms $\omega^1,\ldots, \omega^k$ are closed forms (indeed, they are exact). An interesting property of these forms is the following: for each $\ak$,  we consider the kernel of each $\omega^\alpha$,  i.e., the set
            \[
                \ker \omega^\alpha=\{X\in T(\tkqh)\,\vert\,\iota_X\omega^\alpha=0\}\, ;
            \]
         then from (\ref{Locformcan}) it is easy to check that
                 \begin{equation}\label{modelksim}
                        \omega^\alpha\Big\vert_{V\times V}=0 \makebox{ and } \ds\bigcap_{\alpha=1}^k\ker \omega^\alpha=\{0\}\,,
                 \end{equation}
            where $V=\ker (\pi^k)_*$ is the vertical distribution of dimension $nk$ associated to $\pi^k\colon \tkqh\to Q$. This vertical distribution is locally spanned by the set
                 \begin{equation}\label{vertdis}
                        \left\{ \derpar{}{p^1_1},\ldots, \derpar{}{p^k_1},\derpar{}{p^1_2},\ldots, \derpar{}{p^k_2} ,\ldots, \derpar{}{p^1_n},\ldots,                   \derpar{}{p^k_n}\right\}\,.
                 \end{equation}

            The properties (\ref{modelksim}) are interesting because the family of the manifold $\tkqh$ with the $2$-forms $\omega^1,\ldots, \omega^k$ and the distribution $V$ is the model for a $k$-symplectic manifold, which will be introduced in the following section.

  \section{$k$-symplectic geometry}\label{section k-symp-manifold}

\index{$k$-symplectic manifold}
    A natural generalization of a symplectic manifold is the notion of the so-called $k$-symplectic manifold. The canonical model of a symplectic manifold is the cotangent bundle $T^*Q$, while the canonical model of a $k$-symplectic manifold is the bundle of $k^1$-covelocities, that is, $(T^1_k)^*Q$.

     The notion of $k$-symplectic structure was independently introduced  by A. Awane \cite{Awane-1992, Awane-2000}, G. G\"{u}nther \cite{Gu-1987},  M. de Le\'{o}n \textit{et al.} \cite{LMM-2002,LMS-88, LMS-93, LMeS-97}, and L.K. Norris \cite{MN-2000, No2}. Let us recall that $k$-symplectic manifolds provide a natural arena to develop Classical Field Theory as an alternative to other geometrical settings which we shall comment in the last part of this book.

     A characteristic of the $k$-symplectic manifold is the existence of a theorem of  Darboux type, therefore all $k$-symplectic manifolds are locally as the canonical model.

\subsection{$k$-symplectic vector spaces}\label{section k-symp vspaces}

\index{$k$-symplectic vector space}
    As we have mentionated above, the $k$-symplectic manifolds constitute the arena for the geometric study of Classical Field Theories. This subsection considers the linear case as a preliminary step for the next subsection.


\index{$k$-symplectic vector space}
    \begin{definition}
        A \emph{\(k\)-symplectic vector space} \((\mathcal{V},\omega^1,\ldots,\omega^k, \mathcal{W})\) is a vector space \(\mathcal{V}\) of dimension \(n(k+1)\), a family of \(k\) skew-symmetric bilinear forms \(\omega^1,\ldots,\omega^k\)  and a vector subspace $\mathcal{W}$ of dimension $nk$ such that
        \begin{equation}\label{k-symp vspaces: nondeg_cond}
            \bigcap_{\alpha=1}^k\ker\,\omega^\alpha=\{0\}\,,
        \end{equation}
        where \[\ker\,\omega^\alpha=\{u\in \mathcal{V} |\, \omega^\alpha(u,v)=0,\,\forall v\in \mathcal{V} \}\] denotes the kernel of \(\omega^\alpha\) and
            \[
                \omega^\alpha\big\vert_{\mathcal{W}\times \mathcal{W}}=0\,,
            \]
        for $\ak$.
    \end{definition}

     The condition (\ref{k-symp vspaces: nondeg_cond}) means that the induced linear map
    \begin{equation}\label{bemol}
        \begin{array}{rccl}
            \sharp_\omega\colon & \mathcal{V} & \to & \mathcal{V}^*\times \stackrel{k}{\cdots}\times \mathcal{V}^*\\\noalign{\medskip}
             & v &\mapsto & (\iota_v\omega^1,\ldots, \iota_v\omega^k)
        \end{array}
    \end{equation}
    is injective, or equivalently, that it has maximal rank, that is,  $rank\, \sharp_\omega=\dim\,\mathcal{V}=n(k+1)$.


    Note that for \(k=1\) the above definition reduces to that of a symplectic vector space with a given Lagrangian subspace $\mathcal{W}$\footnote{A subspace $\mathcal{W}$ of $\mathcal{V}$  is called a Lagrangian subspace if $\mathcal{W}\subset \mathcal{W}^{\perp}$, there exits another subspace $\mathcal{U}$ such that $ \mathcal{U}\subset\mathcal{U}^{\perp}$ and $\mathcal{V}=\mathcal{W}\oplus \mathcal{U}$, (for more details see \cite{LV-2012}).}.
    \begin{example}\label{euclidean space}
    {\rm
        We consider the vector space $\mathcal{V}=\mathbb{R}^3$ with the family of skew-symmetric bilinear forms
                \[
                    \omega^1=e^1\wedge e^3 \quad \makebox{and} \quad \omega^2=e^2\wedge e^3\,,
                \]
        and the subspace
            \[
                \mathcal{W}=span\{e_1,e_2\},
            \]
        where $\{e_1,e_2,e_3\}$ is the canonical basis of $\mathbb{R}^3$ and $\{e^1,e^2,e^3\}$ its dual basis. It is easy to check that
        \[
            \omega^\alpha\big\vert_{\mathcal{W}\times \mathcal{W}}=0\,,\quad \alpha=1,2\,.
        \]
        Moreover,
        \[
            \ker\,\omega^1=span\{e_2\} \makebox{ and } \ker \,\omega^2=span\{e_1\}
        \] and therefore
        \(
            \ker\,\omega^1\cap\ker\,\omega^2=\{0\}
        \), that is, $(\omega^1,\omega^2, \mathcal{W})$ is a $2$-symplectic structure on $\mathbb{R}^3$.
}
    \end{example}

    \begin{example}\label{r6}
    {\rm
        We consider the vector space $\mathcal{V}=\mathbb{R}^6$ with the subspace
            \[
                \mathcal{W}=span\{e_1,e_2,e_4.e_5\}
            \]
        and the family of skew-symmetric bilinear forms
        \[
            \omega^1= e^1\wedge e^3 + e^4\wedge e^6 \makebox{ and } \omega^2= e^2\wedge e^3 + e^5\wedge e^6
        \]
        where $\{e_1,e_2,e_3,e_4,e_5,e_6\}$ is the canonical basis of $\mathbb{R}^6$ and $\{e^1,e^2,e^3,e^4,e^5,e^6\}$ the dual basis. It is easy to check that
        \[
            \ker\,\omega^1=span\{e_2, e_5\} \makebox{ and } \ker \,\omega^2=span\{e_1, e_4\}
        \] and therefore
        \(
            \ker\,\omega_1\cap\ker\,\omega_2=\{0\}
        \). Moreover
        \[
            \omega^\alpha\big\vert_{\mathcal{W}\times \mathcal{W}}=0\,,\quad \alpha=1,2\,.
        \]
        That is, $(\omega^1,\omega^2,\mathcal{W})$ is a $2$-symplectic structure on $\mathbb{R}^6$.

        Another $k$-symplectic structure on $\mathbb{R}^6$ is given by the family of $2$-forms $\omega^\alpha=e^\alpha\wedge e^6,\,$ with $1\leq \alpha\leq 5$, and $\mathcal{W}=span\{e_1,e_2,e_3,e_4,e_5\}$ which is a $5$-symplectic structure on $\mathbb{R}^6$.
    }
    \end{example}
    \begin{example}\label{k-symp vspaces: canonical model}
    {\rm
        It is well-known that for any vector space $V$, the space $V\times V^*$ admits a canonical symplectic form $\omega_V$ given by
        \[
            \omega_V\left( (v,\nu), (w,\eta)\right)= \eta(v)-\nu(w)\,,
        \]
        for $v, w\in V$ and $\nu,\eta\in V^*$ (see for instance \cite{AM-1978}). This structure has the following natural extension to the $k$-symplectic setting. For any $k$, the space $\mathcal{V}=V\times V^*\times\stackrel{k}{\cdots}\times V^*$ can be equipped with a family of $k$ canonical skew-symmetric bilinear forms $(\omega_V^1,\ldots, \omega_V^k)$ given by
        \begin{equation}\label{canonical_omega}
            \omega_V^\alpha\left( (v,\nu_1,\ldots, \nu_k),(w,\eta_1,\ldots, \eta_k)\right)=\eta_\alpha(v)-\nu_\alpha(w)\,,
        \end{equation}
        for $v,w\in V$ and $(\nu_1,\ldots, \nu_k), (\eta_1,\ldots,\eta_k)\in V^*\times\stackrel{k}{\cdots}\times V^*$. Now if we consider the subspace $\mathcal{W}=\{0\}\times V^*\times \stackrel{k}{\ldots}\times V^*$ a simple computation  shows that $(V\times V^*\times\stackrel{k}{\ldots}\times V^*, \omega_V^1,\ldots,\omega_V^k, \mathcal{W})$ is a $k$-symplectic vector space. In fact, this is a direct consequence of the computation of the kernel of $\omega^\alpha_V$ for $\ak$, i.e.,
            \[
                \ker \omega^\alpha_V=\{(v,\nu_1,\ldots, \nu_k)\in \mathcal{V}\,\vert\,v=0\,\makebox{ and } \nu_\alpha=0\}\,.
            \]

        Let us observe that if we consider the natural projection
            \[
            \begin{array}{rccl}
                pr_\alpha\colon & V\times V^*\times \stackrel{k}{\cdots}\times V^* & \to &  V\times V^*\\\noalign{\medskip}
                & (v,\nu_1,\ldots,\nu_k)&\mapsto  &(v,\nu_\alpha),
            \end{array}
            \]
           then  the $2$-form $\omega_V^\alpha$ is exactly $(pr_\alpha)^*\omega_V$.
           }
    \end{example}
    \index{$k$-symplectomorphism}
    \begin{definition}
        Let $(\mathcal{V}_1,\omega^1_1,\ldots, \omega^k_1,\mathcal{W}_1)$ and $(\mathcal{V}_2,\omega^1_2,\ldots, \omega^k_2, \mathcal{W}_2)$ be two $k$-sym\-plectic vector space and let $\phi\colon \mathcal{V}_1\to \mathcal{V}_2$ be a linear isomorphism.  The map $\phi$ is called a \emph{$k$-symplecto\-mor\-phism} if it preserves the $k$-symplectic structure, that is
        \begin{enumerate}
            \item  $\phi^*\omega_2^\alpha=\omega^\alpha_1$; for each $ \ak , $
            \item $\phi(\mathcal{W}_1)=\mathcal{W}_2$
        \end{enumerate}
    \end{definition}

    An important property of the $k$-symplectic structures is the following proposition, which establish a theorem of type Darboux for this generalization of the symplectic structure. A proof of the following result can be found in \cite{Awane-1992, LV-2012}.

    \begin{prop}\label{linear Darboux th}
        Let $(\omega^1,\ldots, \omega^k,\mathcal{W})$ be a $k$-symplectic structure on the vector space $\mathcal{V}$. Then there exists a basis (Darboux basis) $(e^i,f^\alpha_i)$  of $\mathcal{V}$ (with $1\leq i\leq n$ and $1\leq \alpha\leq k$), such that for each $ \ak $
        \[
            \omega^\alpha=e^i\wedge f^\alpha_i\,.
        \]
    \end{prop}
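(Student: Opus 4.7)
The plan is to build the Darboux basis in three stages: first pick a complement of $\mathcal{W}$, then use the nondegeneracy condition to construct a dual system inside $\mathcal{W}$, and finally correct the complement so it becomes isotropic for every $\omega^\alpha$. Concretely, I would start by choosing any vector subspace $E\subset \mathcal{V}$ with $\mathcal{V}=E\oplus \mathcal{W}$; since $\dim\mathcal{W}=nk$ and $\dim\mathcal{V}=n(k+1)$, such an $E$ exists and has dimension $n$. Fix a basis $e_1,\ldots,e_n$ of $E$ with dual basis $e^1,\ldots,e^n$ in $E^*$.

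The key step is to show that the restriction
\[
\Psi\colon \mathcal{W}\longrightarrow (E^*)^k,\qquad w\longmapsto\bigl((\iota_w\omega^1)|_E,\ldots,(\iota_w\omega^k)|_E\bigr),
\]
is a linear isomorphism. Because $\omega^\alpha|_{\mathcal{W}\times\mathcal{W}}=0$, for $w\in\mathcal{W}$ the form $\iota_w\omega^\alpha$ annihilates $\mathcal{W}$, so it is determined by its restriction to $E$; the map $\Psi$ is therefore injective iff $\bigcap_\alpha\ker\omega^\alpha\cap\mathcal{W}=\{0\}$, which is immediate from the defining condition $\bigcap_\alpha\ker\omega^\alpha=\{0\}$. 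A dimension count ($\dim\mathcal{W}=nk=\dim(E^*)^k$) then promotes injectivity to bijectivity. Using $\Psi^{-1}$, I define vectors $f_i^\alpha\in\mathcal{W}$ by the prescription
\[
\omega^\beta(e_j,f_i^\alpha)=\delta^{\alpha}_{\beta}\,\delta^{i}_{j},\qquad 1\leq i,j\leq n,\;1\leq\alpha,\beta\leq k.
\]

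The remaining obstacle, and the only nontrivial part, is that with this ``pre-Darboux'' basis the numbers $a^\alpha_{jl}:=\omega^\alpha(e_j,e_l)$ need not vanish, whereas $e^i\wedge f^\alpha_i$ does vanish on pairs from $E$. I would therefore replace each $e_j$ by $\tilde e_j=e_j+w_j$ with $w_j\in\mathcal{W}$ chosen so that $\omega^\alpha(\tilde e_j,\tilde e_l)=0$ for all $\alpha,j,l$. Writing $w_j=\sum_{m,\beta}c^{m\beta}_j f_m^\beta$ and using the duality relations above, the required cancellation reduces to the linear system $c^{j\alpha}_l-c^{l\alpha}_j=a^{\alpha}_{jl}$, which admits the explicit solution $c^{j\alpha}_l=\tfrac12 a^{\alpha}_{jl}$ thanks to the antisymmetry $a^{\alpha}_{jl}=-a^{\alpha}_{lj}$. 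Note that shifting $e_j$ by an element of $\mathcal{W}$ preserves both the property $\mathcal{V}=\widetilde E\oplus\mathcal{W}$ and the duality identities $\omega^\beta(\tilde e_j,f_i^\alpha)=\delta^\alpha_\beta\delta^i_j$, since $\omega^\beta$ vanishes on $\mathcal{W}\times\mathcal{W}$.

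Finally I would verify the Darboux equality by evaluating both sides on the basis $\{\tilde e_j,f_l^\gamma\}$. A direct check shows that $\omega^\alpha$ and $\sum_i \tilde e^i\wedge f^\alpha_i$ agree on every pair: both vanish on $(\tilde e_j,\tilde e_l)$ (by the isotropy of $\widetilde E$), both vanish on $(f_l^\gamma,f_m^\delta)$ (since $\mathcal{W}$ is isotropic and the wedge has no contribution from $\mathcal{W}\times\mathcal{W}$), and both equal $\delta^\alpha_\gamma\delta^j_l$ on $(\tilde e_j,f_l^\gamma)$. This establishes $\omega^\alpha=\tilde e^i\wedge f^\alpha_i$ and completes the construction of the Darboux basis.
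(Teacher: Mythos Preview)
Your argument is correct and complete in structure. The paper itself does not supply a proof of this proposition; it only cites \cite{Awane-1992,LV-2012}, so there is no in-paper argument to compare against. Your three-stage approach (choose a complement, use the isomorphism $\Psi\colon\mathcal{W}\to(E^*)^k$ to produce the dual family $f^\alpha_i$, then shear the complement into an isotropic one) is the standard direct construction and matches in spirit what one finds in those references.

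One small computational slip: expanding $\omega^\alpha(\tilde e_j,\tilde e_l)$ with your conventions gives
\[
\omega^\alpha(\tilde e_j,\tilde e_l)=a^\alpha_{jl}+c^{j\alpha}_l-c^{l\alpha}_j,
\]
so the linear system you need is $c^{j\alpha}_l-c^{l\alpha}_j=-a^\alpha_{jl}$, not $+a^\alpha_{jl}$. The fix is immediate: take $c^{j\alpha}_l=-\tfrac12 a^\alpha_{jl}$. Everything else (preservation of the duality relations under the shift, the final basis check) goes through unchanged.
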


\subsection{$k$-symplectic manifolds}\label{section k-symp manifolds}

\index{$k$-symplectic manifold}

    We turn now to the globalization of the ideas of the previous section to $k$-symplectic manifolds.

    \begin{definition}\label{k-simest}
        Let $M$ be  a smooth manifold of dimension $n(k+1)$,  $V$
        be an integrable distribution of dimension $nk$ and $\omega^1,\ldots,
        \omega^k$ a family of closed differentiable $2$-forms defined on $M$. In such a case $(\omega^1,\ldots, \omega^k,V)$ is called \emph{a $k$-symplectic structure on $M$} if and only if

                \begin{enumerate}
                    \item $\omega^\alpha\Big\vert_{V\times V}=0,\quad 1\leq \alpha\leq k\,$,
                    \item $\ds\bigcap_{\alpha=1}^k\ker \omega^\alpha=\{0\}\,.$
                 \end{enumerate}

        A manifold $M$ endowed with a $k$-symplectic structure is said to be a \emph{$k$-symplectic manifold}.
    \end{definition}
    \begin{remark}{\rm
    In the above definition, the condition $\dim M=n (k+1)$ with $n,k\in\mathbb{N}$ implies that, for an arbitrary manifold $M$ of dimension $m$, only can exist a $k$-symplectic structure if there is a  couple $(n,k)$ such that $M=n(k+1)$. Thus, for instance, if $M=\mathbb{R}^6$ there isn't a $3$-symplectic structure for instance; in fact,  only can exist $k$-symplectic structures if $k\in\{ 1, 2,5\}$.
    \rqed}
    \end{remark}
    \index{$k$-symplectomorphism}

    \begin{definition}\label{k-symplectomorphism}
        Let $(M_1, \omega_1^1,\ldots, \omega_1^k,V_1)$ and $(M_2, \omega_2^1,\ldots, \omega_2^k,V_2)$ be two $k$-symplectic manifolds and let $\phi\colon M_1\to M_2$ be a diffeomorphism. $\phi$ is called a \emph{$k$-symplectomorphism} if it preserves the $k$-symplectic structure, that is if
        \begin{enumerate}
            \item   $
            \phi^*\omega_2^\alpha = \omega_1^\alpha\,$; for each $\ak$,
            \item $\phi_*(V_1)=V_2$.
        \end{enumerate}
    \end{definition}
\begin{remark}
{\rm
Note that if $(M,\omega^1,\ldots, \omega^k, V)$ is a $k$-symplectic manifold then  $(T_xM,\omega^1(x),\ldots, \omega^k(x),$ $ T_xV)$ is a $k$-symplectic vector space for all $x\in M$.
\rqed}
    \end{remark}
    \begin{example}
    {\rm
        Let $(T^1_k)^*Q$ be the cotangent bundle of $k^1$-covelocities, then from (\ref{modelksim}) and (\ref{vertdis}) one easy checks that $(T^1_k)^*Q$, equipped with the canonical forms and the distribution $V=\ker (\pi^k)_*$,  is a $k$-symplectic manifold.
}
    \end{example}
    \begin{remark}\label{relation_canonical_models}
    {\rm
        For each $\nu_q\in (T^1_k)^*Q= T^*Q\oplus\stackrel{k}{\cdots}\oplus T^*Q$, the $k$-symplectic vector space $(T_{\nu_q}(\tkqh),$ $\omega^1(\nu_q),\ldots, \omega^k(\nu_q), T_{\nu_q}V)$ associated to the $k$-sym\-plec\-tic manifold $((T^1_k)^*Q,\omega^1,\ldots, \omega^k, V)$ is $k$-symplec\-to\-morphic to the canonical $k$-sym\-plec\-tic structure on $T_qQ\times T_q^*Q\times\stackrel{k}{\cdots}\times T_q^*Q$ described in example \ref{k-symp vspaces: canonical model} with $V=T_qQ$.\rqed}
    \end{remark}

\index{$k$-symplectic manifold!Darboux theorem}
\index{Darboux theorem!$k$-symplectic}
 The following theorem is the differentiable version of Theorem \ref{linear Darboux th}. This theorem has been proved in \cite{Awane-1992,{LMS-93}}.
    \begin{theorem}[$k$-symplectic Darboux theorem]\label{Darboux k-simp}
        Let $(M,\omega^1,\ldots, \omega^k,V)$ be a  $k$-symplec\-tic manifold. About every point of $M$ we can find a local coordinate system $(x^i,y^\alpha_i),\, 1\leq i\leq n,\, 1\leq \alpha\leq k$, called adapted coordinate system, such that
        \[
            \omega^\alpha=\ds\sum_{i=1}^ndx^i\wedge dy^\alpha_i
        \]
        for each $1\leq \alpha\leq k$, and
        \[
            V=span\Big\{ \ds\frac{\partial}{\partial y^\alpha_i},\, 1\leq i\leq n,\, 1\leq \alpha\leq k\Big \}\,.
        \]
    \end{theorem}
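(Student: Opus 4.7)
\noindent\textit{Proof proposal.} My plan is to mimic the classical symplectic Darboux argument, but carry it out simultaneously for the $k$ forms, with the integrable distribution $V$ playing the role of a canonical Lagrangian distribution. Fix a point $p\in M$. Since $V$ is integrable of rank $nk$, by the Frobenius theorem I can choose local coordinates $(q^1,\ldots,q^n,u^1_1,\ldots,u^k_n)$ around $p$ in which
$$V=\mathrm{span}\Bigl\{\ds\frac{\partial}{\partial u^\alpha_i}\Bigr\}.$$
The condition $\omega^\alpha\big\vert_{V\times V}=0$ then forces the vanishing of all $du^\beta_i\wedge du^\gamma_j$ components, so each $\omega^\alpha$ has the form
$$\omega^\alpha=\sum_{i,\beta,j}A^{\alpha,\beta}_{ij}(q,u)\,dq^i\wedge du^\beta_j+\sum_{i<j}B^\alpha_{ij}(q,u)\,dq^i\wedge dq^j.$$

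Next I would use the pointwise linear Darboux theorem (Proposition \ref{linear Darboux th}) applied to $(T_pM,\omega^1(p),\ldots,\omega^k(p),T_pV)$ to fix, after a linear change of the $(q^i,u^\alpha_i)$ that preserves $V$, the normalization $\omega^\alpha(p)=\sum_i dq^i\wedge du^\alpha_i$. At this stage the matrix $(A^{\alpha,\beta}_{ij})$ equals $\delta^{\alpha\beta}\delta_{ij}$ at $p$, hence is invertible near $p$. This regularity, combined with $\bigcap_\alpha \ker\omega^\alpha=\{0\}$, is exactly what is needed in order to treat the $u^\alpha_i$ as fiber coordinates over the base $(q^i)$ and look for the correct ``momenta''.

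I would then construct the adapted coordinates in two steps. First, using the closedness $d\omega^\alpha=0$ and the Poincar\'e lemma, write $\omega^\alpha=-d\theta^\alpha$ on a contractible neighborhood. Writing $\theta^\alpha=P^\alpha_i(q,u)\,dq^i+Q^{\alpha,\beta}_i(q,u)\,du^\beta_i$ and integrating $\theta^\alpha$ along the leaves of $V$, I can gauge away the $du$-components by replacing $\theta^\alpha$ with $\theta^\alpha-d\psi^\alpha$ for a suitable function $\psi^\alpha(q,u)$, leaving $\theta^\alpha=y^\alpha_i(q,u)\,dq^i$ for some functions $y^\alpha_i$. Differentiating gives
$$\omega^\alpha=dq^i\wedge dy^\alpha_i,$$
so $\omega^\alpha$ is already in canonical form in the coordinates $(q^i,y^\alpha_i)$. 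It remains to check that $(q^i,y^\alpha_i)$ is actually a local chart and that $V$ is still spanned by the $\partial/\partial y^\alpha_i$: this follows by computing the Jacobian $\partial y^\alpha_i/\partial u^\beta_j=A^{\alpha,\beta}_{ij}$, which is invertible at $p$ by the previous step, so the map $(q,u)\mapsto(q,y(q,u))$ is a local diffeomorphism fixing the $q^i$ and hence preserving the distribution $V$.

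The main obstacle I anticipate is ensuring that the gauge choice $\theta^\alpha\leadsto y^\alpha_i\,dq^i$ can be made \emph{simultaneously} for all $\alpha=1,\ldots,k$ in such a way that the resulting $n k$ functions $y^\alpha_i$, together with the $n$ functions $q^i$, form an honest coordinate system and that $V$ remains the vertical distribution. The algebraic non-degeneracy $\bigcap_\alpha\ker\omega^\alpha=\{0\}$ together with the condition $\omega^\alpha|_{V\times V}=0$ guarantee that the linear map $u\mapsto y(q,u)$ is a fiberwise isomorphism; this is the place where the full $k$-symplectic hypothesis is indispensable, and it is what distinguishes the argument from merely applying the ordinary Darboux theorem to each $\omega^\alpha$ separately (which in general would use different base coordinates for different $\alpha$).
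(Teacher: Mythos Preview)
The paper does not give its own proof of this theorem; it only states that a proof appears in \cite{Awane-1992,LMS-93}. So there is no in-paper argument against which to compare.

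Your outline is sound and is the natural route: use Frobenius to flatten $V$, then manufacture the momenta $y^\alpha_i$ as the coefficients of a suitably gauged primitive of each $\omega^\alpha$. Two remarks sharpen it. First, the appeal to the linear Darboux theorem is not needed: once $\omega^\alpha=A^{\alpha,\beta}_{ij}\,dq^i\wedge du^\beta_j+B^\alpha_{ij}\,dq^i\wedge dq^j$, invertibility of the $nk\times nk$ block $\bigl(A^{\alpha,\beta}_{ij}(p)\bigr)$ follows directly from $\bigcap_\alpha\ker\omega^\alpha=\{0\}$, since for $v=c^\beta_j\,\partial/\partial u^\beta_j\in V$ one has $\iota_v\omega^\alpha=-A^{\alpha,\beta}_{ij}c^\beta_j\,dq^i$, and nondegeneracy then forces $c=0$. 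Second, the ``simultaneity'' worry you flag is not a real obstacle. The gauge $\theta^\alpha\mapsto\theta^\alpha-d\psi^\alpha$ is performed \emph{independently} for each $\alpha$: the leafwise Poincar\'e lemma (which applies because $\omega^\alpha\vert_{V\times V}=0$ makes the $du$-part of $\theta^\alpha$ closed along each leaf, smoothly in $q$) yields a separate $\psi^\alpha$ for each index. The only \emph{joint} requirement is that the resulting $nk$ functions $y^\alpha_i$, together with the $q^i$, give a chart, and that is exactly the Jacobian condition $\partial y^\alpha_i/\partial u^\beta_j=A^{\alpha,\beta}_{ij}$ you already checked. Since the coordinate change $(q,u)\mapsto(q,y)$ fixes the $q^i$, it preserves the foliation by level sets of the $q^i$, so $V$ remains $\mathrm{span}\{\partial/\partial y^\alpha_i\}$. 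The argument closes.
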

    \begin{remark}
    {\rm
        Notice that the notion of $k$-symplectic manifold introduced in this chapter coincides with the  one given by A. Awane \cite{Awane-1992,Awane-2000}, and it is equivalent to the notion of standard polysymplectic structure\footnote{A $k$-polysymplectic form on an $n(k+1)$-dimensional manifold $N$ is an $\mathbb{R}^k$-valued closed nondegenerated two-form on $N$ of the form
         \[
            \Omega=\displaystyle\sum_{i=1}^k\eta^i\otimes e_i\,,
         \]where $\{e_1,\ldots, e_k\}$ is any basis of $\mathbb{R}^k$. The pair $(N,\Omega)$ is called a $k$-polysymplectic manifold.} of C. G\"{u}nter \cite{Gu-1987} and integrable $p$-almost cotangent structure introduced by M. de Le\'{o}n \textit{et al.} \cite{LMS-88, LMS-93}.

        Observe that when $k=1$, Awane's definition reduces to the notion of polarized symplectic manifold, that is a symplectic manifold with a Lagrangian submanifold. For that, in \cite{LV-2012} we distinguish between \textit{$k$--symplectic} and \textit{polarized $k$--symplectic manifolds}.

        By taking a basis $\{e^1, \ldots, e^k\}$ of $\mathbb{R}^k$, every $k$-symplectic manifold $(N,\omega^1,\ldots, \omega^k)$ gives rise to a polysymplectic manifold $(N,\Omega=\sum_{i=1}^k \omega^i\otimes e_i)$. As $\Omega$ depends on the chosen basis, the polysymplecic manifold $(N,\Omega)$ is not canonically constructed. Nevertheless, two polysymplectic forms $\Omega_1$ and $\Omega_2$ induced by the same $k$-symplectic manifold and different bases for $\mathbb{R}^k$ are the same up to a change of basis on $\mathbb{R}^k$. In this case, we say that $\Omega_1$ and $\Omega_2$ are \textit{gauge equivalent}. In a similar way, we say that $(N,\omega^1,\ldots, \omega^k)$ and $(N,\tilde{\omega}^1,\ldots, \tilde{\omega}^k)$ are gauge equivalent if they give rise to gauge equivalent polysymplectic forms, \cite{LV-2014}.

        \rqed
        }
    \end{remark}

 \newpage
\mbox{}
\thispagestyle{empty} 

\chapter{$k$-symplectic formalism}\label{chapter: k-symplectic formalism}

In this chapter we shall describe the $k$-symplectic formalism. As we shall see in the following chapters, using this formalism we can study Classical Field Theories in the Hamiltonian and Lagrangian cases.

 One of the most important elements in the $k$-symplectic approach  is the notion of $k$-vector field. Roughly speaking, it is a family of $k$ vector fields. In order to introduce this notion in section \ref{section k-vector field}, we previously consider the tangent bundle of $k^1$-velocities of a manifold, i.e. the Whitney sum of $k$ copies of its tangent bundle with itself. In section \ref{section k-tangent} we shall describe this manifold with more details.

 Here we shall introduce a geometric equation, called the $k$-symplectic Hamiltonian equation, which  allows us to describe Classical Field Theories when the $k$-symplectic manifold is the cotangent bundle of $k^1$-covelocities or its Lagrangian counterpart under some regularity condition satisfied by the Lagrangian function.

   \section{$k$-vector fields and integral sections}\label{section k-vector field}

%

        We shall devote this  section to introduce the notion of $k$-vector field and discuss its integrability. This notion is fundamental in the $k$-symplectic and $k$-cosymplectic approaches.

%
%

Consider the tangent bundle $\tau\colon TM\to M$ of an arbitrary $n$-dimensional smooth manifold $M$ and  consider the space\footnote{A completed description of this space $T^1_kM$ can be found in section \ref{section k-tangent}.}.
                \[
                    T^{1}_{k}M=TM\oplus\stackrel{k}{\ldots}\oplus TM\,,
                \]
            as the Whitney sum of $k$ copies of the tangent bundle $TM$. Let us observe that an element ${\rm v}_p$ of $T^1_kM$ is a family of $k$ tangent vectors $({v_1}_p,\ldots, {v_k}_p)$ at the same point $p\in M$. Thus one can consider the canonical projection
                \begin{equation}\label{tkq: tauk}
                    \begin{array}{rccl}
                        \tau^k\colon & T^1_kM& \to & M\\\noalign{\medskip}
                         & {\rm v}_p=({v_1}_p,\ldots, {v_k}_p) & \mapsto & p\,.
                    \end{array}
                \end{equation}

                \index{$k$-vector field}

                \begin{definition}\label{k-vector field def}
                    A \emph{$k$-vector field} $\mathbf{X}$ on $M$ is a section of the canonical projection $\tau^k\colon T^1_kM\to M$. We denote by $\vf^k(M)$ the set of $k$-vector fields on $M$.
                \end{definition}

            Since $T^1_kM$ is the Whitney sum $TM\oplus \stackrel{k}{\dots} \oplus TM$ of $k$ copies of $TM$, a $k$-vector field $\mathbf{X}$ on $M$ defines a family of $k$ vector fields $(X_{1}, \dots, X_{k})$ on $M$ through  the projection of $\mathbf{X}$ onto every factor $TM$ of the $T^1_kM$,  as it is showed in the following diagram for each $\ak$:
                \[
                    \xymatrix@=15mm{
                        & T^1_kM\ar[d]^-{\tau^{k,\alpha}}\\ M\ar[ru]^-{\bf{X}}\ar[r]^-{X_\alpha} & TM
                    }
                \]
            where $\tau^{k,\alpha}$ denotes the canonical projection over the $\alpha^{th}$ component of $T^1_kM$, i.e.
            \[
            \begin{array}{lccl}
                \tau^{k,\alpha}\colon & T^1_kM & \to & TM\\\noalign{\medskip}
                    & ({v_1}_p,\ldots, {v_k}_p) & \mapsto & {v_\alpha}_p
            \end{array}\,.
            \]
            In what follows, we shall use indistinctly the notation $\mathbf{X}$ or $(X_1, \ldots, X_k)$ to refer a $k$-vector field.

            Let us recall that given a vector field, we can consider the notion of integral curve. In this new setting we now introduce the generalization of this concept for $k$-vector fields: integral sections of a $k$-vector field.\index{Integral section}
                \begin{definition} \label{integral section def}
                    An \emph{integral section} of a $k$-vector field $\mathbb{X}=(X_{1},\dots, X_{k})$, passing through  a point $p\in M$, is a map $\varphi:U_0\subset \r^k \rightarrow M$, defined in some neighborhood                $U_0$ of $0\in\rk$ such that
                        \begin{equation}\label{integral section expr}
                            \qquad\varphi(0)=p, \, \,   \varphi_{*}(x)\Big(\displaystyle\frac{\displaystyle\partial}{\displaystyle\partial x^\alpha}\Big\vert_{x}\Big) = X_{\alpha}(\varphi (x))\,,
                        \end{equation}
                    for all $x\in U_0$ and for all $\ak$.

                    If there exists an integral section passing  through    each point of $M$, then $(X_{1},\dots, X_{k})$ is called an \emph{integrable $k$-vector field}. \index{$k$-vector field!integrable}
            \end{definition}

Using local coordinates $(U,y^i)$ on $M$   we can write
$$
\varphi_* (x) \left( \displaystyle\frac{\partial}
                        {\partial x^\alpha}\Big\vert_{x} \right)
                        =\derpar{\varphi^i}{x^\alpha}\Big\vert_{x}\,
                        \derpar{}{y^i}\Big\vert_{\varphi(x)}
                        \;, \quad X_\alpha=X_\alpha^i\derpar{}{y^i}
$$
where $\varphi^i=y^i\circ \varphi$.

              Thus  $\varphi$ is an integral section of ${\bf
X}=(X_1,\ldots, X_k)$ if and only if the following system of partial
differential equations holds:
                \begin{equation}\label{integral section equivalence cond}
                    \derpar{\varphi^i}{x^\alpha}\Big\vert_{x} \, =\, X_\alpha^i(
                    \varphi (x))
                \end{equation}
         where $x\in U_0\subseteq\rk$, $\ak$ and $1\leq i\leq n$.

Let us remark that if $\varphi$ is an integral section of a $k$-vector field $\mathbf{X}=(X_1,\ldots, X_k)$, then each curve on $M$ defined by $\varphi_\alpha(s)=\varphi(se_\alpha)$, with $\{e_1,\ldots, e_k\}$ the canonical basis on $\rk$ and $s\in \mathbb{R}$, is an integral curve of the vector field $X_\alpha$ on $M$. However, given $k$ integral curves of $X_1,\ldots, X_k$ respectively, it is not possible in general to reconstruct an integral section of $(X_1,\ldots, X_k)$.

We remark that a
$k$-vector field ${\bf X}=(X_1,\ldots , X_k)$ with $\{X_1,\ldots, X_k\}$ linearly independent,  is integrable if and
only if $[X_\alpha,X_\beta] = 0$, for each $\alpha,\beta$, that is, $\mathbf{X}$ is integrable if and only if the distribution generated by $\{X_1,\ldots, X_k\}$ is integrable.
This is the geometric expression
of the integrability condition of the preceding differential equation
(see, for instance,  \cite{Die} or \cite{Lee}).

\begin{remark}
{\rm
$k$-vector fields in a manifold $M$ can also be
defined in a more general way as sections of the bundle $\Lambda^k
M\to M$ (i.e., the contravariant skew-symmetric
tensors of order $k$ in $M$). Starting from the $k$-vector fields
${\bf X}=(X_{1},\dots, X_{k})$ defined
in Definition \ref{k-vector field def}, and making the wedge product
$X_{1}\wedge\ldots\wedge X_{k}$, we obtain the particular class
of the so-called
{\sl decomposable} or {\sl homogeneous $k$-vector fields}, which can
be associated with distributions on $M$.
(See \cite{EMR-1998} for a detailed exposition on these
topics).\rqed}
\end{remark}

\begin{example}
{\rm
            Consider $M=(T^1_3)^*\r$ and a $3$-vector field $(X_1,X_2,X_3)$ with  local expression
            \[
                        X_\alpha= p^\alpha \derpar{}{q} + (X_\alpha)^\beta\derpar{}{p^\beta},\quad 1\leq \alpha\leq 3\,,
                    \]
                where the functions $(X_\alpha)^\beta$ with $1\leq \alpha,\beta\leq 3$ satisfy
                    \[
                        (X_1)^1+ (X_2)^2+ (X_3)^3=-4\pi r
                    \]
                $r$ being a constant.

            Then $\varphi\colon U_0\to (T^1_3)^*\r$ with components $\varphi(x)=(\psi(x),\psi^\alpha(x))$
            is an integral section of $(X_1,X_2,X_3)$ is and only if (see (\ref{integral section equivalence cond}))
            \begin{eqnarray*}
                        \psi^\alpha = \derpar{\psi}{x^\alpha}\,,\quad \alpha=1,2,3,\\\noalign{\medskip}
                        4\pi r = - \Big(\derpar{\psi^1}{x^1}+  \derpar{\psi^2}{x^2}+\derpar{\psi^3}{x^3}\Big)\,,
                    \end{eqnarray*}
            which are  the electrostatic equations (for more details about these equations, see section \ref{example k-symp: hamiltonian electrostatic}).
}
        \end{example}

              \section{$k$-symplectic Hamiltonian equation}\label{Section HDW eq: k-symp approach}

    \index{$k$-symplectic approach!Hamiltonian function}

            Let $\left( M,\omega^1, \ldots, \omega^k,V \right)$  a $k$-symplectic manifold and $H$  a Hamiltonian function defined on $M$, that is, a function $H:M \rightarrow \r$
                \begin{definition}\label{shks}
                    The family $(M,\omega^\alpha,H)$ is called \emph{$k$-symplectic Hamiltonian system}.
                \end{definition}
                \index{$k$-symplectic Hamiltonian system}
                \index{$k$-symplectic approach! Hamiltonian system}

            Given a $k$-symplectic Hamiltonian system $(M,\omega^\alpha, H)$ we define a vector bundle morphisms $\flat_\omega$ as follows:
        \begin{equation}\label{bemol ksymp}
            \begin{array}{rccl}
                \flat_\omega\colon & T^1_kM & \to &T^*M\\\noalign{\medskip}
                 & (v_1,\ldots, v_k) & \mapsto & \flat_\omega(v_1,\ldots, v_k)=
                 {\rm trace}(\iota_{v_\beta}\omega^\alpha)=\ds\sum_{\alpha=1}^k\iota_{v_\alpha}\omega^\alpha\,.
            \end{array}
        \end{equation}
        The above morphism induce a morphism of $\mathcal{C}^\infty(M)$-modules
        between the corresponding space of sections
        $\flat_\omega\colon \mathfrak{X}^k(M)\to \bigwedge^1(M)$.

        \begin{lemma}
            The map $\flat_\omega$ is surjective.
        \end{lemma}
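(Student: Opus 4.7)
The plan is to establish fiberwise surjectivity of the vector bundle morphism $\flat_\omega : T^1_kM \to T^*M$ and then pass to sections. The core observation is that, fiberwise, $\flat_\omega$ is transpose (up to a sign) to the injective map built into the very definition of a $k$-symplectic manifold.

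Fix $p\in M$, and consider the linear map $\flat_\omega(p) : (T_pM)^k \to T_p^*M$ sending $(v_1,\ldots,v_k)$ to $\sum_\alpha \iota_{v_\alpha}\omega^\alpha(p)$. Its transpose is a linear map $T_pM \to \bigoplus_{\alpha=1}^k T_p^*M$ (under the canonical identification $T_p^{**}M \cong T_pM$). For $v\in T_pM$ and $(v_1,\ldots,v_k)\in (T_pM)^k$ a direct computation gives
$$\bigl\langle \flat_\omega(p)^{\ast}(v),\, (v_1,\ldots,v_k)\bigr\rangle \;=\; \sum_{\alpha=1}^k \omega^\alpha(p)(v_\alpha, v) \;=\; -\sum_{\alpha=1}^k \bigl(\iota_v \omega^\alpha(p)\bigr)(v_\alpha),$$
so $\flat_\omega(p)^{\ast}(v) = -\bigl(\iota_v\omega^1(p),\ldots,\iota_v\omega^k(p)\bigr)$, which is exactly minus the pointwise version of the map $\sharp_\omega$ of (\ref{bemol}). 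The defining condition $\bigcap_{\alpha} \ker \omega^\alpha(p)=\{0\}$ of a $k$-symplectic manifold says precisely that this map is injective; hence $\flat_\omega(p)^{\ast}$ is injective, and consequently $\flat_\omega(p)$ is surjective at every $p\in M$.

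From fiberwise surjectivity to surjectivity on sections is standard. Since $\flat_\omega$ has locally constant rank, $\ker\flat_\omega$ is a smooth vector subbundle of $T^1_kM$; the short exact sequence $0 \to \ker\flat_\omega \to T^1_kM \to T^*M \to 0$ splits smoothly over the paracompact base $M$, so any $1$-form on $M$ lifts to a $k$-vector field. Equivalently, one may invoke Theorem \ref{Darboux k-simp} to work in an adapted chart, where writing $X_\beta = A^i_\beta\, \partial/\partial x^i + (B_\beta)^\alpha_i\, \partial/\partial y^\alpha_i$ gives the explicit formula $\flat_\omega(X_1,\ldots,X_k) = \sum_\alpha A^i_\alpha\, dy^\alpha_i - \sum_\alpha (B_\alpha)^\alpha_i\, dx^i$; any target $1$-form $a_i\,dx^i + b^\alpha_i\,dy^\alpha_i$ is then hit by choosing $A^i_\alpha = b^\alpha_i$ and, say, $(B_1)^1_i=-a_i$ with the remaining components zero. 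Local solutions are glued with a partition of unity using the $\mathcal{C}^\infty(M)$-linearity of $\flat_\omega$.

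There is no substantive obstacle: the whole content is the three-line transpose computation which converts the nondegeneracy axiom of the $k$-symplectic structure directly into surjectivity of $\flat_\omega$, the remainder being routine bundle theory.
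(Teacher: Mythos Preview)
Your proof is correct and follows essentially the same route as the paper: both reduce to the linear-algebra statement at a single fiber, identify $\flat_\omega$ as $-\sharp_\omega^*$ (the paper does this via the trace identification $F$ of $(V^*)^k$ with $(V^k)^*$, which is the same pairing implicit in your transpose computation), and deduce surjectivity from the injectivity of $\sharp_\omega$ built into the $k$-symplectic nondegeneracy axiom. You are more explicit than the paper about the passage from fiberwise surjectivity to surjectivity on sections, which the paper leaves implicit.
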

        \begin{proof}
            This result is a particular case of the following
            algebraic assertion: \textit{If $V$ is a vector space with
            a $k$-symplectic structure $(\omega^1,\ldots,
            \omega^k, \mathcal{W})$, then the map
            \[
            \begin{array}{rccl}
                \flat_\omega\colon & V\times\stackrel{k}{\ldots}\times V & \to &V^*\\\noalign{\medskip}
                 & (v_1,\ldots, v_k) & \mapsto & \flat_\omega(v_1,\ldots, v_k)=
                 {\rm trace}(\iota_{v_\beta}\omega^\alpha)=\ds\sum_{\alpha=1}^k\iota_{v_\alpha}\omega^\alpha\,
            \end{array}
        \] is surjective.}

        Indeed, we consider the identification
            \beq
                \begin{array}{rlcl}
                    F\colon &V^*\times \stackrel{k}{\ldots}\times V^* &
                     \cong &(V\times\stackrel{k}{\ldots}\times V)^*\\\noalign{\medskip}
                    &(\nu_1,\ldots,\nu_k) & \mapsto & F(\nu_1,\ldots,\nu_k) \ ,
                \end{array}
                \label{ident2}
            \eeq
            where $F(\nu_1,\ldots,\nu_k)(v_1,\ldots, v_k)={\rm trace}\,\big(\nu_\alpha(v_\beta)\big)=\ds\sum_{\alpha=1}^k\nu_\alpha(v_\alpha)$, and  we consider the map $\sharp_\omega$ defined in (\ref{bemol}).

        We recall that as $(\omega^1,\ldots, \omega^k, \mathcal{W})$ is a $k$-symplectic
        structure,  $\sharp_\omega$ is injective
        and therefore the dual map $\sharp_\omega^*$ is surjective.

        Finally, using the identification (\ref{ident2}) it is immediate to prove
        that $\flat_\omega=-\sharp_\omega^*$ and thus
        $\flat_\omega$ is surjective.\qed
        \end{proof}

                Let $(M,\omega^\alpha, H)$ be a $k$-symplectic Hamiltonian system and $\,{\bf X} \in \vf^k(M)$ a $k$-vector field solution of the geometric equation
                \begin{equation}\label{ecHksym}
                            \flat_\omega(\mathbf{X})=\ds\sum_{\alpha=1}^k\iota_{X_\alpha}\omega^\alpha=dH\,.
                \end{equation}

                Given a local coordinate system $\left( q^i,p^{\alpha}_i \right)$,    each  $X_\alpha$ is locally given by
                        $$
                            X_\alpha \, = \, (X_\alpha)^i \, \frac{\partial}{\partial q^i} \, + \, (X_\alpha)^\beta_i \, \frac{\partial}{\partial p^\beta_i}\; ,\quad 1 \leq \alpha \leq k\,.
                        $$

                    Now, since
                    $$
                            dH \, = \, \derpar{H}{q^i} \, dq^i  \, + \, \derpar{H}{p^\alpha_i} \, dp^\alpha_i \;,
                        $$
                        and        $$\omega^\alpha=dq^i\wedge dp^\alpha_i$$
                                            we deduce that
                    the equation (\ref{ecHksym}) is locally equivalent to the following equations
                        \begin{equation}\label{ecHDWloc}
                            \frac{\partial H}{\partial q^i} \, = \,- \displaystyle\sum_{\beta=1}^k (X_\beta)^\beta_i \,, \quad \frac{\partial H}{\partial p^{\alpha}_i} \, = \, (X_\alpha)^i \,,
                        \end{equation}
                    with $ 1 \leq i \leq n $ and $ 1 \leq \alpha \leq k\; .$

                     Let us suppose now that the $k$-vector field $\textbf{X}=\left( X_1,\ldots, X_k \right)$, solution of (\ref{ecHksym}), is integrable and
                    $$\begin{array}{cccl}\varphi: & \r^k & \longrightarrow  & M \\ \noalign{\medskip}
                       & x& \to & \varphi(x)=\left( \psi^i(x), \psi^\alpha_i(x) \right)
                                        \end{array}$$
                     is an integral section of  $\textbf{X}$, i.e. $\varphi$ satisfies
                      (\ref{integral section expr})  which in this case is locally equivalent to the following system of partial differential equations (condition (\ref{integral section equivalence cond}))
        \begin{equation}\label{tkqh integral section equivalence cond}
            \derpar{\psi^i}{x^\alpha}\Big\vert_{x}=(X_\alpha)^i(\varphi(x))\,,\quad \derpar{\psi^\beta_i}{x^\alpha}\Big\vert_{x}=(X_\alpha)^\beta_i(\varphi(x))
        \, .\end{equation}

                     From  (\ref{ecHDWloc}) and (\ref{tkqh integral section equivalence cond})   we obtain
                        \begin{equation}\label{local K-sympHeq}
                            \frac{\partial H}{\partial q^i}\Big\vert_{\varphi(x)} \, = \, -
                            \displaystyle\sum_{\beta=1}^k \frac{\partial \psi^\beta_i} {\partial
                            x^\beta}\Big\vert_{x} \,, \quad \frac{\partial H}{\partial
                            p^{\alpha}_i}\Big\vert_{\varphi(x)} \, = \, \frac{\partial \psi^i}{\partial
                            x^\alpha}\Big\vert_{x}
                        \end{equation}
                        where $  1 \leq i \leq n \,, \, 1 \leq \alpha \leq k$.

                        This theory can be summarized in the following
                \begin{theorem}\label{fhks}
                    Let $(M,\omega^\alpha, H)$ be a $k$-symplectic Hamiltonian system and $\,{\bf X} = (X_1, \ldots,  X_k)$ an integrable $k$-vector field on $M$ solution of the equation (\ref{ecHksym}).

                    If $\varphi:\rk \to M$ is an integral section of $\mathbf{X}$, then $\varphi$ is a solution of the following systems of partial differential equations
                    \[
                            \frac{\partial H}{\partial q^i}\Big\vert_{\varphi(x)} \, = \, -
                            \displaystyle\sum_{\beta=1}^k \frac{\partial \psi^\beta_i} {\partial
                            x^\beta}\Big\vert_{x} \,, \quad \frac{\partial H}{\partial
                            p^{\alpha}_i}\Big\vert_{\varphi(x)} \, = \, \frac{\partial \psi^i}{\partial
                            x^\alpha}\Big\vert_{x}\,.
                        \]
                \end{theorem}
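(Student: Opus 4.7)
The plan is to carry out a direct local computation in Darboux-adapted coordinates, essentially combining three local facts: the local form of the equation $\flat_\omega(\mathbf{X})=dH$, the local form of the integral section condition (\ref{integral section equivalence cond}), and the chain rule. The heavy lifting has already been done upstream by the $k$-symplectic Darboux theorem (Theorem \ref{Darboux k-simp}), which guarantees that about each point of $M$ we have coordinates $(q^i,p^\alpha_i)$ in which $\omega^\alpha=dq^i\wedge dp^\alpha_i$. So no global geometric argument is needed; the statement is a local coordinate translation.

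First I would fix a point $p=\varphi(x_0)\in M$ and work in a Darboux chart around $p$. Writing
\[
X_\alpha=(X_\alpha)^i\,\frac{\partial}{\partial q^i}+(X_\alpha)^\beta_i\,\frac{\partial}{\partial p^\beta_i},
\]
a direct contraction gives $\iota_{X_\alpha}\omega^\alpha=(X_\alpha)^i\,dp^\alpha_i-(X_\alpha)^\alpha_i\,dq^i$ (no sum on $\alpha$ in the middle step, but then summing over $\alpha$). Comparing coefficients of $dq^i$ and $dp^\alpha_i$ in $\sum_\alpha \iota_{X_\alpha}\omega^\alpha = dH$ yields exactly the system (\ref{ecHDWloc}):
\[
\frac{\partial H}{\partial q^i}=-\sum_{\beta=1}^k(X_\beta)^\beta_i,\qquad \frac{\partial H}{\partial p^\alpha_i}=(X_\alpha)^i.
\]
This step is routine but it is the only place where the $k$-symplectic hypothesis is used through the canonical form of the $\omega^\alpha$.

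Next I would invoke the integral section hypothesis. Writing $\varphi(x)=(\psi^i(x),\psi^\beta_i(x))$, condition (\ref{integral section expr}) translated into (\ref{integral section equivalence cond}) gives
\[
\frac{\partial \psi^i}{\partial x^\alpha}\Big\vert_x=(X_\alpha)^i(\varphi(x)),\qquad \frac{\partial \psi^\beta_i}{\partial x^\alpha}\Big\vert_x=(X_\alpha)^\beta_i(\varphi(x)),
\]
for all $x$ in the domain of $\varphi$. Substituting these equalities into the local equations for the $k$-symplectic Hamiltonian equation at the point $\varphi(x)$ yields
\[
\frac{\partial H}{\partial p^\alpha_i}\Big\vert_{\varphi(x)}=\frac{\partial \psi^i}{\partial x^\alpha}\Big\vert_x,\qquad \frac{\partial H}{\partial q^i}\Big\vert_{\varphi(x)}=-\sum_{\beta=1}^k\frac{\partial \psi^\beta_i}{\partial x^\beta}\Big\vert_x,
\]
which is precisely the claimed system.

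There is no real obstacle here: the only mild subtlety is bookkeeping of indices (in particular, that the second relation involves a sum over the diagonal index $\beta$ contracting the $k$-vector field index with the momentum label), and the need to begin in Darboux coordinates so that the $\omega^\alpha$ take the standard form assumed by the computation of $\iota_{X_\alpha}\omega^\alpha$. Once those two points are handled, the result is a one-line substitution. Note that integrability of $\mathbf{X}$ is not used in the proof itself; it is only invoked to ensure that an integral section $\varphi$ actually exists so that the conclusion is not vacuous.
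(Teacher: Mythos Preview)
Your proposal is correct and follows essentially the same route as the paper: work in Darboux coordinates, derive the local conditions (\ref{ecHDWloc}) from $\sum_\alpha\iota_{X_\alpha}\omega^\alpha=dH$, write the integral-section condition in coordinates, and substitute. Your closing remark that integrability is only needed to ensure existence of an integral section (not for the computation itself) is also in line with how the paper presents the argument.
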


                From now, we shall call this equation (\ref{ecHksym}) as \emph{$k$-symplectic Hamiltonian equation}.
\index{$k$-symplectic Hamiltonian equation}
                \begin{definition}
                    A $k$-vector field $\mathbf{X}=(X_1,\ldots, X_k)\in \vf^k(M)$ is called a \emph{$k$-symplectic Hamiltonian $k$-vector field} for a $k$-symplectic Hamiltonian system $(M,\omega^\alpha, H)$ if $\mathbf{X}$ is a solution of (\ref{ecHksym}).
\index{$k$-vector field!$k$-symplectic Hamiltonian}
                    We denote by $\vf^k_H(M)$ the set of $k$-vector fields which are solution of (\ref{ecHksym}), i.e.
                        \begin{equation}\label{vfkh}
                            \vf^k_H(M)\colon=\{\mathbf{X}=(X_1,\ldots, X_k)\in \vf^k(M)\,\vert \,  \flat_\omega(\mathbf{X})=dH\}\,.
                        \end{equation}
                \end{definition}

                One can guarantee the existence of the solution of the $k$-symplec\-tic Hamiltonian equation (\ref{ecHksym}), but the solution is not unique.
                        In fact, let $H\in\mathcal{C}^\infty(M)$ be a function on $M$.
        As $dH\in \Omega^1(M)$ and the map $\flat_\omega$ is surjective,
        then there exists a $k$-vector field $\mathbf{X}^H=(X_1^H,\ldots, X_k^H)$ satisfying
        \begin{equation}\label{poly ham eq}
            \flat_\omega(X_1^H,\ldots, X_k^H)=dH\,,
        \end{equation}
        i.e. $(X_1^H,\ldots, X_k^H)$ is a $k$-vector field solution of the $k$-symplectic Hamiltonian equation (\ref{ecHksym}).

       For instance one can define $\textbf{X}=\left(X_1,\ldots, X_k \right)$ locally as
                            \begin{equation}
                                \begin{array}{l}
                                    X_1 \, = \, \displaystyle\frac{\partial H}{\partial p^1_i}\,
                                    \frac{\partial}{\partial q^i} \, - \, \frac{\partial H}{\partial q^i} \, \frac{\partial}{\partial p^1_i} \\ \noalign{\medskip} X_\alpha \,= \, \displaystyle\frac{\partial H}{\partial p^{\alpha}_i}\,                             \frac{\partial}{\partial q^i} \,, \quad 2 \leq \alpha \leq k
                                \end{array}
                            \end{equation}
                        and using a partition of the unity one can find a $k$-vector field $\textbf{X}=\left( X_1,\ldots, X_k \right)$ defined globally and satisfying (\ref{ecHksym}).

                        Now we can assure the existence of solutions of (\ref{ecHksym}) but not its uniqueness. In fact, let us observe that given a particular solution $\left( X_1,\ldots, X_k \right)$ then any element of the set $\left( X_1,\ldots, X_k \right) + \ker\,\flat_\omega$ is also a solution, since given $\left( Y_1,\ldots, Y_k \right)\in \ker \flat_\omega$ then we have
                        \begin{equation}\label{kerflat}
                            Y^i_\beta=0,\quad \ds\sum_{\alpha=1}^k(Y_\alpha)^\alpha_i=0\,,
                        \end{equation}
where each $Y_\alpha$ is locally given by
\[
    Y_\alpha=Y^i_\alpha\derpar{}{q^i} + (Y_\alpha)^\beta_i\derpar{}{p^\beta_i}\,,
\]
for $\ak$.

Another interesting remark is that  a $k$-vector field solution of the equation (\ref{ecHksym}) is not necessarily integrable but in order to obtain the result of the theorem  \ref{fhks} it is necessary the existence of integral sections. We recall that an integrable $k$-vector field is equivalent to the condition $[X_\alpha,X_\beta]=0$ for all $1\leq \alpha,\beta\leq k$.

\begin{remark}{\rm

Using the $k$-symplectic formalism presented in this chapter we can study symmetries and conservation laws on first-order classical field theories, see \cite{RSV-2007, RSS-2010}. A large part of the discussion of the paper \cite{RSV-2007} is a generalization of the results obtained for non-autonomous mechanical systems (see, in particular, \cite{mssv,LM-96}).  The general problem of a group of symmetries acting on a $k$-symplectic manifold and the subsequent theory of reduction has been analyzed in \cite{MRSV-2013,MRS-2004}. We further remark that the problem of symmetries in field theory has been analyzed using other geometric frameworks, see for instance \cite{EMR-1999,Gymmsy,Gymmsy2,LMS-2004}. About this topic, Noether's theorem associates conservation laws to Cartan symmetries, however, these kinds of symmetries do not exhaust the set of symmetries. Different attempts have been made to extend Noether's theorem in order to include the so-called hidden symmetries and the corresponding conserved quantities, see for instance \cite{SC-81} in Mechanics, \cite{EMR-1999} in multisymplectic field theories or \cite{RSV-2013} in the $k$-symplectic setting.

The $k$-symplectic formalism described here can be extended to another geometrical approaches. For instance:
\begin{itemize}

\item The $k$-symplectic approach can be also studied when one consider classical field theories subject to nonholonomic constraints \cite{LMSV-2008}. The procedure developed in \cite{LMSV-2008} extends that by Bates and Sniatycki \cite{BS-1993} for the linear case. The interest of the study of nonholonomic constraints has been stimulated by its close connection to problems in control theory (see, for instance, \cite{Bloch,{BKMM-1996},{Cortes-2002}}. In the literature, one can distinguish mainly two different approaches in the study of systems subjected to a nonholonomic constraints. The first one is based on the d'Alembert's principle and the second is a constrained variational approach. As is well know, the dynamical equation generated by both approaches are in general not equivalent \cite{CLMM}. The nonholonomic field theory has been studied using another geometrical approaches, (see, for instance \cite{BLMS-2002,LMM-1997,LMM-1996,LM-1996,LMS-2003,Van-2007,Vanthesis,VCLM-2005,MV-2008}).

\item Another interesting setting is the category of the Lie algebroids \cite{Mack-1987,{Mack-1995}}. For further information on groupoids and lie algebroids and their roles in differential geometry see \cite{CW-1999,{HM-1990}}. Let us remember that a Lie algebroid is a generalization of both the Lie algebra and the integrable distribution. The idea of using Lie algebroids in mechanics is due to Weinstein \cite{we-1996}. His formulation allows a geometric unified description of dynamical systems with a variety of different kinds of phase spaces: Lie groups, Lie algebras, Cartesian products of manifolds, quotients manifolds,.... Two good surveys of this topic are \cite{CLMMM-2006, LMM-2005}. In \cite{LMSV-2009} we describe  the $k$-symplectic formalism on Lie algebroids.

\item The Skinner-Rusk approach \cite{skinner2} can be considered in the $k$-symplectic formalism. This topic was studied in \cite{RRS} in the $k$-symplectic approach and in \cite{RRSV-2012} in the $k$-cosymplectic approach.

\item Another interesting topic is the study of Lagrangian submanifolds in the $k$-symplectic setting \cite{LV-2012}. In this paper, we extend the well-know normal form theorem for Lagrangian submanifolds proved by Weinstein in symplectic geometry to the setting of $k$-symplectic manifolds.\rqed
\end{itemize}
}
\end{remark}
\section{Example: electrostatic equations}\label{section: electrostatic}

\index{Equations!electrostatic}
\index{Electrostatic equations}
Consider the  $3$-symplectic Hamiltonian equations
                    \begin{equation}\label{electroestatic_k-symp}
                        \iota_{X_1}\omega^1 + \iota_{X_2}\omega^2+\iota_{X_3}\omega^3=dH\,,
                    \end{equation}
                    where   $H$ is the Hamiltonian function given by
                    \begin{equation}\label{Helectros}
                    \begin{array}{ccccl}
                      H &:& (T^1_3)^*\r &  \longrightarrow & \r \\ \noalign{\medskip}
                                            & &   (q,p^1,p^2,p^3)  & \to &  4{\pi} rq+\ds\frac{1}{2}\ds\sum_{\alpha=1}^3(p^\alpha)^2\,.
                   \end{array} \end{equation}

                   Let us observe that in this example the $k$-symplectic manifold is the cotangent bundle of $3$-covelocities of the real line $(T^1_3)^*\r$ with its canonical $3$-symplec\-tic structure.

                                If $(X_1,X_2,X_3)$ is a solution of (\ref{electroestatic_k-symp}) then,  since \[
                        \ds\frac{\partial H}{\partial q} =4 \pi r\; , \quad   \ds\frac{\partial H}{\partial p^\alpha}= p^\alpha\;,
                    \]
                    and from (\ref{ecHDWloc}) we deduce that each $X_\alpha$, with $1\leq \alpha \leq 3$,  has the local expression
                    \[
                        X_\alpha= p^\alpha \derpar{}{q} + (X_\alpha)^\beta\derpar{}{p^\beta},
                    \]
                where  the functions components $(X_\alpha)^\beta$ with $1\leq \alpha,\beta \leq 3 $ satisfy the identity
                    \[
                        4\pi r=- \Big( (X_1)^1+ (X_2)^2+ (X_3)^3\Big) \, .
                    \]

                Let us suppose that   $(X_1,X_2,X_3)$ is integrable, that is, in this particular case,  the functions $(X_\alpha)^\beta$ with $1\leq \alpha,\beta\leq k$ satisfies
                \[
                    (X_\alpha)^\beta= (X_\beta)^\alpha
                \]
                and
                \[
                    X_1((X_2)^\beta)=X_2((X_1)^\beta), \;  X_1((X_3)^\beta)=X_3((X_1)^\beta), \; X_2((X_3)^\beta)=X_3((X_2)^\beta)\,.
                \]

                Under the assumption of integrability of $(X_1,X_2,X_3)$, if  \[
                    \begin{array}{ccccl}
                     \varphi &:& \r^3 &  \longrightarrow & (T^1_3)^*\r \\ \noalign{\medskip}
                                            & &   x  & \to &  \varphi(x)=(\psi(x),\psi^1(x),\psi^2(x),\psi^3(x))
                   \end{array} \]
                 is  an integral section of a $3$-vector field $(X_1,X_2,X_3)$  solution of (\ref{electroestatic_k-symp}), then we deduce that \[(\psi(x),\psi^1(x),\psi^2(x),\psi^3(x))\]  is a solution of
                 \begin{equation}\label{electrosEC}
                        \begin{array}{rcl}
                        \psi^\alpha &=&
                            \frac{\displaystyle\partial\psi}{\displaystyle\partial
                            x^\alpha},
                            \\ \noalign{\medskip}
                            -\Big(\frac{\displaystyle\partial\psi^1}{\displaystyle\partial
                            x^1}+
                            \frac{\displaystyle\partial\psi^2}{\displaystyle\partial x^2}+
                            \frac{\displaystyle\partial\psi^3}{\displaystyle\partial x^3}\Big)& =
                            &4{\pi} r\,.
                        \end{array}
                   \end{equation}
                 which is a particular case of the electrostatic equations (for a more detail description of these equations, see section \ref{example k-symp: hamiltonian electrostatic}).

                 \index{Electrostatic equations}

\newpage
\mbox{}
\thispagestyle{empty} 

\chapter{Hamiltonian Classical Field Theory}\label{chapter: K-SympHamaCFT}
\index{Hamiltonian Classical Field Theory}
\index{Hamilton-De Donder-Weyl equation}
In this chapter we shall study Hamiltonian Classical Field Theories, that is, we shall discuss the Hamilton-De Donder-Weyl equations (these equations will be called also the HDW equations for short) which have the following local expression
\begin{equation}\label{HDW_eq}
                \derpar{H}{q^i}\Big\vert_{\varphi(x)} = -\displaystyle\sum_{\alpha=1}^k\derpar{\psi^\alpha_i}{x^\alpha}\Big\vert_t
                \;, \quad \derpar{H}{p^\alpha_i}\Big\vert_{\varphi(x)}=\derpar{\psi^i}{x^\alpha}\Big\vert_{x}\,,
            \end{equation}
where $H\colon (T^1_k)^*Q\to \mathbb{R}$ is a Hamiltonian function.

A solution of these equations is a map
        \[
                    \begin{array}{ccccl}
                     \varphi &:& \r^k &  \longrightarrow & \tkqh\\ \noalign{\medskip}
                                            & &   x  & \to &  \varphi(x)=(\psi^i(x),\psi_i^\alpha(x))
                   \end{array} \]
        where $1 \leq i\leq n,\, 1\leq \alpha\leq k$.

In a classical view these equations can be obtained from a multiple integral variational problem. In this chapter we shall describe this variational approach and then we shall give a new geometric way of obtaining the HDW equations using the $k$-symplectic formalism described in chapter \ref{chapter: k-symplectic formalism} when the $k$-symplectic manifolds is the canonical model of these structures: the manifold $(\tkqh,\omega^1,\ldots, \omega^k, V)$ described in section \ref{section k-cotangent}.

 \section{Variational approach}\label{section HDW eq variational}

            In Hamiltonian Mechanics, the Hamilton equations are obtained from a variational principle. This can be generalized to Classical Field Theory,  where the problem consist to find the extremal of a variational problem associated to multiple integrals of Hamiltonian densities.

            In this subsection we shall see that the Hamilton-De Donder-Weyl  equations (\ref{HDW_eq})
                                 are equivalent  a one variational principle on the space of smooth maps  with compact support; we denote this set by $\mathcal{C}^\infty_C(\rk,\tkqh)$. 

             To describe this variational principle we need the notion of prolongation  of diffeomorphisms and vector fields from $Q$ to the cotangent bundle of $k^1$-covelocities, which we shall introduce in the sequel.


        \subsection{Prolongation of diffeomorphism and vector fields}\label{section tkqh: prolongation}

                Given a diffeomorphism between two manifolds $M$ and $N$ we can consider an induced map between $(T^1_k)^*N$ and $(T^1_k)^*M$. This map allows us to define the prolongation of vector fields from $Q$ to the cotangent bundle of $k^1$-covelocities.

                \index{Cotangent bundle of $k^1$-covelocities!Prolongation of maps}
                    \begin{definition}
                        Let $f\colon M\to N$ be a diffeomorphism. The \emph{natural or canonical prolongation} of $f$ to the corresponding bundles of $k^1$-covelocities is the map
                            \[
                                (T^1_k)^*f:(T^1_k)^*N \to (T^1_k)^*M
                            \]
                        defined as follows:
                            \[
                                \begin{array}{rl}
                                    (T^1_k)^*f({\nu_1}_{f(x)},\ldots,{\nu_k}_{f(x)}) &=
                                    (f^*({\nu_1}_{f(x)}),\ldots,f^*({\nu_k}_{f(x)}))\\\noalign{\medskip}
                                    &=({\nu_1}_{f(x)}\circ f_*(x),\ldots, {\nu_k}_{f(x)}\circ f_*(x))
                                 \end{array}
                            \]
                        where $({\nu_1}_{f(x)},\ldots,{\nu_k}_{f(x)})\in (T^1_k)^*N$ and $ m\in M\, . $
                    \end{definition}
\index{Cotangent bundle of $k^1$-covelocities!Prolongation of vector fields}
                The canonical prolongation of diffeomorphism allows us to introduce the canonical or complete lift of vector fields from $Q$ to $\tkqh$.
                    \begin{definition}
                        Let $Z$ be a vector field on $Q$, with $1$-parameter group of diffeomorphism $\{h_s\}$. The \emph{canonical or complete lift} of $Z$ to $(T^1_k)^*Q$ is the vector field
                        $Z^{C*}$ on $\tkqh$ whose local $1$-parameter group of diffeomorphism is $\{(T^1_k)^*(h_s)\}$.
                    \end{definition}

                Let $Z$ be a vector field on $Q$ with local expression $Z=Z^i\ds\frac{\partial}{\partial q^i}$. In the canonical coordinate system (\ref{tkqh: natural coord}) on $\tkqh$, the local expression of $Z^{C*}$ is
                    \begin{equation}\label{lcksh}
                        Z^{C*}=Z^i\frac{\partial}{\partial q^i} \, - \, p_j^\alpha \ds \frac{\partial Z^j} {\partial q^k} \frac{\partial}{\partial p_k^\alpha}\; .
                    \end{equation}

                The canonical prolongation of diffeomorphisms and vector fields from $Q$ to $(T^1_k)^*Q$ have the following properties.
                    \begin{lemma}\
                    \label{fiprop1h}
                        \begin{enumerate}
                            \item Let $\varphi\colon Q\to Q$ be a diffeomorphism and $\Phi=(T^1_k)^*\varphi$ the canonical prolongation of $\varphi$ to $(T^1_k)^*Q$. Then:
                                    \begin{equation}\label{tkqh: invariance}
                                        (i)\; \Phi^*\theta^\alpha=\theta^\alpha \quad \makebox{ and } \quad(ii)\,
                                        \Phi^*\omega^\alpha=\omega^\alpha,
                                    \end{equation}
                                     where $\ak$.
                            \item Let $Z\in\vf (Q)$ be  and  $Z^{C*}$ the complete lift of $Z$ to $(T^1_k)^*Q$. Then
                                    \begin{equation}\label{tkqh: infinitesimal invarianze}
                                        (i)\; \Lie{Z^{C*}}\theta^\alpha=0 \quad \makebox{ and } \quad (ii) \,\Lie{Z^{C*}}\omega^\alpha=0,
                                    \end{equation}
                                    with $\ak$.
                        \end{enumerate}
                    \end{lemma}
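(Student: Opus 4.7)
The plan is to derive everything from the identity $\theta^\alpha=(\pi^{k,\alpha})^*\theta$ of Definition 3.6 together with the intertwining of $\Phi=(T^1_k)^*\varphi$ with the classical cotangent prolongation $T^*\varphi\colon T^*Q\to T^*Q$. First I would check the commutative square
\[
\xymatrix{(T^1_k)^*Q\ar[r]^-{\Phi}\ar[d]_-{\pi^{k,\alpha}} & (T^1_k)^*Q\ar[d]^-{\pi^{k,\alpha}}\\ T^*Q\ar[r]^-{T^*\varphi} & T^*Q}
\]
for each $\alpha=1,\ldots,k$; this is immediate from the definitions, since both $\pi^{k,\alpha}\circ\Phi$ and $T^*\varphi\circ\pi^{k,\alpha}$ send $({\nu_1}_{\varphi(q)},\ldots,{\nu_k}_{\varphi(q)})$ to $\varphi^*{\nu_\alpha}_{\varphi(q)}\in T_q^*Q$.

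For part~(1), the classical fact $(T^*\varphi)^*\theta=\theta$ for the Liouville $1$-form (a direct consequence of its tautological characterization (\ref{Liouville})) combined with the above square yields
\[
\Phi^*\theta^\alpha=\Phi^*(\pi^{k,\alpha})^*\theta=(T^*\varphi\circ\pi^{k,\alpha})^*\theta=(\pi^{k,\alpha})^*(T^*\varphi)^*\theta=(\pi^{k,\alpha})^*\theta=\theta^\alpha,
\]
which proves~(i). Statement~(ii) follows at once by taking exterior derivatives, since $d$ commutes with pullbacks and $\omega^\alpha=-d\theta^\alpha$, so $\Phi^*\omega^\alpha=-d\,\Phi^*\theta^\alpha=-d\theta^\alpha=\omega^\alpha$.

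For part~(2), by definition $Z^{C*}$ is the infinitesimal generator of the one-parameter group $\{(T^1_k)^*(h_s)\}$, where $\{h_s\}$ is the local flow of $Z$. Since every $h_s$ is a diffeomorphism of $Q$, part~(1) applies to $\Phi_s:=(T^1_k)^*(h_s)$ and gives $\Phi_s^*\theta^\alpha=\theta^\alpha$ and $\Phi_s^*\omega^\alpha=\omega^\alpha$ for all admissible $s$. Differentiating at $s=0$ and using the definition of the Lie derivative yields $\mathcal{L}_{Z^{C*}}\theta^\alpha=0$ and $\mathcal{L}_{Z^{C*}}\omega^\alpha=0$, establishing (i) and (ii). The only nontrivial ingredient is the well-known invariance of the Liouville form under cotangent lifts of diffeomorphisms; the $k$-symplectic statement then reduces to its symplectic counterpart one index $\alpha$ at a time. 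Should a reader prefer a direct verification, the explicit coordinate expressions (\ref{Locformcan}) of $\theta^\alpha$ and (\ref{lcksh}) of $Z^{C*}$ allow one to compute $\mathcal{L}_{Z^{C*}}\theta^\alpha$ by Cartan's formula and observe the cancellation, but the conceptual route above is cleaner.
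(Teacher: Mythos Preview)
Your proof is correct and follows essentially the same route as the paper: the commutative square $\pi^{k,\alpha}\circ\Phi=T^*\varphi\circ\pi^{k,\alpha}$ (the paper writes $\varphi^*$ for your $T^*\varphi$), the classical invariance $(T^*\varphi)^*\theta=\theta$, and then exterior differentiation for~(ii); part~(2) is likewise obtained by differentiating the flow identity from part~(1). The only difference is cosmetic notation.
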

                    \proof
                        \begin{enumerate}
                            \item (i) is a consequence of the commutativity of the following diagram
                                    \[
                                        \xymatrix@=13mm{
                                            \tkqh\ar[r]^-{(T^1_k)^*\varphi}\ar[d]_-{\pi^{k,\alpha}}&\tkqh\ar[d]^-{\pi^{k,\alpha}}
                                            \\ T^*Q \ar[r]^-{\varphi^*}&T^*Q}
                                    \]
                                for each $\ak$, that is,
                                    \[
                                        \pi^{k,\alpha}\circ (T^1_k)^*\varphi=\varphi^*\circ \pi^{k,\alpha}.
                                    \]

                                In fact, using the above identity one has
                                    \[
                                        \begin{array}{lcl} \left[(T^1_k)^*\varphi\right]^*\theta^\alpha & = &
                                        \left[(T^1_k)^*\varphi\right]^*((\pi^{k,\alpha})^*\theta)= \big(\pi^{k,\alpha}\circ (T^1_k)^*\varphi\big)^*\theta
                                        \\\noalign{\medskip} &=&
                                        (\varphi^*\circ\pi^{k,\alpha})^*\theta=(\pi^{k,\alpha})^*((\varphi^*)^*\theta)=(\pi^{k,\alpha})^*\theta=\theta^\alpha\;,
                                        \end{array}
                                    \]
                                where we have used the identity $(\varphi^*)^*\theta=\theta$ (see \cite{AM-1978}, pag. 180).

                                The item (ii) is a direct consequence of (i) and the definition of the closed $2$-forms $\omega^1,\ldots, \omega^k$.
                        \item Since the infinitesimal generator of $Z^{C*}$ is the canonical prolongation of the infinitesimal generator of $Z$, then from item $(1)$ of this lemma one obtains that  (\ref{tkqh: infinitesimal invariance}) holds.
                    \end{enumerate} \qed

        \subsection{Variational principle}

        Now we are in conditions to describe the multiple integral problem from which one obtains the Hamilton-De Donder-Weyl equations.

            We denote by $d^kx$ the volume form on $\rk$ given by $dx^1\wedge\ldots\wedge dx^k$ and  $d^{k-1}x_\alpha$ is the $(k-1)$-form defined by
                \[
                    d^{k-1}x_\alpha=\iota_{\nicefrac{\partial}{\partial x^\alpha}}d^kx\,,
                \]
            for each $\ak$.

             Before  describing the variational problem in this setting we recall the following result:
             \begin{lemma}\label{variations}
                Let $G$ denote a fixed simply-connected domain in the $k$-dimen\-sional  space, bounded by a hypersurface $\partial G$. If $\Phi(x)$ is a continuous function in $G$ and if
                    \[
                        \int_{G}\Phi(x)\eta(x)d^kx=0
                    \]
                for all function $\eta(x)$ of class $C^1$ which vanish on the boundary $\partial G$ of $G$, then
                    \[
                        \Phi(x)=0
                    \]
                in $G$.
             \end{lemma}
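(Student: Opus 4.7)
The plan is to argue by contradiction using the standard bump function construction. Suppose $\Phi$ is not identically zero on $G$, so there exists an interior point $x_0 \in G$ with $\Phi(x_0) \neq 0$; without loss of generality (replacing $\Phi$ by $-\Phi$ if needed) assume $\Phi(x_0) > 0$. By continuity of $\Phi$, there is an open ball $B_r(x_0) \subset G$ on which $\Phi(x) \geq \Phi(x_0)/2 > 0$.

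The key step is to produce a test function $\eta$ of class $\mathcal{C}^1$ which vanishes on $\partial G$, is nonnegative on $G$, and is strictly positive at $x_0$. I would take the explicit bump
\begin{equation*}
\eta(x) = \begin{cases} \bigl(r^2 - \|x - x_0\|^2\bigr)^2 & \text{if } \|x - x_0\| \leq r, \\ 0 & \text{otherwise,} \end{cases}
\end{equation*}
which is $\mathcal{C}^1$ on all of $\rk$ (the squaring ensures the derivative matches smoothly on the boundary of the ball), is supported inside $B_r(x_0) \subset G$, and therefore vanishes identically on $\partial G$. Moreover $\eta(x_0) = r^4 > 0$ and $\eta \geq 0$ everywhere.

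With this choice, the integral decomposes as
\begin{equation*}
\int_G \Phi(x)\eta(x)\, d^kx = \int_{B_r(x_0)} \Phi(x)\eta(x)\, d^kx \geq \frac{\Phi(x_0)}{2}\int_{B_r(x_0)}\eta(x)\, d^kx > 0,
\end{equation*}
since the integrand is nonnegative and strictly positive on a set of positive measure. This contradicts the hypothesis that the integral vanishes for every admissible $\eta$. Hence no such $x_0$ exists and $\Phi \equiv 0$ on $G$.

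I do not anticipate any substantive obstacle: the only point that requires a little care is exhibiting a test function of the claimed regularity ($\mathcal{C}^1$, not merely continuous), and the explicit formula above handles this cleanly without needing to invoke the theory of mollifiers or $\mathcal{C}^\infty$ partitions of unity. The hypothesis that $G$ is simply connected is not actually used; only that $x_0$ is interior (so that a ball about it lies inside $G$), which is guaranteed by $G$ being a domain.
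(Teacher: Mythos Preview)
Your argument is correct and is the standard proof of the fundamental lemma of the calculus of variations. The paper does not actually supply its own proof of this lemma; it simply states the result and refers the reader to Rund's book \cite{Rund}, so there is nothing to compare against beyond noting that your bump-function contradiction argument is precisely the classical one found there.
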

             A proof of this lemma can be found in \cite{Rund}.

                    \begin{definition}\label{Alksim}
                        Denote by $\mathcal{C}^\infty_C(\rk,\tkqh)$ the set of maps
                            \[
                                \varphi:U_0\subseteq\rk\to \tkqh,
                            \]
                        with compact support defined on an open set $U_0$. Let $H:\tkqh\to \r$ be a  Hamiltonian function, then we define the integral action associated to $H$ by
                            \[
                                \begin{array}{lccl}
                                    {\mathcal H}:&\mathcal{C}^\infty_C(\rk,\tkqh) & \to &\r\\\noalign{\medskip}  & \varphi &\mapsto &\ds\int_{\rk}\Big( \sum_{\alpha=1}^k(\varphi^*\theta^\alpha)\wedge
                                    d^{k-1} x_\alpha- (\varphi^*H) d^k x \Big)\,.
                                \end{array}
                            \]
                    \end{definition}

                    \begin{definition}\label{extremales}
                        A map $\varphi \in \mathcal{C}^\infty_C(\rk,\tkqh)$  is an extremal of $\mathcal{H}$ if
                            \[
                                \ds\frac{d}{ds}\Big\vert_{s=0}\mathcal{H}(\tau_s\circ\varphi)=0
                            \]
                        for each flow $\tau_s$ on $\tkqh$ such that $\tau_s({\nu_1}_{q},\ldots, {\nu_k}_{q})=({\nu_1}_{q},\ldots, {\nu_k}_{q})$ for all $({\nu_1}_{q},\ldots, {\nu_k}_{q})$ on the boundary of  $\varphi(U_0)\subset\tkqh$, that is, we consider the variations of $\varphi$ given by the composition by elements of one-parametric group of diffeomorphism which leaves invariant the boundary of the image of $\varphi$.
                    \end{definition}

                    Let us observe that the flows $\tau_s:\tkqh\to\tkqh$ considered in the above definition are generated by vector fields on $\tkqh$ which are zero on the boundary of $\varphi(U_0)$.

                    The variational problem here considered consists in finding the extremals of the integral action $\mathcal{H}$.

                    The following proposition gives us a characterization of the extremals of the integral action $\mathcal{H}$ associated with the Hamiltonian $H$.
                        \begin{prop}
                        Let $H:\tkqh\to \r$ be a Hamiltonian function and                  $\varphi\in\mathcal{C}^\infty_C(\rk,\tkqh)$. The following statements are equivalents:
                        \begin{enumerate}
                            \item $\varphi:U_0\subset\rk\to \tkqh$ is an extremal of the variational problem associated to $H$.
                            \item For each vector field $Z$ on $Q$, such that its complete lift $\,Z^{C*}$ to $\tkqh$ vanishes on the boundary of $\varphi(U_0)$, the equality
                                \[
                                    \ds\int_{\rk} \left([\varphi^*(\mathcal{L}_{Z^{C^*}}\theta^\alpha)]\wedge
                                    d^{k-1}x_\alpha - [\varphi^*(\mathcal{L}_{Z^{C^*}} H) ]d^kx\right)=0\,,
                                \]
                            holds.
                            \item $\varphi$ is solution of the Hamilton-De Donder-Weyl equations, that is, if $\varphi$ is locally given by $\varphi(x)=(\psi^i(x),\psi^\alpha_i(x))$, then the functions $\psi^i,\,\psi^\alpha_i$ satisfy the  system of partial differential equations (\ref{HDW_eq}).
                    \end{enumerate}
                \end{prop}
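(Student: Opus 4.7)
The plan is to establish the three-way equivalence via the cycle $(1)\Rightarrow(2)\Rightarrow(3)\Rightarrow(1)$, with the substance concentrated in $(2)\Rightarrow(3)$. The implication $(1)\Rightarrow(2)$ is immediate: given $Z\in\mathfrak{X}(Q)$ whose complete lift $Z^{C*}$ vanishes on $\partial\varphi(U_0)$, the local flow $\{\tau_s\}$ of $Z^{C*}$ is admissible in Definition \ref{extremales}, and $\frac{d}{ds}\big\vert_{s=0}\mathcal{H}(\tau_s\circ\varphi)$ equals the integral in (2) after using $\frac{d}{ds}\big\vert_{s=0}\tau_s^*=\mathcal{L}_{Z^{C*}}$ and $(\tau_s\circ\varphi)^*=\varphi^*\circ\tau_s^*$. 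For $(3)\Rightarrow(1)$ I work in canonical coordinates, where the action becomes $\mathcal{H}(\varphi)=\int[\sum_\alpha \psi^\alpha_i\partial_\alpha\psi^i - H\circ\varphi]\,d^kx$; an integration by parts (boundary terms vanish because admissible variations are zero on $\partial\varphi(U_0)$) shows that for any generating vector field $W$ on $\tkqh$ vanishing on the boundary,
\begin{equation*}
\delta\mathcal{H}=\int\Big[-\delta\psi^i\Big(\sum_\alpha\partial_\alpha\psi^\alpha_i+\tfrac{\partial H}{\partial q^i}\circ\varphi\Big)+\sum_\alpha\delta\psi^\alpha_i\Big(\partial_\alpha\psi^i-\tfrac{\partial H}{\partial p^\alpha_i}\circ\varphi\Big)\Big]d^kx,
\end{equation*}
and the HDW equations (\ref{HDW_eq}) make both brackets vanish.

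The main step is $(2)\Rightarrow(3)$. The first move is Cartan's magic formula combined with $d\theta^\alpha=-\omega^\alpha$, giving $\mathcal{L}_{Z^{C*}}\theta^\alpha = d(\iota_{Z^{C*}}\theta^\alpha)-\iota_{Z^{C*}}\omega^\alpha$. The exact piece contributes a differential $d[\iota_{Z^{C*}}\theta^\alpha\cdot d^{k-1}x_\alpha]$ whose integral vanishes by Stokes' theorem, since $\iota_{Z^{C*}}\theta^\alpha=p^\alpha_i Z^i$ is zero wherever $Z^{C*}=0$, in particular on $\partial\varphi(U_0)$. Expanding $\iota_{Z^{C*}}\omega^\alpha = Z^i\, dp^\alpha_i + p^\alpha_j(\partial_i Z^j)\,dq^i$ in canonical coordinates, pulling back by $\varphi$, and using the chain rule $\partial_\alpha(Z^i\circ\psi) = (\partial_j Z^i\circ\psi)\partial_\alpha\psi^j$, the condition in (2) reduces to
\begin{equation*}
\int_{\mathbb{R}^k}\Big[(Z^i\circ\psi)\Big(\sum_\alpha\partial_\alpha\psi^\alpha_i+\tfrac{\partial H}{\partial q^i}\circ\varphi\Big)+\psi^\alpha_j(\partial_i Z^j\circ\psi)\Big(\partial_\alpha\psi^i-\tfrac{\partial H}{\partial p^\alpha_i}\circ\varphi\Big)\Big]d^kx=0
\end{equation*}
for every admissible $Z$. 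To extract both HDW equations from this single identity, I will apply Lemma \ref{variations} with tailored choices of $Z$: first, vector fields supported near $\psi(x_0)$ with $(\partial_j Z^i)(\psi(x_0))=0$ and $Z^i(\psi(x_0))$ prescribed along each coordinate direction isolate the first bracket, yielding $\sum_\alpha\partial_\alpha\psi^\alpha_i=-\partial_{q^i}H\circ\varphi$; then $Z$ with $Z^i(\psi(x_0))=0$ and $(\partial_j Z^i)(\psi(x_0))$ prescribed isolates the second bracket and yields $\partial_\alpha\psi^i=\partial_{p^\alpha_i}H\circ\varphi$.

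The principal obstacle lies in justifying this last extraction, because the integrand couples the two families $(Z^i\circ\psi)$ and $(\partial_j Z^i\circ\psi)$, which are not independent functions of $x\in\mathbb{R}^k$---they are linked along $\psi$ precisely by the chain rule above. The resolution is that this coupling is only global: at any individual point $q_0\in Q$, the value $Z(q_0)$ and the first-order jet $(\partial Z)(q_0)$ are independent data which can be prescribed via bump-function constructions, and that pointwise freedom is exactly what the fundamental lemma needs to force both coefficient brackets to vanish along the image of $\varphi$.
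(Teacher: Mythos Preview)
Your argument is essentially the paper's: both apply Cartan's formula and Stokes' theorem to discard the exact piece $d(\iota_{Z^{C*}}\theta^\alpha)$, arrive at the same local integrand, and then invoke the pointwise independence of the values $Z^i(q)$ and the derivatives $\partial_j Z^i(q)$ together with Lemma~\ref{variations} to separate out the two families of HDW equations. The only difference is organizational---you close a cycle with an explicit coordinate computation for $(3)\Rightarrow(1)$ using general variations, whereas the paper argues $(1)\Leftrightarrow(2)$ and $(2)\Leftrightarrow(3)$ as two biconditionals and obtains the converse direction by ``reversing the arguments''.
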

                \proof
                    First  we shall prove the equivalence between items $\mathbf{(1)}$ and $\mathbf{(2)}$ $(1\Leftrightarrow 2)$.

                        Let $Z\in \vf(Q)$  be a vector field on  $Q$ satisfying the conditions in $\mathbf{(1)}$, and with one-parameter group of diffeomorphism $\{\tau_s\}$. Then, from the definition of the complete lift we know that $Z^{C*}$  generates the one-parameter group  $\{(T^1_k)^*\tau_s\}$.

                        Thus,
                            \[
                                \begin{array}{lcl}
                                    &&\ds\frac{d}{ds}\Big\vert_{s=0}\mathcal{H}((T^1_k)^*\tau_s\circ\varphi)
                                    \\\noalign{\medskip} =&&\ds\frac{d}{ds}\Big\vert_{s=0}\int_{\rk}
                                    \Big( \sum_{\alpha=1}^k([(T^1_k)^*\tau_s\circ\varphi]^*\theta^\alpha)\wedge
                                    d^{k-1} x_\alpha-([(T^1_k)^*\tau_s\circ\varphi]^*H)
                                    d^k x \Big)\\\noalign{\medskip}
%
                                     =&&
                                    \ds\lim_{s\to 0}\ds\frac{1}{s}\Big(\int_{\rk} \Big(
                                    \sum_{\alpha=1}^k([(T^1_k)^*\tau_s\circ\varphi]^*\theta^\alpha)\wedge
                                    d^{k-1} x_\alpha-([(T^1_k)^*\tau_s\circ\varphi]^*H)
                                    d^k x \Big) \\\noalign{\medskip} &-&\ds \int_{\rk} \Big(
                                    \sum_{\alpha=1}^k([(T^1_k)^*\tau_0\circ\varphi]^*\theta^\alpha)\wedge
                                    d^{k-1} x_\alpha-([(T^1_k)^*\tau_0\circ\varphi]^*H)
                                    d^k x \Big)\Big)
                                    \\\noalign{\medskip}
%
                                     =&&\ds\lim_{s\to 0}\ds\frac{1}{s}\left(\int_{\rk}
                                    \sum_{\alpha=1}^k([(T^1_k)^*\tau_s\circ\varphi]^*\theta^\alpha)\wedge
                                    d^{k-1} x_\alpha -\int_{\rk}  \sum_{\alpha=1}^k(\varphi^*\theta^\alpha)\wedge
                                    d^{k-1} x_\alpha\right)
                                    \\\noalign{\medskip}

                                     &-&\ds\lim_{s\to 0}\ds\frac{1}{s}\left(\int_{\rk}([(T^1_k)^*\tau_s\circ\varphi]^*H)
                                    d^k x -\int_{\rk}(\varphi^*H) d^k x \right)
                                    \\\noalign{\medskip}

                                    =&&\ds\lim_{s\to 0}\ds\frac{1}{s}\left(\int_{\rk}\sum_{\alpha=1}^k[\varphi^*
                                    \Big( ((T^1_k)^*\tau_s)^*\theta^\alpha-\theta^\alpha\Big)]\wedge
                                    d^{k-1} x_\alpha\right)
                                    \\\noalign{\medskip}

                                     &-& \ds\lim_{s\to 0}\ds\frac{1}{s}\left(\int_{\rk}[
                                     \varphi^*\Big(((T^1_k)^*\tau_s)^*H
                                    -H \Big)]d^k x \right)
                                    \\\noalign{\medskip}
                                    =&& \ds\int_{\rk}
                                    \left([\varphi^*(\mathcal{L}_{Z^{C^*}}\theta^\alpha)]\wedge
                                    d^{k-1} x_\alpha - [\varphi^*(\mathcal{L}_{Z^{C^*}} H)]
                                    d^k x \right)\,,
                                \end{array}
                            \]
                            where in the last identity we are using the definition of Lie derivative with respect to $Z^{C^*}$.

                            Therefore, $\varphi$ is and extremal of $\mathcal{H}$ if and only if
                            \[
                                \ds\int_{\rk}
                                    \left([\varphi^*(\mathcal{L}_{Z^{C^*}}\theta^\alpha)]\wedge
                                    d^{k-1} x_\alpha - [\varphi^*(\mathcal{L}_{Z^{C^*}} H)]
                                    d^k x \right)=0\,.
                            \]

                    We now prove the equivalence between  $\mathbf{(2)}$ and $\mathbf{(3)}$  $(2\Leftrightarrow 3)$.

                        Taking into account that
                            \[
                                \mathcal{L}_{Z^{C*}}\theta^\alpha=d\iota_{Z^{C*}}\theta^\alpha +\iota_{Z^{C*}}d\theta^\alpha
                            \]
                        one obtains
                            \[
                                \ds\int_{\rk}[\varphi^*(\mathcal{L}_{Z^{C^*}}\theta^\alpha)]\wedge d^{k-1} x_\alpha=\int_{\rk}[\varphi^*(d\iota_{Z^{C*}}\theta^\alpha)]\wedge d^{k-1} x_\alpha+ \int_{\rk}[\varphi^*(\iota_{Z^{C*}}d\theta^\alpha)]\wedge d^{k-1} x_\alpha\,.
                            \]

                        Since
                            \[
                                [\varphi^*(d\iota_{Z^{C*}}\theta^\alpha)]\wedge d^{k-1}x_\alpha =d\Big(\varphi^*(\iota_{Z^{C*}}\theta^\alpha) \wedge d^{k-1} x_\alpha\Big)
                            \]
                        then $[\varphi^*(d\iota_{Z^{C*}}\theta^\alpha)]\wedge d^{k-1}x_\alpha$ is a closed $1$-form on $\rk$. Therefore, applying Stoke's theorem one obtains:
                            \[
                                \int_{\rk}[\varphi^*(d\iota_{Z^{C*}}\theta^\alpha)]\wedge d^{k-1} x_\alpha=\int_{\rk}d\Big(\varphi^*(\iota_{Z^{C*}}\theta^\alpha) \wedge d^{k-1} x_\alpha\Big)=0\,.
                            \]

                        Then,
                            \[
                                \ds\int_{\rk} \left([\varphi^*(\mathcal{L}_{Z^{C^*}}\theta^\alpha)]\wedge d^{k-1} x_\alpha - [\varphi^*(\mathcal{L}_{Z^{C^*}} H)] d^k x \right)=0
                            \]
                        if and only if,
                            \[
                                \ds\int_{\rk} \left([\varphi^*\Big(\iota_{Z^{C*}}d\theta^\alpha \Big)]\wedge d^{k-1} x_\alpha - [\varphi^*(\mathcal{L}_{Z^{C^*}} H)] d^k x \right)=0\,.
                            \]

                        Consider now the canonical coordinate system such that $Z=Z^i\derpar{}{q^i}$;   taking into account the  local expression (\ref{lcksh}) for the complete lift $Z^{C^*}$ and that $\varphi(x)=(\psi^i(x),\psi^\alpha_i(x))$, we have
                            \[
                                \begin{array}{cl}
                                    & \varphi^*\Big(\iota_{Z^{C*}}d\theta^\alpha \Big)\wedge
                                    d^{k-1} x_\alpha - \varphi^*(\mathcal{L}_{Z^{C^*}} H)
                                    d^k x \\\noalign{\medskip}

                                    =&\left[-(Z^i(x))\left(\ds\sum_{\alpha=1}^k\ds\frac{\partial
                                    \psi^\alpha_i}{\partial x^\alpha}\Big\vert_{x}+\ds\frac{\partial H}{\partial
                                    q^i}\Big\vert_{\varphi(x)}\right)-\ds\sum_{\alpha=1}^k \psi^\alpha_j(x)
                                    \ds\frac{\partial Z^j}{\partial
                                    q^i}\Big\vert_{x}\left(\ds\frac{\partial \psi^i}{\partial
                                    x^\alpha}\Big\vert_{x}-\ds\frac{\partial H}{\partial
                                    p^{\alpha}_i}\Big\vert_{\varphi(x)}\right)\right]d^kx\,
                                \end{array}
                            \]
                        for each $Z\in\vf(Q)$ (under the conditions established in this theorem), where we are using the notation $Z^i(x):=(Z^i\circ\pi^k\circ\varphi)(x)$. From the last expression we deduce that $\varphi$ is an extremal of  $\mathcal{H}$ if and only if
                            \[
                                    \ds\int_{\rk}Z^i(x)\left(\ds\sum_{\alpha=1}^k\ds\frac{\partial
                                    \psi^\alpha_i}{\partial x^\alpha}\Big\vert_{x}+\ds\frac{\partial H}{\partial
                                    q^i}\Big\vert_{\varphi(x)}\right)d^kx + \ds \int_{\rk}\ds\sum_{\alpha=1}^k
                                    \psi^\alpha_j(x) \ds\frac{\partial Z^j}{\partial
                                    q^i}\Big\vert_{x}\left(\ds\frac{\partial \psi^i}{\partial
                                    x^\alpha}\Big\vert_{x}-\ds\frac{\partial H}{\partial
                                    p^{\alpha}_i}\Big\vert_{\varphi(x)}\right)d^kx=0
                            \]
                        for all $Z^i$. Therefore,
                            \begin{equation}\label{pvh}
                                \begin{array}{lcl}
                                    \ds\int_{\rk}(Z^i(x))\left(\ds\sum_{\alpha=1}^k\ds\frac{\partial
                                    \psi^\alpha_i}{\partial x^\alpha}\Big\vert_{x}+\ds\frac{\partial H}{\partial
                                    q^i}\Big\vert_{\varphi(x)}\right)d^kx & =&0\\\noalign{\medskip}

                                    \ds\int_{\rk}\ds\sum_{\alpha=1}^k \psi^\alpha_j(x) \ds\frac{\partial
                                    Z^j}{\partial q^i}\Big\vert_{x}\left(\ds\frac{\partial
                                    \psi^i}{\partial x^\alpha}\Big\vert_{x}-\ds\frac{\partial H}{\partial
                                    p^{\alpha}_i}\Big\vert_{\varphi(x)}\right)d^kx &= &0\,
                                \end{array}
                            \end{equation}
                        for all $Z\in \vf(Q)$ satisfying  the statements of this theorem, and, thus, for any values $Z^i( q)$ and $\derpar{Z^j}{q^i}\Big\vert_{q}$.

                        Applying  lemma \ref{variations}, from (\ref{pvh}) one obtains that,
                            \[
                                \ds\sum_{\alpha=1}^k\ds\frac{\partial \psi^\alpha_i}{\partial
                                x^\alpha}\Big\vert_{x}+\ds\frac{\partial H}{\partial
                                q^i}\Big\vert_{\varphi(x)} =0\quad,\quad \ds\sum_{\alpha=1}^k
                                \psi^\alpha_j(x) \left(\ds\frac{\partial \psi^i}{\partial
                                x^\alpha}\Big\vert_{x}-\ds\frac{\partial H}{\partial
                                p^{\alpha}_i}\Big\vert_{\varphi(x)}\right) = 0\,.
                            \]

                        The first group of equations gives us the first group of the Hamilton-De Donder-Weyl equations (\ref{HDW_eq}).

                        Now, consider the second set of above equations, it follows that
                            \[
                                \ds\frac{\partial \psi^i}{\partial x^\alpha}\Big\vert_{x}-\ds\frac{\partial H}{\partial p^{\alpha}_i}\Big\vert_{\varphi(x)}=0\,,
                            \]
                        which is the second set of the Hamilton-De Donder-Weyl equations (\ref{HDW_eq}).

                        The converse is obtained starting from the Hamilton-De Donder-Weyl equations and reversing the arguments in the above proof.
                        \qed

\section{Hamilton-De Donder-Weyl equations}

The above variational principle allows us to obtain the HDW equations but there exist  other methods to obtain these equations: one of them consist of using the $k$-symplectic Hamiltonian equation  when we consider the $k$-symplectic manifold $M=\tkqh$.

In this case, we take a Hamiltonian function $H\in\mathcal{C}^\infty(\tkqh)$. Thus, from  Theorem \ref{fhks}, one obtains that given  an integrable $k$-vector field $\mathbf{X}=(X_1,\ldots, X_k)\in \vf^k_H(\tkqh)$ and  an integral section $\varphi\colon U\subset\rk\to \tkqh$ of $\mathbf{X}$,  $\varphi$ is a solution of the following systems of partial differential equations
                    \[
                            \frac{\partial H}{\partial q^i}\Big\vert_{\varphi(x)} \, = \, -
                            \displaystyle\sum_{\beta=1}^k \frac{\partial \psi^\beta_i} {\partial
                            x^\beta}\Big\vert_{x} \,, \quad \frac{\partial H}{\partial
                            p^{\alpha}_i}\Big\vert_{\varphi(x)} \, = \, \frac{\partial \psi^i}{\partial
                            x^\alpha}\Big\vert_{x}\,,
                        \]
                        that is, $\varphi$ is a solution of the HDW equations (\ref{HDW_eq}).

Therefore, given an integrable $k$-vector field $X\in\vf^k_H(\tkqh)$, its integral sections are solutions of the HDW equations. Now it is natural to do the following question: \textit{Given a solution $\varphi\colon U\subset\rk\to \tkqh$ of the HDW equations, is there a $k$-vector field $X\in\vf^k_H(\tkqh)$ such that $\varphi$ is an integral section of $\mathbf{X}$?}

Here we give an answer to this question.

                \begin{prop}\label{int}
If  a map $\varphi:\rk\to (T^1_k)^*Q$
is a solution of the HDW equation (\ref{HDW_eq})
and $\varphi$ is an integral section of an integrable $k$-vector field ${\bf X}\in\vf^k((T^1_k)^*Q)$, then
 ${\bf X}=(X_1,\dots,X_k)$ is a solution of the equation
(\ref{ecHksym}) at the points of the image of $\varphi$.
\end{prop}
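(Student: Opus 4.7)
The plan is to unwind both hypotheses in Darboux (canonical) coordinates $(q^i,p^\alpha_i)$ on $\tkqh$ and compare them against the local form (\ref{ecHDWloc}) of the $k$-symplectic Hamiltonian equation. Since equation (\ref{ecHksym}) is a pointwise condition on $\mathbf{X}$, it suffices to verify it at each point $\varphi(x)$ of the image.

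First I would write each component $X_\alpha$ locally as $(X_\alpha)^i\,\partial/\partial q^i + (X_\alpha)^\beta_i\,\partial/\partial p^\beta_i$. The hypothesis that $\varphi(x)=(\psi^i(x),\psi^\beta_i(x))$ is an integral section of $\mathbf{X}$ is, by (\ref{tkqh integral section equivalence cond}), equivalent to
\[
(X_\alpha)^i(\varphi(x))=\frac{\partial \psi^i}{\partial x^\alpha}\Big|_{x},\qquad (X_\alpha)^\beta_i(\varphi(x))=\frac{\partial \psi^\beta_i}{\partial x^\alpha}\Big|_{x}.
\]
On the other hand, the HDW equations (\ref{HDW_eq}) give precisely
\[
\frac{\partial H}{\partial q^i}\Big|_{\varphi(x)}=-\sum_{\beta=1}^k \frac{\partial \psi^\beta_i}{\partial x^\beta}\Big|_{x},\qquad \frac{\partial H}{\partial p^\alpha_i}\Big|_{\varphi(x)}=\frac{\partial \psi^i}{\partial x^\alpha}\Big|_{x}.
\]

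Second, I would combine these two sets of identities. Substituting the integral-section relations into the right-hand sides of the HDW equations yields
\[
\frac{\partial H}{\partial q^i}\Big|_{\varphi(x)}=-\sum_{\beta=1}^{k}(X_\beta)^\beta_i(\varphi(x)),\qquad \frac{\partial H}{\partial p^\alpha_i}\Big|_{\varphi(x)}=(X_\alpha)^i(\varphi(x)).
\]
These are exactly the local equations (\ref{ecHDWloc}) evaluated at $\varphi(x)$, which, as shown in the computation preceding (\ref{ecHDWloc}) using $\omega^\alpha=dq^i\wedge dp^\alpha_i$, are equivalent to
\[
\Big(\sum_{\alpha=1}^k \iota_{X_\alpha}\omega^\alpha\Big)(\varphi(x))=dH(\varphi(x)).
\]
Thus $\mathbf{X}$ satisfies (\ref{ecHksym}) at every point of the image of $\varphi$, which is the desired conclusion. \qed

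There is no real obstacle here: the proof is essentially a two-line local calculation in canonical coordinates, and the only subtlety is noting that the assertion is only claimed \emph{at points of the image} of $\varphi$, so no integrability or extension argument off the image is needed. The integrability of $\mathbf{X}$ is used only to guarantee that $\varphi$ can actually be realised as an integral section; it plays no role in the verification itself.
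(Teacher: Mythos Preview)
Your proof is correct and follows essentially the same route as the paper: both reduce the claim to verifying the local conditions (\ref{ecHDWloc}) at the points $\varphi(x)$ by substituting the integral-section relations (\ref{tkqh integral section equivalence cond}) into the HDW equations (\ref{HDW_eq}). Your write-up is slightly more detailed in spelling out the equivalence with the geometric equation (\ref{ecHksym}), but the argument is the same.
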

\begin{proof}
We must prove that
\begin{equation}\label{01}
    \ds\frac{\partial H}{\partial p^\alpha_i}(\varphi(x))=
  (X_\alpha)^i(\varphi(x)), \quad \ds\frac{\partial H}{\partial q^i}(\varphi(x))=
   -\ds\sum_{\alpha=1}^k(X_\alpha)^\alpha_i(\varphi(x)) \ .
\end{equation}
Now as $\varphi(x)=(\psi^i(x),\psi^\alpha_i(x))$ is an integral section of ${\bf X}$ we have that
(\ref{tkqh integral section equivalence cond}) holds; but,
as $\varphi$ is also a solution of the Hamilton-de Donder-Weyl equation (\ref{HDW_eq}), then we deduce (\ref{01}).\qed
 \end{proof}

We can not claim that  ${\bf X}\in\vf^k_H((T^1_k)^*Q)$
because we can not assure that ${\bf X}$ is a solution of the equations
 (\ref{ecHksym}) on the whole in $(T^1_k)^*Q$.
                \begin{remark}
                {\rm
                    It is also important to point out that the equations (\ref{HDW_eq}) and (\ref{ecHksym}) are not equivalent in the sense that not every solution of (\ref{HDW_eq}) is an integral section of some integrable $k$-vector field belonging to $\vf^k_H(\tkqh)$.}\rqed
                \end{remark}

\index{Admissible solution}
\begin{definition}
\label{hyp} A map $\varphi\colon\rk\to (T^1_k)^*Q$, solution of the
equation (\ref{HDW_eq}),
 is said to be an \emph{admissible solution} to the HDW-equation
for a $k$-symplectic Hamiltonian system $((T^1_k)^*Q,\omega^\alpha,H)$ if it
 is an integral section of some integrable $k$-vector
field ${\bf X}\in{\mathfrak{X}}^k((T^1_k)^*Q)$.
\end{definition}

If we consider only admissible solutions to the HDW-equations of $k$-symplectic Hamiltonian systems, we say that $((T^1_k)^*Q,\omega^\alpha,H)$ is an \emph{admissible $k$-symplectic Hamiltonian system}.

In this way, for admissible $k$-symplectic Hamiltonian systems,
 the geometric field equation (\ref{ecHksym}) for integrable $k$-vector fields
is equivalent to the HDW-equation (\ref{HDW_eq})
(as it is established in Theorem \ref{fhks} and Proposition \ref{int}).


\chapter{Hamilton-Jacobi theory in $k$-symplectic Field Theories}\label{ksymp-HJ}

\index{Hamilton-Jacobi theory}

The usefulness of Hamilton-Jacobi theory in Classical Mechanics
is well-known, giving an alternative  procedure to study and, in some  cases, to solve the evolution equations \cite{AM-1978}.
The use of symplectic geometry in the study of Classical Mechanics
has permitted to connect the Hamilton-Jacobi theory with the theory
of Lagrangian submanifolds and generating functions \cite{BLM-2012}.

At the beginning of the 1900s an analog of Hamilton-Jacobi equation for field theory has been developed
\cite{Rund}, but it has not been proved to be so powerful as the theory which is available for Mechanics \cite{BPP-2008,Bruno-2007,PR-2002-b,PR-2002,rosen,vita}.

\index{Hamilton-Jacobi equation}
\index{Equation!Hamilton-Jacobi}
Our goal in this chapter is to describe this equation
in a geometrical setting given by the $k$-symplectic geometry, that is, to extend the Hamilton-Jacobi theory to Field Theories just in
the context of $k$-symplectic manifolds (we remit to \cite{LMM-1996b,LMM-2009}
for a description in the multisymplectic setting). The dynamics for a given Hamiltonian function $H$ is interpreted
as a family of vector fields (a $k$-vector field) on the phase space $(T^1_k)^*Q$.
The  \emph{Hamilton-Jacobi equation} is of the form
$$
d(H \circ \gamma) = 0,
$$
where $\gamma = (\gamma_1, \dots, \gamma_k)$ is a family of closed
$1$-forms on $Q$. Therefore, we recover the classical form
$$
H\Big(q^i,\ds\frac{\partial W^1}{\partial q^i}, \ldots
,\ds\frac{\partial W^k}{\partial q^i}\Big) = constant\;.
$$
where $\gamma_i = dW_i$.
It should be noticed that our method is inspired in a  recent result by Cari\~{n}ena {\it et al} \cite{CGMMR} for Classical Mechanics (this method
has also used to develop a Hamilton-Jacobi theory for nonholonomic mechanical systems \cite{lmm1}; see also \cite{pepin2,LMM-2010}).

\section{The Hamilton-Jacobi equation}\label{HJproblem}

\index{Hamilton-Jacobi theory!Mechanics}
The standard formulation of the \textit{Hamilton-Jacobi problem for Hamiltonian Mechanics} consist of finding a function $S(t,q^i)$ (called the \emph{principal function}) such that
\begin{equation}\label{H-JeqHmech}
\ds\frac{\partial S}{\partial t} + H\Big(q^i,\ds\frac{\partial
S}{\partial q^j}\Big)=0\,.
\end{equation}

If we put $S(t,q^i)= W(q^i)-t \cdot  constant$, then $W\colon Q\to \R$ (called the \emph{characteristic function}) satisfies
\begin{equation}\label{H-JeqHmech2}
  H\Big(q^i,\ds\frac{\partial W}{\partial q^j}\Big)=constant\,.
\end{equation}

\index{Hamilton-Jacobi equation}
Equations (\ref{H-JeqHmech}) and (\ref{H-JeqHmech2}) are indistinctly referred as the \emph{ Hamilton-Jacobi equation} in Hamiltonian Mechanics.

\index{Hamilton-Jacobi theory!Field Theory}
In the framework of the $k$-symplectic description of Classical Field Theory, a Hamiltonian is a
function $H\in\mathcal{C}^\infty((T^1_k)^*Q)$. In this context, the
Hamilton-Jacobi problem consists of finding $k$ functions
$W^1,\ldots, W^k\colon Q\to \R$ such that
\begin{equation}\label{HJ-ksym}
H\Big(q^i,\ds\frac{\partial W^1}{\partial q^i}, \ldots
,\ds\frac{\partial W^k}{\partial q^i}\Big) = constant\;.
\end{equation}

In this subsection we give a geometric version of the Hamilton-Jacobi equation (\ref{HJ-ksym}).

Let $\gamma : Q \longrightarrow (T^1_k)^*Q$ be a closed section of $\pi^k :
(T^1_k)^*Q \longrightarrow Q$. Therefore, $\gamma = (\gamma^1,
\dots, \gamma^k)$ where each $\gamma^\alpha$ is an ordinary closed 1-form on
$Q$. Thus we have that
every point has an open neighborhood $U\subset Q$ where there exists
$k$ functions $W^\alpha\in\mathcal{C}^\infty(U)$ such that $\gamma^\alpha=dW^\alpha$.

Now, let $Z$ be a $k$-vector field on $(T^1_k)^*Q$. Using $\gamma$ we
can construct a $k$-vector field $Z^\gamma$ on $Q$ such that the following
diagram is commutative

\[
 \xymatrix{ (T^1_k)^*Q
\ar@/^1pc/[dd]^{\pi^k} \ar[rrr]^{Z}&   & &T_k^1((T^1_k)^*Q)\ar[dd]^{T^1_k\pi^k}\\
  &  & &\\
 Q\ar@/^1pc/[uu]^{\gamma}\ar[rrr]^{Z^{\gamma}}&  & & T_k^1Q }
\]
that is,
$$Z^\gamma:= T^1_k\pi^k\circ Z\circ \gamma\;.$$
\index{Prolongation of maps}
\index{Tangent bundle of $k^1$-velocities!Prolongation of maps}
Let us remember that for an arbitrary differentiable map $f:M\to N$, the induced
map $T^1_kf:T^1_kM\to T^1_kN$ is defined by
\begin{equation}\label{tkq: prolongation expr}
T^1_kf({v_1}_x,\ldots, {v_k}_x)=(f_*(x)({v_1}_x),\ldots
,f_*(x)({v_k}_x)) \;,
\end{equation}
where ${v_1}_x,\ldots,
{v_k}_x\in T_xM$, $x\in M$ and $f_*(x)\colon T_xM\to T_{f(x)}N$ is the tangent map to $f$ at the point $x$
\index{Tangent bundle of $k^1$-velocities!Prolongation of maps}
\index{Prolongation!Maps}

 Notice that the $k$-vector field $Z$ defines $k$ vector fields on $(T^1_k)^*Q$, say $Z
= (Z_1, \dots, Z_k)$. In the same way, the $k$-vector field
$Z^\gamma$ determines $k$ vector fields on $Q$, say $Z^\gamma =
(Z^\gamma_1, \dots, Z^\gamma_k)$.

In local coordinates, if each $Z_\alpha$ is locally given by
$$
Z_\alpha = Z^i_\alpha \, \ds\frac{\partial}{\partial q^i} + (Z_\alpha)^\beta_i\ds\frac{\partial}{\partial p^\beta_i}\,,
$$ then $Z^\gamma_\alpha$ has the following local expression:
\begin{equation}\label{zgamma}
 Z^\gamma_\alpha = (Z_\alpha^i \circ\gamma) \, \ds\frac{\partial}{\partial q^i}\,.
\end{equation}

Let us observe that if $Z$ is integrable, the $k$-vector  field $Z^\gamma$ is integrable.
\begin{theorem}[Hamilton-Jacobi Theorem]\label{hjth}
Let $Z$ be a solution of the $k$-symplectic Hamiltonian equation (\ref{ecHksym}) and
$\gamma : Q \longrightarrow (T^1_k)^*Q$ be a   closed section  of
$\pi^k : (T^1_k)^*Q \longrightarrow Q$, that is, $\gamma =
(\gamma^1, \dots, \gamma^k)$ where   each $\gamma^\alpha$ is an ordinary
closed 1-form on $Q$. If $Z$ is integrable then the following
statements are equivalent:
\begin{enumerate}
\item If $\sigma\colon U\subset \R^k\to Q$ is an integral section of $Z^\gamma$  then $\gamma\circ\sigma$ is a solution of the Hamilton-de Donder-Weyl field equations (\ref{HDW_eq});

\item $d(H\circ \gamma)=0$.

\end{enumerate}
\end{theorem}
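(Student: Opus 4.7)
The plan is to reduce the equivalence to a short local computation in which the closedness of $\gamma$ and the $k$-symplectic Hamiltonian equation (\ref{ecHksym}) conspire to match $d(H\circ\gamma)$ with the ``momentum part'' of the Hamilton--De Donder--Weyl system. First I work in the canonical coordinates $(q^i,p^\alpha_i)$ of Section \ref{section k-cotangent}, so that $\omega^\alpha = dq^i\wedge dp^\alpha_i$. A section $\gamma$ of $\pi^k$ is given by functions $\gamma^\alpha_i\in\mathcal{C}^\infty(Q)$ via $\gamma(q)=(q^i,\gamma^\alpha_i(q))$, and the hypothesis that each $1$-form $\gamma^\alpha = \gamma^\alpha_i\,dq^i$ is closed amounts to the symmetries $\partial\gamma^\alpha_i/\partial q^j=\partial\gamma^\alpha_j/\partial q^i$. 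Writing $Z_\alpha = Z^i_\alpha\,\partial/\partial q^i + (Z_\alpha)^\beta_i\,\partial/\partial p^\beta_i$, the assumption that $Z$ solves (\ref{ecHksym}) yields the local relations (\ref{ecHDWloc}), in particular $Z^i_\alpha = \partial H/\partial p^\alpha_i$; by (\ref{zgamma}) one has $Z^\gamma_\alpha = (Z^i_\alpha\circ\gamma)\,\partial/\partial q^i$.

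The key step is the chain-rule computation
\begin{equation}\label{hj-key}
    \derpar{(H\circ\gamma)}{q^i}\Big\vert_q
    = \derpar{H}{q^i}\Big\vert_{\gamma(q)} + \sum_\alpha \derpar{H}{p^\alpha_j}\Big\vert_{\gamma(q)}\derpar{\gamma^\alpha_j}{q^i}\Big\vert_q
    = \derpar{H}{q^i}\Big\vert_{\gamma(q)} + \sum_\alpha (Z^j_\alpha\circ\gamma)(q)\,\derpar{\gamma^\alpha_i}{q^j}\Big\vert_q,
\end{equation}
where the second equality substitutes $\partial H/\partial p^\alpha_j = Z^j_\alpha$ along $\gamma$ and swaps $\partial_j\gamma^\alpha_i$ for $\partial_i\gamma^\alpha_j$ using closedness. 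Now let $\sigma\colon U_0\subset\rk\to Q$ be an integral section of $Z^\gamma$, so $\partial\sigma^i/\partial x^\alpha|_x = (Z^i_\alpha\circ\gamma)(\sigma(x))$, and set $\varphi=\gamma\circ\sigma$, whose components are $\psi^i=\sigma^i$ and $\psi^\alpha_i=\gamma^\alpha_i\circ\sigma$. The second equation in (\ref{HDW_eq}) reads $\partial\sigma^i/\partial x^\alpha = \partial H/\partial p^\alpha_i|_{\gamma(\sigma(x))}$, which is automatic from $Z^i_\alpha=\partial H/\partial p^\alpha_i$ together with $\sigma$ being an integral section of $Z^\gamma$. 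Differentiating $\psi^\alpha_i=\gamma^\alpha_i\circ\sigma$ and using the same integral-section condition, the first equation in (\ref{HDW_eq}) becomes
\[
    \derpar{H}{q^i}\Big\vert_{\gamma(\sigma(x))} = -\sum_\alpha \derpar{\gamma^\alpha_i}{q^j}\Big\vert_{\sigma(x)}(Z^j_\alpha\circ\gamma)(\sigma(x)),
\]
which by (\ref{hj-key}) is precisely the statement $d(H\circ\gamma)(\sigma(x))=0$.

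Both implications follow at once. If (2) holds, then (\ref{hj-key}) forces $\partial H/\partial q^i|_{\gamma(q)} = -\sum_\alpha (Z^j_\alpha\circ\gamma)(q)\,\partial\gamma^\alpha_i/\partial q^j|_q$ at every $q$, so for any integral section $\sigma$ of $Z^\gamma$ both HDW equations hold along $\varphi=\gamma\circ\sigma$, giving (1). Conversely, the remark preceding the theorem ensures that integrability of $Z$ passes to $Z^\gamma$, so through every $q\in Q$ there is an integral section $\sigma$ of $Z^\gamma$ with $\sigma(0)=q$; applying (1) and evaluating the first HDW equation at $x=0$ yields $d(H\circ\gamma)|_q=0$, and since $q\in Q$ was arbitrary, (2) follows. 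The only non-formal ingredient is the coordinate identity (\ref{hj-key}): without the symmetry $\partial_i\gamma^\alpha_j=\partial_j\gamma^\alpha_i$ from closedness, the chain-rule terms would not reassemble into the ``divergence'' $\sum_\alpha\partial\psi^\alpha_i/\partial x^\alpha$ appearing in the first group of HDW equations, and the theorem would fail; the coordinate-free shadow of this fact is the identity $\gamma^*\omega^\alpha=-d\gamma^\alpha=0$.
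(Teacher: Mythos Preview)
Your proof is correct and follows essentially the same approach as the paper: both arguments reduce to the chain-rule identity for $d(H\circ\gamma)$ in canonical coordinates, use closedness of each $\gamma^\alpha$ to swap the indices $\partial_i\gamma^\alpha_j=\partial_j\gamma^\alpha_i$, observe that the second HDW equation holds automatically along $\gamma\circ\sigma$, and invoke integrability of $Z^\gamma$ (inherited from $Z$) to cover every point of $Q$ in the direction $(1)\Rightarrow(2)$. Your presentation is slightly more economical in that you isolate the key identity \eqref{hj-key} once and read off both implications from it, whereas the paper carries out the two directions separately; but the mathematical content is identical.
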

\begin{proof}  The closeness of the $1$-forms $\gamma^\alpha=\gamma^\alpha_idq^i$ states that
 \begin{equation}\label{closeness}
 \ds\frac{\partial \gamma^\beta_i}{\partial q^j} = \ds \frac{\partial \gamma^\beta_j}{\partial q^i}\,.
 \end{equation}

In first place we assume that the item \textbf{(1)} holds, and then we shall check that $d(H\circ\gamma)=0$. In fact, let us  suppose that  $\gamma\circ \sigma(x)=(\sigma^i(x),\gamma^\alpha_i(\sigma(x)))$ is a solution of the Hamilton-De Donder-Weyl equations for $H$, then
 \begin{equation}\label{hegamma}
 \ds\frac{\partial \sigma^i}{\partial x^\alpha}\Big\vert_{x}=\ds\frac{\partial H}{\partial p^\alpha_i}\Big\vert_{\gamma(\sigma(x))} \quad\makebox{ and} \quad \ds\sum_{\alpha=1}^k \ds\frac{\partial (\gamma^\alpha_i\circ\sigma)}{\partial x^\alpha}\Big\vert_{x} = -\ds\frac{\partial H}{\partial q^i}\Big\vert_{\gamma(\sigma(x))}\;.
\end{equation}

 Now, we shall compute the differential of the function $H\circ\gamma\colon Q\to \R$:
 \begin{equation}\label{d(hgamma)}
d(H \circ \gamma) = \left(\ds\frac{\partial H}{\partial q^i}\circ \gamma +
(\ds\frac{\partial H}{\partial p^\alpha_j}\circ \gamma) \ds\frac{\partial
\gamma^\alpha_j}{\partial q^i}\right) \, dq^i\,.
\end{equation}

Then from (\ref{closeness}), (\ref{hegamma}) and (\ref{d(hgamma)}) we obtain
$$\begin{array}{cl}
&d(H\circ\gamma)(\sigma(x))\\\noalign{\medskip}  =& \left(\ds\frac{\partial H}{\partial q^i}\Big\vert_{\gamma(\sigma(x))} +
\ds\frac{\partial H}{\partial p^\alpha_j}\Big\vert_{\gamma(\sigma(x))} \ds\frac{\partial
\gamma^\alpha_j}{\partial q^i}\Big\vert_{\sigma(x)}\right)dq^i(\sigma(x))
\\\noalign{\medskip}
 =& \left(-\ds\sum_{\alpha=1}^k \ds\frac{\partial (\gamma^\alpha_i\circ\sigma)}{\partial x^\alpha}\Big\vert_{x} +\ds\frac{\partial \sigma^j}{\partial x^\alpha}\Big\vert_{x}\ds\frac{\partial
\gamma^\alpha_j}{\partial q^i}\Big\vert_{\sigma(x)}\right)dq^i(\sigma(x))
\\\noalign{\medskip}
 =& \left(-\ds\sum_{\alpha=1}^k \ds\frac{\partial (\gamma^\alpha_i\circ\sigma)}{\partial x^\alpha}\Big\vert_{x} +\ds\frac{\partial \sigma^j}{\partial x^\alpha}\Big\vert_{x}\ds\frac{\partial
\gamma^\alpha_i}{\partial q^j}\Big\vert_{\sigma(x)}\right)dq^i(\sigma(x))
=0\;,
\end{array}$$
the last term being zero by the chain rule.
 Since $Z$ is integrable, the $k$-vector field $Z^\gamma$ is integrable, then for each point $q\in Q$ we have an integral section $\sigma\colon U_0\subset \R^k\to Q$ of $Z^\gamma$ passing trough    this point,  then
 $$ d(H\circ\gamma)=0\,.$$

 Conversely, let us suppose that $d(H\circ\gamma)=0$ and  $\sigma$ is an integral section of $Z^\gamma$. Now we shall prove that $\gamma\circ\sigma$ is a solution of the Hamilton field equations, that is (\ref{hegamma}) is satisfied.

 Since $d(H\circ\gamma)=0$, from (\ref{d(hgamma)}) we obtain
 \begin{equation}\label{d(hgamma)=0}
 0= \ds\frac{\partial H}{\partial q^i}\circ \gamma +
(\ds\frac{\partial H}{\partial p^\alpha_j}\circ \gamma) \ds\frac{\partial
\gamma^\alpha_j}{\partial q^i}\;.
\end{equation}

From (\ref{ecHDWloc}) and (\ref{zgamma}) we know that
$$
  Z^\gamma_\alpha = (\ds\frac{\partial H}{\partial p^\alpha_i}\circ\gamma) \ds\frac{\partial}{\partial q^i}
$$
and then since $\sigma$ is an integral section of $Z^\gamma$ we obtain
\begin{equation}\label{sigmasi}
\ds\frac{\partial \sigma^i}{\partial x^\alpha} = \ds\frac{\partial H}{\partial p^\alpha_i}\circ\gamma\circ\sigma\;.
\end{equation}

On the other hand, from (\ref{closeness}), (\ref{d(hgamma)=0}) and (\ref{sigmasi}) we obtain
$$\begin{array}{lcl}
\ds\sum_{\alpha=1}^k\ds\frac{\partial (\gamma^\alpha_i\circ \sigma)}{\partial x^\alpha} &=& \ds\sum_{\alpha=1}^k (\ds\frac{\partial  \gamma^\alpha_i}{\partial q^j}\circ\sigma) \ds\frac{\partial \sigma^j}{\partial x^\alpha} = \ds\sum_{\alpha=1}^k (\ds\frac{\partial  \gamma^\alpha_i}{\partial q^j}\circ\sigma)(\ds\frac{\partial H}{\partial p^\alpha_j}\circ\gamma\circ\sigma)
\\\noalign{\medskip} &=&
\ds\sum_{\alpha=1}^k(\ds\frac{\partial  \gamma^\alpha_j}{\partial q^i}\circ\sigma)(\ds\frac{\partial H}{\partial p^\alpha_j}\circ\gamma\circ\sigma) = - \ds\frac{\partial H}{\partial q^i}\circ\gamma\circ\sigma\;.
\end{array}$$
and thus we have proved that $\gamma\circ\sigma $ is a solution of the Hamilton-de Donder-Weyl  equations. \qed
\end{proof}

\begin{remark}{\rm In the particular case $k=1$ we reobtain the theorem proved in  \cite{LMM-1996b,LMM-2009}.\rqed}\end{remark}

\begin{theorem}\label{hamjacobi1}
Let $Z$ be a solution of the $k$-symplectic Hamiltonian equations (\ref{ecHksym}) and
$\gamma : Q \longrightarrow (T^1_k)^*Q$ be a   closed section  of
$\pi^k : (T^1_k)^*Q \longrightarrow Q$, that is, $\gamma =
(\gamma^1, \dots, \gamma^k)$ where   each $\gamma^\alpha$ is an ordinary
closed 1-form on $Q$. Then, the following statements are equivalent:
\begin{enumerate}
\item  $Z\vert_{Im \gamma} - T^1_k\gamma(Z^\gamma) \in \ker \flat_\omega$, being $\flat_\omega$ the map defined in (\ref{bemol ksymp}).
\item $ d(H \circ \gamma) = 0$.
\end{enumerate}
\end{theorem}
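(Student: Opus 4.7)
The plan is to verify the equivalence by a direct local coordinate computation, showing that the $1$-form $\flat_\omega\bigl(Z|_{\mathrm{Im}\,\gamma} - T^1_k\gamma(Z^\gamma)\bigr)$ on $Q$ coincides (up to sign) with $d(H\circ\gamma)$, so that one vanishes if and only if the other does.

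First, I would fix canonical coordinates $(q^i,p^\alpha_i)$ on $(T^1_k)^*Q$, write $\gamma=(\gamma^\alpha_i\,dq^i)$, and expand each $Z_\alpha = Z^i_\alpha\,\partial/\partial q^i + (Z_\alpha)^\beta_i\,\partial/\partial p^\beta_i$. A short computation of the tangent map $\gamma_*$ shows
\[
T^1_k\gamma(Z^\gamma_\alpha)\bigl(\gamma(q)\bigr)
= (Z^i_\alpha\circ\gamma)(q)\left(\frac{\partial}{\partial q^i}\Big|_{\gamma(q)} + \frac{\partial\gamma^\beta_j}{\partial q^i}\Big|_{q}\,\frac{\partial}{\partial p^\beta_j}\Big|_{\gamma(q)}\right),
\]
so each component $Y_\alpha:=Z_\alpha|_{\mathrm{Im}\,\gamma} - T^1_k\gamma(Z^\gamma_\alpha)$ is purely $p$-vertical:
\[
Y_\alpha = \left((Z_\alpha)^\beta_j\circ\gamma - (Z^i_\alpha\circ\gamma)\,\frac{\partial\gamma^\beta_j}{\partial q^i}\right)\frac{\partial}{\partial p^\beta_j}.
\]

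Next I would apply $\flat_\omega$. Since $\omega^\alpha = dq^i\wedge dp^\alpha_i$ and $Y_\alpha$ is $p$-vertical, $\iota_{Y_\alpha}\omega^\alpha = -(Y_\alpha)^\alpha_i\,dq^i$, hence
\[
\flat_\omega(Y_1,\ldots,Y_k) = -\sum_{\alpha=1}^{k}\left((Z_\alpha)^\alpha_i\circ\gamma - (Z^j_\alpha\circ\gamma)\,\frac{\partial\gamma^\alpha_i}{\partial q^j}\right)dq^i.
\]
In parallel, I would compute $d(H\circ\gamma) = \bigl(\partial H/\partial q^i\circ\gamma + (\partial H/\partial p^\alpha_j\circ\gamma)\,\partial\gamma^\alpha_j/\partial q^i\bigr)\,dq^i$, substitute the local form (\ref{ecHDWloc}) of the $k$-symplectic Hamiltonian equation for $Z$, namely $\partial H/\partial q^i = -\sum_\alpha (Z_\alpha)^\alpha_i$ and $\partial H/\partial p^\alpha_i = Z^i_\alpha$, and then use the closedness condition $\partial\gamma^\alpha_j/\partial q^i = \partial\gamma^\alpha_i/\partial q^j$ to reindex. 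The two resulting expressions for $d(H\circ\gamma)$ and $\flat_\omega(Y_1,\ldots,Y_k)$ then match on the nose, which yields $(1)\Leftrightarrow(2)$ since $d(H\circ\gamma)$ is a form on $Q$ and vanishes iff all its coefficients do.

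The computation is essentially routine; the only delicate point is ensuring the two summations over $\alpha$ line up correctly, which is exactly where the closedness of $\gamma$ is used (it swaps the roles of the free index $i$ and the summed $j$ in $\partial\gamma^\alpha_j/\partial q^i$). Once that reindexing is done, no integrability hypothesis on $Z$ or on $Z^\gamma$ is needed, which is precisely what distinguishes this statement from Theorem \ref{hjth}: the equivalence is a pointwise algebraic identity along $\mathrm{Im}\,\gamma$ rather than a statement about integral sections.
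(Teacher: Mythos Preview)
Your proposal is correct and follows essentially the same approach as the paper: both compute the local expression of the vertical $k$-vector $Y_\alpha = Z_\alpha|_{\mathrm{Im}\,\gamma} - T\gamma(Z^\gamma_\alpha)$, invoke the local form (\ref{ecHDWloc}) of the Hamiltonian equations, and use the closedness condition (\ref{closeness}) to swap indices. The only cosmetic difference is that the paper phrases condition~(1) via the kernel characterization (\ref{kerflat}) whereas you compute $\flat_\omega(Y)$ directly and identify it with $d(H\circ\gamma)$; in fact the two expressions coincide exactly (not merely up to sign), and note that $\flat_\omega(Y)$ is technically a $1$-form on $(T^1_k)^*Q$ along $\mathrm{Im}\,\gamma$ whose pullback by $\gamma$ is $d(H\circ\gamma)$.
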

\proof We know that if $Z_\alpha$ and $\gamma^\alpha$ are locally given by
$$ Z_\alpha= Z_\alpha^i\ds\frac{\partial }{\partial q^i} + (Z_\alpha)^\beta_i\ds\frac{\partial}{\partial p^\beta_i} \, , \quad \gamma^\alpha= \gamma^\alpha_idq^i\;,$$ then
$Z^\gamma_\alpha= (Z_\alpha^i\circ\gamma)\ds\frac{\partial}{\partial q^i}$. Thus   a direct computation  shows that
$Z\vert_{Im \gamma} - T^1_k\gamma(Z^\gamma) \in \ker \flat_\omega$ is locally written as
\begin{equation}\label{locexpr}
\left( (Z_\alpha)^\beta_i\circ\gamma - (Z_ A^j\circ\gamma ) \ds\frac{\partial \gamma^\beta_ i}{\partial q^j}
\right)\left(\ds\frac{\partial}{\partial p^\beta_i}\circ\gamma\right)=(Y_\alpha)^\beta_i\circ \gamma\left(\ds\frac{\partial}{\partial p^\beta_i}\circ\gamma\right)\;.
\end{equation}
where $\ds\sum_{\alpha=1}^k(Y_\alpha)^\alpha_i=0$.

Now, we are  ready to prove the result.

Assume that $\mathbf{(1)}$ holds, then from (\ref{ecHDWloc}), (\ref{kerflat}) and (\ref{locexpr}) we obtain that
\[\begin{array}{lcl}
0 &=& \ds\sum_{\alpha=1}^k\left( (Z_\alpha)^\alpha_i\circ\gamma - (Z_ A^j\circ\gamma ) \ds\frac{\partial \gamma^\alpha_ i}{\partial q^j}\right)
 \\\noalign{\medskip}  &=& -\left(
 (\ds\frac{\partial H}{\partial q^i}\circ\gamma) + (\ds\frac{\partial H}{\partial p^\alpha_ j}\circ\gamma)\ds\frac{\partial \gamma^\alpha_i}{\partial q^j}\right)
\\\noalign{\medskip} &=& - \left(
(\ds\frac{\partial H}{\partial q^i}\circ\gamma) + (\ds\frac{\partial
H}{\partial p^\alpha_ j}\circ\gamma)\ds\frac{\partial
\gamma^\alpha_j}{\partial q^i}\right)
\end{array}\]
where in the last identity we are using the closeness of $\gamma$ (see (\ref{closeness})).
Therefore, $d(H\circ\gamma)=0$ (see (\ref{d(hgamma)})).

The converse is proved in a similar way by reversing the arguments.
\qed

\begin{remark}
{\rm It should be noticed that if $Z$ and $Z^\gamma$ are
$\gamma$-related, that is, $Z_\alpha= T\gamma(Z^\gamma_\alpha)$, then $d(H\circ \gamma) = 0$, but the converse does
not hold. \rqed}
\end{remark}

\begin{corol}
Let $Z$ be a solution of (\ref{ecHksym}), and $\gamma$ a closed
section of $\pi^k : (T^1_k)^*Q \longrightarrow Q$, as in the above
theorem.  If $Z$ is integrable then the following statements are
equivalent:
\begin{enumerate}
\item  $Z_{Im\,\gamma} - T^1_k\gamma(Z^\gamma) \in \ker \flat_\omega$;
\item $ d(H \circ \gamma) = 0$;
\item If $\sigma\colon U\subset \R^k\to Q$ is an integral section of $Z^\gamma$  then $\gamma\circ\sigma$ is a solution of the Hamilton-De Donder-Weyl equations.
\end{enumerate}
 \end{corol}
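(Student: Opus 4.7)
The plan is to observe that this Corollary is essentially a bookkeeping consequence of the two theorems that immediately precede it, so the proof reduces to noting which implications come from which result and checking that the integrability hypothesis is used in the right place. I will not re-derive anything; I will simply chain the two equivalences through the common statement $d(H\circ\gamma)=0$.

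First, I would invoke Theorem \ref{hamjacobi1} to obtain the equivalence \textbf{(1)} $\Leftrightarrow$ \textbf{(2)}. Notice that this theorem was proved without assuming integrability of $Z$: it is a pointwise computation using the local expression (\ref{locexpr}) for $Z\vert_{Im\,\gamma}-T^1_k\gamma(Z^\gamma)$, the characterization (\ref{kerflat}) of $\ker\flat_\omega$, the local form (\ref{ecHDWloc}) of the $k$-symplectic Hamiltonian equation, and the closeness condition (\ref{closeness}) for $\gamma$. Hence this equivalence is available for free under the (weaker) standing hypotheses of the Corollary.

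Second, I would invoke Theorem \ref{hjth} to obtain the equivalence \textbf{(2)} $\Leftrightarrow$ \textbf{(3)}. This is where integrability of $Z$ enters in an essential way, and it is also the only place where it is needed. Indeed, the forward direction \textbf{(3)} $\Rightarrow$ \textbf{(2)} requires that through every point $q\in Q$ there pass an integral section $\sigma$ of the induced $k$-vector field $Z^\gamma$, so that the identity $d(H\circ\gamma)(\sigma(x))=0$ of Theorem \ref{hjth} can be promoted to $d(H\circ\gamma)=0$ on all of $Q$; this local existence follows from the fact (noted just before Theorem \ref{hjth}) that integrability of $Z$ forces integrability of $Z^\gamma$. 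The converse direction \textbf{(2)} $\Rightarrow$ \textbf{(3)} is the computation already carried out in the proof of Theorem \ref{hjth} and needs no additional hypothesis beyond the existence of the integral section $\sigma$ supplied in the statement of \textbf{(3)}.

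Combining the two equivalences gives the three-way equivalence of the Corollary. The only subtlety worth flagging is this dichotomy in the use of integrability: the link \textbf{(1)} $\Leftrightarrow$ \textbf{(2)} is purely algebraic/pointwise, whereas \textbf{(2)} $\Leftrightarrow$ \textbf{(3)} is a statement about trajectories and therefore requires that $Z^\gamma$ actually admit integral sections. There is no real technical obstacle here; the Corollary is a direct consequence of Theorems \ref{hjth} and \ref{hamjacobi1}, and a one-sentence proof stating exactly this chain of implications would suffice.
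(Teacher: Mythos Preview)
Your proposal is correct and matches the paper's approach: the paper presents this corollary without proof, as an immediate consequence of combining Theorem \ref{hamjacobi1} (giving \textbf{(1)} $\Leftrightarrow$ \textbf{(2)}) and Theorem \ref{hjth} (giving \textbf{(2)} $\Leftrightarrow$ \textbf{(3)} under the integrability hypothesis). Your observation about where integrability is actually used is accurate and a bit more explicit than the paper itself.
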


\index{Hamilton-Jacobi equation}
The equation
\begin{equation}\label{hjeq}
d(H\circ\gamma)=0
\end{equation}
can be considered as the geometric version of the Hamilton-Jacobi equation for $k$-symplectic field theories. Notice that in local coordinates, equation (\ref{hjeq}) reads us
\[
H(q^i,\gamma^\alpha_i(q))=constant\;.
\]
which when $\gamma^\alpha=dW^\alpha$, where $W^\alpha\colon Q\to \R$ is a function, takes the more familiar form
\[
H(q^i,\ds\frac{\partial W^\alpha}{\partial q^i})=constant\,.
\]

\begin{remark}
    {\rm
        One can connect the Hamilton-Jacobi theory with the theory of Lagrangian submanifolds in the $k$-symplectic geometry. Let us observe that the Hamilton-Jacobi problem in the $k$-symplectic description consists of finding a closed section $\gamma$ of $\pi^k$ such that $d(H\circ \gamma)=0$, but the condition of closed section is equivalent to find a section $\gamma$ such that its image is a $k$-Lagrangian submanifold  of $(T^1_k)^*Q$. A proof of this equivalence and a completed description  of the Lagrangian submanifolds in the $k$-symplectic approach can be found in \cite{LV-2012}.
    \rqed}
\end{remark}
\section{Example: the vibrating string problem}
\index{Vibrating string}
 In this example we consider the vibrating string problem under the assumptions that the string is made up of individual particles that move vertically and $\psi(t,x)$ denotes the vertical displacement from equilibrium of the particle at horizontal position $x$ and at time $t$.
 \begin{figure}[h]
                    \centering
 \begin{tikzpicture}
    \draw (-3,0) -- (6,0);
    \draw (0,0)--(0,2.3);
    \draw (5.5,0) arc (0:180:4cm and 2.5cm);
    \draw (-2.5,-0.25) node {$0$};
    \draw (0,-0.25) node {$x$};
    \draw (5.5,-0.25) node {$L$};
    \draw (0.6,1) node {$\psi(t,x)$};
 \end{tikzpicture}
 \caption{Vibrating string at time $t$.}
\end{figure}
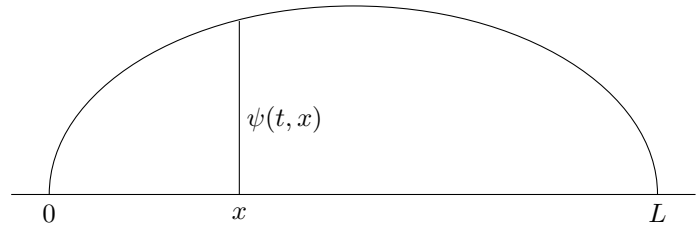

With a study of the tensile forces in this problem and using Newton's second Law one obtains the equation of motion for small oscillations of a frictionless string, that is the one-dimensional wave equation \index{Wave equation}\index{Equation!wave}
                \begin{equation}\label{one_wave_eq}
                        \sigma\frac{\displaystyle\partial^2 \psi}{\displaystyle\partial t^2}-\tau \frac{\displaystyle\partial^2\psi}{\displaystyle\partial x^2}=0 \,,
                \end{equation}
            where $\sigma$ and $\tau$ are certain constants of the problem, $\sigma$  represents the linear mass density, that is, a measure of mass per unit of length and $\tau$ is Young's module of the system related to the tension of the string, see for instance \cite{Goldstein}.

Let $\gamma\colon \R\to (T^1_2)^*\R $ be the  section of $\pi^2\colon T^*\r\oplus T^*\r\to \r$ defined by $\gamma(q)=(aq\,dq,bq\,dq)$
where $a$ and $b$ are two constants such that $\tau a^2=\sigma b^2$. This section $ \gamma$ satisfies
the condition $d(H\circ \gamma)=0$ with $H$ the Hamiltonian function defined by
               \begin{equation}\label{vibrating ham}
                    \begin{array}{rccl}
                      H \colon &T^*\r\oplus T^*\r &  \longrightarrow & \r \\ \noalign{\medskip}
                                             &   (q,p^1,p^2)  & \to &   \ds\frac{1}{2}\left(\ds\frac{(p^1)^2}{\sigma}- \ds\frac{(p^2)^2}{\tau}\right)\,.
                   \end{array} \end{equation}
Therefore, the condition $(2)$ of the Theorem \ref{hjth} holds.

Let $Z$ be a $2$-vector field solution of (\ref{ecHksym}) for the Hamiltonian (\ref{vibrating ham}), then the $2$-vector field $Z^\gamma=(Z^\gamma_1,Z^\gamma_2)$ is locally given by
$$
Z^\gamma_1=\ds\frac{a}{\sigma} q\ds\frac{\partial}{\partial q} \quad , \quad Z^\gamma_2=-\ds\frac{b}{\tau} q\ds\frac{\partial}{\partial q}\,.
$$
It is easy to check that $Z^\gamma$ is an integrable $2$-vector field.

If $\psi\colon \R^2\to \R$ is an integral section of $Z^\gamma$, then
$$\ds\frac{\partial \psi}{\partial x^1}=\ds\frac{a}{\sigma}\psi \quad , \quad
\ds\frac{\partial \psi}{\partial x^2}=-\ds\frac{b}{\tau}\psi,$$ thus
$$\psi(x^1,x^2)= C\,\hbox{exp }\left({\ds\frac{a}{\sigma}x^1-\ds\frac{b}{\tau}x^2 }\right),\quad C\in \R$$

By Theorem \ref{hjth} one obtains that the map $
\phi=\gamma\circ \psi$, locally given by $$(x^1,x^2) \mapsto (\psi(x^1,x^2),a\psi(x^1,x^2),b\psi(x^1,x^2)),$$
is a solution of the Hamilton-De Donder-Weyl equations associated to the Hamiltonian $H$ given by (\ref{vibrating ham}), that is,
$$\begin{array}{ccl}
0 &=& a\ds\frac{\partial \psi}{\partial x^1} + b\ds\frac{\partial \psi}{\partial x^2} \\\noalign{\medskip}
\ds\frac{a}{\sigma}\psi &=& \ds\frac{\partial \psi}{\partial x^1}
\\\noalign{\medskip}
-\ds\frac{b}{\tau}\psi &=& \ds\frac{\partial \psi}{\partial x^2}\end{array}
$$

Let us observe that from this system one obtains that $\psi$ is a solution of the motion equation of the vibrating string (\ref{one_wave_eq}).


\chapter{Lagrangian Classical Field Theories}\label{chapter: K-SympLagCFT}

\index{Lagrangian}
\index{Euler-Lagrange field equations}
The aim of this chapter is to give a geometric description of the \emph{Euler-Lagrange field equations}
\begin{equation}\label{solsopde3}
        \ds\sum_{\alpha=1}^k\left(\ds\frac{\partial^2
        L}{\partial q^j
        \partial v^i_\alpha}\Big\vert_{\psi(x)}
        \ds\frac{\partial \psi^j}{\partial x^\alpha}\Big\vert_{x}+
        \ds\frac{\partial^2 L}{\partial v^j_\beta
        \partial v^i_\alpha}\Big\vert_{\psi(x)}
        \ds\frac{\partial^2 \psi^j}{\partial x^\alpha \partial
        x^\beta}\Big\vert_{x}\right)=\ds\frac{\partial L}{\partial
        q^i}\Big\vert_{\psi(x)}\,,
    \end{equation}
  $1\leq i \leq n$, where $\psi\colon\mathbb{R}^k\to T^1_kQ$ and the Lagrangian function is a function  $L\colon \tkq\to \r$ defined on the tangent bundle of $k^1$-velocities $\tkq$ of an arbitrary manifold $Q$.

Let us observe that the above equations can be written in a equivalent way as follows:
    \begin{equation}\label{EcEL}
            \ds\sum_{\alpha=1}^k\ds\frac{\partial}{\partial x^\alpha}\Big\vert_{x}\left(\ds\frac{\partial L}{\partial
            v^i_\alpha}\Big\vert_{\psi(x)}\right)=\ds\frac{\partial
            L}{\partial q^i}\Big\vert_{\psi(x)}\,,\qquad v^i_\alpha (\psi(x))=\derpar{\psi^i}{x^\alpha}\Big\vert_{x}\,.
        \end{equation}

The aim of this chapter is to obtain these equations in two alternative ways. Firstly, in the classical way, describing a variational principle which provides the Euler-Lagrange field equations. The second way to obtain these equations is using the $k$-symplectic formalism introduced in chapter \ref{chapter: k-symplectic formalism}.

Firstly, we shall give a detail description of $\tkq$, i.e. the tangent bundle of $k^1$-velocities and we  introduce some canonical geometric elements defined on this manifold. Finally we discuss the equivalence between the Hamiltonian and Lagrangian approaches when the Lagrangian function is regular or hyper-regular.

\section{The tangent bundle of $k^1$-velocities}\label{section k-tangent}

\index{Tangent bundle of $k^1$-velocities}
    In this section we consider again (this space was introduced in section \ref{section k-vector field}) the space $T^1_kQ$ associated to a differentiable manifold $Q$ and we shall give a complete description.
\index{Tangent bundle of $k^1$-velocities!Canonical coordinates}
            Each coordinate system $(q^1, \ldots, q^n)$ defined on an open neighborhood $U\subset Q$, induces a local bundle coordinate system $(q^i,v^i_\alpha) $ on $T^1_kU\equiv(\tau^k)^{-1}(U)\subset \tkq$ defined as follows
                \begin{equation}\label{tkq: natural coord}
                    q^i({\rm v}_q)=q^i(q),\quad                 v^i_\alpha({\rm v}_q)={v_\alpha}_q(q^i)=(dq^i)_q({v_\alpha}_q)\,,
                \end{equation}
            where   ${\rm v}_q=({v_1}_q,\ldots,{v_k}_q)\in \tkq$, $\n$ and $\ak$.

            These coordinates are called \emph{canonical coordinates} on $\tkq$ and they endow to $\tkq$ of a structure of differentiable manifold of dimension  $n(k+1)$.

            The following diagram shows the notation which we shall use along this book:
                \[
                    \xymatrix@=15mm{
                        \tkq\ar[r]^-{\tau^{k,\alpha}}\ar[rd]_-{ \tau^k}& TQ\ar[d]^-{\tau}\\ & Q
                    }
                \]
            where $\tau^{k,\alpha}:\tkq\to TQ$ is the canonical projection defined as follows
                \begin{equation}\label{taukalpha}
                    \tau^{k,\alpha}({\rm v}_q)= \tau^{k,\alpha}({v_1}_q,\ldots, {v_k}_q)={v_\alpha}_q\,,
                \end{equation}
            with $\ak$.
                \begin{remark}\label{j1rkq}
                {\rm
                        The manifold $\tkq$ can be described as  a manifold of jets, (see \cite{lr1,{Saunders-89}}).

                        Let $\phi:U_0\subset \rk \to Q$  and $\psi:V_0\subset \rk \to Q$ be two maps defined in an open neighborhood of $0\in\rk$, such that $\phi(0)=\psi(0)=p$.  We say that $\phi$ and $\psi$ are related   on $0\in  \rk$ if $\phi_*(0)=\psi_*(0)$, which means that the partial derivatives of $\phi$ and $\psi$ coincide up to order one.

                         The equivalence classes determined by this relationship are called {\it jets of order 1}, or, simply, $1$-jets with source $0\in \rk$ and the same target.

                         The $1$-jet of a map $\phi:U_0\subset \rk \to Q$ is denoted by $j^1_{0,q}\phi$ where $\phi(0)=q$. The set of all $1$-jets at $0$ is denoted by
                         $$J_0^1(\rk,Q)= \ds\bigcup_{q\in Q}J^1_{0,\,q}(\rk,Q)= \ds\bigcup_{q\in Q}\{j^1_{0,q}\phi\, \vert\, \phi\colon\rk\to Q\;\makebox{smooth},\, \phi(0)=q\}$$

                         The canonical projection $\beta:J_0^1(\rk,Q)\to Q$ is defined by $\beta( j^1_0\phi )=\phi(0)$ and $J_0^1(\rk,Q)$                  is called the {\bf  tangent bundle of $k^1$-velocities}, (see Ereshmann \cite{Ehresmann}).
                         Let observe that for $k=1$, $J_0^1(\r,Q)$ is diffeomorphic to $TQ$.

                         We shall now describe the local coordinates on $J_0^1(\rk,Q)$. Let $U$ be a chart of $Q$ with local coordinates $(q^i)$,
                         $1\leq i\leq n$, $\phi:U_0\subset \rk \to Q$ a mapping such that $\phi(0)\in U$ and $\phi^i=q^i\circ \phi$. Then the $1$-jet $j^1_{0,q}\phi$ is uniquely represented in $\beta^{-1}(U)$ by
                         $$
                         (q^i,v^i_1, \ldots , v_k^i) \; , \quad 1\leq i\leq n
                         $$
                         where
                         \begin{equation}\label{jetcoorl}
                         q^i(j^1_{0,q}\phi)=q^i(\phi(0))=\phi^i(0) \, ,  \; v^i_\alpha(j^1_{0,q}\phi)=
                         \phi_*(0)\left(
                          \derpar{}{x^\alpha}\Big\vert_{0}\right)(q^i)\; ,
                          \end{equation}


                        The manifolds $\tkq$ and $J^1_0(\rk,Q)$ can be identified, via the diffeomorphism
                            \[
                                \begin{array}{ccc}
                                    J^1_0(\r^k,Q) & \equiv & TQ \oplus \stackrel{k}{\dots} \oplus TQ \\
                                    j^1_{0,q}\phi  & \equiv & ({v_1}_{q},\ldots, {v_k}_{q})
                                \end{array}
                            \]
                             defined by $$ {v_\alpha}_{q}=\phi_*(0)\Big(\frac{\partial}{\partial                 x^\alpha}\Big\vert_0\Big),\quad \alpha=1,\ldots, k\,\, ,$$
                        being $\phi (0)=q$.\rqed
                        }
                \end{remark}

\subsection{Geometric elements}\label{section tkq:geometry}

In this section we introduce some geometric constructions which are necessary to describe Lagrangian Classical Field Theories using the $k$-symplectic approach.

  \paragraph{ Vertical lifts}\

        Given a tangent vector $u_q$ on an arbitrary manifold $Q$, one can consider the vertical lift to the tangent bundle of $TQ$. In a similar way, we can define the vertical lift to the tangent bundle of $k^1$-velocities by considering the lift on each copy of the tangent bundle.

\index{Tangent bundle of $k^1$-velocities!vertical lifts}
\index{Vertical lifts!Tangent vectors}
        \begin{definition}
            Let   $u_q\in T_{q}Q$ be  a tangent vector at $q\in Q$. For each  $\ak$, we define the \emph{vertical $\alpha$-lift}, $(u_q)^{V_\alpha}$, as the  vector field at the fibre $(\tau^k)^{-1}(q)\subset T_k^1Q$ given by
                \begin{equation}\label{tkq: alpha vert lift tangent vector}
                    (u_q)^{V_\alpha}_{{\rm v}_q} = \displaystyle\frac{d}{ds}({v_1}_q,\ldots,{v_{\alpha-1}}_q, {v_{\alpha}}_q+s  u_q,{v_{\alpha+1}}_q, \ldots,{v_k}_q)\Big\vert_{s=0}
                  \end{equation}
         for any point ${\rm v}_q=({v_1}_q,\ldots, {v_k}_q) \in (\tau^k)^{-1}(q)\subset T^1_kQ$.
        \end{definition}

        In local canonical coordinates (\ref{tkq: natural coord}), if   $u_q = u^i \,\ds\frac{\partial}{\partial q^i}\Big\vert_{q}$ then
            \begin{equation}\label{tkq: local alpha vert lift tangent vector}
                (u_q)^{V_\alpha}_{{\rm v}_q} =  u^i \displaystyle\frac{\partial}{\partial
                v^i_\alpha}\Big\vert_{{\rm v}_q}\, .
            \end{equation}

        The vertical lifts  of tangent vectors allows us to define the vertical lift of
    vector fields.
\index{Vertical lifts!Vector field}
        \begin{definition}
            Let $X$  be a    vector field on $Q$. For each $\ak$ we call \emph{$\alpha$-vertical lift} of $X$ to $T^1_kQ$, to the    vector field $X^{V_\alpha}\in \vf(\tkq)$  defined by
                \begin{equation}\label{tkq: alpha vert lift vector field}
                    X^{V_\alpha}({\rm v}_q)=(X(q))^{V_\alpha}_{{\rm v}_q}\,,
                \end{equation}
            for all point ${\rm v}_q=({v_1}_q,\ldots, {v_k}_q) \in T^1_kQ$.
        \end{definition}

        If  $X=X^i \derpar{}{q^i}$ then, from  (\ref{tkq: local alpha vert lift tangent vector})  and (\ref{tkq: alpha vert lift vector field}) we deduce that
            \begin{equation}\label{tkq: local alpha vert lift vector field}
                X^{V_\alpha}=( X^i\circ \tau^k)\displaystyle\frac{\partial}{\partial
                v^i_\alpha}\,,
            \end{equation}
        since
            \[
                (X(q))^{V_\alpha}_{{\rm v}_q} =\Big(X^i(q)\displaystyle\frac{\partial}{\partial q^i}\Big\vert_{q}\Big)^{V_\alpha}_{{\rm v}_q}= X^i(q)\displaystyle\frac{\partial}{\partial v^i_\alpha}\Big\vert_{{\rm v}_q}=( X^i\circ \tau^k)({\rm v}_q)\displaystyle\frac{\partial}{\partial v^i_\alpha}\Big\vert_{{\rm v}_q}\,.
            \]

\paragraph{ Canonical  $k$-tangent structure}\

In a similar way that in the tangent bundle, the vertical lifts of tangent vectors allows us to introduce a family   $\{J^1,\ldots,$ $ J^k\}$ of $k$ tensor fields of type $(1,1)$ on  $\tkq$. This family is the model of the so called \emph{$k$-tangent structures}  introduced by M. de Le\'{o}n {\it et al.} in \cite{LMS-88,LMS-91}.  In the case $k=1$, $J=J^1$ is the canonical tangent structure or vertical endomorphism (\ref{jtang})  (see \cite{Crampin-1983, cram2,grif1, grif2, grif3,klein}).

\index{Tangent bundle of $k^1$-velocities!$k$-tangent structure}
\index{$k$-tangent structure}
\begin{definition}\label{kte}
For each $\ak$  we define the tensor field $J^\alpha$ of type $(1,1)$
   on $\tkq$  as follows
\begin{equation}\label{ktetkq}
\begin{array}{ccccl}
J^\alpha({\rm v}_q) & : & T_{{\rm v}_q}(T^1_kQ) & \to & T_{{\rm v}_q}(T^1_kQ) \\
\noalign{\medskip}
          & &Z_{{\rm v}_q} & \to &
J^\alpha({\rm v}_q)(Z_{{\rm v}_q})=
\Big((\tau^k)_*({\rm v}_q)(Z_{{\rm v}_q})\Big)^{V_\alpha}_{{\rm v}_q}\end{array}\end{equation}
 where   $ {\rm v}_q
    \in T^1_kQ\,.$
 \end{definition}

From (\ref{tkq: local alpha vert lift tangent vector}) and (\ref{ktetkq}) we deduce that, for each $\ak$,  $J^\alpha$ is
locally given by
\begin{equation}\label{localJA}
{J^\alpha}=\displaystyle\frac
{\displaystyle\partial}{\displaystyle\partial v^i_\alpha} \otimes
dq^i\,.
\end{equation}

\begin{remark}
{\rm
    The family $\{J^1,\ldots, J^k\}$ can be obtained using the vertical lifts of the identity tensor field of  $Q$ to $T^1_kQ$
 defined  by Morimoto (see \cite{mor3,mor}).
     \rqed
     }
\end{remark}


\paragraph{ Canonical vector fields}\

An important geometric object on $\tkq$ is the generalized Liouville vector field.

\index{Liouville vector field}
\index{Tangent bundle of $k^1$-velocities!Liouville vector field}
\begin{definition} The {\bf   Liouville vector field} $\triangle$ on
$\tkq$ is the infinitesimal generator of the flow
\begin{equation}\label{flujoC}
\begin{array}{llcl}
\psi:&\r \times \tkq & \longrightarrow &  T^1_kQ  \\
\noalign{\medskip}  & (s,({v_1}_{q},\ldots, {v_k}_{q})) & \mapsto &
( e^s{v_1}_{q}, \ldots,e^s{v_k}_{q})\, .
\end{array}
\end{equation} and in local coordinates it has the form
\begin{equation}\label{loclvf}
 \triangle =   \displaystyle\sum_{i=1}^n\sum_{\alpha=1}^k v^i_\alpha
\frac{\displaystyle\partial}{\displaystyle\partial v^i_\alpha}\, .
\end{equation}
\end{definition}
\begin{definition}\label{cvcano}
For each $\ak$   we define the {\it canonical vector field}
  $\triangle_\alpha$ as the infinitesimal generator  of the
following flow
\begin{equation}\label{flujo CA}
\begin{array}{rcl}
\psi^\alpha:\r \times T^1_kQ & \longrightarrow &  T^1_kQ
\\ \noalign{\medskip}  (s,({v_1}_{q},\ldots, {v_k}_{q} )) & \mapsto &
( {v_1}_{q}, \ldots, {v_{\alpha-1}}_{q}, e^s {v_\alpha}_{q},
{v_{\alpha+1}}_{q}, \ldots, {v_k}_{q})\,,
\end{array}
\end{equation}  and in local coordinates it has the form
\begin{equation}\label{loccanvf}
 \triangle_\alpha =   \displaystyle\sum_{i= 1}^n v^i_\alpha
\frac{\displaystyle\partial}{\displaystyle\partial
v^i_\alpha}\,,\quad 1\leq \alpha\leq k\,,
\end{equation}
for each $\ak$.
\end{definition}

From  (\ref{loclvf}) and (\ref{loccanvf}) we deduce that $\triangle=\triangle_1+\ldots + \triangle_k\,.$

\begin{remark}\label{remark1.32}
{\rm
The vector fields $\triangle$ and $\triangle_\alpha$  can be also
defined using the vertical lifts. From  (\ref{tkq: local alpha vert lift tangent vector}), (\ref{loclvf})
and (\ref{loccanvf}) one obtains that
$$
\triangle({\rm v}_{q})= \ds\sum_{\alpha=1}^k
({v_\alpha}_{q})^{V_\alpha}_{{\rm v}_q},\quad \triangle_\alpha ({\rm v}_q
)=({v_\alpha}_{q})^{V_\alpha}_{{\rm v}_q}\,,
$$ where ${\rm v}_{q}=({v_1}_q,\ldots , {v_k}_{q})\in \tkq$ and $\ak$.
  \rqed}\end{remark}

\subsection{Prolongation of  vector fields}\label{section tkq: prolongation}

    In a similar way as in section \ref{section tkqh: prolongation} one can define the canonical prolongation of maps between manifolds to the corresponding tangent bundles of $k^1$-velocities (see (\ref{tkq: prolongation expr})).



\index{Tangent bundle of $k^1$-velocities! Prolongation of vector fields}
\index{Prolongation!Vector fields}
\begin{definition}\label{tkq: complete lift}
Let $Z\in \vf(Q)$ be a  vector field
 on $Q$ with local $1$-parametric group of diffeomorphisms  $h_s:Q\to Q$. The
{\bf complete or natural lift} of $Z$ to $\tkq$
 is the   vector field $Z^C$ on $\tkq$ whose   local $1$-parameter group of diffeomorphisms is   $T^1_k(h_s): T^1_kQ \to \tkq$.
\end{definition}

\begin{remark}
{\rm
The definition of $T^1_k(h_s)$ is just the one gives in (\ref{tkq: prolongation expr})
\rqed}
\end{remark}
In local canonical coordinates (\ref{tkq: natural coord}), if $Z=Z^i\ds\frac{\partial}{\partial q^i}$
then the local expression is
\begin{equation}\label{tkq: local complete lift}
Z^C=Z^i\displaystyle\frac{\partial}{\partial q^i} \, + \, v^j_\alpha
\ds \frac{\partial Z^i} {\partial q^j}
\displaystyle\frac{\partial}{\partial v^i_\alpha} \; .
\end{equation}

The following lemma shows that the canonical prolongations of maps to the tangent bundle of $k^1$-velocities leave invariant the canonical structures of $\tkq$.

\begin{lemma}
\label{lema.2}
 Let $\Phi=T^1_k\varphi:T^1_kQ \to T^1_kQ$ be  the canonical prolongation of a diffeomorphism $\varphi:Q\to Q$, then for each $\ak$, we have
$$
(a) \quad \Phi_* \circ J^\alpha = J^\alpha \circ \Phi_* \;, \quad (b)
\quad \Phi_*\triangle_\alpha=\triangle_\alpha \;,\quad (c)\quad
\Phi_*\triangle= \triangle\, .
$$
\end{lemma}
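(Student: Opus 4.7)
The plan is to exploit two elementary structural facts about $\Phi = T^1_k\varphi$: it intertwines the canonical projection in the sense that $\tau^k \circ \Phi = \varphi \circ \tau^k$, and on each fibre $(\tau^k)^{-1}(q)$ it acts as the linear map $\varphi_*(q) \times \stackrel{k}{\cdots} \times \varphi_*(q)$. Both properties are immediate from the defining formula (\ref{tkq: prolongation expr}) and will carry almost the entire proof.

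First I would establish an auxiliary compatibility of $\Phi_*$ with the vertical $\alpha$-lifts. Given $u_q \in T_qQ$ and ${\rm v}_q = ({v_1}_q,\ldots,{v_k}_q) \in T^1_kQ$, apply $\Phi$ to the defining curve in (\ref{tkq: alpha vert lift tangent vector}); since $\Phi$ is fibrewise $\varphi_*$, one gets
\[
\Phi_*\bigl((u_q)^{V_\alpha}_{{\rm v}_q}\bigr) = \bigl(\varphi_*(q)(u_q)\bigr)^{V_\alpha}_{\Phi({\rm v}_q)}.
\]
This is a one-line derivative calculation, not an obstacle.

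For part (a), apply $\Phi_*$ to $J^\alpha({\rm v}_q)(Z_{{\rm v}_q}) = \bigl((\tau^k)_*({\rm v}_q)(Z_{{\rm v}_q})\bigr)^{V_\alpha}_{{\rm v}_q}$, invoke the vertical-lift compatibility just proved, and then use $\varphi_*\circ (\tau^k)_* = (\tau^k)_*\circ \Phi_*$ (the derivative of $\tau^k \circ \Phi = \varphi \circ \tau^k$). Comparing with the definition (\ref{ktetkq}) of $J^\alpha$ at $\Phi({\rm v}_q)$ gives (a) pointwise. For parts (b) and (c), I would use the flow characterizations (\ref{flujoC}) and (\ref{flujo CA}): since $\varphi_*$ is linear on each tangent space, it commutes with scalar multiplication by $e^s$, and hence
\[
\Phi \circ \psi_s = \psi_s \circ \Phi, \qquad \Phi \circ \psi^\alpha_s = \psi^\alpha_s \circ \Phi.
\]
Differentiating at $s=0$ yields $\Phi_*\triangle = \triangle$ and $\Phi_*\triangle_\alpha = \triangle_\alpha$. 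Alternatively one could verify both identities by a routine coordinate check using (\ref{loclvf}), (\ref{loccanvf}) and the local expression of $\Phi$, but the flow argument is cleaner and makes the role of fibrewise linearity of $\varphi_*$ transparent.

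There is no real obstacle here: the only point requiring care is the curve calculation for vertical lifts, where one must be attentive to the fact that $\Phi$ acts on the $\alpha$-th slot by $\varphi_*$ and that the derivation in $s$ commutes with this linear map, so that $\varphi_*(v_\alpha + s\,u_q) = \varphi_*(v_\alpha) + s\,\varphi_*(u_q)$. Once this identity is in hand, (a), (b) and (c) all reduce to direct substitution.
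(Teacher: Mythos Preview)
Your proof is correct. For parts (b) and (c) you essentially reproduce the paper's argument: the paper also uses that $\Phi$ commutes with the flow $\psi^\alpha$ (written there as $T^1_k\varphi\circ\psi_{{\rm v}_q}^\alpha=\psi_{{\rm v}_q}^\alpha\circ T^1_k\varphi$), and then derives (c) from (b) via $\triangle=\triangle_1+\cdots+\triangle_k$.

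The genuine difference is in (a). The paper proves it by a direct local-coordinate check, using the expression $J^\alpha=\partial/\partial v^i_\alpha\otimes dq^i$ together with the local form $T^1_k\varphi(q^i,v^i_\alpha)=\bigl(\varphi^j(q^i),\,v_\alpha^i\,\partial\varphi^j/\partial q^i\bigr)$. Your argument is intrinsic: you first establish that $\Phi_*$ intertwines the vertical $\alpha$-lifts, and then combine this with the differential of $\tau^k\circ\Phi=\varphi\circ\tau^k$ to obtain (a) directly from the invariant definition (\ref{ktetkq}) of $J^\alpha$. Your route is conceptually cleaner and makes transparent \emph{why} the identity holds (it encodes precisely that $\Phi$ is a fibred map over $\varphi$ which is linear on fibres); the paper's coordinate computation is shorter to execute once charts are fixed. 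Both approaches are valid and standard.
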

\proof (a) It is a direct consequence of the local expression
(\ref{localJA}) of $J^\alpha$ and the local expression of
$T^1_k\varphi$ given by $$
T^1_k\varphi(q^i,v^i_\alpha)=\Big(\varphi^j(q^i),v_\alpha^i\frac{\partial
\varphi^j}{\partial q^i}\Big) \, $$ where the functions $\varphi^j$
denote the components of the diffeomorphism $\varphi:Q\to Q$.

 (b) It is a consequence of $T^1_k\varphi\circ\psi_{{\rm v}_q}^\alpha=\psi_{{\rm v}_q}^\alpha\circ T^1_k\varphi$,
 where $\psi_{{\rm v}_q}^\alpha$ are  $1$-parameter group of diffeomorphisms
(\ref{flujo CA}) generated by $\triangle_\alpha$.

(c) It is a direct consequence of  (b) and of the identity
$\triangle=\triangle_1+\ldots + \triangle_k$.\qed

\subsection{First prolongation of maps}

\index{Tangent bundle of $k^1$-velocities!First prolongation}
\index{First prolongation}

Here we shall introduce the notion of first prolongation, which will be very important along this chapter and generalize the lift of a curve on $Q$ to the tangent bundle $TQ$ of $Q$.

        \begin{definition}\label{first_prol}
            We define the \emph{first prolongation} $\phi^{(1)}$ of a map $\phi\colon \rk\to Q$ as the map
            \begin{equation}\label{phi(1)}
                 \begin{array}{rccl}
                    \phi^{(1)} \colon & U_0 \subseteq \rk & \longrightarrow & T^1_kQ \\
                      & x & \longmapsto &  \Big( \phi_* (x) \Big( \displaystyle\frac{\partial}{\partial x^1}\Big\vert_{x}\Big),\ldots , \phi_* (x) \Big(\displaystyle\frac{\partial}{\partial x^k}\Big\vert_{x} \Big) \Big) \, ,
                 \end{array}
           \end{equation}
           where $(x^1,\ldots, x^k)$ denotes the coordinates on $\rk$ and $\tkq$ the tangent bundle of $k^1$-velocities introduced at the beginning of section \ref{section k-tangent}.
        \end{definition}

        If we consider canonical coordinates $(q^i, v^i_\alpha)$ on $\tkq$ (see (\ref{tkq: natural coord}) for the definition), then the first prolongation is locally given by
            \begin{equation}\label{localphi1}
                \begin{array}{rccl}
                    \phi^{(1)} \colon & U_0 \subseteq \rk & \longrightarrow & T^1_kQ \\
                      & x & \longmapsto & \phi^{(1)} \left( x \right) \, = \Big (\phi^i(x), \derpar{\phi^i}{x^\alpha}\Big\vert_{x}\Big) ,
                 \end{array}
            \end{equation}
        where $\phi^i=q^i\circ \phi$, and we are using that
            \[
                \phi_*(x)\Big(\derpar{}{x^\alpha}\Big\vert_{x}\Big) = \derpar{\phi^i}{x^\alpha}\Big\vert_{x}\derpar{}{q^i}\Big\vert_{\phi(x)}\,.
            \]


\section{Variational principle for the Euler-La\-gran\-ge equations.}\label{section: lag variational principle}

    In this section we describe the problem in the calculus of variations for multiple integrals, which allows us to obtain the Euler-Lagrange field equations.

\index{Lagrangian}
    Along this section we  consider a given Lagrangian function $L$ on the tangent bundle of $k^1$-velocities, i.e. $L\colon \tkq\to \r$. Thus we can evaluate $L$ in the first prolongation (\ref{phi(1)}) of a field $\phi\colon \rk\to Q$. Given $L$ we can construct the following operator:

        \begin{definition}\label{Alksim}
            Let us denote by $\mathcal{C}^\infty_C(\rk,Q)$ the set of maps $\phi:U_0\subset\rk\to Q,$ with compact support,         defined on an open set $U_0$. We define the action associated to $L$ by
                \[
                    \begin{array}{lccl}
                        {\mathcal J}:&\mathcal{C}^\infty_C(\rk,Q) & \to
                        &\r\\\noalign{\medskip}  & \phi &\mapsto &{\mathcal
                        J}(\phi)=\ds\int_{\rk} (L\circ \phi^{(1)})(x)\,  d^kx\,,
                    \end{array}
                \]
          where $d^kx=dx^1\wedge\ldots\wedge dx^k$ is a volume form on
          $\rk$ and $\phi^{(1)}:U_0\subset\rk\to T^1_kQ$
         denotes the first prolongation of $\phi$ defined in (\ref{phi(1)}). \end{definition}

            \begin{definition}\label{extremales}
            A  map  $\phi\in\mathcal{C}^\infty_C(\rk,Q)$, is an \textit{extremal} of $\mathcal{J}$ if $$
            \ds\frac{d}{ds}\Big\vert_{s=0}\mathcal{J}(\tau_s\circ\phi)=0,$$ for
            each flow  ${\tau_s}$ on $Q$ such that
             $\tau_s(q)=q$ for every  $q$ at the boundary of $\phi(U_0)\subset Q$.\end{definition}

Let us observe that the flow  $\tau_s:Q\to Q$, considered in this
definition, are generated by a    vector field on $Q$ which vanishes
at the  boundary of $\phi(U_0)$.

The variational problem associated to a Lagrangian $L$, is to find
the  extremals of the integral action    $\mathcal{J}$. In the following proposition we characterize these extremals.

\begin{prop}\label{varprinL}
Let $L:\tkq\to \r$ be a Lagrangian and
$\phi\in\mathcal{C}^\infty_C(\rk,Q)$. The following  assertions are
  equivalent :
\begin{enumerate}
\item $\phi:U_0\subset\rk\to Q$ is an extremal of $\mathcal{J}$.
\item For each  vector field $Z$ on $Q$, vanishing at all points on the boundary  of $\phi(U_0)$, one has  $$\ds\int_{U_0}
\left((\mathcal{L}_{Z^c} L) \circ \phi^{(1)}\right)(x) d^kx=0\,, $$
where $Z^C$ is the complete lift of  $Z$ to $\tkq$ (see \ref{tkq: complete lift}).
\item $\phi$ is solution of the   Euler-Lagrange field equations (\ref{EcEL}).
\end{enumerate}
\end{prop}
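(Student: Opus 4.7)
The plan is to prove the chain $(1)\Leftrightarrow(2)\Leftrightarrow(3)$ following the same pattern used earlier for the Hamilton-De Donder-Weyl case. The first equivalence amounts to identifying the first variation of $\mathcal{J}$ along a flow $\tau_s$ on $Q$ with the integral of a Lie derivative along $Z^C$; the second is a local coordinate computation that reduces, after integration by parts, to the fundamental lemma of multidimensional variational calculus (Lemma \ref{variations}).

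For $(1)\Leftrightarrow(2)$, the essential observation is the compatibility
\[
(\tau_s\circ\phi)^{(1)} \;=\; T^1_k(\tau_s)\circ\phi^{(1)},
\]
which follows immediately from the chain rule applied to Definition \ref{first_prol}. Since $Z^C$ is by construction the infinitesimal generator of the prolonged flow $\{T^1_k(\tau_s)\}$, differentiating under the integral sign and using the definition of Lie derivative yields
\[
\frac{d}{ds}\Big|_{s=0} \mathcal{J}(\tau_s\circ\phi) \;=\; \int_{U_0} \bigl((\mathcal{L}_{Z^C} L)\circ\phi^{(1)}\bigr)(x)\,d^kx\,.
\]
As every vector field on $Q$ vanishing on the boundary of $\phi(U_0)$ arises as the generator of such a flow, this equality gives the equivalence directly.

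For $(2)\Leftrightarrow(3)$, I would work in canonical coordinates. Writing $Z=Z^i\,\partial/\partial q^i$ and using the local expression (\ref{tkq: local complete lift}) of $Z^C$, one has
\[
(\mathcal{L}_{Z^C} L)\circ\phi^{(1)}
= \Big(\frac{\partial L}{\partial q^i}\circ\phi^{(1)}\Big)(Z^i\circ\phi)
+ \Big(\frac{\partial L}{\partial v^i_\alpha}\circ\phi^{(1)}\Big)\frac{\partial \phi^j}{\partial x^\alpha}\Big(\frac{\partial Z^i}{\partial q^j}\circ\phi\Big).
\]
The second summand can be recognised as $\big(\partial L/\partial v^i_\alpha\circ\phi^{(1)}\big)\,\partial(Z^i\circ\phi)/\partial x^\alpha$; integrating by parts in $x^\alpha$, the boundary contribution vanishes because $Z$, and hence $Z^i\circ\phi$, is zero on $\partial\phi(U_0)$. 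One obtains
\[
\int_{U_0}\Bigg(\frac{\partial L}{\partial q^i}\circ\phi^{(1)} - \sum_{\alpha=1}^k\frac{\partial}{\partial x^\alpha}\Big(\frac{\partial L}{\partial v^i_\alpha}\circ\phi^{(1)}\Big)\Bigg)(Z^i\circ\phi)\,d^kx\;=\;0.
\]
Since the $Z^i\circ\phi$ can be chosen as arbitrary smooth functions of compact support in $U_0$, Lemma \ref{variations} delivers the Euler-Lagrange equations (\ref{EcEL}); the converse follows by reversing the argument.

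The main obstacle is bookkeeping rather than conceptual: one must handle compositions with $\phi^{(1)}$ versus $\phi$ carefully, verify the formula $(\tau_s\circ\phi)^{(1)} = T^1_k(\tau_s)\circ\phi^{(1)}$ cleanly so that differentiation under the integral produces exactly $\mathcal{L}_{Z^C} L$, and ensure that the hypothesis that $Z$ vanishes on $\partial\phi(U_0)$ is enough to annihilate the boundary integral after integration by parts. The canonical nature of the complete lift is precisely what makes the first point automatic, so the technical work is concentrated in the integration by parts and in the application of the fundamental lemma.
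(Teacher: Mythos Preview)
Your proposal is correct and follows essentially the same route as the paper: the equivalence $(1)\Leftrightarrow(2)$ via the identity $(\tau_s\circ\phi)^{(1)}=T^1_k(\tau_s)\circ\phi^{(1)}$ and the definition of $Z^C$, and the equivalence $(2)\Leftrightarrow(3)$ via the local expression of $Z^C$, integration by parts, and Lemma~\ref{variations}. The only cosmetic difference is that the paper attributes the vanishing of the boundary term to the compact support of $\phi$ rather than to the vanishing of $Z$ on $\partial\phi(U_0)$, but the argument is otherwise identical.
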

\begin{proof}
First we prove the equivalence between  $(1)$ and $(2)$.

Let $\phi\colon U_0\subset \rk \to Q$ be a map and $Z\in \vf(Q)$ be a   vector field
on  $Q$, with local $1$-parameter  group of diffeomorphism $\{\tau_s\}$, and vanishing at the boundary of $\phi(U_0)$,  then $T^1_k\tau_s$ is the
local $1$-parameter  group of diffeomorphism of  $Z^C$.

 A simple computation shows  $T^1_k\tau_s\circ\phi^{(1)}=(\tau_s\circ\phi)^{(1)}$, and thus
 we deduce
$$\begin{array}{lcl}
&&\ds\frac{d}{ds}\Big\vert_{s=0}\mathcal{J}(\tau_s\circ\phi) =
\ds\frac{d}{ds}\Big\vert_{s=0}\int_{\rk} (L\circ
(\tau_s\circ\phi)^{(1)})(x) d^kx\\\noalign{\medskip}

 &=&
\ds\lim_{s\to 0}\ds\frac{1}{s}\left(\int_{\rk} (L\circ
(\tau_s\circ\phi)^{(1)})(x) d^kx - \int_{\rk} (L\circ \tau_0\circ
\phi^{(1)})(x) d^kx\right)
\\\noalign{\medskip}

&=& \ds\lim_{s\to 0}\ds\frac{1}{s}\left(\int_{\rk} (L\circ
T^1_k\tau_s\circ\phi^{(1)})(x) d^kx - \int_{\rk} (L\circ
\phi^{(1)})(x) d^kx\right)
\\\noalign{\medskip}
&=& \ds\lim_{s\to 0}\ds\frac{1}{s}\left(\int_{\rk} \Big((L(
T^1_k\tau_s\circ\phi^{(1)})(x)) - L(
\phi^{(1)}(x)) \Big) d^kx\right)
\\\noalign{\medskip}

&=& \ds\int_{\rk} \left(\ds\lim_{s\to 0}\ds\frac{1}{s}\Big(L(
T^1_k\tau_s\circ\phi^{(1)}(x))  - L( \phi^{(1)}(x))\Big)\right) d^kx
\\\noalign{\medskip}

&=&\ds\int_{\rk} \left((\mathcal{L}_{Z^c} L) \circ
\phi^{(1)}\right)(x) d^kx\,,
\end{array}$$
so, we have done.

 We have proved that    $\phi:U_0\subset\rk\to
Q$ is an extremal of $\mathcal{J}$ if and only if for each vector
field $Z\in\vf(Q)$
 vanishing at the boundary  of $\phi(U_0)$ one has  \begin{equation}\label{extcampo} \ds\int_{U_0}
\left((\mathcal{L}_{Z^c} L) \circ \phi^{(1)}\right)(x) d^kx=0\,.
\end{equation}

We now prove that it is equivalent to say that $\phi$ is a solution of the Euler-Lagrange field equation.

 Let us suppose that
$Z=Z^i\derpar{}{q^i}$;   from the local expression (\ref{tkq: local complete lift}) of
$Z^C$ and the expression of integration by parts in multiple integrals and since $\phi$ has compact support, we deduce that:
\[
    \begin{array}{lcl}
    & &
        \ds\int_{\rk} \left((\mathcal{L}_{Z^c} L) \circ \phi^{(1)}\right)(x) d^kx\\\noalign{\medskip}
    &=&
        \ds\int_{\rk}\left( Z^i(\phi(x)) \derpar{L}{q^i}\Big\vert_{\phi^{(1)}(x)} + \derpar{\phi^j}{x^\alpha}\Big\vert_{x}\derpar{Z^i}{q^j}\Big\vert_{\phi(x)}\derpar{L}{v^i_\alpha}\Big\vert_{\phi^{(1)}(x)}\right) d^kx\\\noalign{\medskip}
    &=&
        \ds\int_{\rk}\left( Z^i(\phi(x)) \derpar{L}{q^i}\Big\vert_{\phi^{(1)}(x)} + \derpar{(Z^i\circ \phi)}{x^\alpha}\Big\vert_{x} \derpar{L}{q^i}\Big\vert_{\phi^{(1)}(x)}\right) d^kx\\\noalign{\medskip}
    &=&
        \ds\int_{\rk}\left( Z^i(\phi(x)) \derpar{L}{q^i}\Big\vert_{\phi^{(1)}(x)} - Z^i(\phi(x))\derpar{}{x^\alpha}\left(\derpar{L}{v^i_\alpha}\Big\vert_{\phi^{(1)}(x)}\right)\right)d^kx\\\noalign{\medskip}
    &=&
        \ds\int_{\rk}(Z^i\circ \phi)(x)\left(\derpar{L}{q^i}\Big\vert_{\phi^{(1)}(x)}-\derpar{}{x^\alpha}\left(\derpar{L}{v^i_\alpha}\Big\vert_{\phi^{(1)}(x)}\right) \right) d^kx\,.
    \end{array}
\]

Therefore we obtain that $\phi$ is an extremal of $\mathcal{J}$ if and only if
\[
    0= \ds\int_{\rk}(Z^i\circ \phi)(x)\left(\derpar{L}{q^i}\Big\vert_{\phi^{(1)}(x)}-\derpar{}{x^\alpha}\left(\derpar{L}{v^i_\alpha}\Big\vert_{\phi^{(1)}(x)}\right) \right) d^kx\,.
\]

Since this identity holds for all $Z^i$, applying lemma \ref{variations} we obtain that  $\phi$ is an extremal of $\mathcal{J}$ if and only if
\begin{equation}\label{ELe}
\displaystyle \sum_{\alpha=1}^k\ds\frac{\partial}{\partial x^\alpha}\Big\vert_{x} \left(\frac{\displaystyle\partial
L}{\displaystyle
\partial v^i_\alpha}\Big\vert_{\phi^{(1)}(x)} \right)= \frac{\displaystyle \partial
L}{\displaystyle
\partial q^i}\Big\vert_{\phi^{(1)}(x)} \;.
\end{equation} Equations (\ref{ELe}) are called    {\bf
 Euler-Lagrange field equations} for the Lagran\-gian function $L$.\qed\end{proof}

\index{Euler-Lagrange field equations}

\section{Euler-Lagran\-ge field equations: $k$-sym\-plec\-tic version}\label{section: Geometric EL equations k-symp}

In this section we give the geometric description of the Euler-Lagrange field equations (\ref{EcEL}) or (\ref{ELe}). In order to accomplish this task it is necessary to introduce some geometric elements associated to a Lagrangian function $L\colon \tkq\to \r$, (see for instance \cite{lr1}).

\subsection{Poincar\'{e}-Cartan forms on the tangent bundle of $k^1$-velocities}\label{section Poincare cartan forms}

\index{Tangent bundle of $k^1$-velocities!Poincar\'{e}-Cartan forms}
\index{Poincar\'{e}-Cartan forms}
    In a similar manner as in the case of Lagrangian Mechanics, the $k$-tangent structure on $\tkq$, allows us to define a family of $1$-forms, $\theta^1_L,\ldots, \theta^k_L$ on $\tkq$ as follows:
        \begin{equation}\label{thetaAtkq}
            \theta^\alpha_L=dL\circ J^\alpha \,,
        \end{equation}
    where $\ak$. Next we define the family $\omega^1_L,\ldots, \omega^k_L$ of presymplectic forms on $\tkq$ by
        \begin{equation}\label{omegaAtkq}
            \omega_L^\alpha=-\,d\theta_L^\alpha \,,
        \end{equation}
    which will be called \emph{Poincar\'{e}-Cartan forms} on $\tkq$.

    If we consider canonical coordinates $(q^i,v^i_\alpha)$ on $\tkq$, from (\ref{localJA}) and (\ref{thetaAtkq}) we deduce that for $\ak$,
        \begin{equation}\label{LocthetaAtkq}
            \theta_L^\alpha= \ds\frac{\partial L}{\partial v^i_\alpha}\, dq^i  \,,
        \end{equation}
    and so, from  (\ref{omegaAtkq}) and (\ref{LocthetaAtkq}), we obtain
        \begin{equation}\label{LocomegaAtkq}
            \omega_L^\alpha=dq^i \wedge d\left(\ds\frac{\partial
            L}{\partial v^i_\alpha}\right)= \ds\frac{\partial ^2 L}{\partial
            q^j\partial v^i_\alpha}dq^i\wedge dq^j + \ds\frac{\partial ^2 L}{\partial
            v^j_\beta\partial v^i_\alpha}dq^i\wedge dv^j_\beta \; .
        \end{equation}

    An important property of the family of presymplectic forms $\omega^1_L,\ldots, \omega^k_L$ occurs when the Lagrangian is regular.
    \begin{definition}
        A Lagrangian function $L\colon \tkq\to \r$ is said to be \emph{regular} if the matrix
            \[
                \left(\derpars{L}{v^i_\alpha}{v^j_\beta}\right)
            \]
        is regular.
    \end{definition}
    \index{Lagrangian!regular}
\index{Tangent bundle of $k^1$-velocities!regular Lagrangian}
    The regularity condition let us prove the following proposition, see  \cite{MRS-2004}.
        \begin{prop}\label{lag ksymp manifold}
         Given a Lagrangian function on $\tkq$, the following conditions are  equivalent:
        \begin{enumerate}
         \item $L$ is regular.
          \item $(\omega^1_L,\ldots, \omega^k_L, V)$ is a  $k$-symplectic structure on $\tkq$, where
        $$V=\ker (\tau^k)_*=span \left\{\derpar{}{v^i_1},\ldots,
        \derpar{}{v^i_k}\right\}$$  with $1\leq i\leq n$, is the vertical distribution
        of the vector bundle $\tau^k:\tkq\to Q$.
        \end{enumerate}
        \end{prop}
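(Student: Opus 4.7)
The plan is to reduce the statement to a direct computation in the canonical coordinates $(q^i,v^i_\alpha)$ of $\tkq$, using the explicit expression \eqref{LocomegaAtkq} for the Poincar\'e-Cartan forms. Three of the four conditions defining a $k$-symplectic structure follow essentially for free, so the whole content of the proposition will lie in characterizing when $\bigcap_\alpha \ker\omega^\alpha_L = \{0\}$.

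First I would dispose of the ``easy'' conditions. The dimension of $\tkq$ is $n(k+1)$ and the vertical distribution $V=\ker(\tau^k)_*$ has local frame $\{\partial/\partial v^i_\alpha\}$, so $\dim V = nk$ and $V$ is integrable (its local generators commute). Each $\omega^\alpha_L=-d\theta^\alpha_L$ is closed by construction. Finally, the local expression
\[
\omega^\alpha_L = dq^i\wedge d\Bigl(\frac{\partial L}{\partial v^i_\alpha}\Bigr)
\]
shows that every term of $\omega^\alpha_L$ contains a factor $dq^i$, which annihilates any vertical vector field; hence $\omega^\alpha_L\big|_{V\times V}=0$ regardless of regularity. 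So all that remains in either direction is to analyze $\bigcap_\alpha\ker\omega^\alpha_L$.

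The core of the argument will be the following computation. For a tangent vector
\[
Z = A^i\,\frac{\partial}{\partial q^i} + B^j_\beta\,\frac{\partial}{\partial v^j_\beta}
\]
I would expand $\iota_Z\omega^\alpha_L$ and split it into its $dq^k$ and $dv^k_\gamma$ components. The coefficient of $dv^k_\gamma$ comes only from the first factor of $\omega^\alpha_L$ and equals
\[
A^i\,\frac{\partial^2 L}{\partial v^i_\alpha\,\partial v^k_\gamma}.
\]
Writing $W$ for the $nk\times nk$ Hessian matrix with entries $W_{(i,\alpha),(k,\gamma)}=\partial^2 L/(\partial v^i_\alpha\,\partial v^k_\gamma)$, vanishing of these coefficients for all $(k,\gamma)$ and all $\alpha$ says that for each $\alpha$ the vector $\tilde A^{(i,\beta)}:=A^i\delta_{\alpha\beta}$ lies in $\ker W$. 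If $L$ is regular then $W$ is invertible, so $A^i=0$; the remaining $dq^k$ equations then reduce to $B^j_\beta\,\partial^2 L/(\partial v^j_\beta\partial v^k_\alpha)=0$ for all $(k,\alpha)$, i.e.\ $WB=0$, forcing $B=0$ and hence $Z=0$. This yields $(1)\Rightarrow(2)$.

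For the converse $(2)\Rightarrow(1)$ I would argue contrapositively: if $L$ is not regular, pick a nonzero $B=(B^j_\beta)\in\ker W$, put $A=0$, and set $Z:=B^j_\beta\,\partial/\partial v^j_\beta$. The $dv$-coefficients vanish because $A=0$ and the $dq$-coefficients vanish because $WB=0$, producing a nonzero element of $\bigcap_\alpha\ker\omega^\alpha_L$ and contradicting the nondegeneracy condition in the definition of a $k$-symplectic structure. The main (minor) obstacle I anticipate is only bookkeeping: keeping the double indices $(i,\alpha)$ straight and correctly identifying the $\alpha$-th block of $W$ with the operator acting on the ``horizontal'' coefficients $A^i$; once that identification is made, both directions fall out of the invertibility of the Hessian.
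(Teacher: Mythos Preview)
Your argument is correct. The paper itself does not supply a proof of this proposition; it simply refers the reader to \cite{MRS-2004}, so there is no in-text proof to compare against. Your route---checking the dimension, closedness, and $\omega^\alpha_L\big|_{V\times V}=0$ for free, then reducing the nondegeneracy condition $\bigcap_\alpha\ker\omega^\alpha_L=\{0\}$ to the invertibility of the Hessian $W=\bigl(\partial^2 L/\partial v^i_\alpha\partial v^j_\beta\bigr)$ by reading off the $dv^k_\gamma$ and $dq^k$ coefficients of $\iota_Z\omega^\alpha_L$---is exactly the natural local computation, and both directions go through as you describe. The only cosmetic point is that your intermediate vector $\tilde A^{(i,\beta)}=A^i\delta_{\alpha\beta}$ is slightly heavier machinery than needed: once you fix any single $\alpha$, the equations $\sum_i A^i\,\partial^2 L/(\partial v^i_\alpha\partial v^k_\gamma)=0$ for all $(k,\gamma)$ already say that one full row-block of $W$ kills $A$, and invertibility of $W$ forces $A=0$ directly; but this is just phrasing.
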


\subsection{Second order partial differential equations on  $T^1_kQ$.} \label{section sopdes tkq}
\index{\textsc{{sopde}}}
    The second geometric notion which we need in our description of the Euler-Lagrange equations is the notion of second order partial differential equation (or {\sc sopde}) on $\tkq$. Roughly speaking, a {\sc sopde} is a $k$-vector field on $\tkq$ whose integral sections are first prolongations of maps $\phi\colon \rk\to Q$.

    In this section is fundamental to recall the notion of $k$-vector field and integral section introduced in section \ref{section k-vector field}. Now, we only consider $k$-vector fields on $M=\tkq$. Thus using local coordinates $(q^i,v^i_\alpha)$ on an open set $T^1_kU$ the local expression of a $k$-vector field $X=(X_1,\ldots, X_k)$ on $\tkq$ is given by
        \begin{equation}\label{tkq: k-vf local}
            X_\alpha=(X_\alpha)^i\derpar{}{q^i}+ (X_\alpha)^i_\beta\derpar{}{v^i_\beta},\quad (\ak)\,.
        \end{equation}

    Let $$\varphi\colon U_0\subset\rk \to \tkq$$    be an integral section of $(X_1,\ldots, X_k)$   with components
    $$\varphi(x)=(\psi^i(x),\psi^i_\alpha(x))\,   .$$

    Then since
        \[
            \varphi_*(x)\Big(\derpar{}{x^\alpha}\Big\vert_{x}\Big)=\derpar{\psi^i}{x^\alpha}\Big\vert_{x}\derpar{}{q^i}\Big\vert_{\varphi(x)} + \derpar{\psi^i_\beta}{x^\alpha}\Big\vert_{x}\derpar{}{v^i_\beta}\Big\vert_{\varphi(x)}
        \]
    the condition of integral section (\ref{integral section expr}) for this case is locally equivalent to the following system of partial differential equations (condition (\ref{integral section equivalence cond}))
        \begin{equation}\label{tkq integral section equivalence cond}
            \derpar{\psi^i}{x^\alpha}\Big\vert_{x}=(X_\alpha)^i(\varphi(x))\,,\quad \derpar{\psi^i_ \beta}{x^\alpha}\Big\vert_{x}=(X_\alpha)^i_\beta(\varphi(x)),
        \end{equation}
    with $\n$ and $1\leq \alpha,\beta\leq k$.

\index{Tangent bundle of $k^1$-velocities!Poincar\'{e}-Cartan forms!sopde}

        \begin{definition} \label{sopde def}
            A  {\bf  second order partial differential equation} (or {\sc sopde} to short)  is  a $k$-vector field  $\mathbf{X}=(X_1,\ldots,X_k)$ on $T^1_kQ$, which is a section of the projection $T^1_k\tau^k:T^1_k(T^1_kQ)\rightarrow T^1_kQ$, i.e.
                \[
                    \tau^k_{\tkq}\circ \mathbf{X}= id_{\tkq} \makebox{ and } T^1_k\tau^k\circ \mathbf{X}= id_{\tkq}\,,
                \]
            where $\tau^k\colon \tkq\to Q$ and $\tau^k_{\tkq}\colon T^1_k(\tkq) \to \tkq$ are the canonical projections.
        \end{definition}

        Let us observe that when  $k=1$ this definition  coincides with the definition of {\sc sode} (second order differential equation), see for instance \cite{lr1}.

        Taking into account the definition of $T^1_k\tau^k$ (see definition \ref{tkq: prolongation expr}), the above definition is equivalent to say that a $k$-vector field $(X_1,\ldots, X_k)$ on $\tkq$ is a {\sc sopde} if and only if
            \[
                (\tau^k)_*({\rm v}_q)(X_\alpha({\rm v}_q))=         v_{\alpha_q}\,,
            \]
        for $\ak$, where ${\rm v}_q=(v_{1_q},\ldots, v_{k_q})\in T^1_kQ$.

        If we now consider the canonical coordinate system $(q^i,v^i_\alpha)$, from (\ref{tkq: k-vf local}) and the definition \ref{sopde def}, the local expression of  a  {\sc sopde} ${\bf X}=(X_1,\ldots, X_k)$ is the following:
            \begin{equation}\label{Locsopdetkq}
                X_\alpha(q^i,v^i_\alpha)= v^i_\alpha\frac{\displaystyle
                \partial} {\displaystyle
                \partial q^i}+
                (X_\alpha)^i_\beta \frac{\displaystyle\partial} {\displaystyle
                \partial v^i_\beta},
            \end{equation}
        where $\ak$ and $(X_\alpha)^i_\beta$ are functions on $T^1_kQ$.

        In the case $k=1$, the integral curves of as {\sc sode} on $TQ$ are lifts to $TQ$ of curves on $Q$. In our case, in order to characterize the integral sections of a {\sc sopde} we consider the definition \ref{phi(1)} of the first prolongation $\phi^{(1)}$ of a map $\phi\colon \rk \to Q$ to $\tkq$.

        Consider a {\sc sopde} $\mathbf{X}=(X_1,\ldots,X_k)$ and a map
            \[
                \begin{array}{rccl}
                \varphi\colon & \r^k & \to &T^1_kQ\\ \noalign{\medskip}
                      &     x  &  \to &    \varphi(x)=(\psi^i(x),\psi^i_\beta(x))
                \end{array}
            \]
         Since  a {\sc sopde} $\mathbf{X}$ is, in particular, a $k$-vector field on $\tkq$, from (\ref{tkq integral section equivalence cond}) and (\ref{Locsopdetkq}) one obtains that $\varphi$ is an integral section of $\mathbf{X}$ if and only if $\varphi$ is a solution of the following system of partial differential equations:
            \begin{equation}\label{solsopde}
                \frac{\displaystyle\partial\psi^i} {\displaystyle\partial x^\alpha}\Big\vert_{x}=v^i_\alpha(\varphi(x))=\psi^i_\alpha(x)\,
                ,\qquad \frac{\displaystyle\partial\psi^i_\beta} {\displaystyle\partial x^\alpha}\Big\vert_{x}=(X_\alpha)^i_\beta(\varphi(x))\, ,
            \end{equation}
         with $\n$ and $\ak$.

            Thus, from (\ref{localphi1}) and (\ref{solsopde}) it is easy to prove the following proposition.
                \begin{prop} \label{sope1}
                Let $\mathbf{X}=(X_1,\ldots,X_k)$ be an integrable  {\sc sopde}.
                \begin{enumerate}
                \item If $\varphi$ is an integral section of
                 ${\bf X}$ then $\varphi=\phi^{(1)}\,,$  where
                 $\phi^{(1)}\colon\rk\to\tkq$ is the first
                 prolongation of the map  $$\xymatrix{\phi:=\tau^k\circ\varphi:\rk\ar[r]^-{\varphi}
                & \tkq\ar[r]^-{\tau^k} &Q}\,.$$ Moreover,  $\phi(x)=(\psi^i(x))$ is
                solution of the
                 system of second order partial differential equations
                \begin{equation}\label{condsisopde}
                \frac{\displaystyle\partial^2 \psi^i} {\displaystyle\partial x^\alpha
                \partial x^\beta       }\Big\vert_{x}= (X_\alpha)^i_\beta(\psi^i(x),
                \ds\frac{\partial\psi^i}{\partial x^\gamma }(x))\,,
                \end{equation}
                with $1\leq i\leq n\, ;
                1\leq \alpha,\beta,\gamma\leq k.$
                \item
                Conversely, if $\phi:\r^k \to Q$, locally given by
                $\phi(x)=(\psi^i(x))$, is a map  satisfying (\ref{condsisopde})
                then $\phi^{(1)}$ is an  integral section of
                $\mathbf{X}=(X_1,\ldots,X_k)$.\end{enumerate} \qed
                \end{prop}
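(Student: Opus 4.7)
The plan is to unfold both implications directly in the canonical coordinates $(q^i,v^i_\alpha)$ on $T^1_kQ$, using the local expression (\ref{Locsopdetkq}) of a {\sc sopde}, the integral-section conditions (\ref{solsopde}), and the coordinate form (\ref{localphi1}) of the first prolongation. The whole argument is a matter of matching components; no deeper machinery is required.

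For part (1), I would start by writing an integral section of $\mathbf{X}$ in the form $\varphi(x)=(\psi^i(x),\psi^i_\beta(x))$ and invoking (\ref{solsopde}). The first group of equations there,
\[
\frac{\partial\psi^i}{\partial x^\alpha}\Big\vert_{x}=\psi^i_\alpha(x),
\]
says exactly that the fibre coordinates of $\varphi$ are the partial derivatives of $\phi:=\tau^k\circ\varphi=(\psi^i)$; comparing with (\ref{localphi1}) this is the identity $\varphi=\phi^{(1)}$. Substituting $\psi^i_\beta=\partial\psi^i/\partial x^\beta$ into the second group of equations in (\ref{solsopde}) then turns them into
\[
\frac{\partial^2\psi^i}{\partial x^\alpha\partial x^\beta}\Big\vert_{x}=(X_\alpha)^i_\beta\Big(\psi^j(x),\frac{\partial\psi^j}{\partial x^\gamma}\Big\vert_{x}\Big),
\]
which is (\ref{condsisopde}).

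For the converse (part (2)), given $\phi(x)=(\psi^i(x))$ satisfying (\ref{condsisopde}), I would write out $\phi^{(1)}(x)=(\psi^i(x),\partial\psi^i/\partial x^\alpha|_x)$ using (\ref{localphi1}) and verify the two groups of equations in (\ref{solsopde}) for the map $\phi^{(1)}$. The first group is tautological from the very definition of $\phi^{(1)}$ (the fibre components are literally $\partial\psi^i/\partial x^\alpha$), while the second group, read off component-by-component, is precisely (\ref{condsisopde}). Hence $\phi^{(1)}$ is an integral section of $\mathbf{X}$.

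There is no real obstacle: the proposition is essentially a reformulation of the {\sc sopde} condition (\ref{Locsopdetkq}) in terms of the first prolongation. The only conceptual point worth noting is the role of the integrability hypothesis, which is invoked in (1) only to guarantee that integral sections $\varphi$ exist (otherwise the statement would be vacuous); the coordinate calculation itself is valid for any SOPDE, integrable or not.
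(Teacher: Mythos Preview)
Your proposal is correct and follows exactly the approach indicated by the paper, which does not write out a detailed proof but simply notes before the proposition that ``from (\ref{localphi1}) and (\ref{solsopde}) it is easy to prove'' and closes with a \qed. Your coordinate unfolding of both implications via (\ref{solsopde}) and (\ref{localphi1}) is precisely the intended argument, and your remark on the (mild) role of the integrability hypothesis is apt.
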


\begin{remark}{\rm
From equation (\ref{condsisopde}) we deduce that, when
the {\sc sopde} ${\bf X}$ is integrable (as a $k$-vector field), we have $(X_\alpha)^i_\beta=(X_\beta)^i_\alpha$
for all  $\alpha, \beta=1,\ldots, k$ and $1\leq i\leq n$. \rqed}
\end{remark}

 The following characterization
 of  {\sc sopde}s on $\tkq$ can be given using the canonical
   $k$-tangent structure  $J^1,\ldots, J^k$ and the canonical  vector fields
  $\Delta_1,\ldots,$ $ \Delta_k$, (these object were introduced in section \ref{section tkq:geometry})
\begin{prop}
Let  ${\bf X}=(X_1,\ldots, X_k)$ be a $k$-vector field  on $\tkq$.
The following conditions are equivalent
\begin{enumerate}
\item ${\bf X}$ is a {\sc sopde}.
\item $J^\alpha(X_\alpha)=\Delta_\alpha,$ for all $1\leq \alpha\leq k$.
\end{enumerate}
\end{prop}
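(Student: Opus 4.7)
The plan is to prove the equivalence by computing $J^\alpha(X_\alpha)$ pointwise using the intrinsic definition of the canonical $k$-tangent structure given in Definition \ref{kte} and then compare with the description of $\Delta_\alpha$ recorded in Remark \ref{remark1.32}. This avoids any global integrability considerations and reduces the statement to a direct consequence of how $J^\alpha$ is built out of $(\tau^k)_*$ and the vertical lift.

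Concretely, I would proceed as follows. For an arbitrary $k$-vector field ${\bf X}=(X_1,\ldots,X_k)$ on $\tkq$, evaluate at a point ${\rm v}_q=({v_1}_q,\ldots,{v_k}_q)\in\tkq$ and apply Definition \ref{kte} to obtain
\[
J^\alpha({\rm v}_q)\big(X_\alpha({\rm v}_q)\big)=\Big((\tau^k)_*({\rm v}_q)\big(X_\alpha({\rm v}_q)\big)\Big)^{V_\alpha}_{{\rm v}_q}.
\]
By Remark \ref{remark1.32}, $\Delta_\alpha({\rm v}_q)=({v_\alpha}_q)^{V_\alpha}_{{\rm v}_q}$. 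Hence the condition $J^\alpha(X_\alpha)=\Delta_\alpha$ at ${\rm v}_q$ is equivalent to
\[
\big((\tau^k)_*({\rm v}_q)(X_\alpha({\rm v}_q))\big)^{V_\alpha}_{{\rm v}_q}=({v_\alpha}_q)^{V_\alpha}_{{\rm v}_q}.
\]
Since the vertical $\alpha$-lift $T_qQ\to T_{{\rm v}_q}(\tkq)$, $u_q\mapsto (u_q)^{V_\alpha}_{{\rm v}_q}$, is injective (as is clear from its local expression \eqref{tkq: local alpha vert lift tangent vector}), this is in turn equivalent to
\[
(\tau^k)_*({\rm v}_q)\big(X_\alpha({\rm v}_q)\big)={v_\alpha}_q,
\]
which is precisely the \textsc{sopde} condition of Definition \ref{sopde def} at the point ${\rm v}_q$. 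Running the equivalence at every ${\rm v}_q\in\tkq$ and for every $\ak$ yields the result.

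As a sanity check (and perhaps to include in the write-up for the reader's benefit) I would redo the argument in canonical coordinates: writing $X_\alpha=(X_\alpha)^i\,\partial/\partial q^i+(X_\alpha)^i_\beta\,\partial/\partial v^i_\beta$ and using the local formulas \eqref{localJA} and \eqref{loccanvf}, one gets
\[
J^\alpha(X_\alpha)=(X_\alpha)^i\,\frac{\partial}{\partial v^i_\alpha},\qquad \Delta_\alpha=v^i_\alpha\,\frac{\partial}{\partial v^i_\alpha},
\]
so $J^\alpha(X_\alpha)=\Delta_\alpha$ if and only if $(X_\alpha)^i=v^i_\alpha$ for all $i,\alpha$, which matches the local expression \eqref{Locsopdetkq} of a \textsc{sopde}.

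There is no real obstacle here: the statement is essentially a tautology once one recognizes that $J^\alpha$ is the composition of $(\tau^k)_*$ with the $\alpha$-vertical lift. The only point that needs a word of justification is the injectivity of the vertical lift $u_q\mapsto (u_q)^{V_\alpha}_{{\rm v}_q}$, which is immediate from its coordinate expression. Hence the proof can be kept very short, either in intrinsic form or via the one-line coordinate computation above.
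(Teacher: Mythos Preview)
Your proposal is correct, and the coordinate ``sanity check'' you give is exactly the paper's own proof: the paper simply cites the local expressions \eqref{loccanvf} of $\Delta_\alpha$ and \eqref{Locsopdetkq} of a \textsc{sopde} and declares the result immediate. Your additional intrinsic argument via Definition~\ref{kte} and Remark~\ref{remark1.32} is a pleasant coordinate-free reformulation of the same computation, not a genuinely different route.
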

\proof It is an immediate consequence of (\ref{loccanvf}) and
(\ref{Locsopdetkq}). \qed

\subsection{Euler-Lagrange field equations}\label{section EL k-symp eq}

\index{Euler-Lagrange field equations}
In this subsection we describe the Lagrangian formulation of Classical Field Theories using the geometrical elements introduced in previous sections of this book.

In a similar way as in the Hamiltonian case, given a Lagrangian function $L\colon\tkq\to \r$, we now consider the manifold $\tkq$ equipped with the Poincar\'{e}-Cartan forms $(\omega^1_L, \ldots, \omega^k_L)$  defined in section \ref{section Poincare cartan forms}, which allows us to define a $k$-symplectic structure on $\tkq$ when the Lagrangian function is regular.

Denote by $\vf^k_L(T^1_kQ)$ the
set of $k$-vector fields ${\bf X}=(X_1,\dots,X_k)$ in
$T^1_kQ$, which are solutions of the equation
 \begin{equation}
\label{EL k-symp eq} \sum_{\alpha=1}^k
\iota_{X_\alpha}\omega_L^\alpha=\d E_L\, .
 \end{equation}
where $E_L$ is the function on $\tkq$ defined by $E_L= \Delta(L)-L$.

Consider canonical coordinates $(q^i,v^i_\alpha)$ on $\tkq$, then each $ X_\alpha$ is locally given by the expression (\ref{tkq: k-vf local}).

Now, from (\ref{loclvf}) we obtain that the function $E_L$ is locally given   \[
        E_L = v^i_\alpha\derpar{L}{v^i_\alpha} - L
    \]
and then
    \begin{equation}\label{del}
        dE_L=\Big(v^i_\alpha\derpars{L}{q^j}{v^i_\alpha} - \derpar{L}{q^j}\Big) dq^j + v^i_\alpha\derpars{L}{v^i_\alpha}{v^j_\beta}dv^j_\beta\,.
    \end{equation}

    Therefore, from (\ref{LocomegaAtkq}), (\ref{tkq: k-vf local}) and (\ref{del}) one obtains that a $k$-vector field ${\bf X}=( X _1, \ldots , X_k)$  on $\tkq$ is a solution of (\ref{EL k-symp eq})
 if, and only if, the functions $( X_\alpha)^i$ and $( X_\alpha)^i_\beta$ satisfy the following local system of equations
\begin{eqnarray}\label{local EL1}
  \left( \frac{\partial^2 L}{\partial q^i \partial v^j_\alpha} -
   \frac{\partial^2 L}{\partial q^j \partial v^i_\alpha}
\right) ( X_\alpha)^j - \frac{\partial^2 L}{\partial
v_\alpha^i
\partial v^j_\beta} ( X_\alpha)^j_\beta &=&
 v_\alpha^j \frac{\partial^2 L}{\partial q^i\partial v^j_\alpha} - \frac{\partial  L}{\partial q^i } \, ,
\\\label{Local EL2}
\frac{\partial^2 L}{\partial v^j_\beta\partial v^i_\alpha} \, (
X_\alpha)^i
 &=& \frac{\partial^2 L}{\partial v^j_\beta\partial v^i_\alpha} \, v_\alpha^i ,
\end{eqnarray}
where $1\leq \alpha, \beta \leq k$ and $ 1\leq i,j\leq n$.

If the Lagrangian is regular, the above equations  are equivalent to
the equations \begin{eqnarray}\label{locel4} \frac{\partial^2 L}{\partial q^j
\partial v^i_\alpha} v^j_\alpha + \frac{\partial^2 L}{\partial v_\alpha^i\partial
v^j_\beta}( X_\alpha)^j_\beta = \frac{\partial  L}{\partial q^i}\,,
\\
\label{locel3}
 ( X_\alpha)^i= v_\alpha^i\, ,
\end{eqnarray}
where $1\leq \alpha, \beta \leq k$ and $ 1\leq i,j\leq n$.

Thus,  we can state the following theorem.
\begin{theorem}\label{flks}
Let $L:\tkq\to \r$ a Lagrangian  and ${\bf
X}=(X_1,\ldots, X_k)\in \vf^k_L(\tkq)$. Then,
\begin{enumerate}
\item If $L$ is regular then ${\bf
X}=(X_1,\ldots, X_k)$ is a {\sc sopde}.

Moreover if  $\varphi:\rk\to \tkq$ is an integral section of ${\bf X}$,
then the  map  $\phi=\tau^k\circ \varphi\colon \rk \to Q$ is a solution of the    Euler-Lagrange  field equations (\ref{ELe}).

\item If ${\bf
X}=(X_1,\ldots, X_k)$ is  integrable and $\phi^{(1)}:\rk\to\tkq$
is an integral section of ${\bf X}$ then $\phi:\rk\to Q$ is
a solution of the Euler-Lagrange field equations (\ref{ELe}).

\end{enumerate}
\end{theorem}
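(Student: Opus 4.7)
My plan is to read off both parts directly from the local equations (\ref{local EL1})--(\ref{Local EL2}) that characterize $\mathbf{X}\in\vf^k_L(\tkq)$ in canonical coordinates, combined with the local expression (\ref{Locsopdetkq}) of a \textsc{sopde}, the first--prolongation formula (\ref{localphi1}), and the integral section conditions (\ref{tkq integral section equivalence cond}). The regularity (or lack thereof) only enters when we want to \emph{extract} the \textsc{sopde} condition on $\mathbf{X}$ itself rather than merely along the image of a specific integral section.

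For part (1), I would first observe that if $L$ is regular, the Hessian matrix $\bigl(\partial^{2}L/\partial v^i_\alpha\partial v^j_\beta\bigr)$ is invertible, so equation (\ref{Local EL2}) forces $(X_\alpha)^i=v^i_\alpha$ for all $\alpha,i$. By the local characterization (\ref{Locsopdetkq}), $\mathbf{X}$ is therefore a \textsc{sopde}. Now let $\varphi:U_0\subset\rk\to\tkq$, $\varphi(x)=(\psi^i(x),\psi^i_\alpha(x))$, be an integral section of $\mathbf{X}$. The integral section conditions (\ref{tkq integral section equivalence cond}) give $\partial\psi^i/\partial x^\alpha=(X_\alpha)^i\circ\varphi=v^i_\alpha\circ\varphi=\psi^i_\alpha$, so $\psi^i_\alpha=\partial\psi^i/\partial x^\alpha$, i.e.\ $\varphi=\phi^{(1)}$ with $\phi=\tau^k\circ\varphi$, $\phi^i=\psi^i$. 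Substituting into the remaining equation (\ref{locel4}) and using the second integral section identity $(X_\alpha)^j_\beta\circ\varphi=\partial\psi^j_\beta/\partial x^\alpha=\partial^2\phi^j/\partial x^\alpha\partial x^\beta$, one obtains
\[
\sum_{\alpha=1}^{k}\!\left(\frac{\partial^{2}L}{\partial q^{j}\partial v^{i}_{\alpha}}\bigg|_{\phi^{(1)}(x)}\!\frac{\partial\phi^{j}}{\partial x^{\alpha}}\bigg|_{x}+\frac{\partial^{2}L}{\partial v^{j}_{\beta}\partial v^{i}_{\alpha}}\bigg|_{\phi^{(1)}(x)}\!\frac{\partial^{2}\phi^{j}}{\partial x^{\alpha}\partial x^{\beta}}\bigg|_{x}\right)=\frac{\partial L}{\partial q^{i}}\bigg|_{\phi^{(1)}(x)},
\]
which is exactly the Euler--Lagrange system (\ref{EcEL}).

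For part (2), regularity is not assumed, so equation (\ref{Local EL2}) no longer forces $\mathbf{X}$ to be a \textsc{sopde} globally. Instead I would exploit the extra hypothesis that the integral section has the specific form $\phi^{(1)}$. From (\ref{localphi1}), $v^{i}_{\alpha}\circ\phi^{(1)}=\partial\phi^{i}/\partial x^{\alpha}$, and from the integral section identity $(X_\alpha)^i\circ\phi^{(1)}=\partial\phi^i/\partial x^\alpha$ we obtain $(X_\alpha)^i\circ\phi^{(1)}=v^i_\alpha\circ\phi^{(1)}$, so the \textsc{sopde} relation holds along the image of $\phi^{(1)}$. Since $\mathbf{X}$ is integrable, the symmetry $(X_\alpha)^i_\beta=(X_\beta)^i_\alpha$ holds on $\tkq$, and the second condition in (\ref{tkq integral section equivalence cond}) yields $(X_\alpha)^j_\beta\circ\phi^{(1)}=\partial^{2}\phi^{j}/\partial x^{\alpha}\partial x^{\beta}$. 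Plugging these expressions into (\ref{local EL1}) (evaluated along $\phi^{(1)}(x)$), the two terms involving $\partial^{2}L/\partial q^{i}\partial v^{j}_{\alpha}\cdot v^{j}_{\alpha}$ cancel, and after rearranging the remaining terms one recovers exactly the Euler--Lagrange equations (\ref{EcEL}) for $\phi$.

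The computations are routine manipulations of the local formulas already derived before the statement; no deep new idea is needed. The only delicate point, and the one I would watch most carefully, is the bookkeeping in part (2): without regularity, we cannot decouple $(X_\alpha)^i=v^i_\alpha$ as a global identity on $\tkq$, but must rather use the hypothesis that the integral section is a first prolongation to propagate the \textsc{sopde}-like relations only along the image of $\phi^{(1)}$, and it is precisely this weaker (pointwise) information, together with integrability to ensure the second-derivative symmetry, that allows (\ref{local EL1}) to collapse to the Euler--Lagrange system.
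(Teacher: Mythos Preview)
Your proposal is correct and follows essentially the same route as the paper: in part (1) you invoke regularity to pass from (\ref{Local EL2}) to $(X_\alpha)^i=v^i_\alpha$ (the paper's (\ref{locel3})), identify $\varphi=\phi^{(1)}$ via Proposition~\ref{sope1}, and then read off the Euler--Lagrange equations from (\ref{locel4}); in part (2) you use the prolongation form of the integral section together with (\ref{local EL1})--(\ref{Local EL2}) exactly as the paper does. One small remark: your claim that integrability of $\mathbf{X}$ gives the global symmetry $(X_\alpha)^i_\beta=(X_\beta)^i_\alpha$ on $\tkq$ is not quite right and is in any case unnecessary---you only need (and only use) this identity along the image of $\phi^{(1)}$, where it follows directly from the second relation in (\ref{tkq integral section equivalence cond}) and the equality of mixed partials of $\phi^j$.
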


\proof
    \begin{enumerate}
        \item
        Let $L$ be a regular Lagrangian, then  ${\bf
X}=(X_1,\ldots, X_k)\in \vf^k_L(\tkq)$ if the coefficients of $X$ satisfy (\ref{locel4}) and (\ref{locel3}). The expression (\ref{locel3}) is locally equivalent to say that $\mathbf{X}$ is a {\sc sopde}.

Since in this case $\mathbf{X}$ is a {\sc sopde}, we can apply  Proposition \ref{sope1}, therefore, if $\varphi\colon \rk\to \tkq$ is an integral section of $X$, then $\varphi=\phi^{(1)}$.

Finally, from (\ref{solsopde}) and (\ref{locel4}) one obtains that
$\phi$ is a solution of the Euler-Lagrange equations (\ref{EcEL}).
    \item In this case we suppose that $\phi^{(1)}$ is an integral section of $\mathbf{X}$, then in a similar way that in proposition \ref{sope1}(1), one can prove that the components $\phi^i$, with $1\leq i\leq n$, of $\phi$ satisfy (\ref{condsisopde}).

        Thus from (\ref{solsopde}), (\ref{condsisopde}),  (\ref{local EL1}) and (\ref{Local EL2}) one obtains that $\phi$ is a solution of the Euler-Lagrange equations (\ref{EcEL}).
    \end{enumerate}

\qed

\begin{remark}
{\rm If we write a equation  (\ref{EL k-symp eq})
for the case  $k=1$, we obtain $$\iota_{X} \omega_L = dE_L$$ which is the  equation  of the geometric formulation of the Lagrangian Mechanics in symplectic terms.
\rqed}\end{remark}

\begin{remark}
{\rm
One important difference with the case $k=1$ on the tangent bundle $TQ$ is that for an arbitrary $k$  we can not ensure the unicity of solutions of the equation (\ref{EL k-symp eq}).

When the Lagrangian $L$ is regular, Proposition  \ref{lag ksymp manifold} implies that $(\tkq,\omega^1_L,$ $\ldots,\omega^k_L, V)$ is a $k$-symplectic manifold and the equation (\ref{EL k-symp eq}) is the same that the equation (\ref{ecHksym}) with $M=\tkq$ and $H=E_L$. Thus  from the discussion about existence of solutions of the equation (\ref{ecHksym}) (see section \ref{Section HDW eq: k-symp approach}), we obtain that in this particular case, the set $\vf^k_L(\tkq)$ is nonempty.
\rqed}
\end{remark}

\section[$k$-symplectic Legendre transformation]{The Legendre transformation and the equivalence between $k$-symplectic Hamiltonian and  Lagrangian formulations.}\protect\label{Sec 1.3.}

    In this section we shall describe the connection between the Hamiltonian and Lagrangian formulations of Classical Field Theories in the $k$-symplectic setting. 

%


\index{Legendre transform}
    \begin{definition}
        Let $L\in \mathcal{C}^\infty(\tkq)$ be a Lagrangian. The \emph{Legendre transformation} for $L$ is the map  $FL:\tkq\to\tkqh$ defined as follows:
            \[
                FL({\rm v}_q)= ( [FL({\rm v}_q)]^1,\ldots ,[FL({\rm v}_q)]^k )
            \]
        where
            \[ [FL({\rm v}_q)]^\alpha(u_{q})=
            \displaystyle\frac{d}{ds}\Big\vert_{s=0}\displaystyle L\left(
            {v_1}_{q}, \dots,{v_\alpha}_{q}+su_{q}, \ldots, {v_k}_{q}
            \right)\,,
            \]
        for  $\ak$ and $u_{q}\in T_{q}Q$, ${\rm v}_q=({v_1}_{q},\ldots, {v_k}_{q})\in \tkq$.
    \end{definition}

    Using natural coordinates $(q^i, v^i_\alpha)$ on $\tkq$ and $(q^i, p^\alpha_i)$ on $\tkqh$, the local expression of the Legendre map is
        \begin{equation}\label{LocLegtkq}
            \begin{array}{rccl}
            FL\colon & \tkq & \to &\tkqh\\\noalign{\medskip}
            &(q^i,v^i_\alpha) & \longrightarrow  &\Big(q^i, \frac{\displaystyle\partial L}{\displaystyle\partial v^i_\alpha } \Big)\, .
            \end{array}
        \end{equation}


    The Jacobian matrix of $FL$ is the following matrix of order $n(k+1)$,
    \[
    \left(
      \begin{array}{cccc}
        I_n & 0 & \cdots & 0 \\
        \derpars{L}{q^i}{v^j_1} & \derpars{L}{v^i_1}{v^j_1} & \cdots & \derpars{L}{v^i_k}{v^j_1} \\
        \vdots & \vdots &  & \vdots \\
        \derpars{L}{q^i}{v^j_k} & \derpars{L}{v^i_1}{v^j_k} & \cdots & \derpars{L}{v^i_k}{v^j_k} \\
      \end{array}
    \right)
    \]
    where $I_n$ is the identity matrix of order $n$ and $1\leq i,j\leq n$. Thus we deduce that $FL$ is a local diffeomorphism if and only if
    \[
    \det\left(
        \begin{array}{ccc}
         \derpars{L}{v^i_1}{v^j_1} & \cdots & \derpars{L}{v^i_k}{v^j_1} \\
         \vdots &  & \vdots \\
         \derpars{L}{v^i_1}{v^j_k} & \cdots & \derpars{L}{v^i_k}{v^j_k} \\
      \end{array}
    \right)\neq 0
    \]
    with $1\leq i,j\leq n$.

\index{Lagrangian!Regular}
\index{Lagrangian!hyperregular}
\index{Lagrangian!Singular}
    \begin{definition}
        A Lagrangian function  $L: T^1_kQ\longrightarrow \r $ is said to be  \emph{regular} (resp. \emph{hyperregular}) if the
        Legendre map $FL$ is a local diffeomorphism (resp. global). In other case  $L$ is said to be  \emph{singular}.
    \end{definition}

    The Poincar\'{e}-Cartan forms $\theta^\alpha_L, \omega^\alpha_L$, with $\ak$ (defined in section \ref{section Poincare cartan forms}) are related with the canonical forms $\theta^\alpha, \omega^\alpha$ of $\tkqh$ (defined in section \ref{section tkqh: canonical forms}), using the Legendre map $FL$.
        \begin{lemma}
            For all $\ak$ one obtains
                \begin{equation}\label{rel tkq-tkqh}
                    \theta^\alpha_L=FL^*\theta^\alpha\,, \quad\omega_L^\alpha=FL^*\omega^\alpha\,.
                \end{equation}
        \end{lemma}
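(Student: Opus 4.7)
My plan is to prove the first identity by a direct local coordinate computation and then deduce the second identity by applying the exterior derivative. This is the cleanest route because all the relevant local expressions have already been established in the excerpt.

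First I would observe that, by (\ref{LocLegtkq}), the Legendre map $FL\colon \tkq\to\tkqh$ satisfies $q^i\circ FL = q^i$ and $p^\alpha_i\circ FL = \partial L/\partial v^i_\alpha$. Using the local expression $\theta^\alpha = p^\alpha_i\,dq^i$ from (\ref{Locformcan}), the pullback is immediate:
\[
FL^*\theta^\alpha \;=\; (p^\alpha_i\circ FL)\,d(q^i\circ FL) \;=\; \frac{\partial L}{\partial v^i_\alpha}\,dq^i.
\]
By (\ref{LocthetaAtkq}), the right-hand side is precisely $\theta^\alpha_L$, proving the first identity $\theta^\alpha_L = FL^*\theta^\alpha$ for every $1\leq\alpha\leq k$.

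For the second identity, I would simply use that pullback commutes with the exterior derivative, together with (\ref{omegaAtkq}) and definition \ref{section tkqh: canonical forms}:
\[
FL^*\omega^\alpha \;=\; FL^*(-d\theta^\alpha) \;=\; -d(FL^*\theta^\alpha) \;=\; -d\theta^\alpha_L \;=\; \omega^\alpha_L.
\]

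The proof is essentially mechanical, so there is no genuine obstacle; at most one might wish to point out the intrinsic reason that the $dq^i$ factor is preserved, namely the commutativity $\pi^{k,\alpha}\circ FL$ covers the identity on $Q$ (equivalently $\pi^k\circ FL = \tau^k$), which makes the horizontal part of $\theta^\alpha$ pull back cleanly while the vertical coordinates $p^\alpha_i$ are replaced by their definition $\partial L/\partial v^i_\alpha$. Alternatively, one could give a coordinate-free argument using (\ref{theta}): for $Z\in T_{\mathrm v_q}(\tkq)$,
\[
(FL^*\theta^\alpha)(\mathrm v_q)(Z) \;=\; [FL(\mathrm v_q)]^\alpha\bigl((\pi^k)_*(FL(\mathrm v_q))(FL_*(Z))\bigr) \;=\; [FL(\mathrm v_q)]^\alpha\bigl((\tau^k)_*(\mathrm v_q)(Z)\bigr),
\]
and then unwind the definition of $[FL(\mathrm v_q)]^\alpha$ together with (\ref{ktetkq}) to recognize the result as $dL(\mathrm v_q)(J^\alpha(\mathrm v_q)(Z)) = \theta^\alpha_L(\mathrm v_q)(Z)$. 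Either route works; I would present the local one for brevity.
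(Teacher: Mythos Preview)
Your proof is correct and follows exactly the approach indicated in the paper, which simply says the result is a direct consequence of the local expressions (\ref{Locformcan}), (\ref{LocthetaAtkq}), (\ref{LocomegaAtkq}) and (\ref{LocLegtkq}). Your write-up is in fact more detailed than the paper's one-line justification, and the coordinate-free alternative you sketch is a nice bonus.
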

        \proof
        It is a direct consequence of the local expressions (\ref{Locformcan}), (\ref{LocthetaAtkq}) and (\ref{LocomegaAtkq})
  of $\theta^\alpha,\;\omega^\alpha$ and $\omega_L^\alpha$ and the local expression of the Legendre map (\ref{LocLegtkq}).\qed

Consider $V=\ker (\tau^k)_*$ the vertical distribution of the bundle $\tau^k\colon \tkq \to Q$, then we obtain the following characterization of a regular Lagrangian (the proof of this result can be found in \cite{Tesis merino})
    \begin{prop}
        Let $L\in \mathcal{C}^\infty(\tkq)$ be a Lagrangian function. $L$ is regular if and only if $(\omega_L^1,\ldots, \omega_L^k, V)$ is a $k$-symplectic structure on $\tkq$.
    \end{prop}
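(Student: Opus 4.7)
The plan is to verify the five conditions in Definition \ref{k-simest} one by one, showing that four of them hold automatically from the construction of the Poincar\'{e}--Cartan forms, while the fifth (non--degeneracy) is precisely the regularity condition on $L$. The dimension $\dim \tkq = n(k+1)$ and the integrability of $V=\ker(\tau^k)_*$ (whose leaves are the fibres of $\tau^k$) are immediate from the bundle structure, and each $\omega_L^\alpha = -d\theta_L^\alpha$ is closed by definition. The condition $\omega_L^\alpha|_{V\times V}=0$ follows from the local expression (\ref{LocomegaAtkq}): every term of $\omega_L^\alpha$ contains a factor $dq^i$, which vanishes on vectors tangent to the fibres of $\tau^k$. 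Thus the whole content of the proposition is the equivalence
\[
    \bigcap_{\alpha=1}^{k}\ker\omega_L^\alpha=\{0\}\quad\Longleftrightarrow\quad L\text{ is regular.}
\]

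The main step, which I would carry out by a direct local computation, is to take an arbitrary vector field $X=A^i\partial/\partial q^i+B^i_\beta\,\partial/\partial v^i_\beta$ on $\tkq$ and contract it with the local expression (\ref{LocomegaAtkq}). Separating the $dv^m_\gamma$--part and the $dq^m$--part of $\iota_X\omega_L^\alpha$, the equations $\iota_X\omega_L^\alpha=0$ for all $\ak$ become
\[
    \sum_{i}A^i\,\frac{\partial^2 L}{\partial v^i_\alpha\partial v^m_\gamma}=0,\qquad
    \sum_{i}A^i\!\left(\frac{\partial^2 L}{\partial q^m\partial v^i_\alpha}-\frac{\partial^2 L}{\partial q^i\partial v^m_\alpha}\right)-\sum_{j,\beta}B^j_\beta\,\frac{\partial^2 L}{\partial v^j_\beta\partial v^m_\alpha}=0.
\]

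For the direction ``$L$ regular $\Rightarrow$ non-degeneracy'', I would fix any $\alpha$ in the first equation and observe that the $n\times (nk)$ submatrix obtained by selecting the rows $(i,\alpha)$ of the full Hessian $W=(\partial^2L/\partial v^i_\alpha\partial v^j_\beta)$ has rank $n$ (being made of $n$ rows of an invertible $nk\times nk$ matrix), forcing $A^i=0$ for all $i$. The second equation then reduces to $\sum_{j,\beta}B^j_\beta\,\partial^2L/\partial v^j_\beta\partial v^m_\alpha=0$, i.e.\ $Wb=0$ with $b=(B^j_\beta)$, so regularity forces $b=0$ and hence $X=0$. For the converse, if $L$ is singular I would exhibit a non-zero element of $\bigcap_\alpha\ker\omega_L^\alpha$ by picking a non-trivial $(B^j_\beta)\in\ker W$ and setting $A^i=0$: both displayed equations then hold, contradicting non-degeneracy.

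The only real obstacle is the bookkeeping in the above contraction, in particular the careful relabelling of indices needed when splitting $\iota_X(dq^i\wedge dq^j)$ and $\iota_X(dq^i\wedge dv^j_\beta)$ so as to obtain clean equations; once this is done, the rest is purely linear algebra on the Hessian matrix $W$.
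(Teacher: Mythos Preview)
Your proof is correct. The paper itself does not give a proof of this proposition: at both occurrences it simply refers the reader to external sources (\cite{MRS-2004} and \cite{Tesis merino}). Your direct local computation---reducing the non-degeneracy condition $\bigcap_\alpha\ker\omega_L^\alpha=\{0\}$ to linear algebra on the Hessian $W=(\partial^2L/\partial v^i_\alpha\partial v^j_\beta)$---is exactly the kind of argument one expects here, and the key observation that any $n$ rows of an invertible $nk\times nk$ matrix are linearly independent cleanly dispatches the $A^i$ first. An alternative route, suggested by the surrounding material in the paper, would be to use $\omega_L^\alpha=FL^*\omega^\alpha$ together with the fact that $FL$ is a local diffeomorphism precisely when $L$ is regular, thereby pulling back the canonical $k$-symplectic structure from $(T^1_k)^*Q$; this is slightly slicker but less self-contained than your approach.
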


    Therefore one can state the following theorem:
    \begin{theorem}
        Given a Lagrangian function $L\colon \tkq\to \r$, the following conditions are equivalents:
        \begin{enumerate}
            \item $L$ is regular.
            \item $\det \left(\derpars{L}{v^i_\alpha}{v^j_\beta}\right)\neq 0$ with $1\leq i,j\leq n$ and $1\leq \alpha,\beta\leq k$.
            \item $FL$ is a local $k$-symplectomorphism.
        \end{enumerate}
    \end{theorem}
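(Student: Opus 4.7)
\proof
The plan is to prove $(1)\Leftrightarrow(2)$ directly from the Jacobian of $FL$, and then deduce $(1)\Leftrightarrow(3)$ from the pull-back formulas (\ref{rel tkq-tkqh}) together with the fiber-preserving character of $FL$.

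First, the equivalence $(1)\Leftrightarrow(2)$ is essentially already contained in the excerpt: the Jacobian of $FL$ in the coordinates $(q^i,v^i_\alpha)$ on $\tkq$ and $(q^i,p^\alpha_i)$ on $\tkqh$ is block-triangular with the identity matrix $I_n$ in the top-left block and the Hessian block $(\partial^2L/\partial v^i_\alpha \partial v^j_\beta)$ in the bottom-right. Hence $FL$ is a local diffeomorphism (i.e.\ $L$ is regular, by definition) if and only if this Hessian block is nonsingular.

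Next, for $(1)\Rightarrow(3)$, suppose $L$ is regular. Then $FL$ is a local diffeomorphism, and by the preceding lemma $FL^*\omega^\alpha=\omega_L^\alpha$ for every $\ak$. It remains to check that $FL_*$ maps the vertical distribution $V=\ker(\tau^k)_*$ of $\tau^k\colon\tkq\to Q$ onto the vertical distribution $V'=\ker(\pi^k)_*$ of $\pi^k\colon\tkqh\to Q$. This follows from the elementary identity $\pi^k\circ FL=\tau^k$ (immediate from (\ref{LocLegtkq})), which by differentiation gives $(\pi^k)_*\circ FL_*=(\tau^k)_*$, hence $FL_*(V)\subseteq V'$. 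Since $FL$ is a local diffeomorphism and $\dim V=\dim V'=nk$, this inclusion is in fact a fiberwise isomorphism, so $FL_*(V)=V'$ locally. Combining this with the pull-back relation on the $\omega^\alpha$'s shows that $FL$ is a local $k$-symplectomorphism in the sense of Definition \ref{k-symplectomorphism}.

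Finally, $(3)\Rightarrow(1)$ is immediate from the very definition of a $k$-symplectomorphism, which in particular requires $FL$ to be a (local) diffeomorphism. No step is genuinely hard here: the only point that deserves any care is verifying the distribution condition $FL_*(V)=V'$, which is handled by the relation $\pi^k\circ FL=\tau^k$ and a dimension count.\qed
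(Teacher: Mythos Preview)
Your proof is correct and follows essentially the approach implicit in the paper, which states the theorem without explicit proof, as a corollary of the preceding material: the Jacobian computation for $(1)\Leftrightarrow(2)$, the pull-back lemma $FL^*\omega^\alpha=\omega_L^\alpha$, and Proposition~\ref{lag ksymp manifold} giving that $L$ is regular if and only if $(\omega_L^1,\ldots,\omega_L^k,V)$ is a $k$-symplectic structure on $\tkq$. The one minor difference is that you verify the distribution condition $FL_*(V)=V'$ directly via the bundle identity $\pi^k\circ FL=\tau^k$ and a dimension count, whereas the paper's route implicitly passes through Proposition~\ref{lag ksymp manifold}; both arguments are straightforward and your direct verification is perfectly adequate.
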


    Now we restrict ourselves to the case of hyperregular Lagrangians. In this case the  Legendre map $FL$ is a global  diffeomorphism and thus we can define a Hamiltonian  function    $H: (T^1_k)^*Q \to \r$ by $$H=(FL^{-1})^*E_L=E_L \circ FL^{-1}$$ where $FL^{-1}$ is the inverse  diffeomorphism of $FL$.

    In these conditions, we can state the  equivalence between both Hamiltonian and Lagrangian formalisms.

    \begin{theorem}\label{equivalence k-symp}
        Let $L\colon \tkq\to \r$ be a hyperregular Lagrangian then:
        \begin{enumerate}
            \item $\mathbf{X}=(X_1,\ldots, X_k)\in \vf^k_L(\tkq)$ if and only if $(T^1_kFL)(\mathbf{X})=(FL_*X_1,\ldots,$ $ FL_*X_k)\in \vf^k_H(\tkqh)$ where $H=E_L\circ FL^{-1}$.
            \item There exists a bijective correspondence between the set of maps $\phi\colon \rk\to Q$ such that $\phi^{(1)}$ is an integral section of some $(X_1,\ldots, X_k)\in \vf^k_L(\tkq)$ and the set of maps $\psi\colon \rk\to \tkqh$, which are integral section of some  $(Y_1,\ldots, Y_k)\in\vf^k_H(\tkqh)$, being $H=(FL^{-1})^*E_L$.

        \end{enumerate}
    \end{theorem}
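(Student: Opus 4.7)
The strategy is to transfer the geometric field equation (\ref{EL k-symp eq}) to (\ref{ecHksym}) via $FL$, and then transfer integral sections. The basic ingredient is the pair of pullback identities already at hand: $\omega_L^\alpha = FL^*\omega^\alpha$ (Lemma \ref{rel tkq-tkqh}) and $E_L = FL^*H = H\circ FL$, which by naturality of $d$ yields $dE_L = FL^*(dH)$. Since $L$ is hyperregular, $FL$ is a global diffeomorphism, hence $FL^*$ is injective on forms and $FL_*$ is well defined on vector fields.

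\medskip

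\emph{Part (1).} First I would record the pullback–pushforward identity for diffeomorphisms: for any vector field $X\in\vf(\tkq)$ and any form $\omega$ on $\tkqh$,
\[
  \iota_X(FL^*\omega) = FL^*\bigl(\iota_{FL_*X}\,\omega\bigr)\,.
\]
Assume $\mathbf{X}=(X_1,\ldots,X_k)\in\vf^k_L(\tkq)$, i.e.\ $\sum_\alpha \iota_{X_\alpha}\omega_L^\alpha = dE_L$. Using the two identities above, the left-hand side equals $FL^*\bigl(\sum_\alpha \iota_{FL_*X_\alpha}\omega^\alpha\bigr)$, while the right-hand side equals $FL^*(dH)$. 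Injectivity of $FL^*$ gives $\sum_\alpha \iota_{FL_*X_\alpha}\omega^\alpha = dH$, so $FL_*\mathbf{X}\in\vf^k_H(\tkqh)$. The converse is identical after replacing $FL$ by $FL^{-1}$ throughout.

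\medskip

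\emph{Part (2).} I would define the correspondence $\phi\mapsto\psi:=FL\circ\phi^{(1)}$. Given $\phi$ such that $\phi^{(1)}$ is an integral section of some $\mathbf{X}\in\vf^k_L(\tkq)$, part (1) yields $FL_*\mathbf{X}\in\vf^k_H(\tkqh)$. Because $FL$ is a diffeomorphism, composition with $FL$ carries integral sections of $\mathbf{X}$ to integral sections of $FL_*\mathbf{X}$ (this is immediate from the defining equation (\ref{integral section expr}) and the chain rule), so $\psi=FL\circ\phi^{(1)}$ is an integral section of $FL_*\mathbf{X}\in\vf^k_H(\tkqh)$, as required.

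For the inverse correspondence, take $\psi\colon\rk\to\tkqh$ an integral section of some $\mathbf{Y}\in\vf^k_H(\tkqh)$. Set $\mathbf{X}:=(FL^{-1})_*\mathbf{Y}$, which by part (1) lies in $\vf^k_L(\tkq)$, and let $\varphi:=FL^{-1}\circ\psi$, an integral section of $\mathbf{X}$. Here the crucial input is Theorem \ref{flks}(1): since $L$ is regular, every element of $\vf^k_L(\tkq)$ is a \textsc{sopde}, and Proposition \ref{sope1}(1) then forces $\varphi=\phi^{(1)}$ with $\phi:=\tau^k\circ\varphi$. Hence $\psi=FL\circ\phi^{(1)}$, showing surjectivity. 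Injectivity follows from the identity $\pi^k\circ FL=\tau^k$ (read off from the local expression (\ref{LocLegtkq})), which recovers $\phi=\pi^k\circ\psi$ from $\psi$.

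\medskip

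\emph{Main obstacle.} The only nontrivial step is the one in Part (2) that extracts a field $\phi$ from an integral section of a Hamiltonian $k$-vector field. On the Hamiltonian side, integral sections of elements of $\vf^k_H(\tkqh)$ are merely maps into $\tkqh$, with no built-in \textsc{sopde}-like structure; it is the regularity of $L$ (via Theorem \ref{flks}(1) and Proposition \ref{sope1}) that guarantees $FL^{-1}\circ\psi$ is automatically a first prolongation. Once this is in place, everything else reduces to functoriality of pullback and pushforward under the diffeomorphism $FL$.
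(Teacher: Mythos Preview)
Your proof is correct and follows essentially the same route as the paper's: both parts rest on the pullback identities $\omega_L^\alpha=FL^*\omega^\alpha$ and $E_L=FL^*H$, together with the functoriality of pushforward under the diffeomorphism $FL$. In fact your treatment of Part~(2) is more complete than the paper's: the paper simply asserts that ``the map $\psi$ corresponds with $\phi^{(1)}$ where $\phi=\pi^k\circ\psi$'', whereas you correctly identify and justify the key step---that $FL^{-1}\circ\psi$ is necessarily a first prolongation---by invoking the \textsc{sopde} property (Theorem~\ref{flks}(1)) and Proposition~\ref{sope1}(1).
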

\proof

\begin{enumerate}
    \item  Given $FL$ therefore we can consider the canonical prolongation $T^1_kFL$ following the definition of the section \ref{section tkq: prolongation}. Thus given a $k$-vector field $\mathbf{X}=(X_1,\ldots, X_k)\in\vf_L^k(\tkq)$, one can define a $k$-vector field on $\tkqh$ using the following diagram
            \[
                \xymatrix{
                \tkq\ar[r]^-{FL}\ar[d]_-{\mathbf{X}} & \tkqh\ar[d]^-{(T^1_kFL)(\mathbf{X})}\\
                T^1_k(\tkq) \ar[r]^-{T^1_kFL} & T^1_k(\tkqh)
                }
            \]
            that is, for each $\ak$, we consider the vector field on $\tkqh$, $FL_*(X_\alpha)$.

            We now consider the function $H=E_L\circ FL^{-1}= (FL^{-1})^*E_L$, then
            \[
                (T^1_kFL)(\mathbf{X})=(FL_*(X_1),\ldots, FL_*(X_k))\in \vf^k_H(\tkqh) \,
             \]
             if and only if
             \[
                \ds\sum_{\alpha=1}^k\iota_{FL_*(X_\alpha)}\omega^\alpha - d\Big((FL^{-1})^*E_L\Big) =0\,
            \]

            Since $FL$ is a diffeomorphism this is equivalent to
            \[
                0= FL^*\Big(\ds\sum_{\alpha=1}^k\iota_{FL_*X_\alpha}\omega^\alpha - d(FL^{-1})^*E_L\Big) = \ds\sum_{\alpha=1}^k\iota_{X_\alpha}(FL)^*\omega^\alpha-dE_L\,,
            \]
            and from (\ref{rel tkq-tkqh}), this fact occurs if and only if  $\mathbf{X}\in\vf^k_L(\tkq)$.


Finally, observe that since $FL$ is a diffeomorphism, $T^1_kFL$ is also a diffeomorphism, and then any $k$-vector field on $\tkqh$ is of the type $T^1_kFL(\mathbf{X})$ for some $\mathbf{X}\in \vf^k(\tkq)$.

        \item Let $\phi\colon \rk\to Q$ be a map such that its first prolongation $\phi^{(1)}$ is an integral section of some $\mathbf{X}=(X_1,\ldots, X_k)\in \vf^k_L(\tkq)$, then the map $\psi=FL\circ \phi^{(1)}$ is an integral section of $T^1_kFL(\mathbf{X})=(FL_*(X_1),\ldots, FL_*(X_k))$. Since we have proved that  $T^1_kFL(\mathbf{X})\in \vf^k_H(\tkqh)$, we obtain the first part of the item $2$.


The converse is similar, if we consider that any $k$-vector field on $\tkqh$ is of the type $T^1_kFL(\mathbf{X})$ for some $\mathbf{X}\in \vf^k(\tkq)$. Thus given $\psi\colon \rk\to \tkqh$ integral section of any $(Y_1,\ldots, Y_k)\in\vf^k_H(\tkqh)$, there exists a $k$-vector field $\mathbf{X}\in \vf^k_L(\tkq)$ such that $T^1_kFL(\mathbf{X})=(Y_1,\ldots, Y_k)$. Finally, the map $\psi$ corresponds with $\phi^{(1)}$ where $\phi=\pi^k\circ \psi$.
\end{enumerate}\qed

 \newpage
\mbox{}
\thispagestyle{empty} 

\chapter{Examples}\label{chapter: K-sympExamples}
In this section we describe several physical examples using the $k$-symplectic formulation developed in this part of the book. In \cite{MSV-2009} one can find several of these examples.
 Previously, we  recall the geometric version of the  Hamiltonian and
 Lagrangian approaches for Classical Field Theories and its correspondence with the case $k=1$.
\begin{table}[h]
\centering
\scalebox{1}{
\begin{tabular}{|c|c|c|} \hline
 & \begin{tabular}{c}
 \\   $k$-symplectic formalism \\ \quad
 \end{tabular}
  & \begin{tabular}{c}
 Symplectic formalism ($k=1$)\\
  (Classical Mechanics)
 \end{tabular}\\\hline
 \begin{tabular}{c}
  Hamiltonian  \\ formalism \end{tabular} & $\begin{array}{c}\\
   \ds\sum_{\alpha=1}^k i_{X_\alpha}\omega^\alpha
= dH\\  \noalign{\bigskip}
  {\bf X} \in \vf^k(M)\\   \noalign{\medskip}M \mbox{   $k$-symplectic manifold} \\
  \quad
 \end{array}$ & $\begin{array}{c}\\
    i_{X}\omega
= dH \\  \noalign{\bigskip}
  X \in \vf(M)\\   \noalign{\medskip}M \mbox{ symplectic manifold} \\
  \quad  \\
  \quad
 \end{array}$\\\hline
%
\begin{tabular}{c}
Lagrangian \\ formalism\end{tabular} & $\begin{array}{c}\\
   \ds\sum_{\alpha=1}^k \, i_{X_\alpha} \omega_L^\alpha =\,
dE_L  \\  \noalign{\bigskip} {\bf X}  \in \vf^k(\tkq)\\\quad
 \end{array}$ & $\begin{array}{c}\\
   i_{X} \omega_L =\,
dE_L   \\  \noalign{\bigskip} X \in \vf(TQ)\\\quad
 \end{array}$\\\hline
\end{tabular}}
\caption{$k$-symplectic approach vs symplectic approach.}
\end{table}

As before, the canonical coordinates in $\rk$ are denoted by $(x^1,\ldots, x^k)$. Moreover we shall use the following notation for the partial derivatives of a map $\phi\colon \rk\to Q$:
        \begin{equation}\label{notation partials}
            \partial_\alpha\phi^i=\derpar{\phi^i}{x^\alpha},\quad \partial_{\alpha\beta}\phi^i=\derpars{\phi^i}{x^\alpha}{x^\beta}\,,
        \end{equation}
    where $1\leq \alpha,\beta\leq k$ and $1\leq i\leq n$.


\section{Electrostatic equations}\label{example k-symp: hamiltonian electrostatic}

\index{Electrostatic}

            We now consider the study of electrostatic in a $3$-dimensional manifold $M$  with coordinates $(x^1,x^2,x^3)$, for instance $M=\r^3$. We assume that $M$ is a  Riemannian manifold with a  metric $g$ with components $g_{\alpha\beta}(x)$ where $1\leq \alpha,\beta\leq 3$.

\index{Electrostatic!equations}

            The equations of electrostatics are (see \cite{du,KT-79}):
                \begin{equation}\label{electrostatic eq}
                    \begin{array}{l}
                        E=\star d\psi,\\
                        dE=-4\pi\rho\,,
                    \end{array}
                \end{equation}
            where $\star$ is the Hodge operator\footnote{In general, on an orientable $n$-manifold with a Riemannian metric $g$, the {\it Hodge operator}\index{Hodge operator}
$\star\colon \Omega^k(M)\to \Omega^{n-k}(M)$ is a linear operator that for every $\nu , \eta\in
\Omega^k(M)$
$$\nu\wedge \star \eta= g(\nu,\eta) \d vol_g,$$
where $\d vol_g$ is the Riemann volume.
In local coordinates we have $$\nu=\nu_{i_1\ldots i_k}\d x^{i_1}\wedge\d
x^{i_k},\eta=\eta_{j_1 \ldots j_k}\d x^{j_1}\wedge\d x^{j_k},\,
g(\nu,\eta)=\nu_{i_1\ldots i_k}\eta_{j_1\ldots j_k}g^{i_1j_1} \ldots
g^{i_kj_k}\,,$$ and $\d vol_g=\sqrt{|det(g_{ij})|}dx^1\wedge\ldots\wedge dx^n$, $(g^{ij})$
being the inverse of the metric matrix $(g_{ij})$. For more details see, for instance,
\cite{SCYYQLY-2008}.} associated with the metric $g$, $\psi$ is a scalar field $\psi\colon \r^3\to \r$ given the electric potential on $\r^3$ and $E=(\psi^1,\psi^2,\psi^3)\colon \r^3\to \r^3$ is a vector field which gives the electric field on $\r^3$ and such that can be interpreted by the $2$-form on $\r^3$ given by
            \[
                E=\psi^1dx^2\wedge dx^3 + \psi^2dx^3\wedge dx^1 + \psi^3dx^1\wedge dx^2\,,
            \]
            and $\rho$ is the $3$-form on $\r^3$ representing a fixed charge density
                \begin{equation}\label{density charge}
                    \rho(x)=\sqrt{g}r(x) dx^1\wedge dx^2\wedge dx^3\,,
                \end{equation}
            being $g=|\det{g_{\alpha\beta}}|$.

            In terms of local coordinates the above system of equations (\ref{electrostatic eq}) reads:
                \begin{equation}\label{local electrostatic eq}
                    \begin{array}{l}
                        \psi^{\alpha} =\sqrt{g}g^{\alpha\beta}\derpar{\psi}{x^\beta}\,,\\\noalign{\medskip}
                        \ds\sum_{\alpha=1}^k\derpar{\psi^\alpha}{x^\alpha}= -4\pi \sqrt{g}r\,,
                    \end{array}
                \end{equation}
            where $r$ is the scalar function defined by the equation $r=\star\rho$, or equivalently, by (\ref{density charge}).

                Suppose that $g$ is the euclidean metric on $\r^3$, thus the above equations can be written as follows:
                    \begin{equation}\label{electrosEC}
                        \begin{array}{rcl}
                        \psi^\alpha &=&
                            \frac{\displaystyle\partial\psi}{\displaystyle\partial
                            x^\alpha},
                            \\ \noalign{\medskip}
                            -\Big(\frac{\displaystyle\partial\psi^1}{\displaystyle\partial
                            x^1}+
                            \frac{\displaystyle\partial\psi^2}{\displaystyle\partial x^2}+
                            \frac{\displaystyle\partial\psi^3}{\displaystyle\partial x^3}\Big)& =
                            &4{\pi} r\,.
                        \end{array}
                   \end{equation}

                As we have seen in section \ref{section: electrostatic}, these equations can be obtained from the $3$-symplectic equation
                \[
                    \iota_{X_1}\omega^1 + \iota_{X_2}\omega^2+\iota_{X_3}\omega^3=dH
                \]
                being $H\colon (T^1_3)^*\r\to \r$ the Hamiltonian defined in (\ref{Helectros}).
\section{Wave equation}\label{example k-symp: hamiltonian wave}

\index{Wave equation}
            Consider the $(n+1)$-symplectic Hamiltonian  equation
                \begin{equation}\label{wave n+1}
                    \displaystyle\sum_{\alpha=1}^{n+1}\iota_{X_\alpha}\omega^\alpha =\d H\,,
                \end{equation}
            associated to the Hamiltonian function
                \begin{equation}\label{wave k-symp eq}
                    \begin{array}{rccl}
                        H\colon & (T^1_{n+1})^*\r & \to & \r\\\noalign{\medskip}
                            & (q,p^1,\ldots, p^{n+1}) & \mapsto & \displaystyle\frac{1}{2}\Big((p^{n+1})^2 - \displaystyle\frac{1}{c^2}\sum_{\alpha=1}^n(p^\alpha)^2\Big)\,.
                    \end{array}
                \end{equation}
           where $(q,p^1,\ldots, p^{n+1})$ are the canonical coordinates on $(T^1_{n+1})^*\r$  introduced in section \ref{section k-cotangent}.

            Let $\mathbf{X}= (X_1, \ldots, X_{n+1})$ be an  integrable $(n+1)$-vector field which is a solution of the equation (\ref{wave n+1}); then since
                \[
                    \ds\frac{\partial H}{\partial q}=0,\quad \ds\frac{\partial H}{\partial p^\alpha}=-\ds\frac{1}{c^2}p^\alpha,\, 1\leq \alpha\leq n\quad \text{and}\quad \ds \frac{\partial H}{\partial p^{n+1}}=p^{n+1}
                \]
             we deduce, from (\ref{ecHDWloc}),   that each $X_\alpha$ is locally given by
                \begin{equation}\label{vector field wave}
                    \begin{array}{ll}
                        X_\alpha& = -\ds\frac{1}{c^2}p^\alpha \frac{\partial}{\partial q} + \big(X_\alpha\big)^\beta\frac{\partial}{\partial p^\beta},\quad 1\leq \alpha \leq n,\\\noalign{\medskip}
                        X_{n+1}&= p^{n+1}\ds\frac{\partial}{\partial q} + \big(X_{n+1}\big)^\beta\ds\frac{\partial}{\partial p^\beta}\,,
                    \end{array}
                \end{equation}
            and the components $(X_\alpha)^\beta$ satisfy $\ds\sum_{\alpha=1}^{n+1}\big(X_\alpha\big)^\alpha=0$.

            \begin{remark}
                {\rm
                    In this particular case the integrability condition of $\mathbf{X}$ is equivalent to the following local conditions:
                    \begin{align*}
                        (X_\alpha)^\beta &= (X_\beta)^\alpha,\\
                            (X_\alpha)^{n+1}&= -\nicefrac{1}{c^2}(X_{n+1})^\alpha,\\
                            X_\alpha\Big((X_\beta)^\gamma\Big)&= X_\beta\Big((X_\alpha)^\gamma\Big),\\
                            X_\alpha\Big((X_{n+1})^\gamma\Big)&= X_{n+1}\Big((X_\alpha)^\gamma\Big),
                    \end{align*}
                    where $1\leq\alpha,\beta\leq n$ and $1\leq \gamma\leq n+1$.
                \rqed}
            \end{remark}

            We now consider  an integral section
                \[
                    (x^1,\ldots, x^n, t) \to (\psi(x^1,\ldots, x^n, t),\psi^1(x^1,\ldots, x^n, t), \ldots, \psi^{n+1}(x^1,\ldots, x^n, t))
                \]
            of the $(n+1)$-vector field $\mathbf{X}= (X_1, \ldots, X_{n+1})\in \vf^{n+1}_H(\r)$. From
            (\ref{vector field wave}) one deduce that that integral section satisfies
                \begin{eqnarray}
                    \label{wave 1}
                        \psi^\alpha &=&-c^2\ds\frac{\partial \psi}{\partial x^\alpha},\quad 1\leq \alpha \leq n,\\\noalign{\medskip}
                    \label{wave 2}
                        \psi^{n+1} &=& \ds\frac{\partial \psi}{\partial t}\,,\\\noalign{\medskip}
                    \label{wave 3}
                        0&=&\ds\sum_{\alpha=1}^n\frac{\partial \psi^\alpha}{\partial x^\alpha} + \ds\frac{\partial \psi^{n+1}}{\partial t}\; .
                \end{eqnarray}

            Finally, if we consider the identities (\ref{wave 1}) and (\ref{wave 2}) in (\ref{wave 3}) one deduces that $\psi$ is a solution of
                \begin{equation}\label{wave eq}
                    \ds\frac{\partial^2 \psi}{\partial t^2}=c^2 \nabla^2\psi\,,
                \end{equation}
            where $\nabla^2$ is the (spatial) Laplacian, i.e. $\psi$ is a solution of the $n$-dimensional wave equation. Let us recall that a solution of this equation is a scalar function $\psi=\psi(x^1,\ldots, x^n,t)$ whose values model the height of a wave at the position $(x^1,\ldots, x^n)$ and at the time $t$.

The Lagrangian counterpart of this example is the following.  Consider the Lagrangian $(n+1)$-symplectic equation
                \begin{equation}\label{n1sym}
                    \displaystyle\sum_{\alpha=1}^{n+1}\iota_{X_\alpha}\omega_L^\alpha =\d E_L\,,
                \end{equation}
            associated to the Lagrangian function
                \begin{equation}\label{wave k-symp lag}
                    \begin{array}{rccl}
                        L\colon & (T^1_{n+1})^*\r & \to & \r\\\noalign{\medskip}
                            & (q,v_1,\ldots, v_{n+1}) & \mapsto & \displaystyle\frac{1}{2}\Big((v_{n+1})^2 - c^2\sum_{\alpha=1}^nv_\alpha^2\Big)\,.
                    \end{array}
                \end{equation}
           where $(q,v_1,\ldots, v_{n+1})$ are the
           canonical coordinates on $T^1_{n+1}\r$.


            Let $\mathbf{X}= (X_1, \ldots, X_{n+1})$ be an integrable $(n+1)$-vector field solution of the equation (\ref{n1sym}), then
            \begin{equation}\label{xan11}
                        X_\alpha= v_\alpha \derpar{}{q^i} + (X_\alpha)_\beta\derpar{}{v_\beta}
                    \end{equation}
            and the components $(X_\alpha)_\beta$ satisfy the
             equations (\ref{locel4}), which in this case are
             \begin{equation}\label{xan110}\begin{array}{ccl}
             0&=&\ds\sum_{\alpha,\beta=1}^{n+1}  \frac{\partial^2 L}{\partial v_\alpha\partial
v_\beta}( X_\alpha)_\beta  \\ \noalign{\medskip}
&=&\ds\sum_{\alpha,\beta=1}^{n}  \frac{\partial^2 L}{\partial v_\alpha\partial
v_\beta}( X_\alpha)_\beta +  \derpars{L}{ v_{n+1}}{v_{n+1}}(X_{n+1})_{n+1}\\ \noalign{\medskip}
&=&- c^2\ds\sum_{\alpha=1}^n(X_\alpha)_\alpha+ (X_{n+1})_{n+1}
           \end{array}  \end{equation}
     since
                $$\frac{\partial^2 L}{\partial v_\alpha\partial
v_\beta}  = - c^2\delta^{\alpha\beta}, \quad 1\leq \alpha,\beta\leq n,
\quad  \frac{\partial^2 L}{\partial v_\alpha\partial
v_{n+1}}=0, \quad \frac{\partial^2 L}{\partial v_{n+1}\partial v_{n+1}}=1\; .
$$
Now, if
                $$ \begin{array} {cccl}\phi^{(1)}: & \r^{n+1}
                 & \longrightarrow & T^1_{n+1}\mathbb{R} \\ \noalign{\medskip}
                 & x & \to & \phi(x)=\Big(\phi(x),\derpar{\phi}{x^\alpha} (x))\Big)\;
                \end{array}$$ is an integral section of $\mathbf{X}$,  then we deduce from (\ref{condsisopde}) and (\ref{xan110}) that
                    $\phi:\r^{n+1} \to \mathbb{R}$  is a solution of the equations (\ref{wave eq}).

        \section{Laplace's equations} \label{example k-symp: hamiltonian laplace}

\index{Laplace's equation}
            On the $n$-symplectic manifold $(T^1_n)^*\r$ we define the Hamiltonian function
                \[
                    \begin{array}{rccl}
                        H\colon& (T^1_n)^*\r & \to & \r\\\noalign{\medskip}
                        & (q,p^1,\ldots, p^n) & \mapsto & \ds\frac{1}{2}\Big((p^1)^2+\ldots + (p^n)^2\Big)\,,
                    \end{array}
                \]
            where $(q, p^1,\ldots, p^n)$ are  canonical coordinates on $(T^1_n)^*\r$. Then
                \begin{equation}\label{partials Ham Laplace}
                    \derpar{H}{q}=0,\quad \derpar{H}{p^\alpha}= p^\alpha,
                \end{equation}
                with $1\leq \alpha \leq n$.

            The $n$-symplectic Hamiltonian  equation (\ref{ecHksym}) associated with $H$ is
                \begin{equation}\label{Laplace k-symp}
                    \iota_{X_1}\omega^1+\ldots + \iota_{X_n}\omega^n=dH\,.
                \end{equation}

             From  (\ref{ecHDWloc}) and (\ref{partials Ham Laplace}) we deduce that an integrable $n$-vector field solution of (\ref{Laplace k-symp}),      has the following local expression:
                \begin{equation}\label{local vf Laplacian}
                    X_\alpha=p^\alpha\derpar{}{q}+ (X_\alpha)^\beta\derpar{}{p^\beta}\, ,
                \end{equation}
           and its components satisfy the following equations
                \begin{eqnarray}\label{sum0}
            0 = \ds\sum_{\alpha=1}^n(X_\alpha)^\alpha\, ,\\\noalign{\medskip}
                (X_\alpha)^\beta=(X_\beta)^\alpha,\,\\\noalign{\medskip}
                X_\alpha\Big((X_\beta)^\gamma\Big)= X_\beta\Big((X_\alpha)^\gamma\Big)\,,
                \end{eqnarray}
        with $1\leq \alpha,\beta,\gamma\leq n$. Let us observe that the two last groups of equations of (\ref{sum0}) are the integrability condition of the $n$-vector field $\mathbf{X}=(X_1,\ldots, X_n)$.

             If \[
                    \begin{array}{ccccl}
                     \varphi &:& \r^3 &  \longrightarrow & (T^1_3)^*\r \\ \noalign{\medskip}
                                            & &   x  & \to &  \varphi(x)=(\psi(x),\psi^1(x),\psi^2(x),\psi^3(x))
                   \end{array} \]
                is an integral section of $(X_1,\ldots, X_n)$, then from (\ref{local vf Laplacian}) and (\ref{sum0})  we obtain that
                \[
                    \begin{array}{l}
                        \psi^\alpha=\derpar{\psi}{x^\alpha}\,,\\\noalign{\medskip}
                        \ds\sum_{\alpha=1}^n\derpar{\psi^\alpha}{x^\alpha}=0\,.
                    \end{array}
                \]
             Therefore, $\psi$ is a solution of
                \begin{equation}\label{Laplace}
                    \derpar{^2\psi}{(x^1)^2}+\ldots + \derpar{^2\psi}{(x^n)^2}=0\,,
                \end{equation}
             that is, $\psi$ is a solution of Laplace's equations \cite{Olver,olver}.

Let $(X_1, \ldots,X_n)$ be a $n$-vector field on $T^1_n\r$, with coordinates
$(q,v_1,\ldots,v_n)$, which is  a solution of
\begin{equation}\label{globfor1}
 \iota_{X_1}
          \omega_L^1 + \ldots +\iota_{X_n}
          \omega_L^n= dE_L\, ,
\end{equation}
where $L$ is the regular Lagrangian
$$\begin{array}{rccl}
L\colon  & T^1_n\r &\to &\r\\\noalign{\medskip}
 & (q, v_1,\ldots,v_n) &\mapsto & \ds\frac{1}{2}
 ( (v_1)^2+ \ldots +(v_n)^2)\,.\end{array}$$

\noindent   From (\ref{solsopde3}), and taking into account that
 \[\derpar{L}{q}=0\;, \quad \derpar{L}{v_\alpha}=v_\alpha\,,\] with $\ak$, we obtain that if    $\phi$ is a solution
 of the $n$-vector field  $(X_1,\ldots,X_n)$ on $T^1_n\r$,  then    $\phi$ satisfies
 \[\partial_{11}\phi + \ldots+ \partial_{nn}\phi=0\,,\]
 or equivalently
 \[
    \nabla^2\phi=0\,,
 \]
  which is the  {\it Laplace equation} (\ref{Laplace}).  Thus equations (\ref{globfor1})
  can be considered as the geometric version of Laplace's equations.

             \begin{remark}
             {\rm
                The solutions of the Laplace equations are important in many fields of science, for instance, electromagnetism, astronomy and fluid dynamics, because they describe the behavior of electric, gravitational and fluid potentials. The solutions of Laplace's equations are called harmonic functions.
               \rqed }
             \end{remark}


            \section{Sine-Gordon equation}\label{example k-symp: hamiltonian SG}
\index{Sine-Gordon equation}
                Define the  Hamiltonian function
                    \[
                        \begin{array}{rccl}
                            H\colon & (T^1_2)^*\r & \to & \r\\\noalign{\medskip}
                            &(q,p^1,p^2)&\mapsto & \ds\frac{1}{2}\Big ((p^1)^2 - \frac{1}{a^2}(p^2)^2\Big)-\Omega^2 \cos{q}
                        \end{array}\,,
                    \]
                $a^2$ and $\Omega^2$ being two positive constants.

                Consider the $2$-symplectic Hamiltonian equations associated to this Hamiltonian, i.e.,
                    \begin{equation}\label{Ham 2-symp}
                        \iota_{X_1}\omega^1 + \iota_{X_2}\omega^2 = dH\,,
                    \end{equation}
                and let $\mathbf{X}=(X_1,X_2)$ be a solution

                In  canonical coordinates $(q,p^1,p^2)$ on $(T^1_2)^*Q$, a solution $\mathbf{X}$  has the following local expression
                    \begin{equation}\label{local SG vf}
                        \begin{array}{l}
                            X_1= p^1\derpar{}{q} + (X_1)^1\derpar{}{p^1} + (X_1)^2\derpar{}{p^2},\\\noalign{\medskip}
                            X_2= -\ds\frac{1}{a^2}p^2\derpar{}{q} + (X_2)^1\derpar{}{p^1} + (X_2)^2\derpar{}{p^2},
                        \end{array}
                    \end{equation}
                where the functions $(X_\alpha)^\beta$ satisfy $(X_1)^1+(X_2)^2=-\Omega^2\sin{q}$.

                If $(X_1,X_2)$ is an integrable $2$-vector field, that is $[X_1,X_2]=0$, then the functions $(X_1)^2 $ and $(X_2)^1$ satisfy $(X_2)^1 = -\nicefrac{1}{a^2}(X_1)^2$.

                Let $\varphi\colon \r^2\to (T^1_2)^*\r,\;\varphi(x)=(\psi(x),\psi^1(x),\psi^2(x))$ be an integral section of the $2$-vector field $\mathbf{X}$. Then from (\ref{local SG vf}) one has that $\varphi$ satisfies
                    \begin{eqnarray}
                        \label{SG eq 1}
                            \psi^1 = \derpar{\psi}{x^1}\,,
                        \label{SG eq 2}
                            \psi^2= -a^2\derpar{\psi}{x^2}\,,
                        \label{SG eq 3}
                            \derpar{\psi^1}{x^1}+\derpar{\psi^2}{x^2}=-\Omega^2\sin{\psi}\,,
                    \end{eqnarray}
                and hence $\psi\colon\mathbb{R}^2\to\mathbb{R}$ is a solution of
                    \begin{equation}\label{sine-gordon}
                        \ds\frac{\partial^2\psi}{\partial (x^1)^2} - a^2 \ds\frac{\partial^2\psi}{\partial (x^2)^2} + \Omega^2\sin{\psi}=0\,,
                   \end{equation}
                that is, $\psi$ is a solution of the \textit{Sine-Gordon equation} (see \cite{saletan}).
                \begin{remark}
                    {\rm
                    The Sine-Gordon equation were know in the 19th century, but the equation grew greatly in importance when it was realized that it led to solutions ``kink'' and ``antikink'' with the collisional properties of solitons \cite{PS-1962}. This equation also appears in other physical applications \cite{BEMS-1971,BS-1981,Davydov-1985,GJM-1979,IR-2000}, including the motion of rigid pendula attached to a stretched wire, and dislocations in crystals.\rqed}
                \end{remark}

            This equation (\ref{sine-gordon}) can be obtained also from the Lagrangian approach if we consider the $2$-symplectic equation
\begin{equation}\label{globfor}
 \iota_{X_1}
          \omega_L^1 + \iota_{X_2}
          \omega_L^2= dE_L\, ,
\end{equation}where $(X_1,X_2)$ is a $2$-vector field on $T^1_2\r$ and the Lagrangian is the function $$L(q,v_1,v_2)=\ds\frac{1}{2}((v_1)^2-a^2(v_2)^2)-\Omega^2(1-\cos(q))\,$$  $a^2$ and $\Omega^2 $ being two
positive constants.

Thus we have
\begin{equation}\label{partiall SG}
\derpar{L}{q}= -\Omega^2\sin (q)\;, \quad \derpar{L}{v_1}=v_1 \;, \quad \derpar{L}{v_2}=-a^2v_2\,.
\end{equation}

From (\ref{solsopde3}) and (\ref{partiall SG}), we know that if $\phi$ is a solution of
$(X_1,X_2)$ then
\[ 0=\partial_{11}\phi-a^2\partial_{22} \phi+ \Omega^2\sin\phi\,,\] that is, $\phi$ is a solution of the
Sine-Gordon equation (\ref{sine-gordon}).

            \section{Ginzburg-Landau's equation}\label{example k-symp: hamiltonian GL}

\index{Ginzburg-Landau's equation}
                Let us consider the Hamiltonian function
                    \[
                        \begin{array}{rccl}
                            H\colon & (T^1_2)^*\r & \to & \r\\\noalign{\medskip}
                            &(q,p^1,p^2)&\mapsto & \ds\frac{1}{2}\Big ((p^1)^2 - \frac{1}{a^2}(p^2)^2\Big)-\lambda(q^2-1)^2
                        \end{array}
                    \]
                where $a$ and $\lambda$ are supposed to denote constant quantities. Then
                    \[
                        \derpar{H}{q}=-4\lambda q(q^2-1),\quad \derpar{H}{p^1}=p^1,\quad \derpar{H}{p^2}=-\ds\frac{1}{a^2}p^2\,.
                    \]

                Consider the $2$-symplectic Hamiltonian  equations associated to this Hamiltonian, i.e.,
                    \begin{equation}\label{Ham 2-symp GL}
                        \iota_{X_1}\omega^1 + \iota_{X_2}\omega^2 = dH\,,
                    \end{equation}
                and let $\mathbf{X}=(X_1,X_2)$ be a solution.

                In the canonical coordinates $(q,p^1,p^2)$ on $(T^1_2)^*\r$, a $2$-vector field $\mathbf{X}$ solution  of (\ref{Ham 2-symp GL}) has the following local expression
                    \begin{equation}\label{local GL vf}
                        \begin{array}{l}
                            X_1= p^1\derpar{}{q} + (X_1)^1\derpar{}{p^1} + (X_1)^2\derpar{}{p^2},\\\noalign{\medskip}
                            X_2= -\ds\frac{1}{a^2}p^2\derpar{}{q} + (X_2)^1\derpar{}{p^1} + (X_2)^2\derpar{}{p^2},
                        \end{array}
                    \end{equation}
                where the functions $(X_\alpha)^\beta$ satisfy $(X_1)^1+(X_2^2)=4\lambda q(q^2-1)$.

                A necessary condition for the integrability of the $2$-vector field $(X_1,X_2)$ is that $(X_2)^1=-\nicefrac{1}{a^2}(X_1)^2$.

                Let $\varphi\colon \r^2\to (T^1_2)^*\r$ be an integral section of the $2$-vector field $\mathbf{X}$ with components $\varphi(x)=(\psi(x),$ $\psi^1(x),\psi^2(x))$. Then from (\ref{local GL vf}) one obtains that $\varphi$ satisfies
                    \begin{eqnarray}
                        \label{GL eq 1}
                            \psi^1 = \derpar{\psi}{x^1}\;, \quad
                        \label{GL eq 2}
                            \psi^2= -a^2\derpar{\psi}{x^2}\,, \quad
                        \label{GL eq 3}
                            \derpar{\psi^1}{x^1}+\derpar{\psi^2}{x^2}=4\lambda \psi(\psi^2-1)\,.
                    \end{eqnarray}
               Hence $\psi$ is a solution of
                    \begin{equation}\label{Ginzburg}
                        \ds\frac{\partial^2\psi}{\partial (x^1)^2} - a^2 \ds\frac{\partial^2\psi}{\partial (x^2)^2} - 4\lambda \psi(\psi^2-1)=0\,,
                    \end{equation}
                that is, $\psi$ is a solution of \textit{Ginzburg-Landau's equation}.

Next, let us consider the Lagrangian
$$\begin{array}{rccl}
L\colon  & T^1_2\r\equiv T\r\oplus T\r&\to &\r\\\noalign{\medskip}
 & (q, v_1,v_2) &\mapsto & \ds\frac{1}{2}[(v_1)^2- a^2(v_2)^2]+\lambda (q^2-1)^2\,.\end{array}$$
Here $a$ and $\lambda$ are supposed to denote {\it constant} quantities. Then
\begin{equation}\label{partialL GL}
\derpar{L}{q}= 4\lambda q(q^2-1) \;,\quad \derpar{L}{v_1}= v_1 \;, \quad \derpar{L}{v_2}= -a^2 v_2\,.
\end{equation}

Let $(X_1,X_2)$ be a $2$-vector field    on $T^1_2\r$ solution of
\[
\iota_{X_1}
          \omega_L^1 + \iota_{X_2}
          \omega_L^2= dE_L\,.\]

           If
$\phi$ is a solution of $(X_1,X_2)$,
 then from (\ref{solsopde3}) and (\ref{partialL GL}) we obtain that $\phi$ satisfies the equation
 $$
 0=\partial_{11}\phi - a^2\partial_{22}\phi - 4\lambda\phi(\phi^2-1)\,,
 $$
 which is the Ginzburg-Landau equation (\ref{Ginzburg}).

                \begin{remark}{\rm
The phenomenological Ginzburg-Landau theory (1950) is a mathematical theory used for modeling superconductivity \cite{GL-1950}. \rqed}
\end{remark}

            \section{$k$-symplectic quadratic  systems}\label{example k-symp: hamiltonian quadratic hamiltonian}
\index{$k$-symplectic quadratic systems}
                Many Hamiltonian and Lagrangian systems in field theories are of ``quadratic'' type and they can be modeled as follows.

                Consider the canonical model of $k$-symplectic manifold $((T^1_k)^*Q,\omega^\alpha, V)$. Let $g_1,\ldots, g_k$ be $k$ semi-Riemannian metrics in $Q$. For each $q\in Q$ and for each $1\leq \alpha \leq k$ we have the following linear isomorphisms:
                    \[
                        \begin{array}{rccl}
                            g_\alpha^\flat\colon & T_qQ & \to &T^*_qQ\\\noalign{\medskip}
                             & v & \mapsto & \iota_{v}g_\alpha
                        \end{array}\,,
                    \]
                  and then we   introduce the dual metric $g_\alpha^*$ of $g_\alpha$,  defined by
                  as follows
                    \[
                        g^*_\alpha(\nu_q,\gamma_q)\colon = g_{\alpha}\big( (g^\flat_\alpha)^{-1}(\nu_q), (g^\flat_\alpha)^{-1}(\gamma_q)\big)\,,
                    \]
                for each $\nu_q,\gamma_q\in T^*_qQ$ and $\ak$.

                We can define a function $K\in \mathcal{C}^\infty((T^1_k)^*Q)$ as follows: for every $({\nu_1}_q,\ldots,$ $ {\nu_k}_q)\in (T^1_k)^*Q$,
                    \[
                        K({\nu_1}_q,\ldots, {\nu_k}_q)=\ds\frac{1}{2}\ds\sum_{\alpha=1}^k g^*_\alpha({\nu_\alpha}_q,{\nu_\alpha}_q)\,.
                    \]
                Then, if $V\in \mathcal{C}^\infty(Q)$ we define the   Hamiltonian function $H\in \mathcal{C}^\infty((T^1_k)^*Q)$ of ``quadratic'' type as follows
                    \[
                        H=K+(\pi^k)^*V\,.
                    \]

                Using canonical  coordinates $(q^i, p^\alpha_i)$ on $(T^1_k)^*Q$, the local expression of $H$ is
                    \[
                        H(q^i,p^\alpha_i)=\ds\frac{1}{2}\sum_{\alpha=1}^kg^{ij}_\alpha(q^m)p^\alpha_ip^\alpha_j+ V(q^m)\,,
                    \]
                where $g^{ij}_\alpha$ denote the coefficients of the matrix associated to $g^*_\alpha$. Then
                    \[
                        dH=\sum_{\alpha=1}^k\left[\left(\frac{1}{2}\derpar{g^{ij}_\alpha}{q^k}p^\alpha_ip^\alpha_j + \derpar{V}{q^k}\right)dq^k + (g^{ij}_\alpha p^\alpha_i)dp^\alpha_j\right]\,.
                    \]

                Consider now the $k$-symplectic Hamiltonian field equation (\ref{ecHksym}) associated with the above Hamiltonian function, i.e.
                    \[
                        \ds\sum_{\alpha=1}^k\iota_{X_\alpha}\omega^\alpha=dH\,.
                    \]
                    If  a $k$-vector field $\mathbf{X}=(X_1,\ldots, X_k)$ is
                solution of this equation then each $X_\alpha $ has      the following local expression (for each $\alpha$ fixed):
                    \begin{equation}\label{quadratic local vfh}
                        X_\alpha= g^{ij}_\alpha p^\alpha_j \derpar{}{q^i} + (X_\alpha)^\beta_i\derpar{}{p^\beta_i}
                    \end{equation}
                and its components $(X_\alpha)^\beta_i$ satisfy
                    \begin{equation}\label{quadratic vf condition}
                        \ds\sum_{\beta=1}^k(X_\beta)^\beta_k=-\left(\frac{1}{2}\derpar{g^{ij}_\beta}{q^k}p^\beta_ip^\beta_j + \derpar{V}{q^k}\right)\,.
                    \end{equation}

                We now assume that $\mathbf{X}$ is integrable and
                $$ \begin{array} {cccl}\varphi: & \r^k & \longrightarrow & (T^1_k)^*Q \\ \noalign{\medskip}
                 & x & \to & \varphi(x)=(\psi^i(x),\psi^\alpha_i(x))
                \end{array}$$ is an integral section of $\mathbf{X}$ then
                    \begin{equation}\label{quadratic is}
                        X_\alpha(\varphi(x))=\varphi_*(x)\left(\derpar{}{x^\alpha}\Big\vert_{x}\right) = \derpar{\psi^i}{x^\alpha}\Big\vert_{x}\derpar{}{q^i}\Big\vert_{\varphi(x)} + \derpar{\psi^\beta_i}{x^\alpha}\Big\vert_{x}\derpar{}{p^\beta_i}\Big\vert_{\varphi(x)}\,.
                    \end{equation}

                Thus, from (\ref{quadratic local vfh}), (\ref{quadratic vf condition}) and (\ref{quadratic is}) we obtain that $\varphi$ is a solution of the following  Hamilton-De Donder-Weyl equations
                    \begin{eqnarray*}
                        \derpar{\psi^i}{x^\alpha} = g^{ij}_\alpha\psi^\alpha_j\,,\quad (\alpha\makebox{ fixed})\\ \noalign{\medskip}
                        \ds\sum_{\beta=1}^k\derpar{\psi^\beta_l}{x^\beta}= -\left(\frac{1}{2}\derpar{g^{ij}_\beta}{q^l}\psi^\beta_i\psi^\beta_j + \derpar{V}{q^l}\right)
                    \end{eqnarray*}

In the Lagrangian approach we obtain a similar description. In fact, we consider the tangent bundle of $k^1$-velocities and let $g_1,\ldots, g_k$ be $k$ semi-Riemannian metrics in $Q$.

                We can define a function $K\in \mathcal{C}^\infty(\tkq)$ as follows: for every element ${\rm v}_q=({v_1}_q,\ldots, {v_k}_q)\in \tkq$,
                    \[
                        K({\rm v}_q)=\ds\frac{1}{2}\ds\sum_{\alpha=1}^k g_\alpha({v_\alpha}_q,{v_\alpha}_q)\,.
                    \]
                Then, if $V\in \mathcal{C}^\infty(Q)$ define the  $k$-symplectic Lagrangian function $L\in \mathcal{C}^\infty(\tkq)$ of ``quadratic'' type as follows
                    \[
                        L=K-(\tau^k)^*V\,.
                    \]

                Using canonical  coordinates $(q^i, v_\alpha^i)$ on $\tkq$, the local expression of $L$ is
                    \[
                        L(q^i,v^i_\alpha)=
                        \ds\frac{1}{2}
                        \sum_{\alpha=1}^kg^\alpha_{ij} (q^m)v_\alpha^i\, v_\alpha^j - V(q^m)\,,
                    \]
                where $g_{ij}^\alpha$ denote the coefficients of the matrix associated to $g_\alpha$.

                Consider now the $k$-symplectic Lagrangian field equation associated with the above Lagrangian function, i.e.
                    \[
                        \ds\sum_{\alpha=1}^k\iota_{X_\alpha}\omega_L^\alpha=dE_L\,.
                    \]
                    If  a $k$-vector field $\mathbf{X}=(X_1,\ldots, X_k)$ is
                solution of this equation, i.e. if $\mathbf{X}\in \vf^k_L(\tkq)$ then, since $L$ is regular,   each $X_\alpha $ has      the following local expression (for each $\alpha$ fixed):
                    \begin{equation}\label{quadratic local vf}
                        X_\alpha= v^i_\alpha \derpar{}{q^i} + (X_\alpha)_\beta^i\derpar{}{v_\beta^i}
                    \end{equation}
                and its components $(X_\alpha)_\beta^i$ satisfy equations (\ref{locel4}),
             that in this case are

             $$
             \derpar{g^\alpha_{il}}{q^j}v^l_\alpha v^j_\alpha +
             g^\alpha_{ij} \,  (X_\alpha)^j_\alpha= \ds\frac{1}{2}\derpar{g^\alpha_{lm}}{q^i}v^l_\alpha v^m_\alpha- \derpar{V}{q^i}
             $$
   Thus, if the components of the metrics $g^\alpha_{il}$ are constant then
   $$
   g^\alpha_{ij} \,  (X_\alpha)^j_\alpha= -\derpar{V}{q^i}\; .
   $$

                Now, if
                $$ \begin{array} {cccl}\phi^{(1)}: & \r^k & \longrightarrow & \tkq \\ \noalign{\medskip}
                 & x & \to & \phi(x)=(\phi^i(x),\derpar{\phi}{x^\alpha} (x)))
                \end{array}$$ is an integral section of $\mathbf{X}\in \vf^k_L(\tkq)$ then
                    $\phi:\rk \to Q$ is a solution the following
                Euler-Lagrange equations
                    $$
                    g^\alpha_{ij} \,  \derpars{\phi^j}{x^\alpha}{x^\beta} = -\derpar{V}{q^i}
                    $$
                \begin{remark}\label{remxan11}
                {\rm
                    The   examples of the previous subsections can be considered a particular case of this situation.
                    \begin{itemize}
                        \item \emph{The electrostatic equations} correspond with the case $Q=\r \,(n=1)$, $k=3$, the function $V\in \mathcal{C}^\infty(\r)$ is $V(q)=4\pi r$ and the semi-Riemannian metrics in $\r$.,
                                \[
                                    g_\alpha=dq^2\,\quad 1\leq \alpha\leq 3\,,
                                \]
                            $q$ being the standard coordinate in $\r$.

                        \item \emph{The wave equation} corresponds to the case $Q=\r\, (n=1)$, $k=n+1$, the function $V=0$ and the semi-Riemannian metrics in $\r$.
                                \[
                                    g_\alpha=-c^2dq^2,\quad 1\leq \alpha \leq n \makebox{ and } g_{n+1}= dq^2\,,
                                \]
                            $q$ being the standard coordinate in $\r$.
                        \item \emph{Laplace's equations} corresponds with the case $Q=\r,\, k=n, V(q)=0$ and the semi-Riemaniann metrics $g_\alpha=dq^2$.
                        \item \emph{The Sine-Gordon equation} corresponds with the case $Q=\r,\, k=2,\, V(q)=-\Omega^2\cos{q}$, and the semi-Riemannian metrics in $\r$.
                                \[
                                    g_1=dq^2 \makebox{ and } g_2= -a^2dq^2\,,
                                \]
                            $q$ being the standard coordinate in $\r$..
                        \item In the case of \emph{Ginzburg-Landau's equation}, $Q=\r,\, k=2,\, V(q)= -\lambda(q^2-1)^2$ and the semi-Riemannian metrics in $\r$.,
                                \[
                                    g_1=dq^2 \makebox{ and } g_2= -a^2dq^2\,,
                                \]
                            $q$ being the standard coordinate in $\r$.
                    \end{itemize}
                    }
                \end{remark}

                \section{Navier's equations} \label{section k-symp lag examples: Navier}

\index{Navier's equations}
We consider the  equation (\ref{globfor}) but with $Q=\r^2$ and Lagrangian $L\colon T\r^2\oplus T\r^2 \to \r$ given by
\[L(q^1,q^2, v^1_1,v^1_2,v^2_1,v^2_2)=
(\ds\frac{1}{2}\lambda + \mu)[(v_1^1)^2+ (v_2^2)^2]+
\ds\frac{1}{2}\mu[(v^1_2)^2 + (v^2_1)^2] +
(\lambda+\mu)v^1_1v^2_2\,.\]

This Lagrangian is regular if $\mu\neq 0$ and $\lambda \neq -(3/2)\mu$. In this case we obtain:
\begin{equation}\label{partialL Navier}
\begin{array}{lcl}
  \derpar{L}{q}=0 \;, & & \\\noalign{\medskip}  \derpar{L}{v^1_1}= ( \lambda+2\mu)v^1_1 + (\lambda+\mu)v^2_2 & \;,
  \quad & \derpar{L}{v^1_2}=\mu v^1_2 \\\noalign{\medskip}
        \derpar{L}{v^2_1}=\mu v^2_1 & \;, \quad & \derpar{L}{v^2_2}=(\lambda+2\mu) v^2_2 + (\lambda+\mu)v^1_1
\end{array}
\end{equation}

Let $(X_1, X_2)$ be an integrable solution of  (\ref{globfor}) for this particular case. From (\ref{solsopde3}) and (\ref{partialL Navier}),
we have that, if  $$\begin{array}{rccl} \phi\,\colon  &  \r^2 &\to &
 \r^2 \\\noalign{\medskip}
     & (x^1,x^2) & \mapsto  &  (\phi^1(x),\phi^2(x))\end{array}$$ is a solution of  $(X_1,X_2)$,  then
     $\phi$ satisfies
\[\begin{array}{lcl}
(\lambda+2\mu) \partial_{11}\phi^1   +
(\lambda+\mu)\partial_{12}\phi^2  +
\mu \partial_{22}\phi^1  &=&0\,,\\\noalign{\medskip}
\mu \partial_{11}\phi^2 +(\lambda+\mu)\partial_{12}\phi^1
+ (\lambda+2\mu) \partial_{22}\phi^2   &=&0\,,
\end{array}\]
which are  {\it Navier's equations}, see \cite{Olver,olver}. These  are the equations of motion for a
viscous fluid in which one consider the effects of attraction and repulsion between neighboring molecules
(Navier, 1822). Here $\lambda$ and $\mu$ are coefficients of viscosity.

                \section{Equation of minimal surfaces}\label{section k-symp lag examples: minimal}

\index{Equation of minimal surfaces}
We consider again  $Q=\r$ and $(X_1, X_2)$   a solution of
(\ref{globfor})
where $L$ is now the regular Lagrangian
$$\begin{array}{rccl} L\colon  & T\r\oplus T\r &
\to &\r\\\noalign{\medskip} & (q,v_1,v_2) & \mapsto & \ds\sqrt{
1+v_1^2 + v_2^2}\end{array}$$

Then one obtains,
\begin{equation}\label{partialL minimal}
\derpar{L}{q}=0\;, \;\; \derpar{L}{v_1}=\ds\frac{v_1}{\sqrt{1+(v_1)^2+(v_2)^2}} \;, \;\; \derpar{L}{v_2}=
\ds\frac{v_2}{\sqrt{1+(v_1)^2+(v_2)^2}}\,.
\end{equation}
From (\ref{solsopde3}) and (\ref{partialL minimal}), we deduce that  if  $\phi$ is solution of the $2$-vector field
$(X_1,X_2)$, then $\phi$ satisfies
\[0=(1+(\partial_2\phi)^2)\partial_{11}\phi - 2\partial_1\phi\,\partial_2\phi\,\partial_{12}\phi + (1+(\partial_1
\phi)^2)\partial_{22}\phi\,,
\]
 which is the {\it equation of minimal surfaces,} (see for instance \cite{EM-92, olver}).
 \begin{remark}{\rm An alternative Lagrangian for the equation of minimal surfaces is given by $L(q,v_1,v_2)= 1+ v_1^2
 + v_2^2$.}
 \end{remark}
            \section{The massive scalar field}\label{example k-symp: hamiltonian scalar field}

\index{Massive scalar field}

                The equation of a scalar field $\phi$ (for instance the gravitational field) which acts on the $4$-dimensional space-time is (see \cite{Goldstein,KT-79}):
                    \begin{equation}\label{scalar}
                        (\square + m^2)\phi =F'(\phi)\,,
                    \end{equation}
               where $m$ is the mass of the particle over which the field acts, $F$ is a scalar function such that $F(\phi)-\ds\frac{1}{2}m^2\phi^2$ is the potential energy of the particle of mass $m$, and $\square$ is the Laplace-Beltrami operator given by
                    \[
                        \square\phi\colon = div\, grad \phi = \ds\frac{1}{\sqrt{-g}}\derpar{}{x^\alpha}\left(\sqrt{-g}g^{\alpha\beta}\derpar{\phi}{t^\beta}\right)\,,
                    \]
               $(g_{\alpha\beta})$ being a pseudo-Riemannian metric tensor in the $4$-dimensional space-time of signature $(-+++)$, and $\sqrt{-g}=\sqrt{-\det{g_{\alpha\beta}}}$. In this case we suppose that the metric $(g_{\alpha\beta})$ is the Minkowski metric on $\r^4$, i.e.
                \[
                    d(x^2)^2 + d(x^3)^2+d(x^4)^2- d(x^1)^2\,.
                \]

               Consider the Hamiltonian function
                    \[
                        \begin{array}{lccl}
                            H\colon & (T^1_4)^*\r & \to & \r\\\noalign{\medskip}
                             & (q, p^1, p^2, p^3, p^4) & \mapsto &  \ds\frac{1}{2}g_{\alpha\beta}p^\alpha p^\beta-\left( F(q)-\ds\frac{1}{2}m^2q^2\right)\,,
                        \end{array}
                    \]
               where $q$ denotes the scalar field $\phi$ and $(q,p^1,p^2,p^3,p^4)$ the natural coordinates on $(T^1_4)^*\r$. Then
                \begin{equation}\label{partial Ham scalar}
                    \derpar{H}{q}=-\Big(F'(q)-m^2q\Big),\quad \derpar{H}{p^\alpha}=g_{\alpha\beta}p^\beta\,.
                \end{equation}

                Consider the  $4$-symplectic Hamiltonian equation
                    \[
                        \iota_{X_1}\omega^1 +\iota_{X_2}\omega^2+ \iota_{X_3}\omega^3+ \iota_{X_4}\omega^4 = dH,
                    \]
                associated to the above Hamiltonian function. From (\ref{partial Ham scalar}) one obtains that, in natural coordinates, a $4$-vector field solution of this equation has the following local expression
                    \begin{equation}\label{vf scalar}
                        X_\alpha=g_{\alpha\beta}p^\beta\derpar{}{q} + (X_\alpha)^\beta\derpar{}{p^\beta}\,,
                    \end{equation}
                where the functions $(X_\alpha)^\beta\in \mathcal{C}^\infty((T^1_4)^*\r)$ satisfies
                    \begin{equation}\label{condition vf scalar}
                        F'(q)-m^2q= (X_1)^1+(X_2)^2+(X_3)^3+(X_4)^4\,.
                    \end{equation}

                Let $\varphi\colon \r^4\to (T^1_4)^*\r,\, \varphi(x)=(\psi(x),\psi^1(x),\psi^2(x),\psi^3(x),\psi^4(x))$ be an integral section of a $4$-vector field solution of the $4$-symplectic Hamiltonian  equation. Then from (\ref{vf scalar}) and (\ref{condition vf scalar}) one obtains
                    \[
                        \begin{array}{l}
                            \derpar{\psi}{x^\alpha}= g_{\alpha\beta}\psi^\beta\\
                            F'(\psi(x))-m^2\psi(x)=\derpar{\psi^1}{x^1}+ \derpar{\psi^2}{x^2}+ \derpar{\psi^3}{x^3}+\derpar{\psi^4}{x^4}
                        \end{array}
                    \]

                Therefore, $\psi\colon \r^4\to \r$ is a solution of the equation
                    \[
                        F'(\psi)-m^2\psi= \derpar{}{x^\alpha}\left( g^{\alpha\beta}\derpar{\psi}{x^\beta} \right)\,,
                    \]
                that is, $\psi$ is a solution of the scalar field equation.
                \begin{remark}
                {\rm
                    The scalar equation can be described using the Lagrangian approach with Lagrangian function
                    \begin{equation}\label{scalar lagrangian}
L(x^1,\ldots, x^4, q, v_1,\ldots, v_4)= \sqrt{-g}\Big(F(q)- \ds\frac{1}{2}m^2q^2\Big)+\ds\frac{1}{2}g^{\alpha\beta}v_\alpha v_\beta\,,
\end{equation}
where $q$ denotes the scalar field $\phi$ and $v_\alpha$ the partial derivative $\nicefrac{\partial \phi}{\partial x^\alpha}$. Then the equation (\ref{scalar}) is the Euler-Lagrange equation associated to $L$.
                \rqed}
                \end{remark}
                \begin{remark}
                {\rm
                    Some particular cases of the scalar field equation are the following:
                        \begin{enumerate}
                            \item If $F=0$ we obtain the linear scalar field equation.
                            \index{Linear scalar field equation}
                            \index{Klein-Gordon equation}
                            \item If $F(q)=m^2q^2$, we obtain the Klein-Gordon equation \cite{saletan}
                                    \[
                                        (\square + m^2)\psi=0\,.
                                    \]
                        \end{enumerate}
                        }\rqed
                \end{remark}
%

\part{$k$-cosymplectic formulation of Classical Field Theories}\label{part k-cosymp}

The Part II of this book has been devoted to give a geometric description of certain kinds of Classical Field Theories. The purpose of Part  III is to extend the above study to Classical Field Theories involving the independent parameters, i.e. the ``space-time'' coordinates $(x^1,\ldots, x^k)$ in an explicit way. In others words, in this part we shall give a geometrical description of Classical Field Theories whose Lagrangian and Hamiltonian functions are of the form $L=L(x^\alpha, q^i, v^i_\alpha)$ and $H=H(x^\alpha, q^i, p^\alpha_i)$.

The model of the convenient geometrical structure for our approach is extracted of the so-called stable cotangent bundle of $k^1$-covelocities $\rktkqh$. These structures are called $k$-cosymplectic manifolds and they were introduced by M. de Le\'{o}n {\it et al.} \cite{LMORS-1998,LMeS-2001}.

In chapter \ref{chapter: k-cosympManifolds} we shall recall the notion of $k$-cosymplectic manifold using as model $\rktkqh$. Later, in chapter \ref{k-cosymp eq} we shall describe the $k$-cosymplectic formalism. This formulation can be applied to give a geometric version of the Hamilton-De Donder-Weyl and Euler-Lagrange equations for field theories. We also present several physical examples which can be described using this approach and the relationship between the Hamiltonian and Lagrange approaches.
\newpage
\mbox{}
\thispagestyle{empty} 

\chapter{$k$-cosymplectic geometry}\label{chapter: k-cosympManifolds}

    The $k$-cosymplectic formulation is based in the so-called $k$-cosymplectic geometry. In this chapter we introduce these structures which are a generalization of the notion of cosymplectic forms.

    Firstly, we describe the model of the called $k$-cosymplectic manifolds,  that is the stable cotangent bundle of $k^1$-covelocities $\rktkqh$ and  introduce the canonical structures living there.  Using this model we introduce the notion of $k$-cosymplectic manifold. A complete description of these structures can be found in \cite{LMORS-1998,LMeS-2001}.

\section{The stable cotangent bundle of $k^1$-cove\-lo\-cities $\rktkqh$}\label{section: rktkqh}

\index{Stable cotangent bundle of $k^1$-covelocities}
    Let $J^1(Q,\rk)_0$ be the manifold of $1$-jets of maps from $Q$ to $\rk$ with target at $0\in \rk$, which we described in Remark \ref{j1qrk}. Let us remember that this manifold is diffeomorphic to the cotangent bundle of $k^1$-covelocities $\tkqh$ via the diffeomorphism described in (\ref{difeo jiqrk-tkq}).

    Indeed, for each $x\in \rk$ we can consider the manifold $J^1(Q,\rk)_x$ of $1$-jets of maps from $Q$ to $\rk$ with target at $x\in \rk$, i.e.,
        \[
            J^1(Q,\rk)_x= \ds\bigcup_{q\in Q}J^1_{q,\,x}(Q,\rk)=\ds\bigcup_{q\in Q}\{j^1_{ q,x}\sigma\, \vert\,\sigma:Q\to \rk\makebox{ smooth, }\sigma( q)=x\}\,.
        \]

    If we consider the collection of all these space,  we obtain the set $J^1(Q,\rk)$ of $1$-jets of maps from $Q$ to $\rk$, i.e.
        \[
            J^1(Q,\rk)=\bigcup_{x\in\rk}J^1(Q,\rk)_x\,.
        \]

    The set $J^1(Q,\rk)$ can be identified with $\rktkqh$ via
        \begin{equation}\label{difeo j1qrk-rktkqh}
        \begin{array}{ccccc}
            J^1(Q,\rk) & \to &\rk\times J^1(Q,\rk)_ 0&\to & \rktkqh\\\noalign{\medskip}
            j^1_{q,x}\sigma & \to & (x,j^1_{q,0}\sigma_q)& \to & (x,d\sigma^1_q(q),\ldots, d\sigma^k_q(q))
        \end{array}\,,
        \end{equation}
    where the last identification was described in (\ref{difeo jiqrk-tkq}), being $\sigma_q\colon Q\to \rk$ the map defined by $\sigma_q(\tilde{q})=\sigma(\tilde{q})- \sigma(q)$ for any $\tilde{q}\in Q$.

    \begin{remark}\label{difeo jipiQ}
      {\rm  We recall that the manifold of $1$-jets of mappings from $Q$ to $\rk$, can be identified with the manifold $J^1\pi_Q$ of $1$-jets of sections of the trivial bundle $\pi_Q\colon \rk\times Q\to Q$, (a full description of the first-order jet bundle associated to an arbitrary bundle $E\to M$ can be found in \cite{Saunders-89}). The diffeomorphism which establishes this relation is given by
            \[
                \begin{array}{ccccc}
                    J^1\pi_{Q} & \to & J^1(Q,\rk) & \to & \rk\times J^1(Q,\rk)_ 0\\\noalign{\medskip}
                    j^1_q\phi & \to & j^1_{q,\sigma(q)}\sigma & \to & (\sigma(q),j^1_{q,0}\sigma_{q})
                \end{array}
            \]
    where $\phi\colon Q\to \rkq$ is a section of $\pi_Q$, $\sigma\colon Q\to \rk$ is given by $\sigma=\pi_{\rk}\circ \phi$ being $\pi_{\rk}\colon \rkq\to \rk$ the canonical projection and $\sigma_q\colon Q\to \rk$ is defined by $\sigma_q(\tilde{q})=\sigma(\tilde{q})- \sigma(q)$  for any $\tilde{q}\in Q$.}\rqed
    \end{remark}

    From the above comments we know that an element of  $J^1(Q,\rk)\equiv \rktkqh$ is a $(q+1)$-tuple  $(x,{\nu_1}_q,\ldots, {\nu_k}_q)$ where $x\in \rk$ and $({\nu_1}_q,\ldots, {\nu_k}_q)\in\tkqh$. Thus we can consider the following canonical projections:
        \[
            \xymatrix@C=20mm{ & \rktkqh \ar[dr]^-{(\pi_Q)_1} \ar[d]_-{(\pi_Q)_{1,0}}& \\
            \rk & \rkq \ar[l]_-{\pi_{\rk}}\ar[r]^-{\pi_Q}& Q}
        \]
    defined by
        \begin{equation}\label{projections rktkqh}
            \begin{array}{ll}
                    (\pi_Q)_{1,0}(x,\nu_{1_q}, \ldots ,\nu_{k_q})=(x,q), &  \pi_{\rk}(x,q)=x,
                \\\noalign{\medskip}
                    (\pi_Q)_1(x,{\nu_1}_q,\ldots, {\nu_k}_q)=q, &  \pi_Q(x,q)= q,
            \end{array}
        \end{equation}
  with $x\in \rk $, $q\in Q$ and $({\nu_1}_q,\ldots, {\nu_k}_q)\in (T^1_k)^*Q$.

In the following diagram we collect the notation used for the projections in this part of the book:
  \begin{figure}[h]
\centering
\scalebox{1.3}{
$
\xymatrix @R=4pc {
 & & \rktkqh \ar[r]^-{\bar{\pi}_2}\ar@/^2pc/[rr]^-{\pi^\alpha_2} \ar[d]^-{(\pi_Q)_{1,0}}\ar[dr]^-{(\pi_Q)_1} \ar[dl]_-{\bar{\pi}_1} \ar@/^-2pc/[lld]_-{\pi^\alpha_1}& \tkqh \ar[d]^-{\pi^k}\ar[r]^{\pi^{k,\alpha}}& T^*Q\ar[dl]_-{\pi} \\
 \r & \rk \ar[l]_-{\pi^\alpha}& \rk\times Q \ar[l]_-{\pi_{\rk}}\ar[r]^-{\pi_Q}& Q &
}$
}
\caption{Canonical projections associated to $\rktkqh$}
\label{K-cosymp-Maps}
\end{figure}

\index{Stable cotangent bundle of $k^1$-covelocities!Coordinates}
    If $(q^i)$ with $\n$, is a local coordinate system defined on an open set $U\subset Q$, the induced \emph{local coordinates} $(x^\alpha, q^i,p^\alpha_i)$, $\n,\,\ak$ on $\rk\times (T^1_k)^*U=\Big((\pi_Q)_1\Big)^{-1}(U)$ are given by
        \begin{equation}\label{canonical coordinates rktkqh}
            \begin{array}{lcl}
                    x^\alpha(x,{\nu_1}_q,\ldots, {\nu_k}_q) = x^\alpha(x)=x^\alpha\, , \\\noalign{\medskip}
                    q^i(x,{\nu_1}_q,\ldots, {\nu_k}_q) = q^i(q)\, ,
                \\\noalign{\medskip}
                    p^\alpha_i(x,{\nu_1}_q,\ldots, {\nu_k}_q) = {\nu_\alpha}_q\left(\ds\frac{\partial}{\partial q^i}\Big\vert_q\right)\,.
            \end{array}
        \end{equation}
\index{Cotangent bundle of $k^1$-covelocities!canonical coordinates}
        Thus, $\rktkqh$ is endowed with a  structure of differentiable manifold of dimension $k+n(k+1)$, and the manifold $\rktkqh$ with the projection $(\pi_Q)_1$ has the  structure of a vector bundle over $Q$.

        If we consider the identification (\ref{difeo j1qrk-rktkqh}), the above coordinates can be defined in terms of $1$-jets of map from $Q$ to $\rk$  in the following way
            \[
                x^\alpha(j^1_{q,x}\sigma)=x^\alpha(x)=x^\alpha\,, \quad q^i(j^1_{q,x}\sigma)=q^i(q)\,,\quad p_\alpha^i(j^1_{q,x}\sigma)=\derpar{\sigma^\alpha}{ q^i}\Big\vert_{q}\,.
            \]

\index{Cotangent bundle of $k^1$-covelocities!canonical forms}
    On $\rktkqh$ we can define a family of canonical forms as follows
    \begin{equation}\label{canonical forms on rktkqh}
        \eta^\alpha=(\pi^\alpha_1)^*dx,\quad \Theta^\alpha=
(\pi^\alpha_2)^{\;*}\theta \quad \makebox{and}\quad \Omega^\alpha=
(\pi^\alpha_2)^{\;*}\omega\,,
    \end{equation}
    with $\ak$, being $\pi^\alpha_1:\rk \times (T^1_k)^{\;*}Q \rightarrow \r$ and $\pi^\alpha_2:\rk \times (T^1_k)^{\;*}Q \rightarrow T^{\;*}Q$ the projections defined by
        \[
            \pi^\alpha_1(x,{\nu_1}_q,\ldots, {\nu_k}_q)=x^\alpha \,,\quad \pi^\alpha_2(x,{\nu_1}_q,\ldots, {\nu_k}_q)=\nu_{\alpha_q}
        \]
    and $\theta$ and $\omega$ the canonical Liouville and symplectic forms on $T^*Q$, respectively. Let us observe that, since $\omega=-d\theta$, then
    $\Omega^\alpha=-d\Theta^\alpha$.

\index{Stable cotangent bundle of $k^1$-covelocities!canonical structure}
    If we consider a local coordinate system $(x^\alpha, q^i,p^\alpha_i)$ on $\rktkqh$ (see (\ref{canonical coordinates rktkqh})), the \emph{canonical forms} $\eta^\alpha,\, \Theta^\alpha$ and $\Omega^\alpha$ have the following local expressions:
    \begin{equation}\label{local canonical forms rktkqh}
        \eta^\alpha=dx^\alpha,\, \quad \Theta^\alpha = \displaystyle \, p^\alpha_i dq^i \quad  ,  \quad \Omega^\alpha = \displaystyle  dq^i \wedge dp^\alpha_i\, .
    \end{equation}

    Moreover, let be $\mathcal{V}^*=\ker \big((\pi_Q)_{1,0}\big)_*$; then a simple inspection in local coordinates shows that the forms $\eta^\alpha$ and $\Omega^\alpha$, with $\ak$ are closed and the following relations hold:
    \begin{enumerate}
        \item  $dx^1\wedge\dots\wedge dx^k\neq 0$,\quad $dx^\alpha\vert_{\mathcal{V}^{\;*}}=0,\quad \Omega^\alpha\vert_{\mathcal{V}^{\;*}\times \mathcal{V}^{\;*}}=0,$
        \item  $(\ds {\cap_{\alpha=1}^{k}} \ker dx^\alpha) \cap (\ds {\cap_{\alpha=1}^{k}} \ker \Omega^\alpha)=\{0\}$, \quad ${\rm dim}\,(\ds {\cap_{\alpha=1}^{k}} \ker \Omega^\alpha)=k.$
    \end{enumerate}

    \begin{remark}
      {\rm   Let us observe that the canonical forms on $\tkqh$ and on $\rktkqh$ are related by the expressions
        \begin{equation}\label{symcosym}
            \theta^\alpha=(\bar{\pi}_2)^*\theta^\alpha \quad \makebox{ and} \quad \Omega^\alpha=(\bar{\pi}_2)^*\omega^\alpha\,,
        \end{equation}
        with $\ak$.}\rqed
    \end{remark}

\section{$k$-cosymplectic geometry}

From the above model, that is, the stable cotangent bundle of $k^1$-covelocities  with the canonical forms (\ref{canonical forms on rktkqh}), M. de Le\'{o}n and collaborators have introduced the notion  of $k$-cosymplectic structures  in \cite{LMORS-1998,LMeS-2001}.

Let us recall that the $k$-cosymplectic manifolds provide a natural arena to develop Classical Field Theories as an alternative to other geometrical settings as the polysymplectic geometry \cite{Sarda2, Sd-95, Sarda, Sd-95b} or multisymplectic geometry.

Before of introducing the formal definition of $k$-cosymplectic manifold we consider the linear case.

\subsection{$k$-cosymplectic vector spaces}
Inspired in the above geometrical model one can define $k$-cosymplectic structures on a vector space in the following way (see \cite{Tesis merino}).
\index{$k$-cosymplectic vector space}
\begin{definition}
    Let $E$ a $k+n(k+1)$-dimensional vector space.
    A family $(\eta^\alpha,\Omega^\alpha,$ $V;1\leq \alpha\leq k)$ where  $\eta^1,\ldots, \eta^k$ are $1$-forms, $\Omega^1,\ldots, \Omega^k$ are $2$-forms and $V$ is a vector subspace of $E$ of dimension $nk$, defines a \emph{$k$-cosymplectic structure} on the vector space $E$ if the following conditions hold:
    \begin{enumerate}
        \item $\eta^1\wedge\ldots\wedge \eta^k \neq 0$,
        \item ${\rm dim}\, (\ker \Omega^1\cap \ldots\cap \ker\Omega^k)=k$,
        \item $(\ds {\cap_{\alpha=1}^{k}} \ker \eta^\alpha) \cap (\ds {\cap_{\alpha=1}^{k}} \ker \Omega^\alpha)=\{0\}$,
        \item $\eta^\alpha\vert_{V}=0,\; \Omega^\alpha\vert_{V\times V}=0,\, 1\leq \alpha\leq k\,.$
    \end{enumerate}

    $(E,\eta^\alpha, \Omega^\alpha, V)$ is called \emph{$k$-cosymplectic vector space}.
\end{definition}

\index{$k$-cosymplectic structure!on vector spaces}

\begin{remark}
    {\rm
        If $k=1$, then $E$ is a vector space of dimension $2n+1$ and we have a family $(\eta,\Omega,V)$ given by a $1$-form $\eta$, a $2$-form $\Omega$ and a subspace $V\subset E$ of dimension $n$.

        From conditions $(2)$ and $(3)$ of the above definition one deduces that $\eta\wedge\Omega^n\neq 0$ since $\dim\ker\omega=1$, and then ${\rm rank}\, \Omega=2n$, moreover $\ker \eta\cap \ker \Omega=0$.

        The pair $(\eta,\Omega)$ define a cosymplectic structure on $E$. Moreover, from condition $(4)$ one deduce that $(\eta,\Omega,V)$ define a stable almost cotangent structure on $E$.
    \rqed}
\end{remark}

Given a $k$-cosymplectic structure on a vector space one can prove the following results (see \cite{Tesis merino}):
\begin{theorem}[Darboux coordinates]
    If $(\eta^\alpha,\Omega^\alpha,V;1\leq \alpha\leq k)$ is a $k$-cosymplectic structure on $E$ then there exists a basis $(\eta^1,\ldots, \eta^k,\gamma^i, \gamma_i^\alpha; 1\leq i\leq n, 1\leq \alpha\leq k)$ of $E^*$ such that
    \[
        \Omega^\alpha = \gamma^i\wedge \gamma^\alpha_i\,.
    \]
\end{theorem}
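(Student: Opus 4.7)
My plan is to mimic the standard reduction used to prove the Darboux theorem for cosymplectic manifolds, namely to first construct analogues of the Reeb vector fields, split $E$ into a ``horizontal'' part on which a $k$-symplectic structure is induced, and then invoke the already-established $k$-symplectic linear Darboux theorem (Proposition \ref{linear Darboux th}).

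First I would construct a distinguished basis $(R_1, \ldots, R_k)$ of the $k$-dimensional subspace $K := \bigcap_{\alpha=1}^k \ker \Omega^\alpha$ characterized by $\eta^\beta(R_\alpha) = \delta^\beta_\alpha$. The existence of such a basis follows from the observation that the map $K \to \r^k$, $v \mapsto (\eta^1(v), \ldots, \eta^k(v))$, is injective by condition $(3)$, and hence an isomorphism since $\dim K = k$ by condition $(2)$. Then I would set $H := \bigcap_{\alpha=1}^k \ker \eta^\alpha$, which has dimension $n(k+1)$ by condition $(1)$, and observe that $E = \langle R_1,\ldots,R_k\rangle \oplus H$ and $V \subset H$ (the inclusion $V\subset H$ being immediate from condition $(4)$).

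Next, I would verify that the restricted data $(\Omega^1|_H, \ldots, \Omega^k|_H, V)$ is a $k$-symplectic structure on $H$ in the sense of Subsection \ref{section k-symp vspaces}. The compatibility $\Omega^\alpha|_{V\times V} = 0$ is inherited from condition $(4)$. For the non-degeneracy $\bigcap_{\alpha} \ker(\Omega^\alpha|_H) = \{0\}$, suppose $v \in H$ satisfies $\Omega^\alpha(v,w)=0$ for every $w\in H$; writing an arbitrary element of $E$ as $\lambda^\beta R_\beta + h$ with $h\in H$ and using that $R_\beta \in K$ annihilates every $\Omega^\alpha$, one concludes $\iota_v\Omega^\alpha = 0$ on all of $E$, hence $v\in K\cap H = \{0\}$ by condition $(3)$.

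Having this, I would apply Proposition \ref{linear Darboux th} to the $k$-symplectic vector space $(H,\Omega^\alpha|_H,V)$, obtaining a Darboux basis of $H$ whose dual basis $(\gamma^i|_H, \gamma^\alpha_i|_H)$ of $H^*$ satisfies $\Omega^\alpha|_H = \gamma^i\wedge \gamma^\alpha_i$ in $\bigwedge^2 H^*$. I would extend these covectors to $E^*$ by declaring them to vanish on each $R_\beta$; the resulting family $(\eta^\alpha, \gamma^i, \gamma^\alpha_i)$ is then the basis of $E^*$ dual to $(R_\alpha, e_i, f^\alpha_i)$. The final equality $\Omega^\alpha = \gamma^i\wedge\gamma^\alpha_i$ on $E$ is checked on pairs of basis vectors: on $H\times H$ both sides agree by the $k$-symplectic Darboux theorem, and on pairs involving some $R_\beta$ the left-hand side vanishes because $R_\beta\in K$, while the right-hand side vanishes because $\gamma^i(R_\beta)=\gamma^\alpha_i(R_\beta)=0$ by construction.

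The main potential obstacle is verifying that $(H,\Omega^\alpha|_H,V)$ really satisfies the non-degeneracy axiom of a $k$-symplectic vector space; the argument above uses crucially that the Reeb-like vectors $R_\beta$ lie in $K = \bigcap_\alpha\ker\Omega^\alpha$, so that a vector in $H$ annihilating $\Omega^\alpha$ on $H$ already annihilates it on all of $E$. Once this is in place, the rest is a purely algebraic bookkeeping exercise combining the $k$-symplectic Darboux theorem with the splitting $E=K\oplus H$.
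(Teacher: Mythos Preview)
The paper does not actually prove this statement; it merely records the theorem and refers the reader to \cite{Tesis merino} for a proof. Your proposal is therefore being evaluated on its own merits, and it is correct: the argument cleanly reduces the $k$-cosymplectic case to the $k$-symplectic linear Darboux theorem (Proposition~\ref{linear Darboux th}) via the splitting $E = K \oplus H$ with $K=\bigcap_\alpha\ker\Omega^\alpha$ and $H=\bigcap_\alpha\ker\eta^\alpha$, exactly as one does in the classical cosymplectic-to-symplectic reduction. Each step checks out, including the only delicate one you flagged---the non-degeneracy of $\Omega^\alpha|_H$---where the use of $R_\beta\in K$ to promote vanishing on $H$ to vanishing on all of $E$ is precisely the right observation.
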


\index{Reeb vectors}
For every $k$-cosymplectic structure $(\eta^\alpha,\Omega^\alpha,V;1\leq \alpha\leq k)$ on a vector space $E$, there exists a family of $k$ vectors $R_1,\ldots, R_k$, which are called \emph{Reeb vectors}, characterized by the conditions
\[
    \iota_{R_\alpha}\eta^\beta =\delta^\beta_\alpha,\; \iota_{R_\alpha}\omega^\beta=0\,.
\]

\subsection{$k$-cosymplectic manifolds}

We turn now to the globalization of the ideas of the previous section to manifolds. The following definition was introduced  in \cite{LMORS-1998}:

\index{$k$-cosymplectic structure}
\begin{definition}\label{deest}
Let $M$ be a differentiable manifold  of dimension $k(n+1)+n$. A
\emph{$k$-cosymplectic structure} on $M$ is a family
$(\eta^\alpha,\Omega^\alpha,V;1\leq \alpha\leq k)$, where each $\eta^\alpha$ is a closed
$1$-form, each $\Omega^\alpha$ is a closed $2$-form and $V$ is an
integrable  $nk$-dimensional distribution on $M$ satisfying
\begin{enumerate}
        \item  $\eta^1\wedge\dots\wedge \eta^k\neq 0$,\quad $\eta^\alpha\vert_{V}=0,\quad \Omega^\alpha\vert_{V\times V}=0,$
        \item  $(\ds {\cap_{\alpha=1}^{k}} \ker \eta^\alpha) \cap (\ds {\cap_{\alpha=1}^{k}} \ker \Omega^\alpha)=\{0\}$, \quad ${\rm dim}\,(\ds {\cap_{\alpha=1}^{k}} \ker \Omega^\alpha)=k.$
    \end{enumerate}

$M$ is said to be a \emph{$k$-cosymplectic manifold}.
\end{definition}
\index{$k$-cosymplectic manifold}

In particular, if $k=1$, then $dim \, M=2n+1$ and $(\eta,\Omega)$ is a cosymplectic structure on $M$.

\index{Reeb vector fields}
For every $k$-cosymplectic structure  $(\eta^\alpha ,\Omega^\alpha,{\cal
V})$ on $M$, there exists a family of $k$ vector fields
$\{R_\alpha\}$, which are called \emph{Reeb vector fields},
 characterized by the following conditions
$$
\iota_{R_\alpha}\eta^\beta=\delta^\beta_\alpha \quad ,\quad \iota_{R_\alpha}\Omega^\beta=0 \ .
$$

In the canonical model $R_\alpha=\derpar{}{x^\alpha}$.

The following theorem has been proved in \cite{LMORS-1998}:
\index{$k$-cosymplectic manifold!Darboux theorem}
\begin{theorem}[Darboux Theorem]\label{Darbox k-cosymp}
If $M$ is   an  $k$--cosymplectic manifold, then around each
  point of $M$ there exist local coordinates
$(x^\alpha,q^i,p^\alpha_i;1\leq A\leq k, 1\leq i \leq n)$ such that
$$
\eta^\alpha=dx^\alpha,\quad \Omega^\alpha=dq^i\wedge dp^\alpha_i, \quad
V=\left\langle\frac{\displaystyle\partial} {\displaystyle\partial
p^1_i}, \dots, \frac{\displaystyle\partial}{\displaystyle\partial
p^k_i}\right\rangle_{i=1,\ldots , n}.
$$
\end{theorem}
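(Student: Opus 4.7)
The plan is to reduce the $k$-cosymplectic Darboux theorem to the already established $k$-symplectic Darboux theorem (Theorem \ref{Darboux k-simp}) by slicing $M$ along the foliation defined by the $\eta^\alpha$. The two ingredients that make this reduction possible are the Reeb vector fields $R_1,\ldots,R_k$ (whose existence is asserted just before the theorem) and the integrability of the codimension-$k$ distribution $D:=\bigcap_{\alpha=1}^k \ker\eta^\alpha$.

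First I would verify that $[R_\alpha,R_\beta]=0$ for all $\alpha,\beta$. Using the identity $\iota_{[R_\alpha,R_\beta]}=\mathcal{L}_{R_\alpha}\iota_{R_\beta}-\iota_{R_\beta}\mathcal{L}_{R_\alpha}$, Cartan's magic formula, and the defining properties $\iota_{R_\alpha}\eta^\gamma=\delta^\gamma_\alpha$, $\iota_{R_\alpha}\Omega^\gamma=0$, $d\eta^\gamma=0$, $d\Omega^\gamma=0$, one obtains $\iota_{[R_\alpha,R_\beta]}\eta^\gamma=0$ and $\iota_{[R_\alpha,R_\beta]}\Omega^\gamma=0$ for every $\gamma$. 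By the non-degeneracy condition (2) of Definition \ref{deest}, this forces $[R_\alpha,R_\beta]=0$. Consequently the flows of $R_1,\ldots,R_k$ commute. Around a fixed point $p\in M$ I would then use the simultaneous straightening theorem for $k$ commuting independent vector fields to produce local functions $x^1,\ldots,x^k$ satisfying $R_\alpha=\partial/\partial x^\alpha$ and $x^\alpha(p)=0$. Since $d\eta^\alpha=0$ and $\iota_{R_\beta}\eta^\alpha=\delta^\alpha_\beta$, the closed 1-forms $\eta^\alpha-dx^\alpha$ annihilate all the $R_\beta$ and annihilate $D$, hence vanish identically. Thus $\eta^\alpha=dx^\alpha$ near $p$.

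Next I would analyze the leaf $L$ of $D$ through $p$ (integrability of $D$ follows from Frobenius because each $\eta^\alpha$ is closed). Its dimension is $n(k+1)$. Since $\eta^\alpha|_V=0$, the distribution $V$ is tangent to $L$, so $V|_L$ is an integrable $nk$-dimensional distribution on $L$. Consider the family $\Omega^\alpha|_L$: each is closed, and $\Omega^\alpha|_L|_{V\times V}=0$ by hypothesis. I would next check that $\bigcap_\alpha \ker(\Omega^\alpha|_L)=\{0\}$. If $X\in T_qL=D_q$ satisfies $\Omega^\alpha(X,Y)=0$ for every $Y\in D_q$ and every $\alpha$, then, because $\iota_X\Omega^\alpha$ also vanishes on each $R_\beta$ (as $R_\beta\in\bigcap_\gamma\ker\Omega^\gamma$), one gets $\iota_X\Omega^\alpha=0$ on $T_qM=D_q\oplus\mathrm{span}(R_1,\ldots,R_k)_q$, whence $X\in\bigcap_\alpha\ker\Omega^\alpha\cap D_q=\{0\}$. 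Therefore $(L,\Omega^1|_L,\ldots,\Omega^k|_L,V|_L)$ is a $k$-symplectic manifold, and Theorem \ref{Darboux k-simp} provides coordinates $(q^i,p^\alpha_i)$ on a neighbourhood $U\subset L$ of $p$ with $\Omega^\alpha|_L=dq^i\wedge dp^\alpha_i$ and $V|_L=\mathrm{span}\{\partial/\partial p^\alpha_i\}$.

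Finally I would extend $(q^i,p^\alpha_i)$ from $U\subset L$ to a neighbourhood of $p$ in $M$ by declaring them constant along the flows of the Reeb vector fields: because the flows of the $R_\alpha$ commute and are transverse to $L$, saturating $U$ by these flows gives a well-defined smooth chart $(x^\alpha,q^i,p^\alpha_i)$. Since $\mathcal{L}_{R_\beta}\Omega^\alpha=\iota_{R_\beta}d\Omega^\alpha+d\iota_{R_\beta}\Omega^\alpha=0$, the forms $\Omega^\alpha$ are invariant under these flows; because they agree with $dq^i\wedge dp^\alpha_i$ on $L$ and both sides are $R_\beta$-invariant and annihilate $\partial/\partial x^\beta$ (the latter because $\iota_{R_\beta}\Omega^\alpha=0$ and by inspection of $dq^i\wedge dp^\alpha_i$ in the extended coordinates), they coincide everywhere in the chart. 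The invariance of $V$ under the Reeb flows follows similarly once one identifies $V$ locally as the annihilator of $\{dx^\alpha,dq^i\}$ on $L$ and uses that the $q^i,x^\alpha$ are $R_\beta$-invariant by construction. The main technical obstacle I anticipate is precisely this last step: showing that $V$, which is a priori given only as an abstract distribution, coincides with $\mathrm{span}\{\partial/\partial p^\alpha_i\}$ throughout the chart; this will require carefully combining the $R_\beta$-invariance of $V$ with the dimension count and with the fact that $V|_L=\mathrm{span}\{\partial/\partial p^\alpha_i\}$ on the initial leaf.
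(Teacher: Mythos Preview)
The paper does not prove this theorem; it merely cites \cite{LMORS-1998}. So there is no in-text proof to compare against, and the comments below concern only the internal soundness of your argument.

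There is a minor slip where you claim that $\eta^\alpha-dx^\alpha$ ``annihilates $D$'': straightening the $R_\alpha$ gives $R_\alpha=\partial/\partial x^\alpha$ but does not force $dx^\alpha\vert_D=0$. This is easily repaired by replacing $x^\alpha$ with local primitives $g^\alpha$ of $\eta^\alpha$ (Poincar\'e lemma): since $R_\beta(g^\alpha)=\delta^\alpha_\beta$ one has $g^\alpha=x^\alpha+h^\alpha(y)$, and the substitution $x^\alpha\mapsto g^\alpha$ preserves $R_\alpha=\partial/\partial x^\alpha$ while achieving $\eta^\alpha=dx^\alpha$. The genuine gap is the one you yourself flag at the end: the Reeb-invariance of $V$. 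This is \emph{not} a consequence of Definition~\ref{deest} as written, and your suggested fix (``identify $V$ as the annihilator of $\{dx^\alpha,dq^i\}$'') is circular, since that identification off the initial leaf is precisely what must be proved. A one-line example already blocks the argument for $k=n=1$: on $\r^3$ with coordinates $(x,q,p)$ take $\eta=dx$, $\Omega=dq\wedge dp$, and $V=\mathrm{span}\{\partial_p+x\,\partial_q\}$. All axioms of Definition~\ref{deest} hold, yet $[\partial_x,\partial_p+x\,\partial_q]=\partial_q\notin V$; moreover any chart with $dx'=\eta$ and $dq'\wedge dp'=\Omega$ is forced (by killing the $dx\wedge dq$ and $dx\wedge dp$ terms) to have $q',p'$ independent of $x$, so $\partial/\partial p'$ cannot track this $x$-dependent line field. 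Hence the last step of your outline cannot be completed from the hypotheses recorded in this book; either an extra compatibility between $V$ and the Reeb fields is present in \cite{LMORS-1998}, or the statement here is stronger than what Definition~\ref{deest} alone supports.
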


The canonical model for these geometrical structures is  $$(\rk
\times (T^1_k)^{*}Q,\eta^\alpha,\Omega^\alpha,V^{*}).$$

\begin{example}
{\rm
    Let $(N,\omega^\alpha, V)$ be an arbitrary $k$-symplectic manifold. Then, denoting by
    \[
        \pi_{\rk}\colon \rk\times N\to \r^k,\qquad \pi_N\colon \rk\times N\to N
    \]
    the canonical projections, we consider the differential forms
    \[
        \eta^\alpha = \pi_{\rk}^*(dx^\alpha),\quad \Omega^\alpha=\pi_N^*\omega^\alpha\,,
    \]
    and the distribution $V$ in $N$ defines a distribution $\mathcal{V}$ in $M=\rk\times N$ in a natural way. All conditions given in Definition \ref{deest} are verified, and hence $\rk\times N$ is a $k$-cosymplectic manifold.
}\end{example}

 \newpage
\mbox{}
\thispagestyle{empty} 

\chapter{$k$-cosymplectic formalism} \label{k-cosymp eq}

In this chapter we describe the $k$-cosymplectic formalism. As we shall see in the following chapters, using this formalism we can study Classical Field Theories that explicitly involve the space-time coordinates on the Hamiltonian and Lagrangian. This is the principal difference with the $k$-symplectic approach. As in previous case, in this formalism is fundamental the notion of $k$-vector field; let us recall that this notion was introduced in Section \ref{section k-vector field} for an arbitrary manifold $M$.

    \section{$k$-cosymplectic   Hamiltonian equations}\label{Section HDW eq: k-cosymp approach}


Let $(M,\eta^\alpha,\Omega^\alpha,V)$ be a $k$-cosymplectic manifold and $H$   a Hamiltonian on $M$, that is, a function $H\colon M\to \r$  defined on $M$.
\index{Hamiltonian function}

\begin{definition}
   The family $(M,\eta^\alpha, \Omega^\alpha, H)$ is called \emph{$k$-cosym\-plectic Hamiltonian system}.
\end{definition}

\index{$k$-cosymplectic Hamiltonian system}
Let $(M,\eta^\alpha, \Omega^\alpha, H)$ be a $k$-cosymplectic Hamiltonian system and ${\bf X}=(X_1,\ldots,$ $ X_k)$ a
$k$-vector field on  $M$ solution of the system of equations
\begin{equation}\label{geonah}
\begin{array}{l}
\eta^\alpha(X_\beta)=\delta^\alpha_\beta, \quad 1\leq \alpha,\beta\leq k\\
\noalign{\medskip}\displaystyle \sum_{\alpha=1}^k \,
 \iota_{X_\alpha}\Omega^\alpha =
dH-\displaystyle\sum_{\alpha=1}^k R_\alpha(H)\eta^\alpha \, \, ,
\end{array}
\end{equation}
where $R_1,\ldots, R_k$ are the Reeb vector fields associated with the $k$-cosymplectic structure on $M$.

Given a local coordinate system $(x^\alpha,q^i,p^\alpha_i)$, each $X_\alpha$, $1\leq \alpha\leq k$
is locally given by
$$
X_\alpha=(X_\alpha)_\beta\derpar{}{x^\alpha}+ (X_\alpha)^i\derpar{}{q^i} +(X_\alpha)^\beta_i \derpar{}{v^i_\beta}
$$

Now, since $$dH=  \derpar{H}{x^\alpha } dx^\alpha + \derpar{H}{q^i} dq^i +\derpar{H}{p^\alpha_i} dp^\alpha_i$$ and
$$\eta^\alpha=dx^\alpha, \quad \omega^\alpha=dq^i\wedge dq^i,\quad R_\alpha=\nicefrac{\partial}{\partial x^\alpha}$$
we deduce that the equation (\ref{geonah}) is locally equivalent to  the following conditions
\begin{equation}
(X_\alpha)_\beta=\delta^\beta_\alpha\, , \quad \derpar{ H}{q^i}=-\ds\sum_{\beta=1}^k(X_\beta)^\beta_i\, \quad \derpar{ H}{p^\alpha_i}=(X_\alpha)^i\
\label{k-cosymp condvf}
 \end{equation}
 with $1\leq i\leq n$ and $  1\leq \alpha\leq k$.

Let us suppose that ${\bf X}=(X_1,\ldots, X_k)$ is integrable, and
          $$\begin{array}{cccl}\varphi: & \colon   \rk &  \to  & M \\ \noalign{\medskip}
                                  &             x &   \to &     \varphi(x)=(\psi_\alpha(x),\psi^i(x),\psi^\alpha_i(x)) \end{array}$$ is an integral
                section of ${\bf X}$, then
        \begin{equation}
            \varphi_*(x)\Big(\derpar{}{x^\alpha}
            \Big\vert_{x}\Big)=
        \derpar{\psi_\beta}{x^\alpha}\Big\vert_{x}\derpar{}{x^\beta}\Big\vert_{\varphi(x)}     +
            \derpar{\psi^i}{x^\alpha}\Big\vert_{x}\derpar{}{q^i}\Big\vert_{\varphi(x)} + \derpar{\psi^\beta_i}{x^\alpha}\Big\vert_{x}\derpar{}{p^\beta_i}\Big\vert_{\varphi(x)}\,.
       \label{equation1}
 \end{equation}

        From  (\ref{equation1}) we obtain that $\varphi$ is given by $\varphi(x)=(x,\psi^i(x),\psi^\alpha_i(x)) $ and the following equations
        \begin{equation}\label{tkqh integral section equivalence cond1}
        \derpar{\psi_\beta}{x^\alpha}\Big\vert_{x}=\delta^\alpha_\beta  \,,\quad   \derpar{\psi^i}{x^\alpha}\Big\vert_{x}=(X_\alpha)^i(\varphi(x))\,,\quad \derpar{\psi^\beta_i}{x^\alpha}\Big\vert_{x}=(X_\alpha)^\beta_i(\varphi(x))
        \, ,\end{equation}
        hold.

This theory can be summarized in the following

\begin{theorem}\label{fhkc}Let $(M,\eta^\alpha, \Omega^\alpha, H)$ a $k$-cosymplectic Hamiltonian system and ${\bf X}=(X_1,\ldots, X_k)$ a $k$-vector field on  $M$ solution of the system of equations
\[
\begin{array}{l}
\eta^\alpha(X_\beta)=\delta^\alpha_\beta, \quad 1\leq \alpha,\beta\leq k\\
\noalign{\medskip}\displaystyle \sum_{\alpha=1}^k \,
 \iota_{X_\alpha}\Omega^\alpha =
dH-\displaystyle\sum_{\alpha=1}^k R_\alpha(H)\eta^\alpha \, \, ,
\end{array}
\]
where $R_1,\ldots, R_k$ are the Reeb vector fields associated with the $k$-cosymplectic structure on $M$.

 If $\mathbf{X}$ is integrable and $\varphi:\rk\to M,\,\varphi(x)=(x^\alpha,\psi^i(x),\psi^\alpha_i(x))$ is an integral section of the $k$-vector field ${\bf X}$, then $\varphi$ is a solution of the following system of partial differential equations
 \[
 \derpar{H}{q^i}\Big\vert_{\varphi(x)} = -\displaystyle\sum_{\alpha=1}^k\derpar{\psi^\alpha_i}{x^\alpha}\Big\vert_t
                \;, \quad \derpar{H}{p^\alpha_i}\Big\vert_{\varphi(x)}=\derpar{\psi^i}{x^\alpha}\Big\vert_{x}.
                \]
\end{theorem}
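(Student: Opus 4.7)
The plan is essentially to read off the theorem by combining the local computations already set up in the preceding paragraphs. My proof would unfold as follows.

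First, I would fix a Darboux chart around a point in the image of $\varphi$, so that by Theorem \ref{Darbox k-cosymp} the structure has the canonical form $\eta^\alpha = dx^\alpha$, $\Omega^\alpha = dq^i \wedge dp^\alpha_i$ and $R_\alpha = \partial/\partial x^\alpha$. In these coordinates, write each $X_\alpha$ as in (\ref{k-cosymp condvf}) with unknown component functions $(X_\alpha)_\beta$, $(X_\alpha)^i$, $(X_\alpha)^\beta_i$. The first equation $\eta^\alpha(X_\beta) = \delta^\alpha_\beta$ immediately forces $(X_\beta)_\alpha = \delta^\alpha_\beta$. Expanding $dH$, $\Omega^\alpha$ and $R_\alpha(H)\eta^\alpha$ in the Darboux coordinates and matching the $dq^i$ and $dp^\alpha_i$ coefficients of the second equation in (\ref{geonah}) yields exactly the two remaining relations in (\ref{k-cosymp condvf}); note that the $dx^\alpha$ components cancel thanks to the explicit subtraction of $\sum_\alpha R_\alpha(H)\eta^\alpha$, which is precisely why that term is included.

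Next, I would exploit the assumed integrability together with the hypothesis that $\varphi$ is an integral section. Writing $\varphi(x) = (\psi_\beta(x), \psi^i(x), \psi^\alpha_i(x))$ and expanding $\varphi_\ast(x)(\partial/\partial x^\alpha)$, the integral-section condition (\ref{integral section expr}) forces the partial derivatives of $\psi_\beta$, $\psi^i$, $\psi^\beta_i$ to equal the coefficients of $X_\alpha$ along $\varphi$, as in (\ref{tkqh integral section equivalence cond1}). The equations $\partial \psi_\beta/\partial x^\alpha = \delta^\alpha_\beta$ force $\psi_\beta(x) = x^\beta$ up to an additive constant; absorbing this constant into the choice of chart on $\mathbb{R}^k$ (or, equivalently, composing with the corresponding translation) allows us to write $\varphi(x) = (x, \psi^i(x), \psi^\alpha_i(x))$, which is the statement made after equation (\ref{equation1}).

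Finally, I would substitute the formulas from (\ref{k-cosymp condvf}) into the remaining identities in (\ref{tkqh integral section equivalence cond1}). The relation $(X_\alpha)^i = \partial H/\partial p^\alpha_i$ evaluated at $\varphi(x)$ together with $\partial \psi^i/\partial x^\alpha = (X_\alpha)^i\circ\varphi$ immediately gives the second group of HDW equations. Summing the relation $\partial \psi^\beta_i/\partial x^\alpha = (X_\alpha)^\beta_i\circ\varphi$ over $\alpha = \beta$ and combining with $\sum_\beta (X_\beta)^\beta_i = -\partial H/\partial q^i$ yields the first group. The two sets of identities together are precisely the HDW equations stated in the theorem.

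The proof has no hard analytic step: everything reduces to reading off coefficients in Darboux coordinates. The only point that deserves a little care is the conceptual one of explaining why the $\sum_\alpha R_\alpha(H)\eta^\alpha$ correction must appear in (\ref{geonah})---geometrically, it is what removes the unwanted $dx^\alpha$ contributions so that the remaining equation can be solved in the $dq^i, dp^\alpha_i$ directions---and the mild book-keeping argument that reduces an arbitrary integral section to one of the form $\varphi(x) = (x, \psi^i(x), \psi^\alpha_i(x))$ via an affine reparametrization of the base $\mathbb{R}^k$.
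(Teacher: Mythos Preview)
Your proposal is correct and follows exactly the approach of the paper: the theorem is stated there as a summary of the preceding local computations, and your three steps (extract the coefficient identities (\ref{k-cosymp condvf}) in Darboux coordinates, read off the integral-section conditions (\ref{tkqh integral section equivalence cond1}), and substitute one into the other) reproduce that argument verbatim. Your extra remark about absorbing the additive constant in $\psi_\beta(x)=x^\beta+c^\beta$ is a small piece of care the paper glosses over, since the theorem as stated already assumes $\varphi(x)=(x^\alpha,\psi^i(x),\psi^\alpha_i(x))$.
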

\qed

 From now, we shall call these equations (\ref{geonah}) as \emph{$k$-cosymplectic Hamiltonian equations}.
\index{$k$-cosymplectic Hamiltonian equations}
\index{$k$-cosymplectic Hamiltonian $k$-vector field}
\begin{definition}
    A $k$-vector field $\mathbf{X}=(X_1,\ldots, X_k)\in \vf^k(M)$ is called a \emph{$k$-cosymplectic Hamiltonian $k$-vector field} for a $k$-cosymplectic Hamiltonian system $(M,\eta^\alpha,\Omega^\alpha, H)$ if $\mathbf{X}$ is a solution of (\ref{geonah}).
    We denote by $\vf^k_H(M)$ the set of all $k$-cosymplectic Hamiltonian $k$-vector fields.
\end{definition}

It should be noticed that  equations (\ref{geonah}) have always a solution but  this is not   unique. In fact, if $(M,\eta^\alpha,\Omega^\alpha,V)$ is a
$k$-cosymplectic manifold we can define two vector bundle morphism $\Omega^\flat\colon TM\to (T^1_k)^*M$ and $\Omega^\sharp\colon T^1_kM\to T^*M$ as follows:
    \begin{align*}
        &\Omega^\flat(X)=(\iota_X\Omega^1+\eta^1(X)\eta^1,\ldots, \iota_X\Omega^k+\eta^k(X)\eta^k)\\
        \intertext{and $\Omega^\sharp(X_1,\ldots, X_k)$ such that}
        &\Omega^\sharp(X_1,\ldots, X_k)(Y)={\rm trace} ((\Omega^\flat(X_\beta))_\alpha(Y))=\ds\sum_{\alpha=1}^k(\Omega^\flat(X_\alpha))_\alpha(Y))\\
        \intertext{for all $Y\in TM$, i.e.}
        &\Omega^\sharp(X_1,\ldots, X_k)=\ds\sum_{\alpha=1}^k (\iota_{X_\alpha}\Omega^\alpha + \eta^\alpha(X_\alpha)\eta^\alpha)
    \end{align*}

The above morphisms induce two morphisms of $\mathcal{C}^\infty(M)$-module between the corresponding spaces of sections.
Let us observe that the equations (\ref{geonah}) are equivalent to
\begin{align*}
    \eta^\alpha(X_\beta)&=\delta^\alpha_\beta,\quad \forall \alpha,\beta,\\
    \Omega^\sharp(X_1,\ldots, X_k) &= dH + \ds\sum_{\alpha=1}^k(1-R_\alpha(H))\eta^\alpha\,,
\end{align*}
where $R_1,\ldots, R_k$ are the Reeb vector fields of the $k$-cosymplectic structure $(\eta^\alpha, \Omega^\alpha, V)$.
    \begin{remark}
    {\rm
        If $k=1$ then $\Omega^\flat=\Omega^\sharp$ is defined from $TM$ to $T^*M$, and it is the morphism $\chi_{\eta,\Omega}$ associated to the cosymplectic manifold $(M,\eta,\Omega)$ and defined by
        \[
            \chi_{\eta,\Omega}(X)=\iota_X\Omega+\eta(X)\eta\,,
        \]
        (for more details see  \cite{Albert-1994, CLL-1992} and Appendix \ref{cosymma}).
    \rqed}
    \end{remark}

Next we shall discuss the existence of solution of the above equations. From the local conditions (\ref{k-cosymp condvf}) we can define a $k$-vector field that satisfies (\ref{k-cosymp condvf}), on a neighborhood of each point $x\in M$. For example, we can put
$$(X_\alpha)^\beta= \delta^\beta_\alpha \;, \quad (X_1)^1_i=\ds\frac{\partial H}{\partial q^i}\;, \quad
(X_\alpha)^\beta_i=0 \,\,\,\mbox{for $\alpha\neq 1\neq \beta$} \;, \quad (X_\alpha)^i=
\ds\frac{\partial H}{\partial p^A_i}\, .$$  Now by
using a partition of unity in the manifold $M$, one can construct a
global $k$-vector field which is a solution of (\ref{geonah}), (see \cite{LMORS-1998}.)

Equations (\ref{geonah}) have not, in general, a unique solution. In fact, denoting by $ {\cal M}_k(C^\infty(M))$ the space of matrices of
order $k$ whose entries are functions on $M$, we define the vector bundle
morphism
\begin{equation}\label{etasharp}
\begin{array}{rccl}
\eta^{\sharp}: & T^1_kM & \longrightarrow & {\cal M}_k(C^\infty(M))  \\
\noalign{\medskip}
 & (X_1,\dots,X_k) & \mapsto &
\eta^{\sharp}(X_1,\dots,X_k) =  (\eta_\alpha(X_\beta))\, .
\end{array}
\end{equation}

Then the solutions of (\ref{geonah}) are given by
$(X_1,\dots,X_k)+(\ker\Omega^{\sharp}\cap\ker\eta^{\sharp})$, where
$(X_1,\dots,X_k)$ is a particular solution.

Let us observe that given a $k$-vector field $Y=(Y_1,\ldots, Y_k)$ the condition $Y\in \ker\Omega^{\sharp}\cap\ker\eta^{\sharp}$ is locally equivalent to the conditions
\begin{equation}\label{kerkco}
(Y_\beta)_\alpha=0,\quad Y^i_\beta= 0, \quad \ds\sum_{\alpha=1}^k(Y_\alpha)^\alpha_i=0\,.
\end{equation}

Finally, in the proof of the theorem \ref{fhkc} it is necessary assume the integrability of the $k$-vector field $(X_1,\ldots, X_k)$, and since the $k$ vector fields $X_1,\ldots, X_k$ on $M$ are linearly independent, the integrability condition is equivalent to require that $[X_\alpha,X_\beta]=0$, for all $1\leq \alpha, \beta \leq k$.

\begin{remark}
{\rm
Sometimes the Hamiltonian (or Lagrangian) functions are not defined on a $k$-cosymplectic manifold, for instance, in the reduction theory where the reduced ``phase spaces'' are not, in general, $k$-cosymplectic manifolds, even when the original phase space is a $k$-cosymplectic manifold. In mechanics this problem is solved using Lie algebroids (see \cite{LMM-2005,{MAR-01}, Mart-2001,we-1996}). In \cite{MV-2010} we introduce a geometric description of classical field theories on Lie algebroids in the frameworks of $k$-cosymplectic geometry. Classical field theories on Lie algebroids have already been studied in the literature. For instance, the multisymplectic formalism on Lie algebroids was presented in \cite{MAR-04,Mart-2005}, the $k$-symplectic formalism on Lie algebroids in \cite{LMSV-2009}, in \cite{VC-06} a geometric framework for discrete field theories on Lie groupoids has been discussed.
\rqed}
\end{remark}

\section{Example: massive scalar field}\label{section k-cosymp Examples ham}
\index{Massive Scalar Field}
Consider the   $4$-cosymplectic Hamiltonian equation
                    \begin{equation}\label{exee}
                        \begin{array}{l}
dx^\alpha(X_\beta)=\delta_{\alpha\beta}, \quad 1\leq \alpha,\beta\leq 4\\
\noalign{\medskip}\displaystyle \sum_{\alpha=1}^4 \,
 \iota_{X_\alpha}\Omega^\alpha =
dH-\displaystyle\sum_{\alpha=1}^4 R_\alpha(H)dx^\alpha \, \, .
\end{array}
                    \end{equation}
                associated to the   Hamiltonian function $H\in\mathcal{C}^\infty(\r^4\times (T^1_4)^*\r)$,
                \[
                       H (x^1,x^2,x^3,x^4,q, p^1, p^2, p^3, p^4) =  \ds\frac{1}{2\sqrt{-g}}g_{\alpha\beta}p^\alpha p^\beta-\sqrt{-g}\left( F(q)-\ds\frac{1}{2}m^2q^2\right)\,.
                    \]

               If $(X_1,X_2,X_3,X_4)$ is a solution of (\ref{exee}), then from the following identities
\begin{equation}\label{partial Ham scalar k-co}
                    \derpar{H}{q}=-\sqrt{-g}\Big(F'(q)-m^2q\Big),\quad \derpar{H}{p^\alpha}=\frac{1}{\sqrt{-g}}g_{\alpha\beta}p^\beta\,.
                \end{equation}
                and from  (\ref{k-cosymp condvf}) we obtain,
               in natural coordinates, the local expression of each $X_\alpha$:
                    \begin{equation}\label{vf scalar k-co}
                        X_\alpha=\derpar{}{x^\alpha}+\frac{1}{\sqrt{-g}}g_{\alpha\beta}p^\beta\derpar{}{q} + (X_\alpha)^\beta\derpar{}{p^\beta}\,,
                    \end{equation}
                where the functions $(X_\alpha)^\beta\in \mathcal{C}^\infty(\r^4\times (T^1_4)^*\r)$ satisfies
                    \begin{equation}\label{condition vf scalar k-co}
                        \sqrt{-g}\Big(F'(q)-m^2q\Big)= (X_1)^1+(X_2)^2+(X_3)^3+(X_4)^4\,.
                    \end{equation}

                Let us suppose that ${\bf X}=(X_1,X_2,X_3,X_4)$ is integrable and $\varphi\colon \r^4\to \r^4\times(T^1_4)^*\r,\,$ with \[ \varphi(x)=(x,\psi(x),\psi^1(x),\psi^2(x),\psi^3(x),\psi^4(x))\] is an integral section of ${\bf X}$, then
                  one obtains that  $(\psi(x),\psi^1(x),\psi^2(x),\psi^3(x),\psi^4(x))$  are solution of the following equations
                    \[
                        \begin{array}{l}
                            \derpar{\psi}{x^\alpha}= \frac{1}{\sqrt{-g}}g_{\alpha\beta}\psi^\beta\\
                            \sqrt{-g}\Big(F'(\psi)-m^2\psi\Big)=\derpar{\psi^1}{x^1}+ \derpar{\psi^2}{x^2}+ \derpar{\psi^3}{x^3}+\derpar{\psi^4}{x^4}\,.
                        \end{array}
                    \]

                Thus $\psi\colon \r^4\to \r$ is a solution of the equation
                    \[
                        \sqrt{-g}\Big(F'(\psi)-m^2\psi\Big)= \sqrt{-g}\derpar{}{x^\alpha}\left( g^{\alpha\beta}\derpar{\psi}{t^\beta} \right)\,,
                    \]
                that is, $\psi$ is a solution of the scalar field equation (for more details about this equation see Sections \ref{example k-symp: hamiltonian scalar field} and  \ref{example k-cosymp: hamiltonian scalar field}.)

\chapter{Hamiltonian Classical Field Theories} \label{chapter: k-cosympCFT}

\index{Hamilton-De Donder-Weyl equations}
In this chapter we shall study Hamiltonian Classical Field Theories when the Hamiltonian function involves the space-time coordinates,  that is, $H$ is a function defined on $\rktkqh$. Therefore, we shall discuss the Hamilton-De Donder-Weyl equations (HDW equations for short)  which have the following local expression
\begin{equation}\label{HDW field eq}
                \derpar{H}{q^i}\Big\vert_{\varphi(x)} = -\displaystyle\sum_{\alpha=1}^k\derpar{\psi^\alpha_i}{x^\alpha}\Big\vert_t
                \;, \quad \derpar{H}{p^\alpha_i}\Big\vert_{\varphi(x)}=\derpar{\psi^i}{x^\alpha}\Big\vert_{x}\,.
            \end{equation}

         A solution of these equations is a map
        \[
                    \begin{array}{ccccl}
                     \varphi &:& \r^k &  \longrightarrow & \rktkqh\\ \noalign{\medskip}
                                            & &   x  & \to &  \varphi(x)=(x^\alpha,\psi^i(x),\psi_i^\alpha(x))
                   \end{array} \]
        where $1 \leq i\leq n,\, 1\leq \alpha\leq k$.

In the classical approach these equations can be obtained from a multiple integral variational problem. In this chapter we shall describe this variational approach and, then,  we shall give a new geometric way of obtaining the HDW equations using the $k$-cosymplectic formalism described in chapter \ref{k-cosymp eq} when the $k$-cosymplectic manifolds is just the geometrical model, i.e.  $(M=\rktkqh,\eta^1,\ldots,\eta^k,\Omega^1,\ldots, \Omega^k, V)$ as it has been described in section \ref{section: rktkqh}.

    \section{Variational approach}\label{section k-cosymp Ham variational}

In this subsection we shall see that the HDW field equations (\ref{HDW field eq})
                                are obtained from a variational principle on the space of smooth maps  with compact support $\mathcal{C}^\infty_C(\rk,\rktkqh)$. 

             To describe this variational principle we need the notion of prolongation  of diffeomorphism and vector fields from $Q$ to $\rktkqh$, which we shall introduce now. First, we shall describe some properties of the $\pi_Q$-projectable vector fields.
      \subsection{Prolongation of vector fields.}\label{section rktkqh prolongation}
\index{Vector field}
    On the manifold $\rktkqh$ there exist two families or groups of  vector fields that are relevant for our purposes. The first of these families is the set of vector fields which are obtained as  canonical prolongations of  vectors field on $\rkq$ to $\rktkqh$. The second interesting family is the set of $\pi_Q$-projectable vector fields defined on  $\rkq$. In this paragraph we briefly describe these two sets of vector fields.
\index{Vector field!projectable}
    \begin{definition}\label{projectable vf}
        Let $Z$ be a vector field on $\rkq$. $Z$ is say to be \emph{$\pi_Q$-projectable} if there exists a vector field $\bar{
        Z}$ on $Q$, such that
            \[
                (\pi_Q)_*\circ Z= \bar{Z}\circ \pi_Q\,.
            \]
    \end{definition}

    To find the coordinate representation of the vector field $Z$ we use coordinates   $(x^\alpha, q^i,\dot{x}^\alpha, \dot{q}^i)$ on $T(\rkq)$ and $(x^\alpha, q^i)$ on $\rkq$. Since $Z$ is a section of $T(\rkq)\to \rkq$, the $x^\alpha$ and $q^i$ components of the coordinate representation are fixed, so that $Z$  is determined by the functions $Z_\alpha=\dot{x}^\alpha\circ Z$ and $Z^i=\dot{q}^i\circ Z$, i.e,
    \[
        Z(x,q)=Z_\alpha(x,q)\derpar{}{x^\alpha}\Big\vert_{(x,q)} + Z^i(x,q)\derpar{}{q^i}\Big\vert_{(x,q)}\,.
    \] On the other hand,   $\bar{Z}\in \vf(Q)$,   is locally expressed by
    \[
        \bar{Z}(q)=\bar{Z}^i(q)\derpar{}{q^i}\Big \vert_{q}\,.
    \] where $\bar{Z}^i\in C^\infty(Q)$.

    Now the condition of the definition \ref{projectable vf} implies that
    \[
        Z^i(x,q)=(\bar{Z}^i\circ \pi_Q)(x,q)=\bar{Z}^i(q)\,.
    \]

    We usually write $Z^i$ instead of $\bar{Z}^i$. With this notation we have
    \[
        \begin{array}{lcl}
        Z(x,q)&=&Z_\alpha(x,q)\derpar{}{x^\alpha}\Big\vert_{(x,q)} + Z^i(q)\derpar{}{q^i}\Big\vert_{(x,q)}\,,\\\noalign{\medskip}
        \bar{Z}(q)&=&Z^i(q)\derpar{}{q^i}\Big\vert_{q}\,.
        \end{array}
    \]

    As a consequence,   we deduce that if $\{\sigma_s\}$ is the one-parameter group of diffeomorphism associated to a $\pi_Q$-projectable vector field $Z\in\vf(\rkq)$ and  $\{\bar{\sigma}_s\}$ is the one-parameter group of diffeomorphism associate to $\bar{Z}$, then
    \[
        \bar{\sigma}_s\circ\pi_Q =\pi_Q\circ \sigma_s\,.
    \]
    Given a $\pi_Q$-projectable vector field $Z\in \vf(\rkq)$, we can define a vector field $Z^{1*}$ on $\rktkqh$ such that it is $(\pi_Q)_{1,0}$-projectable and its projection is $Z$. Here we give the idea of the definition. A complete description of this notion can be found in \cite{Saunders-89} where the author define the prolongation of vector fields to the first-order jet bundle of an arbitrary bundle $E\to M$.

    Before to construct the  prolongation of a vector field it is necessary the following definition:
\index{First prolongation}
    \begin{definition}
        Let $f\colon \rkq\to \rkq$ be a map and $\bar{f}\colon Q\to Q$ be a diffeomorphism, such that $\pi_Q\circ f=\bar{f}\circ \pi_Q$. The \emph{first prolongation} of $f$ is a map
            \[
                j^{1\,*}f\colon J^1(Q,\rk)\equiv\rktkqh\to J^1(Q,\rk)\equiv\rktkqh
            \]
        defined by
            \begin{equation}\label{j1*f}
                (j^{1\,*}f)(j_{q,\sigma(q)}\sigma)= j^1_{(\bar{f}(q),\tilde{\sigma}(\bar{f}(q)))}\tilde{\sigma}
            \end{equation}
        where $\sigma\colon Q\to \rk,\, j_{q,\sigma(q)}\sigma\in J^1(Q,\rk)$ and $\tilde{\sigma}\colon Q\to \rk$ is the map given by the composition
            \[
                \xymatrix@C=17mm{
                        Q \ar@/^{16mm}/[rrrr]^-{\tilde{\sigma}}\ar[r]^-{\bar{f}^{\,-1}}& Q\ar[r]^-{(\sigma,id_Q)} & \rkq \ar[r]^-{f}& \rkq\ar[r]^-{\pi_{\rk}} & \rk
                        }
            \]
            i.e. $\tilde{\sigma}= \pi_{\rk}\circ f\circ (\sigma, id_Q)\circ \bar{f}^{\,-1}$.
    \end{definition}

    \begin{remark}
    {\rm
        In a general bundle setting \cite{Saunders-89},  the conditions of the above definition are equivalent to say that the pair $(f,\bar{f})$ is a bundle automorphism of the bundle $\rkq\to Q$.\rqed
    }
    \end{remark}
    \begin{remark}
    {\rm
        If we consider the identification between $J^1(Q,\rk)$ and $J^1\pi_Q$ given in Remark \ref{difeo jipiQ}, the above definition coincides with the definition 4.2.5 in \cite{Saunders-89} of the first prolongation of $f$ to the jet bundles.\rqed}
    \end{remark}

    In a local coordinates system $(x^\alpha, q^i, p^\alpha_i)$ on $J^1(Q,\rk)\equiv \rktkqh$, if $f(x,q)=(f^\alpha(x,q), \bar{f}^i(q))$, then
    \begin{equation}\label{Kcos local j1*f}
    {\small
        j^{1\,*}f=\Big( f^\alpha(x^\beta,q^j), \bar{f}^i(q^j), \Big(\derpar{f^\alpha}{q^k}+p^\beta_k\derpar{f^\alpha}{x^\beta} \Big)\Big(\derpar{(\bar{f}^{-1})^k}{\bar{q}^i}\circ \bar{f}\Big)(q^j)
        \Big)\,,
        }
    \end{equation}
    where $\bar{q}^i$ are the coordinates on $Q=\bar{f}(Q)$.

    Now we are in conditions to give the definition of prolongation of $\pi_Q$-projectable vector field.
    \begin{definition}\label{lift vf rktkqh}
        Let $Z\in\mathfrak{X}(\rkq)$ be a $\pi_Q$-projectable vector field, with local $1$-parameter group of diffeomorphisms $\sigma_s:\rkq\to \rkq$. Then the local $1$-parameter group of diffeomorphisms $j^{1*}\sigma_s: \rktkqh \to \rktkqh $ is  generated by  a vector field $Z^{1*}\in\mathfrak{X}( \rktkqh )$, called \emph{the natural prolongation (or complete lift) of $Z$ to $ \rktkqh $.}
    \end{definition}
\index{Vector field!complete lift}

    In local coordinates, if $Z\in\vf(\rkq)$ is a $\pi_Q$-projectable vector field with local expression,
        \[
            Z=Z_\alpha\ds\frac{\partial }{\partial x^\alpha}+Z^i\ds\frac{\partial}{\partial q^i}\,,
        \]
    then  from (\ref{Kcos local j1*f}) and Definition \ref{lift vf rktkqh} we deduce that the natural prolongation  $Z^{1*}$  has the following local expression
        \begin{equation}\label{local z1*}
            Z^{1*}=Z_\alpha\ds\frac{\partial }{\partial         x^\alpha}+Z^i\ds\frac{\partial}{\partial q^i}+\left(\ds\frac{d         Z_\alpha}{dq^i}-p^\beta_j\ds\frac{dZ^j}{dq^i}\right)\ds\frac{\partial         }{\partial p^\beta_i}\,,
        \end{equation}
    where $d/dq^i$ denoted the total derivative, that is,
        \[
            \ds\frac{d}{dq^i}=\ds\frac{\partial }{\partial q^i}+p^\beta_i\ds\frac{\partial }{\partial x^\beta}\,.
        \]
    \begin{remark}
    {\rm
        In the general framework of first order jet bundles, there exists a notion of prolongation of vector field which reduces to  definition \ref{lift vf rktkqh} when one considers the trivial bundle $\pi_Q\colon \rkq\to Q$. For a full description in the general case, see \cite{Saunders-89}.\rqed}
    \end{remark}

   \subsection{Variational principle}

        Now we are in conditions to describe the multiple integral problem from which one obtains the Hamilton-De Donder-Weyl equations.

 \begin{definition} Denote by $Sec_c (\rk, \rk \times (T^1_k)^*Q)$ the set of sections with compact support of the bundle $$\pi_{\rk}
        \circ(\pi_Q)_{1,0}
         :\rk\times(T^1_k)^*Q \to \rk.$$

  Let  $H\colon\rktkqh\to\r$ be a Hamiltonian function: then we define the integral action associated to $H$ by
        \[\begin{array}{lrcl}
        \mathbb{H}: & Sec_c(\rk,\rk \times (T^1_k)^*Q)&\to
        &\r\\\noalign{\medskip} & \varphi &\mapsto &
         \mathbb{H}(\varphi) \, =\,\ds
        \int_{\ds \rk}\, \varphi^* \Theta\,,
        \end{array}\]
where
\begin{equation}\label{The0} \Theta=\ds\sum_{\alpha=1}^k\theta^\alpha\wedge d^{k-1}x_\alpha-Hd^kx
,   \end{equation}    is a $k$-form on $\rk \times (T^1_k)^*Q$
being
 $d^{k-1}x_\alpha=\iota_{\frac{\partial}{\partial x^\alpha}} d^kx$ and
 $d^kx=dx^1\wedge\cdots\wedge dx^k$ as in section \ref{section HDW eq variational}.
\end{definition}

\begin{remark}{\rm
In the above definition we consider   the following  commutative diagram

 \[\xymatrix{  \rktkqh\ar[r]_-{(\pi_Q)_{1,0}}& \rkq\ar[d]^-{\pi_{\mathbb{R}_k}}\\
  \rk\ar[u]^-{\varphi}\ar[r]^-{Id_{\rk}}& \rk}
 \] \rqed}\end{remark}

With the aim to describe the extremals of $\mathbb{H}$ we first prove the following
\begin{lemma} Let $\varphi\colon\rk\to\rktkqh$ be an element of $Sec_c(\rk,\rk \times
(T^1_k)^*Q)$. For each  $\pi_{\rk}$-vertical vector field $Z\in\vf(\rkq)$ with one-parameter group of diffeomorphism
$\{\sigma_s\}$ one has that
 $$\varphi_s:=j^{1*}\sigma_s\circ\varphi$$ is a section of the canonical projection $\pi_{\rk}
\circ(\pi_Q)_{1,0}:\rk\times (T^1_k)^*Q\to\rk$.\end{lemma}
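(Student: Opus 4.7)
The plan is to verify the section property $\pi_{\mathbb{R}^k}\circ (\pi_Q)_{1,0}\circ \varphi_s = id_{\mathbb{R}^k}$ by breaking the composition into three ingredients and using one elementary fact about each: the section property of $\varphi$, the $\pi_{\mathbb{R}^k}$-verticality of $Z$, and the projection identity for the prolongation $j^{1*}\sigma_s$. No heavy computation is involved; the whole argument is a diagram chase.

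First I would record that because $\varphi\in Sec_c(\mathbb{R}^k,\mathbb{R}^k\times (T^1_k)^*Q)$, the defining identity
\[
\pi_{\mathbb{R}^k}\circ (\pi_Q)_{1,0}\circ \varphi = id_{\mathbb{R}^k}
\]
holds. Next, since $Z$ is $\pi_{\mathbb{R}^k}$-vertical, i.e. $(\pi_{\mathbb{R}^k})_* Z=0$, each flow map $\sigma_s\colon \mathbb{R}^k\times Q\to \mathbb{R}^k\times Q$ preserves the fibers of $\pi_{\mathbb{R}^k}$, so
\[
\pi_{\mathbb{R}^k}\circ \sigma_s = \pi_{\mathbb{R}^k}
\]
for every $s$ in the domain of the flow. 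Finally, from the very construction of the prolongation (formula (\ref{j1*f}) and its local expression (\ref{Kcos local j1*f})), the $\rkq$-components of $j^{1*}\sigma_s$ coincide with $\sigma_s$ applied to $(\pi_Q)_{1,0}$; that is,
\[
(\pi_Q)_{1,0}\circ j^{1*}\sigma_s = \sigma_s \circ (\pi_Q)_{1,0}.
\]

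Composing these three ingredients, I obtain
\[
\pi_{\mathbb{R}^k}\circ (\pi_Q)_{1,0}\circ \varphi_s
= \pi_{\mathbb{R}^k}\circ (\pi_Q)_{1,0}\circ j^{1*}\sigma_s \circ \varphi
= \pi_{\mathbb{R}^k}\circ \sigma_s \circ (\pi_Q)_{1,0}\circ \varphi
= \pi_{\mathbb{R}^k}\circ (\pi_Q)_{1,0}\circ \varphi = id_{\mathbb{R}^k},
\]
which is exactly the claim that $\varphi_s$ is a section of $\pi_{\mathbb{R}^k}\circ (\pi_Q)_{1,0}$.

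The only subtle point is that $j^{1*}\sigma_s$ was defined for maps of $\rkq$ that are bundle automorphisms over $Q$, which strictly requires $Z$ to be $\pi_Q$-projectable; a generic $\pi_{\mathbb{R}^k}$-vertical vector field $Z=Z^i(x,q)\,\partial/\partial q^i$ need not have this property unless the coefficients $Z^i$ are independent of $x$. I would therefore either read the hypothesis as meaning ``$\pi_{\mathbb{R}^k}$-vertical and $\pi_Q$-projectable'' (the class relevant for vertical variations of fields with fixed space–time base), or note that the projection identity $(\pi_Q)_{1,0}\circ j^{1*}\sigma_s = \sigma_s\circ (\pi_Q)_{1,0}$ is the only piece of the prolongation machinery needed here and holds by direct coordinate inspection in (\ref{Kcos local j1*f}) in either setting. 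This clarification is the only mildly delicate step; the rest of the argument is the diagram chase above.
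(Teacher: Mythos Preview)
Your proof is correct and follows essentially the same route as the paper: establish $\pi_{\rk}\circ\sigma_s=\pi_{\rk}$ from verticality, then use the structure of $j^{1*}\sigma_s$ over $(\pi_Q)_{1,0}$ to check the section property; the paper does this via explicit coordinates, you via the clean identity $(\pi_Q)_{1,0}\circ j^{1*}\sigma_s=\sigma_s\circ(\pi_Q)_{1,0}$. Your remark about $\pi_Q$-projectability is well taken: the paper's own computation writes $(\sigma_s)_Q^i(q)$, tacitly assuming it, and the subsequent theorem explicitly adds that hypothesis.
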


\proof  If $Z$ is  $\pi_{\rk}$-vertical vector field,
then one has the following local expression
\begin{equation}\label{Zloc*}
Z(x,q)=Z^i(x,q)\ds\frac{\partial}{\partial
q^i}\Big\vert_{(x,q)}\,,
\end{equation} for some $Z^i\in
\mathcal{C}^\infty(\rkq)$.

Since  $\{\sigma_s\}$ is the one-parameter group of diffeomorphism of $Z$ one obtains
$$
\begin{array}{lcl}
Z(x,q)&=&(\sigma_{(x,q)})_*(0)\Big(\ds\frac{d}{ds}\Big\vert_{0}\Big)\\\noalign{\medskip}&=&\ds\frac{d
(x^\alpha\circ \sigma_{(x,q)})}{ds}\Big\vert_{0}\ds\frac{\partial
}{\partial x^\alpha}\Big\vert_{(x,q)}+\ds\frac{d (q^i\circ
\sigma_{(x,q)})}{ds}\Big\vert_{0}\ds\frac{\partial }{\partial
q^i}\Big\vert_{(x,q)}\,.
\end{array}
$$

Comparing  (\ref{Zloc*}) and the above expression of $Z$ at an arbitrary point $(x,q)\in\rkq$, we have
$$
\ds\frac{d (x^\alpha\circ \sigma_{(x,q)})}{ds}\Big\vert_{0}=0\,,
$$
and then we deduce that  $$(x^\alpha\circ \sigma_{(x,q)})(s)=\makebox{
constant}\,,$$ but as $\sigma_{(x,q)}(0)=(x,q)$ we know that
$(x^\alpha\circ \sigma_{(x,q)})(0)=x^\alpha$ and, thus,
$$(x^\alpha\circ \sigma_{(x,q)})(s)= x^\alpha.$$ Then
$$\sigma_s(x,q)=(x,q^i\circ \sigma_s(x,q)),$$ which implies $\pi_{\rk}\circ\sigma_s=\pi_{\rk}\,.$

Now, from (\ref{Kcos local j1*f}) one obtains
$$\begin{array}{lcl}
&&\pi_{\rk}\circ(\pi_Q)_{1,0}\circ\varphi_s(x)=\pi_{\rk}\circ(\pi_Q)_{1,0}\circ
j^{1\,*}\sigma_s\circ\varphi(x)\\\noalign{\medskip}&=&
\pi_{\rk}\circ(\pi_Q)_{1,0}(x,(\sigma_s)_{Q}^i(q ),p^\alpha_k\ds\frac{\partial
((\sigma_s)_{Q}^{-1})^k}{\partial q^i}\circ (\sigma_s)_{Q})=x
\end{array}$$
that is, $\varphi_s$ is a section of
$\pi_{\rk}\circ(\pi_Q)_{1,0}$.\qed
 \begin{definition} A section $\varphi:\rk\to\rktkqh\in Sec_c (\rk, \rk \times (T^1_k)^*Q)$, is an {\bf extremal} of
 $\mathbb{H}$ if
 $$\ds\frac{d}{ds}\Big\vert_{s=0}\mathbb{H}(j^{1*}\sigma_s\circ\varphi)=0$$
 where $\{\sigma_s\}$ is the one-parameter group of diffeomorphism of some $\pi_{\rk}$-vertical and $\pi_Q$-projectable vector field  $Z\in \vf(\rkq)$.\end{definition}

\begin{remark}
{\rm
 In the above definition it is a necessary that $Z$ is a $\pi_{\rk}$-vertical vector field
  to guarantee that each
 $$\varphi_s:=j^{1*}\sigma_s\circ\varphi$$ is a section of the canonical projection $\pi_{\rk}
\circ(\pi_Q)_{1,0}:\rk\times (T^1_k)^*Q\to\rk$, as we have proved in the above lemma.\rqed}\end{remark}

 The multiple integral variational problem associated to a Hamiltonian $H$ consists to obtain the extremals of the integral action $\mathbb{H}$.

\begin{theorem} Let    be $\varphi \in Sec_c (\rk, \rk \times
(T^1_k)^*Q)$  and $H\colon\rktkq\to\r$  a Hamiltonian function. The following statements are equivalents:
\begin{enumerate}
    \item $\varphi$ is an extremal of the variational problem associated to $H$

    \item $\ds\int_{ \rk}\, \varphi^* \mathcal{L}_{\ds Z^{1\,*}}\,\Theta
    = 0$, for each vector field $Z\in\vf(\rkq)$ which is  $\,\pi_{\rk}$-vertical and $\pi_Q$-projectable.
    \item $\varphi^* \iota_{\ds Z^{1\,*}}\, d\Theta
    = 0$, for each $\,\pi_{\rk}$-vertical and $\pi_Q$-projectable vector field $Z$.
        \item  If $(U;x^\alpha,q^i,p^\alpha_i)$ is a canonical system of coordinates  on $\rk \times (T^1_k)^*Q$, then $\varphi$ satisfies the Hamilton-De Donder-Weyl equations (\ref{HDW field eq}).
\end{enumerate}
\end{theorem}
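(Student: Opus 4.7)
\proof The plan is to establish the chain of equivalences $(1)\Leftrightarrow(2)\Leftrightarrow(3)\Leftrightarrow(4)$, following the pattern used in the $k$-symplectic proof of Section \ref{section HDW eq variational}, but with the added care required by the fact that the admissible variations are generated by $\pi_{\rk}$-vertical, $\pi_Q$-projectable vector fields on $\rk\times Q$ rather than by arbitrary vector fields on the base manifold.

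First I would prove $(1)\Leftrightarrow(2)$ by a direct differentiation of the action under the flow of $Z^{1\,*}$. By the preceding lemma, $\varphi_s:=j^{1\,*}\sigma_s\circ\varphi$ is again a section of $\pi_{\rk}\circ(\pi_Q)_{1,0}$, so $\mathbb{H}(\varphi_s)$ is well defined; writing
\[
\ds\frac{d}{ds}\Big\vert_{s=0}\mathbb{H}(\varphi_s)
=\ds\frac{d}{ds}\Big\vert_{s=0}\int_{\rk}\varphi^{\,*}(j^{1\,*}\sigma_s)^*\Theta
=\int_{\rk}\varphi^{\,*}\!\left(\ds\frac{d}{ds}\Big\vert_{s=0}(j^{1\,*}\sigma_s)^*\Theta\right),
\]
and recognizing the last inner expression as the Lie derivative $\Lie{Z^{1\,*}}\Theta$ (since $\{j^{1\,*}\sigma_s\}$ is the flow of $Z^{1\,*}$ by Definition \ref{lift vf rktkqh}), one obtains the identity $\frac{d}{ds}\big\vert_{s=0}\mathbb{H}(\varphi_s)=\int_{\rk}\varphi^{\,*}\Lie{Z^{1\,*}}\Theta$, which yields the equivalence of $(1)$ and $(2)$.

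Next I would prove $(2)\Leftrightarrow(3)$ using Cartan's magic formula $\Lie{Z^{1\,*}}\Theta=d\,\iota_{Z^{1\,*}}\Theta+\iota_{Z^{1\,*}}d\Theta$. Pulling back by $\varphi$ and integrating, the first term becomes $d(\varphi^{\,*}\iota_{Z^{1\,*}}\Theta)$, and since $\varphi$ has compact support Stokes' theorem forces $\int_{\rk}d(\varphi^{\,*}\iota_{Z^{1\,*}}\Theta)=0$; hence $\int_{\rk}\varphi^{\,*}\Lie{Z^{1\,*}}\Theta=\int_{\rk}\varphi^{\,*}\iota_{Z^{1\,*}}d\Theta$. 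This shows $(3)\Rightarrow(2)$; the converse passes from the integral identity to the pointwise one by invoking Lemma \ref{variations} applied to the arbitrariness of the functions $Z^i(x,q)$ generating $\pi_{\rk}$-vertical, $\pi_Q$-projectable vector fields.

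The main technical work lies in $(3)\Leftrightarrow(4)$, and this is where I expect the principal obstacle to be, because one has to unpack $\iota_{Z^{1\,*}}d\Theta$ in canonical coordinates and carefully track the coefficients of independent components. From (\ref{The0}) one computes $d\Theta=\sum_\alpha d\theta^\alpha\wedge d^{k-1}x_\alpha-dH\wedge d^kx=-\sum_\alpha \Omega^\alpha\wedge d^{k-1}x_\alpha-dH\wedge d^kx$ (using $\Omega^\alpha=-d\theta^\alpha$ with $\theta^\alpha=p_i^\alpha dq^i$). For a $\pi_{\rk}$-vertical, $\pi_Q$-projectable vector field $Z=Z^i(q)\partial/\partial q^i$, formula (\ref{local z1*}) yields $Z^{1\,*}=Z^i\partial/\partial q^i-p_j^\beta(\partial Z^j/\partial q^i)\partial/\partial p_i^\beta$. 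Writing $\varphi(x)=(x,\psi^i(x),\psi_i^\alpha(x))$ and computing $\varphi^{\,*}\iota_{Z^{1\,*}}d\Theta$, the resulting $k$-form on $\rk$ has coefficient
\[
-(Z^i\!\circ\!\varphi)\!\left(\sum_{\alpha=1}^{k}\frac{\partial\psi_i^\alpha}{\partial x^\alpha}+\frac{\partial H}{\partial q^i}\!\circ\!\varphi\right)-\sum_{\alpha=1}^{k}\psi_j^\alpha\,\frac{\partial Z^j}{\partial q^i}\!\circ\!\varphi\left(\frac{\partial\psi^i}{\partial x^\alpha}-\frac{\partial H}{\partial p_i^\alpha}\!\circ\!\varphi\right).
\]
Vanishing of this expression for every choice of $Z^i$ and of $\partial Z^j/\partial q^i$ (which can be prescribed independently at any point) yields, by Lemma \ref{variations} once again, the two groups of Hamilton-De Donder-Weyl equations (\ref{HDW field eq}). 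The converse direction is obtained by reversing the argument. The delicate point will be to verify that the equation automatically forces $\varphi$ to have its base component equal to the identity on $\rk$, which is guaranteed by the assumption that $\varphi$ is a section of $\pi_{\rk}\circ(\pi_Q)_{1,0}$, so that the coordinates $x^\alpha\circ\varphi=x^\alpha$ and no additional constraint on $Z_\alpha$-components is needed.\qed
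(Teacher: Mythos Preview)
Your proposal follows essentially the same route as the paper's own proof: $(1)\Leftrightarrow(2)$ by differentiating the action along the flow $\{j^{1\,*}\sigma_s\}$ and recognizing the Lie derivative, $(2)\Leftrightarrow(3)$ via Cartan's formula and Stokes' theorem, and $(3)\Leftrightarrow(4)$ by the local computation of $\varphi^*\iota_{Z^{1\,*}}d\Theta$ together with the independent variability of $Z^i$ and $\partial Z^j/\partial q^i$ at a point. The only point where the paper does something you do not is in passing from the contracted identity $\sum_\alpha\psi^\alpha_j\big(\partial\psi^i/\partial x^\alpha-\partial H/\partial p^\alpha_i\big)=0$ to the uncontracted second group of HDW equations: the paper inserts an extra argument (``there exists a critical section for each point on $U$'') to strip off the factor $\psi^\alpha_j$, whereas you assert the conclusion directly; both treatments are informal at this step, so your proposal matches the paper's level of rigor.
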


\dem

$(1\Leftrightarrow 2)$ Let $Z\in\mathfrak{X}(\rkq)$ be a   $\pi_{\rk}$-vertical and  $\pi_Q$-projectable vector field.
Denote by $\{\sigma_s\}$ the one-parameter group of diffeomorphism associated to $Z$.

Under these conditions we have

\[\begin{array}{lcl}
\ds\frac{d}{ds}\Big\vert_{s=0}\mathbb{H}(j^{1*}\sigma_s\circ\varphi)&=&\ds\frac{d}{ds}\Big\vert_{s=0}\ds\int_{\rk}
(j^{1*}\sigma_s\circ\varphi)^*\Theta\\\noalign{\medskip}&=&
\ds\lim_{s\to
0}\ds\frac{1}{s}\left(\ds\int_{\rk}(j^{1*}\sigma_s\circ\varphi)^*\Theta
-\ds\int_{\rk}\varphi^*\Theta\right)\\\noalign{\medskip}&=&
\ds\lim_{s\to
0}\ds\frac{1}{s}\left(\ds\int_{\rk}\varphi^*\Big((j^{1*}\sigma_s)^*\Theta\Big)
-\ds\int_{\rk}\varphi^*\Theta\right)
\\\noalign{\medskip}&=&
\ds\lim_{s\to
0}\ds\frac{1}{s}\ds\int_{\rk}\varphi^*[(j^{1*}\sigma_s)^*\Theta-\Theta
]
\\\noalign{\medskip}&=&\ds\int_{\rk}\varphi^*\mathcal{L}_{Z^{1\,*}}\Theta\,.
\end{array}\]

Therefore we obtain the equivalence between the items $(1)$ and $(2)$.

$(2\Leftrightarrow 3)$ Taking into account that
\[\mathcal{L}_{Z^{1\,*}}\Theta=d\iota_{Z^{1\,*}}\Theta+\iota_{Z^{1\,*}}d\Theta
\,,\] one obtains
\[\ds\int_{\rk}\varphi^*\mathcal{L}_{Z^{1\,*}}\Theta=\ds\int_{\rk}\varphi^*
d\iota_{Z^{1\,*}}\Theta+\ds\int_{\rk}\varphi^*\iota_{Z^{1\,*}}d\Theta
\]
and since $\varphi$ has compact support, using Stokes's theorem we deduce
\[
\ds\int_{\rk}\varphi^*d\iota_{Z^{1\,*}}\Theta=\ds\int_{\rk}
d\varphi^*\iota_{Z^{1\,*}}\Theta\,=0\,,\] and  then
\[\ds\int_{\rk}\varphi^*\mathcal{L}_{Z^{1\,*}}\Theta=0\] (for each $Z$
vector field $\pi_{\rk}$-vertical) if and only if,
$$\ds\int_{\rk}\varphi^*\iota_{Z^{1\,*}}d\Theta=0\, .$$
But by the fundamental theorem of the variational calculus, the last equality is  equivalent to  $$\varphi^*\iota_{Z^{1\,*}}d\Theta=0.$$

\bigskip

$(3\Leftrightarrow 4)$ Suppose that $$\varphi:\rk\to\rk\times
(T^1_k)^*Q$$ is a section of $\pi_{\rk}\circ
(\pi_{Q})_{1,0}$ such that
$$\varphi^*\iota_{Z^{1\,*}}d\Theta=0\,,$$ for each
$Z\in\mathfrak{X}(\rkq)$
$\pi_{\rk}$-vertical and $\pi_Q$-projectable vector field.

In canonical coordinates we have $$Z=Z^i\ds\frac{\partial }{\partial
q^i},$$ for some functions $Z^i\in\mathcal{C}^\infty(Q)$
then; from (\ref{local z1*}) one has
\[Z^{1\,*}=Z^i\ds\frac{\partial}{\partial
q^i}-p^\alpha_j \ds\frac{\partial Z^j}{\partial
q^i}\ds\frac{\partial}{\partial p^\alpha_i}\,.\]

Therefore,

\begin{equation}\label{phj}
\begin{array}{lcl} \ds\iota_{\ds
Z^{1\,*}}d\Theta&=&\ds\iota_{\ds
Z^{1\,*}}\left(\ds\sum_{\alpha=1}^kdp^\alpha_i\wedge dq^i\wedge
d^{k-1}x_\alpha-dH\wedge d^kx \right)\\\noalign{\medskip}&=&
-Z^i\left(\ds\sum_{\alpha=1}^kdp^\alpha_i\wedge d^{k-1}x_\alpha+\ds\frac{\partial
H}{\partial q^i}d^kx\right)=\ds\sum_{\alpha=1}^k p^\alpha_j \ds\frac{\partial Z^j}{\partial q^i}\left(dq^i\wedge d^{k-1}x_\alpha -\ds\frac{\partial
H}{\partial p^\alpha_i}d^kx\right)\,.
\end{array}
\end{equation}

 So, if $$\varphi(x)=(x^\alpha,\psi^i(x),\psi^\alpha_i(x))$$ then       $q^i=\psi^i(x)$    and
 $p^\alpha_ i= \psi^\alpha_ i(x)$ along   the image of $\varphi$. Using  (\ref{phj}) and taking into account that $Z^i(x):=(Z^i\circ\varphi)(x)$ one has,
\[\begin{array}{rl}
0=&[\varphi^*\ds\iota_{\ds
Z^{1\,*}}d\Theta](x)=-Z^i(x)\left(\ds\sum_{\alpha=1}^k\ds\frac{\partial
\psi^\alpha_i}{\partial x^\alpha}\Big\vert_{x}+\ds\frac{\partial
H}{\partial q^i}\Big\vert_{\varphi(x)}\right)d^kx
\\\noalign{\medskip}-&
\ds\sum_{\alpha=1}^k \psi^\alpha_j(x) \ds\frac{\partial Z^j}{\partial
q^i}\Big\vert_{x}\left(\ds\frac{\partial \psi^i}{\partial
x^\alpha}\Big\vert_{x}-\ds\frac{\partial H}{\partial
p^\alpha_i}\Big\vert_{\varphi(x)}\right)d^kx
\\\noalign{\medskip}=&\left[-Z^i(x)\left(\ds\sum_{\alpha=1}^k\ds\frac{\partial
\psi^\alpha_i}{\partial x^\alpha}\Big\vert_{x}+\ds\frac{\partial
H}{\partial q^i}\Big\vert_{\varphi(x)}\right)-\ds\sum_{\alpha=1}^k
\psi^\alpha_j(x) \ds\frac{\partial Z^j}{\partial
q^i}\Big\vert_{x}\left(\ds\frac{\partial \psi^i}{\partial
x^\alpha}\Big\vert_{x}-\ds\frac{\partial H}{\partial
p^\alpha_i}\Big\vert_{\varphi(x)}\right)\right]d^kx
\end{array}\]
for any $\pi_{\rk}$-ver\-ti\-cal and
$\pi_Q$-projectable vector field $Z$.

The above identity is equivalent to the following expression:
\[Z^i(x)\left(\ds\sum_{\alpha=1}^k\ds\frac{\partial
\psi^\alpha_i}{\partial x^\alpha}\Big\vert_{x}+\ds\frac{\partial
H}{\partial q^i}\Big\vert_{\varphi(x)}\right)+\ds\sum_{\alpha=1}^k
\psi^\alpha_j(x) \ds\frac{\partial Z^j}{\partial
q^i}\Big\vert_{x}\left(\ds\frac{\partial \psi^i}{\partial
x^\alpha}\Big\vert_{x}-\ds\frac{\partial H}{\partial
p^\alpha_i}\Big\vert_{\varphi(x)}\right)=0\,,\] for each
$Z^i(x^\alpha,q^j)$. Therefore,
\begin{equation}\label{Principio de Hamilton: ecuacion 1}
\begin{array}{lcl}
  Z^i(x)\left(\ds\sum_{\alpha=1}^k\ds\frac{\partial
\psi^\alpha_i}{\partial x^\alpha}\Big\vert_{x}+\ds\frac{\partial
H}{\partial q^i}\Big\vert_{\varphi(x)}\right) &=& 0
\\\noalign{\medskip}
 \ds\sum_{\alpha=1}^k \psi^\alpha_j(x) \ds\frac{\partial Z^j}{\partial q^i}\Big\vert_{x}\left(\ds\frac{\partial
\psi^i}{\partial x^\alpha}\Big\vert_{x}-\ds\frac{\partial
H}{\partial p^\alpha_i}\Big\vert_{\varphi(x)}\right) &=&
0\,.\end{array}
\end{equation}

From the first of the identities of (\ref{Principio de Hamilton:
ecuacion 1}) one obtain the first set of the Hamilton-De Donder-Weyl field equations, that is,
\[\ds\sum_{\alpha=1}^k\ds\frac{\partial
\psi^\alpha_i}{\partial x^\alpha}\Big\vert_{x}=-\ds\frac{\partial
H}{\partial q^i}\Big\vert_{\varphi(x)}\,.\]

Consider now a coordinate neighborhood $(U; x^\alpha, q^i,p^\alpha_i)$. Since there exists a critical section for each point on $U$,  one obtains that
\[ \ds\frac{\partial Z^j}{\partial q^i}\Big\vert_{x}\left(\ds\frac{\partial \psi^i}{\partial
x^\alpha}\Big\vert_{x}-\ds\frac{\partial H}{\partial
p^\alpha_i}\Big\vert_{\varphi(x)}\right)=0\,,
\]
using the second identity of  (\ref{Principio de Hamilton: ecuacion 1}).

Finally, as the $Z^i$ can be arbitrarily choose, then $ \ds\frac{\partial Z^j}{\partial
q^i}\Big\vert_{x}$ can take arbitrary values, and thus we have,
\[\ds\frac{\partial \psi^i}{\partial
x^\alpha}\Big\vert_{x}-\ds\frac{\partial H}{\partial
p^\alpha_i}\Big\vert_{\varphi(x)}=0\,,
\] which is the second set of the Hamilton-De Donder-Weyl equations.

The converse can be proved by reversing the above arguments. \qed

\begin{remark} {\rm A.
Echeverr\'{\i}a-Enr\'{\i}quez \textit{et al.}  have obtained in
\cite{EMR-00} a similar result but considering a variational principle in the multisymplectic setting.}
\rqed\end{remark}

    \section{Hamilton-De Donder-Weyl equations: k-cosymplectic approach}\label{Section HDW eq: k-cosymp approach}

The above variational principle allows us to obtain the HDW equations but as in the case of the Hamiltonian functions independent of the space-time coordinates, there exist  another methods to obtain these equations. In this section we describe as these equations can be obtained  using the $k$-cosymplectic Hamiltonian system  when we consider the  $k$-cosymplectic manifold $M=\rktkqh$ with the canonical $k$-cosymplectic structure.

So, we now consider a $k$-cosymplectic Hamiltonian system $$(\rktkqh, \eta^\alpha,\Omega^\alpha, H),$$ where the Hamiltonian function $H$ is now a function defined on $\rktkqh$. From the Theorem \ref{fhkc} one obtains that if $X=(X_1,\ldots, X_k)\in \vf^k_H(\rktkqh)$ is an integrable $k$-vector field and  $\varphi\colon U\subset\rk\to \tkqh$ is an integral section of $X$, then $\varphi$ is a solution of the following systems of partial differential equations
                    \[
                            \frac{\partial H}{\partial q^i}\Big\vert_{\varphi(x)} \, = \, -
                            \displaystyle\sum_{\beta=1}^k \frac{\partial \psi^\beta_i} {\partial
                            x^\beta}\Big\vert_{x} \,, \quad \frac{\partial H}{\partial
                            p^{\alpha}_i}\Big\vert_{\varphi(x)} \, = \, \frac{\partial \psi^i}{\partial
                            x^\alpha}\Big\vert_{x}\,,
                        \]
                        that is $\varphi$ is a solution of the HDW equations (\ref{HDW field eq}).

Therefore, the Hamilton-De Donder-Weyl equations are a particular case of the system of partial differential equations which one can obtain from the $k$-cosymplectic equation.

\begin{remark}
{\rm
  In the case $k=1$, with $M=\r\times T^*Q$, the equations
(\ref{geonah}) reduces to the equations of the non-autonomous Hamiltonian Mechanics. Therefore the formalism described here includes as a particular case the Hamiltonian formalism for non-autonomous Mechanics.
\rqed}
\end{remark}


\chapter{Hamilton-Jacobi equation}

There are several attempts to extend the Hamilton-Jacobi theory  for classical field theories. In \cite{LMMSV-2010} we have described this theory in the framework of the so-called $k$-symplectic formalism \cite{Awane-1992,Gu-1987,LMS-88,LMS-91}. In this section we consider the $k$-cosymplectic framework. Another attempts  in the framework of the multisymplectic formalism \cite{CIL-1999,{KT-79}} have been discussed in \cite{LMM-2009,{PR-2002-b},{PR-2002}}.

In Classical Field Theory the Hamilton-Jacobi equation is (see \cite{Rund})
\begin{equation}\label{HJCFT}
    \derpar{W^\alpha}{x^\alpha} + H\Big(x^\beta, q^i, \derpar{W^\alpha}{q^i}\Big)=0
\end{equation}
where $W^1,\ldots, W^k\colon \rkq\to \r$, $1\leq\alpha\leq k$.

The classical statement of time-dependent Hamilton-Jacobi equation is the following \cite{AM-1978}:
\begin{theorem}
    Let $H\colon \r\times T^*Q\to \r$ be a Hamiltonian and $T^*Q$ the symplectic manifold with the canonical symplectic structure $\omega=-d\theta$. Let $X_{H_t}$  be a Hamiltonian vector field on $T^*Q$ associated to the Hamiltonian  $H_t\colon T^*Q\to \r,\, H_t(\nu_q)=H(t,\nu_q)$, and $W\colon \r\times Q\to \r$ be a smooth function. The following two conditions are equivalent:
    \begin{enumerate}
        \item for every curve $c$ in $Q$ satisfying
            \[
                c'(t)=\pi_*\Big( X_{H_t}(dW_t(c(t)))\Big)
            \]
            the curve $t\mapsto dW_t(c(t))$ is an integral curve of $X_{H_t}$, where  $W_t\colon Q\to \r,\, W_t(q)=W(t,q)$.
        \item $W$ satisfies the Hamilton-Jacobi equation
            \[
                H(x,q^i,\derpar{W}{q^i}) + \derpar{W}{t}=\makebox{ constant on } T^*Q
            \]
            that is,
            \[
                H_t\circ dW_t+ \derpar{W}{t}= K(t)\,.
            \]
    \end{enumerate}
\end{theorem}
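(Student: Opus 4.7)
The plan is to reduce both conditions to the same local partial differential equation by an explicit computation in canonical coordinates. Work in a chart $(q^i)$ on $Q$ with induced coordinates $(q^i,p_i)$ on $T^*Q$. Since $W_t\colon Q\to\r$ is a function of $q$ alone (for each fixed $t$), the section $q\mapsto dW_t(q)$ of $\pi$ is locally $(q^i,\partial W/\partial q^i(t,q))$. Given a curve $c(t)=(c^i(t))$ in $Q$, set $\gamma(t):=dW_t(c(t))$, so that locally
\[
\gamma(t)=\Bigl(c^i(t),\ \frac{\partial W}{\partial q^i}(t,c(t))\Bigr).
\]
Recall that $X_{H_t}=\frac{\partial H}{\partial p_i}\frac{\partial}{\partial q^i}-\frac{\partial H}{\partial q^i}\frac{\partial}{\partial p_i}$, so the defining condition of~$c$, namely $c'(t)=\pi_*\bigl(X_{H_t}(dW_t(c(t)))\bigr)$, becomes
\[
\frac{dc^i}{dt}\Big\vert_t=\frac{\partial H}{\partial p_i}\Big\vert_{(t,c(t),dW_t(c(t)))}.
\]

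Next, $\gamma$ is an integral curve of $X_{H_t}$ if and only if the above equality for $dc^i/dt$ holds (which is automatic) and, in addition,
\[
\frac{d}{dt}\Big\vert_t\!\!\Bigl(\frac{\partial W}{\partial q^i}(t,c(t))\Bigr)=-\frac{\partial H}{\partial q^i}\Big\vert_{(t,c(t),dW_t(c(t)))}.
\]
Expanding the left-hand side using the chain rule and substituting the expression for $dc^j/dt$ yields the pointwise condition, evaluated along $\gamma$,
\begin{equation}\label{PDEHJ}
\frac{\partial^2 W}{\partial t\,\partial q^i}+\frac{\partial^2 W}{\partial q^j\,\partial q^i}\cdot\frac{\partial H}{\partial p_j}\Big\vert_{dW_t}+\frac{\partial H}{\partial q^i}\Big\vert_{dW_t}=0.
\end{equation}
Thus condition~(1), stated for every such curve~$c$, is equivalent to~\eqref{PDEHJ} holding at every point $(t,q)\in\r\times Q$: indeed, standard ODE theory applied to the time-dependent vector field $q\mapsto\pi_*\bigl(X_{H_t}(dW_t(q))\bigr)$ on $Q$ guarantees that through each $(t_0,q_0)$ there passes a curve~$c$ of the required form, so validity of~\eqref{PDEHJ} on each such curve implies validity everywhere.

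The key observation is that~\eqref{PDEHJ} is precisely the partial derivative with respect to $q^i$ of the Hamilton--Jacobi expression
\[
\Phi(t,q):=H\bigl(t,q,\partial W/\partial q(t,q)\bigr)+\frac{\partial W}{\partial t}(t,q),
\]
since $\partial\Phi/\partial q^i=\partial H/\partial q^i\vert_{dW_t}+(\partial H/\partial p_j)\vert_{dW_t}(\partial^2 W/\partial q^i\partial q^j)+\partial^2 W/\partial q^i\partial t$. Therefore \eqref{PDEHJ} holds identically if and only if $\Phi(t,q)$ is independent of~$q$, i.e.\ $\Phi(t,q)=K(t)$ for some function $K$ of $t$ alone, which is condition~(2). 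Assembling both directions gives the equivalence. The only delicate point, and the one I would emphasise in a complete write-up, is the existence of sufficiently many curves~$c$ to transfer the pointwise condition~(1) into the global identity~\eqref{PDEHJ}; this is handled by the standard existence theorem for the time-dependent ODE on $Q$ defined by the vector field $q\mapsto\pi_*\bigl(X_{H_t}(dW_t(q))\bigr)$, together with the connectedness of~$Q$ to propagate the value $K(t)$ of the (at first only locally constant in~$q$) function $\Phi$.
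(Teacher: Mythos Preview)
The paper does not actually prove this theorem: it is quoted as the classical time-dependent Hamilton--Jacobi statement and attributed to Abraham--Marsden \cite{AM-1978}, with no argument given. Your proof is correct and is exactly the standard one; moreover, it is precisely the strategy the paper itself employs when it proves the $k$-symplectic and $k$-cosymplectic generalisations (Theorems~\ref{hjth} and~\ref{HJtheorem}): work in adapted coordinates, expand the integral-curve condition by the chain rule, and recognise the resulting identity as the $q$-differential of the Hamilton--Jacobi function, so that its vanishing forces that function to depend only on the base parameter. Your remarks on ODE existence (to cover all points) and on connectedness of $Q$ (to pass from locally constant in $q$ to a genuine $K(t)$) are the right technical caveats and match the care taken in the paper's proofs of the generalised theorems.
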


Now we will extend this result to  Classical Field Theories.
\section{The Hamilton-Jacobi equation}\label{HJsection}
In this section we introduce a geometric version of the Hamilton-Jacobi theory based in the $k$-co\-sym\-plec\-tic formalism. In the particular case $k=1$ we recover the above theorem for time-dependent Classical Mechanics.

 For each $x=(x^1,\ldots, x^k)\in \rk$ we consider the following mappings
\[
    \begin{array}{lcl}
        \begin{array}{rccl}
         i_x\colon & Q & \to& \rk\times Q\\\noalign{\medskip}
          & q & \mapsto & (x,q)
        \end{array} & \quad and \quad & \begin{array}{rccl}
         j_x\colon & \tkqh & \to & \rktkqh\\\noalign{\medskip}
          & ({\nu_1}_q,\ldots, {\nu_k}_q) & \mapsto & (x,{\nu_1}_q,\ldots, {\nu_k}_q)
        \end{array}
    \end{array}
\]

Let $\gamma\colon \rkq\to \rktkqh$ be a section of $(\pi_Q)_{1,0}$. Let us observe that given a section $\gamma$ is equivalent to giving a section $\bar{\gamma}\colon \rkq\to \tkqh$ of $\pi^k\colon \tkqh\to Q$ along the map $\pi_Q\colon \rkq\to Q$.

If fact, given $\gamma$ we define $\bar{\gamma}=\bar{\pi}_2\circ \gamma$ where $\bar{\pi}_2$ is the canonical projection $\bar{\pi}_2\colon \rktkqh\to \tkqh$. Conversely, given $\bar{\gamma}$ we define $\gamma$ as the composition $\gamma(x,q)=(j_x\circ \bar{\gamma})(x,q)$. Now, since $\tkqh$ is the Whitney sum of $k$ copies of the cotangent bundle, giving $\gamma$ is equivalent to giving a family $(\bar{\gamma}^1,\ldots, \bar{\gamma}^k) $ of $1$-forms along the map $\pi_Q$, that is $\pi\circ\gamma^\alpha=\pi_Q$.

If we consider local coordinates $(x^\alpha, q^i, p^\alpha_i)$ we have the following local expressions:
\begin{equation}\label{localgamma}
    \begin{array}{l}
    \gamma(x^\alpha, q^i)= (x^\alpha, q^i, \gamma^\beta_j(x^\alpha, q^i))\,,\\\noalign{\medskip}
    \bar{\gamma}(x^\alpha, q^i) = (q^i, \gamma^\beta_j(x^\alpha, q^i))\,,\\\noalign{\medskip}
    \bar{\gamma}^\alpha(x,q)=\gamma^\alpha_j(x,q)dq^j(q)\,.
    \end{array}
\end{equation}


Moreover, along this section we suppose that each $\bar{\gamma}^\alpha$ satisfies that its exterior differential $d\bar{\gamma}^\alpha$ vanishes over two $\pi_{\rk}$-vertical vector fields. In local coordinates, using the local expressions (\ref{localgamma}), this condition implies that
\begin{equation}\label{gammacond}
    \derpar{\gamma^\alpha_i}{q^j}= \derpar{\gamma^\alpha_j}{q^i}\,.
\end{equation}

Now, let $Z=(Z_1,\ldots, Z_k)$ be a $k$-vector field on $\rktkqh$. Using $\gamma$ we can construct a $k$-vector field $Z^\gamma=(Z^\gamma_1,\ldots, Z^\gamma_k)$  on $\rkq$ such that the following diagram is commutative
\[
 \xymatrix{ \rktkqh
\ar@/^1pc/[dd]^{(\pi_Q)_{1,0}} \ar[rrr]^{Z}&   & &T_k^1(\rktkqh)\ar[dd]^{T^1_k(\pi_Q)_{1,0}}\\
  &  & &\\
 \rkq\ar@/^1pc/[uu]^{\gamma}\ar[rrr]^{Z^{\gamma}}&  & & T_k^1(\rkq) }
\]
that is,
$$Z^\gamma:= T^1_k(\pi_Q)_{1,0}\circ Z\circ \gamma\;.$$

Let us recall that for an arbitrary differentiable map $f:M\to N$, the induced map $T^1_kf:T^1_kM\to T^1_kN$ of $f$ is defined by (\ref{tkq: prolongation expr}).

Let us observe that if $Z$ is integrable then $Z^\gamma$ is also integrable.

In local coordinates, if each $Z_\alpha$ is locally given by
\[
    Z_\alpha= (Z_\alpha)_\beta\derpar{}{x^\beta} + Z^i_\alpha\derpar{}{q^i} + (Z_\alpha)^\beta_i\derpar{}{p^\beta_i}\,
\]
then $Z^\gamma_\alpha$ has the following local expression:
\begin{equation}\label{local z gamma kcos}
    Z^\gamma_\alpha= \big((Z_\alpha)_\beta\circ \gamma\big)\derpar{}{x^\beta} + (Z^i_\alpha\circ \gamma)\derpar{}{q^i}\,.
\end{equation}

In particular, if we consider the $k$-vector field $R=(R_1,\ldots, R_k)$ given by the Reeb vector fields, we obtain, by a similar procedure, a $k$-vector field $(R_1^\gamma,\ldots, R_k^\gamma)$ on $\rk\times Q$. In local coordinates,  since $$R_\alpha=\frac{\partial}{\partial x^\alpha}$$ we have
\[
    R_\alpha^\gamma =\derpar{}{x^\alpha}\,.
\]

Next, we consider a Hamiltonian function $H\colon\rktkq \to \mathbb{R}$, and the corresponding Hamiltonian system on $\rktkq$. Notice that if $Z$ satisfies the Hamilton-De Donder-Weyl equations (\ref{HDW field eq}), then we have
\[
    (Z_\alpha)_\beta=\delta_{\alpha\beta}\,,
\]
for all $\alpha,\beta$.
\begin{theorem}[\bf Hamilton-Jacobi theorem]\label{HJtheorem}
 Let $Z\in \vf^k_H(\rktkqh)$ be a $k$-vector field solution of the $k$-cosymplectic Hamiltonian equation (\ref{geonah}) and $\gamma\colon \rkq\to \rktkqh$ be a  section of $(\pi_Q)_{1,0}$ satisfying the property described above. If $Z$ is integrable then the following statements are equivalent:
    \begin{enumerate}
        \item If  a section $\psi\colon U\subset \rk\to \rkq$ of $\pi_{\rk}\colon \rkq\to \rk$ is an integral section of $Z^\gamma$, then $\gamma\circ \psi$ is a solution of the Hamilton-De Donder-Weyl equations (\ref{HDW field eq});
        \item $(\pi_Q)^*[d(H\circ \gamma\circ i_x)] + \sum_{\alpha} \iota_{R^\gamma_\alpha}d\bar{\gamma}^\alpha=0$ for all $x\in \rk$.
    \end{enumerate}
\end{theorem}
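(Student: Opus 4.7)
The plan is to mimic the coordinate computation used for the $k$-symplectic Hamilton-Jacobi Theorem \ref{hjth}, with the extra bookkeeping needed because now $\gamma$ and $H$ depend on the base coordinates $x^\alpha$. First I would unpack all objects in adapted coordinates: write $\gamma(x,q)=(x^\alpha,q^i,\gamma^\alpha_i(x,q))$, so that the hypothesis that $d\bar\gamma^\alpha$ kills pairs of $\pi_{\rk}$-vertical vectors becomes the symmetry condition (\ref{gammacond}), namely $\partial\gamma^\alpha_i/\partial q^j=\partial\gamma^\alpha_j/\partial q^i$. Then I would use that $Z\in\vf^k_H(\rktkqh)$ to read off (via (\ref{k-cosymp condvf})) the components of each $Z_\alpha$; together with formula (\ref{local z gamma kcos}) this yields the very clean expression
\[
Z^\gamma_\alpha=\derpar{}{x^\alpha}+\Big(\derpar{H}{p^\alpha_i}\circ\gamma\Big)\derpar{}{q^i}.
\]

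Next, I would compute the two sides of condition $(2)$ in coordinates. A direct calculation using $\bar\gamma^\alpha=\gamma^\alpha_i\,dq^i$, together with (\ref{gammacond}) and $R^\gamma_\alpha=\partial/\partial x^\alpha$, gives
\[
\sum_{\alpha}\iota_{R^\gamma_\alpha}d\bar\gamma^\alpha=\sum_{\alpha}\derpar{\gamma^\alpha_i}{x^\alpha}\,dq^i,
\]
while
\[
(\pi_Q)^*\bigl[d(H\circ\gamma\circ i_x)\bigr]=\Big(\derpar{H}{q^i}\circ\gamma+\sum_{\alpha}\Big(\derpar{H}{p^\alpha_j}\circ\gamma\Big)\derpar{\gamma^\alpha_j}{q^i}\Big)\,dq^i.
\]
Thus condition $(2)$ is locally equivalent, after one more use of (\ref{gammacond}) to swap the lower indices in $\partial\gamma^\alpha_j/\partial q^i$, to the scalar identity
\begin{equation}\label{HJloc}
\derpar{H}{q^i}\Big|_{\gamma(x,q)}+\sum_{\alpha}\Big(\derpar{H}{p^\alpha_j}\Big|_{\gamma(x,q)}\derpar{\gamma^\alpha_i}{q^j}\Big|_{(x,q)}+\derpar{\gamma^\alpha_i}{x^\alpha}\Big|_{(x,q)}\Big)=0.
\end{equation}

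For the implication $(2)\Rightarrow(1)$, I take an integral section $\psi(x)=(x,\psi^i(x))$ of $Z^\gamma$; the defining identity $\psi_*(\partial/\partial x^\alpha)=Z^\gamma_\alpha\circ\psi$ yields directly $\partial\psi^i/\partial x^\alpha=(\partial H/\partial p^\alpha_i)\circ\gamma\circ\psi$, which is the second half of (\ref{HDW field eq}). Substituting this in (\ref{HJloc}) evaluated at $(x,\psi(x))$ and recognizing the last two parenthetical terms as the total $x^\alpha$-derivative of $\gamma^\alpha_i(x,\psi(x))$ produces the first half of (\ref{HDW field eq}), so $\gamma\circ\psi$ solves the HDW system. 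The implication $(1)\Rightarrow(2)$ is obtained by reversing the argument: since $Z^\gamma$ is integrable (as $Z$ is), through every point of $\rkq$ passes an integral section $\psi$; on its image the calculation above shows that (\ref{HDW field eq}) forces (\ref{HJloc}) to hold, and varying the base point covers $\rkq$, giving $(2)$.

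The main obstacle I anticipate is not the computation itself, which is mechanical, but rather stating $(2)$ intrinsically and making sure the identification of $d(H\circ\gamma\circ i_x)$ (a $1$-form on $Q$) with a $1$-form on $\rkq$ via $\pi_Q^*$ is properly justified, and that the coordinate symmetries coming from the hypothesis on $d\bar\gamma^\alpha$ are applied exactly where needed; once these are handled, the reduction to (\ref{HJloc}) is transparent, and the theorem follows from the structure of the canonical $k$-cosymplectic Hamilton equations (\ref{k-cosymp condvf}).
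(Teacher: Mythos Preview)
Your proposal is correct and follows essentially the same coordinate-based approach as the paper: both unpack $\gamma$, $Z^\gamma_\alpha$, and condition $(2)$ in adapted coordinates, reduce $(2)$ to the scalar identity you label (\ref{HJloc}) (the paper's (\ref{HJ dif aux 2})--(\ref{HJ dif aux 3})), and then use the chain rule together with the symmetry (\ref{gammacond}) to pass between this identity and the HDW equations along integral sections of $Z^\gamma$, invoking integrability of $Z^\gamma$ for the direction $(1)\Rightarrow(2)$. The only cosmetic difference is that the paper proves $(1)\Rightarrow(2)$ first and $(2)\Rightarrow(1)$ second, whereas you sketch them in the opposite order.
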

\proof

Let us suppose that a section $\psi\colon U\subset\rk\to \rk\times Q$ is an integral section of $Z^\gamma$. In local coordinates that means that if $\psi(x)=(x^\alpha,\psi^i(x))$, then
\[
    [(Z^\gamma_\alpha)^\beta\circ \gamma](\psi(x))=\delta_\alpha\beta,\quad (Z^i_\alpha\circ \gamma)(\psi(x))=\derpar{\psi^i}{x^\alpha}\,.
\]

Now, by hypothesis, $\gamma\circ\psi\colon U\subset\rk\to \rktkqh$ is a solution of the Hamilton-De Donder-Weyl equation for $H$.  In local coordinates, if $\psi(x)=(x,\psi^i(x))$, then
 $\gamma\circ \psi(x)=(x,\psi^i(x),\gamma^\alpha_i(\psi(x)))$ and, since it is a solution of the Hamilton-De Donder-Weyl equations for $H$, we have
\begin{equation}\label{HJkcos aux1}
\derpar{\psi^i}{x^\alpha}\Big\vert_{x} = \derpar{H}{p^\alpha_i}\Big\vert_{\gamma(\psi(x))} \makebox{ and } \ds\sum_{\alpha=1}^k\derpar{(\gamma^\alpha_i\circ \psi)}{x^\alpha}\Big\vert_{x} = -\derpar{H}{q^i}\Big\vert_{\gamma(\psi(x))}\,.
\end{equation}

Next, if we compute the differential of the function $H\circ \gamma\circ i_x\colon Q\to \r$, we obtain that:
\begin{equation}\label{HJ dif aux 2}
\begin{array}{ll}
    &(\pi_Q)^*[d(H\circ \gamma\circ i_x)] + \sum_\alpha\iota_{R^\gamma_\alpha}d\bar{\gamma}^\alpha\\\noalign{\medskip}
    =& \left(\derpar{H}{q^i}\circ \gamma\circ i_x + \left(\derpar{H}{p^\alpha_j}\circ \gamma\circ i_x\right)\left(\derpar{\gamma^\alpha_j}{q^i}\circ i_x\right) + \left(\derpar{\gamma^\alpha_i}{x^\alpha}\circ i_x\right)  \right)dq^i\,.
    \end{array}
\end{equation}

Therefore, from (\ref{gammacond}), (\ref{HJkcos aux1}) and (\ref{HJ dif aux 2}) and taking into account that one can write $\psi(x)= (i_x\circ \pi_Q\circ \psi)(x)$, where $\pi_Q\colon \rkq \to Q$ is the canonical projection, we obtain
\[
    \begin{array}{rl}
       & ((\pi_Q)^*[d(H\circ \gamma\circ i_x)] + \sum_\alpha\iota_{R^\gamma_\alpha}d\bar{\gamma}^\alpha )(\pi_Q\circ\psi(x)) \\\noalign{\medskip}=& \left(\derpar{H}{q^i}\Big\vert_{\gamma(\psi(x))} + \derpar{H}{p^\alpha_j}\Big\vert_{\gamma(\psi(x))}\derpar{\gamma^\alpha_j}{q^i}\Big\vert_{\psi(x)} + \derpar{\gamma^\alpha_i}{x^\alpha}\Big\vert_{\psi(x)}
        \right)dq^i (\pi_Q\circ \psi(x))\\\noalign{\medskip}
        =& \left(- \ds\sum_{\alpha=1}^k\derpar{(\gamma^\alpha_i\circ \psi)}{x^\alpha}\Big\vert_{x} + \derpar{\psi^j}{x^\alpha}\Big\vert_{x} \derpar{\gamma^\alpha_j}{q^i}\Big\vert_{\psi(x)}  + \derpar{\gamma^\alpha_i}{x^\alpha}\Big\vert_{\psi(x)} \right) dq^i (\pi_Q\circ \psi(x))\\\noalign{\medskip}
         =& \left(- \ds\sum_{\alpha=1}^k\derpar{(\gamma^\alpha_i\circ \psi)}{x^\alpha}\Big\vert_{x} + \derpar{\psi^j}{x^\alpha}\Big\vert_{x} \derpar{\gamma^\alpha_i}{q^j}\Big\vert_{\psi(x)}    + \derpar{\gamma^\alpha_i}{x^\alpha}\Big\vert_{\psi(x)}\right) dq^i (\pi_Q\circ \psi(x))
         \\\noalign{\medskip}
         =&0\,.
    \end{array}
\]

As we have mentioned above, since $Z$ is integrable, the $k$-vector field $Z^\gamma$ is also integrable, and then for each point $(x,q)\in \rkq$ we have an integral section $\psi\colon U\subset \rk\to \rk\times Q$ of $Z^\gamma$ passing trough that point. Therefore, for any $x\in \rk$, we get
\[
    (\pi_Q)^*[d(H\circ \gamma\circ i_x)]+ \sum_\alpha\iota_{R^\gamma_\alpha}d\bar{\gamma}^\alpha=0\,.
\]

Conversely, let us suppose that $(\pi_Q)^*[d(H\circ \gamma\circ i_x)]+ \sum_\alpha\iota_{R^\gamma_\alpha}d\bar{\gamma}^\alpha=0$ and take $\psi$ an integral section of $Z^\gamma$. We now will prove that $\gamma\circ \psi$ is a solution of the Hamilton-De Donder-Weyl field equations for $H$.

Since $(\pi_Q)^*[d(H\circ \gamma\circ i_x)]+ \sum_\alpha\iota_{R^\gamma_\alpha}d\bar{\gamma}^\alpha=0$ then from (\ref{HJ dif aux 2}) we obtain
\begin{equation}\label{HJ dif aux 3}
    \derpar{H}{q^i}\circ \gamma\circ i_x + \left(\derpar{H}{p^\alpha_j}\circ \gamma\circ i_x\right)\left(\derpar{\gamma^\alpha_j}{q^i}\circ i_x\right) + \left(\derpar{\gamma^\alpha_i}{x^\alpha}\circ i_x\right)  =0\,.
\end{equation}

From (\ref{k-cosymp condvf}) and (\ref{local z gamma kcos}), we know that
\begin{equation}\label{zgammaH}
    Z^\gamma_\alpha= \derpar{}{x^\alpha} + \Big(\derpar{H}{p^\alpha_i}\circ \gamma\Big)\derpar{}{q^i}\,;
\end{equation}
and then, since $\psi(x,q)=(x,\psi^i(x,q))$ is an integral section of $Z^\gamma$, we obtain
\begin{equation}\label{HJ kcos aux 4}
    \derpar{\psi^i}{x^\alpha}= \derpar{H}{p^\alpha_i}\circ \gamma\circ \psi\,.
\end{equation}

On the other hand, from (\ref{gammacond}), (\ref{HJ dif aux 3}) and (\ref{HJ kcos aux 4}) we obtain
\[
    \begin{array}{ll}
       & \ds\sum_{\alpha=1}^k\derpar{(\gamma^\alpha_i\circ \psi)}{x^\alpha}\Big\vert_{x}= \ds\sum_{\alpha=1}^k\left(\derpar{\gamma^\alpha_i}{x^\alpha}\Big\vert_{\psi(x)} +\derpar{\gamma^\alpha_i}{ q^j}\Big\vert_{\psi(x)}\derpar{\psi^j}{x^\alpha}\Big\vert_{x}\right) = \\\noalign{\medskip} =& \ds\sum_{\alpha=1}^k\left(\derpar{\gamma^\alpha_i}{x^\alpha}\Big\vert_{\psi(x)} +\derpar{\gamma^\alpha_i}{ q^j}\Big\vert_{\psi(x)}\derpar{H}{p^\alpha_j}\Big\vert_{\gamma(\psi(x))}\right)
       \\\noalign{\medskip} =&  \ds\sum_{\alpha=1}^k\left(\derpar{\gamma^\alpha_i}{x^\alpha}\Big\vert_{\psi(x)} +\derpar{\gamma^\alpha_j}{ q^i}\Big\vert_{\psi(x)}\derpar{H}{p^\alpha_j}\Big\vert_{\gamma(\psi(x))}\right) = -\derpar{H}{q^i}\Big\vert_{\gamma(\psi(x))}
    \end{array}
\]
and thus we have proved that $\gamma\circ \psi$ is a solution of the Hamilton-de Donder-Weyl equations.
\qed

\begin{theorem}
    Let $Z\in \vf^k_H(\rktkqh)$ be a $k$-vector field solution of the $k$-cosymplectic Hamiltonian equation (\ref{geonah}) and $\gamma\colon \rkq\to \rktkqh$ be a  section of $(\pi_Q)_{1,0}$ satisfying the same conditions of the above theorem. Then, the following statements are equivalent:
    \begin{enumerate}
        \item $Z\vert_{Im\,\gamma} - T^1_k\gamma(Z^\gamma)\in \ker \Omega^\sharp\cap \ker \eta^\sharp$, being $\Omega^\sharp$ and $\eta^\sharp$ the vector bundle morphism defined in Section \ref{Section HDW eq: k-cosymp approach}.
        \item $(\pi_Q)^*[d(H\circ \gamma\circ i_x)] + \sum_\alpha\iota_{R^\gamma_\alpha}d\bar{\gamma}^\alpha=0$.
    \end{enumerate}
\end{theorem}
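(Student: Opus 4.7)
The plan is to mirror the argument used in Theorem \ref{hamjacobi1} for the $k$-symplectic case, working in Darboux coordinates $(x^\alpha,q^i,p^\alpha_i)$ and reducing both conditions to the same local identity. The idea is that condition (1) is essentially a vertical compatibility condition of $Z$ with the section $\gamma$, while condition (2) is the coordinate expression of the Hamilton--Jacobi equation; matching the two coefficient by coefficient will close the argument.

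First I would compute $T^1_k\gamma(Z^\gamma)$ explicitly. Since $Z\in\vf^k_H(\rktkqh)$, the local characterization (\ref{k-cosymp condvf}) gives $(Z_\alpha)_\beta=\delta_{\alpha\beta}$ and $Z_\alpha^i=\partial H/\partial p_i^\alpha$, so by (\ref{local z gamma kcos}) one obtains $Z^\gamma_\alpha=\partial/\partial x^\alpha+(\partial H/\partial p_i^\alpha\circ\gamma)\,\partial/\partial q^i$. Applying the tangent map of $\gamma(x,q)=(x^\alpha,q^i,\gamma^\beta_j(x,q))$ to each $Z^\gamma_\alpha$, one writes $T^1_k\gamma(Z^\gamma)_\alpha=Z^\gamma_\alpha+\big(\partial\gamma^\beta_j/\partial x^\alpha+(\partial H/\partial p_i^\alpha\circ\gamma)\,\partial\gamma^\beta_j/\partial q^i\big)\partial/\partial p^\beta_j$. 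Subtracting from $Z_\alpha|_{\mathrm{Im}\,\gamma}$ produces a purely $p$-vertical $k$-vector field, so the conditions $(Y_\beta)_\alpha=0$ and $Y^i_\beta=0$ from (\ref{kerkco}) are automatic, and the only remaining condition is $\sum_\alpha(Y_\alpha)^\alpha_i=0$.

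Next I would translate this last condition into the Hamilton--Jacobi form. Substituting $\sum_\alpha (Z_\alpha)^\alpha_i\circ\gamma=-\partial H/\partial q^i\circ\gamma$ and using the symmetry (\ref{gammacond}) $\partial\gamma^\alpha_i/\partial q^j=\partial\gamma^\alpha_j/\partial q^i$ that comes from the vertical-closedness of each $\bar\gamma^\alpha$, the condition becomes
\[
\frac{\partial H}{\partial q^i}\circ\gamma+\sum_{\alpha=1}^k\frac{\partial H}{\partial p^\alpha_j}\circ\gamma\cdot\frac{\partial\gamma^\alpha_j}{\partial q^i}+\sum_{\alpha=1}^k\frac{\partial\gamma^\alpha_i}{\partial x^\alpha}=0.
\]
In parallel I would compute the right-hand side of (2) by observing that $(H\circ\gamma\circ i_x)(q)=H(x,q,\gamma^\alpha_j(x,q))$ so $d(H\circ\gamma\circ i_x)=\big(\partial H/\partial q^i\circ\gamma\circ i_x+\partial H/\partial p^\alpha_j\circ\gamma\circ i_x\cdot\partial\gamma^\alpha_j/\partial q^i\big)dq^i$, and that, since $\bar\gamma^\alpha=\gamma^\alpha_j\,dq^j$ and $R^\gamma_\alpha=\partial/\partial x^\alpha$, one has $\iota_{R^\gamma_\alpha}d\bar\gamma^\alpha=(\partial\gamma^\alpha_j/\partial x^\alpha)\,dq^j$. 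Pulling back by $\pi_Q$ and summing over $\alpha$ reproduces exactly the same expression as above (up to sign), so (1) and (2) are coefficient-by-coefficient equivalent.

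The proof is essentially a bookkeeping argument, so the main obstacle is not analytic but notational: one has to keep straight the different roles of $(\pi_Q)_{1,0}$, $\pi_Q$, $i_x$ and $\bar\pi_2$, and to make transparent the fact that the characterisation of $\ker\Omega^\sharp\cap\ker\eta^\sharp$ in (\ref{kerkco}) reduces the condition (1) to a single scalar equation in each index $i$, identical to the Hamilton--Jacobi equation encoded by (2). Once both sides are reduced to the coefficients of $dq^i$ above, the equivalence is immediate, and the converse direction is obtained simply by reversing the arguments.
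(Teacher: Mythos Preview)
Your proposal is correct and follows essentially the same approach as the paper: both compute $Z_\alpha|_{\mathrm{Im}\,\gamma}-T\gamma(Z^\gamma_\alpha)$ in local coordinates, observe it is purely $p$-vertical so that the kernel condition (\ref{kerkco}) reduces to the single trace identity $\sum_\alpha(Y_\alpha)^\alpha_i=0$, and then use (\ref{k-cosymp condvf}) together with the symmetry (\ref{gammacond}) to identify this identity with the $dq^i$-coefficient of the Hamilton--Jacobi expression in (2). The only cosmetic difference is that the paper refers back to the computation (\ref{HJ dif aux 2}) from the preceding theorem rather than re-deriving $d(H\circ\gamma\circ i_x)$ and $\iota_{R^\gamma_\alpha}d\bar\gamma^\alpha$ as you do.
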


\proof A direct computation shows that $Z_\alpha\vert_{Im\,\gamma} - T\gamma(Z_\alpha^\gamma)$ has the following local expression
\[
    \left((Z_\alpha)^\beta_j\circ \gamma - \derpar{\gamma^\beta_j}{x^\alpha} - (Z^i_\alpha\circ \gamma)\derpar{\gamma^\beta_j}{q^i} \right) \derpar{}{p^\beta_j}\circ \gamma\,.
\]

Thus from (\ref{kerkco}) we know that $Z\vert_{Im\,\gamma} - T^1_k\gamma(Z^\gamma)\in \ker \Omega^\sharp\cap \ker \eta^\sharp$ if and only if
\begin{equation}\label{condker}
 \ds\sum_{\alpha=1}^k\left((Z_\alpha)^\alpha_j\circ \gamma - \derpar{\gamma^\alpha_j}{x^\alpha} - (Z^i_\alpha\circ \gamma)\derpar{\gamma^\alpha_j}{q^i} \right)=0\,.
\end{equation}

Now we are ready to prove the result.

Assume that $(i)$ holds, then from  (\ref{gammacond}) and (\ref{condker}) we obtain
\[
\begin{array}{ll}
    0=& \ds\sum_{\alpha=1}^k\left((Z_\alpha)^\alpha_j\circ \gamma - \derpar{\gamma^\alpha_j}{x^\alpha} - (Z^i_\alpha\circ \gamma)\derpar{\gamma^\alpha_j}{q^i} \right)\\\noalign{\medskip}
    =& -\left(\left(\derpar{H}{q^j}\circ \gamma\right) + \derpar{\gamma^\alpha_j}{x^\alpha} + \left(\derpar{H}{p^\alpha_i}\circ \gamma\right)\derpar{\gamma^\alpha_j}{q^i} \right)\\\noalign{\medskip}
    =& -\left(\left(\derpar{H}{q^j}\circ \gamma\right) + \derpar{\gamma^\alpha_j}{x^\alpha} + \left(\derpar{H}{p^\alpha_i}\circ \gamma\right)\derpar{\gamma^\alpha_i}{q^j} \right)\,.
\end{array}
\]
Therefore $(\pi_Q)^*[d(H\circ \gamma\circ i_x)] + \sum_\alpha\iota_{R^\gamma_\alpha}d\bar{\gamma}^\alpha=0$ (see (\ref{HJ dif aux 2})).

The converse is proved in a similar way by reversing the arguments.
\qed

\begin{corol}
    Let $Z\in \vf^k_H(\rktkqh)$ be a  solution of (\ref{geonah}) and $\gamma\colon \rkq\to \rktkqh$ be a  section of $(\pi_Q)_{1,0}$ as in the above theorem. If $Z$ is integrable then the following statements are equivalent:
    \begin{enumerate}
        \item $Z\vert_{Im\gamma} - T^1_k\gamma(Z^\gamma)\in \ker \Omega^\sharp\cap \ker \eta^\sharp$;
        \item $(\pi_Q)^*[d(H\circ \gamma\circ i_x)] + \sum_\alpha\iota_{R^\gamma_\alpha}d\bar{\gamma}^\alpha=0$;
        \item If  a section $\psi\colon U\subset \rk\to \rkq$ of $\pi_{\rk}\colon \rkq\to \rk$ is an integral section of $Z^\gamma$ then $\gamma\circ \psi$ is a solution of the Hamilton-De Donder-Weyl equations (\ref{HDW field eq}).
    \end{enumerate}
\end{corol}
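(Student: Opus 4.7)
The plan is very short because the corollary is essentially a packaging of the two immediately preceding theorems. First I would observe that the equivalence $(1)\Leftrightarrow(2)$ has already been established in the second theorem without any integrability hypothesis on $Z$: that proof rests only on the local characterization of $\ker\Omega^\sharp\cap\ker\eta^\sharp$ (equations (\ref{kerkco})) and on the local formula (\ref{HJ dif aux 2}) for $(\pi_Q)^*[d(H\circ\gamma\circ i_x)]+\sum_\alpha\iota_{R^\gamma_\alpha}d\bar\gamma^\alpha$, together with the closedness condition (\ref{gammacond}) on $\gamma$. Thus $(1)\Leftrightarrow(2)$ is free.

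Next I would invoke Theorem \ref{HJtheorem} to obtain $(2)\Leftrightarrow(3)$. This is precisely where the integrability assumption on $Z$ enters: integrability of $Z$ implies integrability of the induced $k$-vector field $Z^\gamma$ on $\rkq$ (because $Z^\gamma=T^1_k(\pi_Q)_{1,0}\circ Z\circ\gamma$ and $T^1_k(\pi_Q)_{1,0}$ is fibrewise linear, so brackets project), and one needs integral sections $\psi$ of $Z^\gamma$ through every point in order to translate the pointwise identity (2) on $Q$ into the statement (3) about solutions of the Hamilton--De Donder--Weyl equations. In the reverse direction (from (3) to (2)) the existence of such a $\psi$ through every point is again the key ingredient, and this is exactly what integrability supplies.

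Chaining the two equivalences gives $(1)\Leftrightarrow(2)\Leftrightarrow(3)$, which is the statement of the corollary. The only point that requires a moment's thought is that the hypotheses on $\gamma$ (namely that each $\bar\gamma^\alpha$ is closed on $\pi_{\rk}$-vertical pairs, i.e.\ (\ref{gammacond})) and on $Z$ (being a solution of (\ref{geonah})) are common to both theorems, so no additional compatibility has to be checked. I expect no real obstacle: the work has already been done in Theorem \ref{HJtheorem} and in the second theorem, and the corollary is simply their conjunction under the integrability hypothesis.
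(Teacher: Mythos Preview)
Your proposal is correct and matches the paper's approach exactly: the paper provides no explicit proof for this corollary, treating it as the immediate conjunction of Theorem~\ref{HJtheorem} (giving $(2)\Leftrightarrow(3)$ under the integrability hypothesis) and the subsequent theorem (giving $(1)\Leftrightarrow(2)$ without it). Your observation that integrability is needed only for the step $(2)\Leftrightarrow(3)$, via the existence of integral sections of $Z^\gamma$ through every point, is precisely the content of the paper's argument.
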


Let us observe that there exist $k$ local functions $W^\alpha\in \mathcal{C}^\infty(U) $ such that $\bar{\gamma}^\alpha=dW^\alpha_x$ where the function $W^\alpha_x$ id defined by $ W^\alpha_x(q)=W^\alpha(x,q)$. Thus $\gamma^\alpha_i=\nicefrac{\partial W^\alpha}{\partial q^i}$ (see \cite{KN}). Therefore, the condition
\[
(\pi_Q)^*[d(H\circ \gamma\circ i_x)]+ \sum_\alpha\iota_{R^\gamma_\alpha}d\bar{\gamma}^\alpha=0
\]
can be equivalently written as
\[
\derpar{}{q^i}\left(\derpar{W^\alpha}{x^\alpha} + H(x^\beta,q^i,\derpar{W^\alpha}{q^i})\right)=0.
\]

The above expressions mean that
\[
    \derpar{W^\alpha}{x^\alpha} + H(x^\beta,q^i,\derpar{W^\alpha}{q^i})=K(x^\beta)
\]
so that if we put $\tilde{H}=H-K$ we deduce the standard form of the Hamilton-Jacobi equation (since $H$ and $\tilde{H}$ give the same Hamilton-De Donder Weyl equations).
\begin{equation}\label{HJeq W}
    \derpar{W^\alpha}{x^\alpha} + \tilde{H}(x^\beta,q^i,\derpar{W^\alpha}{q^i})=0\,.
\end{equation}

Therefore the equation
\begin{equation}\label{HJkcoseq}
(\pi_Q)^*[d(H\circ \gamma\circ i_x)]+ \sum_\alpha\iota_{R^\gamma_\alpha}d\bar{\gamma}^\alpha=0
\end{equation}
can be considered as a geometric version of the \emph{Hamiton-Jacobi equation for $k$-cosymplectic field theories}.
\index{Hamilton-Jacobi equation}

\section{Examples}\label{example}

In this section we shall apply our method to a particular example in classical field theories.

\index{Scalar field}
We consider again the equation of a scalar field $\phi$ (for instance the gravitational field) which acts on the $4$-dimensional space-time. Let us recall that its equation is given by (\ref{scalar}).

We consider the Lagrangian
\begin{equation}\label{scalar lagrangian}
L(x^1,x^2,x^3,x^4, q, v_1,v_2,v_3, v_4)= \sqrt{-g}\Big(F(q)- \displaystyle\frac{1}{2}m^2q^2\Big)+\displaystyle\frac{1}{2}g^{\alpha\beta}v_\alpha v_\beta\,,
\end{equation}
where $q$ denotes the scalar field $\phi$ and $v_\alpha$ the partial derivative $\nicefrac{\partial \phi}{\partial x^\alpha}$. Then the equation (\ref{scalar}) is just the Euler-Lagrange equation associated to $L$.

               Consider the Hamiltonian function $H\in\mathcal{C}^\infty(\mathbb{R}^4\times (T^1_4)^*\mathbb{R} )$ given by
                    \[
                        H(x^1,x^2,x^3,x^4,q, p^1, p^2, p^3, p^4)=  \displaystyle\frac{1}{2\sqrt{-g}}g_{\alpha\beta}p^\alpha p^\beta-\sqrt{-g}\left( F(q)-\displaystyle\frac{1}{2}m^2q^2\right)\,,
                    \]
               where $(x^1,x^2,x^3,x^4)$ are the coordinates on $\mathbb{R}^4$, $q$ denotes the scalar field $\phi$ and $(x^1,x^2,x^3,x^4,q,p^1,p^2,p^3,$ $p^4)$ the canonical coordinates on $\mathbb{R}^4\times (T^1_4)^*\mathbb{R} $. Let us recall that this Hamiltonian function can be obtained from the Lagrangian $L$ just using the Legendre transformation.

Then
                \begin{equation}\label{partial Ham scalar k-co}
                    \derpar{H}{q}=-\sqrt{-g}\Big(F'(q)-m^2q\Big),\quad \derpar{H}{p^\alpha}=\frac{1}{\sqrt{-g}}g_{\alpha\beta}p^\beta\,.
                \end{equation}

The Hamilton-Jacobi equation becomes
\begin{equation}\label{campo escalar HJ}
    -\sqrt{-g}\Big(F'(q)-m^2q\Big)+ \displaystyle\frac{1}{\sqrt{-g}}g_{\alpha\beta}\gamma^\beta\derpar{\gamma^\alpha}{q} + \displaystyle\derpar{\gamma^\alpha}{x^\alpha}=0\,.
\end{equation}

Since our main goal is to show how the method developed in this chapter works, we will consider, for simplicity, the following particular case:
$$
F(q)=\frac{1}{2}m^2q^2,
$$
being $(g_{\alpha\beta})$ the Minkowski metric on $\mathbb{R}^4$, i.e. $(g_{\alpha\beta})=diag (-1, 1,1,1)$.

Let $\gamma\colon \mathbb{R}^4\to \mathbb{R}^4\times (T^1_k)^*\mathbb{R} $ be the section of $(\pi_\mathbb{R} )_{1,0}$ defined by the family of four $1$-forms along of $\pi_\mathbb{R} \colon\mathbb{R}^4\times \mathbb{R} \to \mathbb{R} $
\[
    \bar{\gamma}^\alpha= \frac{1}{2}C_\alpha q^2dq\,
\]
with $1\leq \alpha\leq 4$ and where $C_\alpha$ are four constants such that $C_1^2=C_2^2+C_3^3+C_4^2$.  This section $\gamma$ satisfies the Hamilton-Jacobi equation (\ref{campo escalar HJ}) that in this particular case is given by
\[
 -\frac{1}{2} C_1^2q^3 + \displaystyle\frac{1}{2}\sum_{a=2}^4C_a^2q^3=0\,;
\]
therefore, the condition $(2)$ of the Theorem \ref{HJtheorem} holds.

The $4$-vector field $Z^\gamma=(Z^\gamma_1,Z^\gamma_2,Z^\gamma_3,Z^\gamma_4)$ is locally given by
\[
    Z^\gamma_1=\derpar{}{x^1}-\frac{1}{2}C_1 q^2\derpar{}{q}\,,\quad Z^\gamma_a=\derpar{}{x^a}+\frac{1}{2}C_a q^2\derpar{}{q}\,,
\]
with $a=2,3,4$. The map $\psi\colon\mathbb{R}^4\to\mathbb{R}^4\times \mathbb{R} $ defined by
\[
    \psi(x^1,x^2,x^3,x^4)=\displaystyle\frac{2}{C_1x^1-C_2x^2-C_3x^3-C_4x^4+C}\,,\quad C\in \mathbb{R} \,,
\]
 is an integral section of the $4$-vector field $Z^\gamma$.

 By Theorem \ref{HJtheorem} one obtains that the map $\varphi=\gamma\circ \psi$, locally given by
 \[
    (x^\alpha)\to (x^\alpha,\psi(x^\alpha), \frac{1}{2}C_\beta(\psi(x^\alpha))^2)\,,
 \]
 is a solution of the Hamilton-De Donder-Weyl equations associated to $H$, that is,
 \[
 \begin{array}{rcl}
    0&=&\displaystyle \sum_{\alpha=1}^4\derpar{}{x^\alpha}\Big(\frac{1}{2} C_\alpha\psi^2\Big)\,,\\\noalign{\medskip}
    -\frac{1}{2}C_1\psi^2 &=&\derpar{\psi}{x^1}\,,\\\noalign{\medskip}
    \frac{1}{2}C_a\psi^2&=&\derpar{\psi}{x^a}\,,\quad a=2,3,4\,.
     \end{array}
 \]
Let us observe that  these equations imply that the scalar field $\psi$ is a solution to the $3$-dimensional wave equation.

In this particular example the functions $W^\alpha$ are given by
\[
W^\alpha(x,q)=\displaystyle\frac{1}{6}C_\alpha q^3 + h(x)\,,
\]
where $h\in \mathcal{C}^\infty (\mathbb{R}^4)$.

\bigskip

In \cite{PR-2002-b, ZS}, the authors describe an alternative method that can be compared with the one above.

First, we consider the set of functions $W^\alpha\colon \mathbb{R}^4\times \mathbb{R} \to \mathbb{R} , \, 1\leq \alpha\leq 4$ defined by
\[
    W^\alpha(x,q)= (q-\displaystyle\frac{1}{2}\phi(x))\sqrt{-g}g^{\alpha\beta}\derpar{\phi}{x^\beta}\,,
\]
where $\phi$ is a solution to the Euler-Lagrange equation (\ref{scalar}).
Using these functions we can consider a section $\gamma$ of $(\pi_\mathbb{R} )_{1,0}\colon \mathbb{R}^4\times (T^1_4)^*\mathbb{R} \to \mathbb{R}^4\times\mathbb{R} $  with components
\[
    \gamma^\alpha= \derpar{W^\alpha}{q} = \sqrt{-g}g^{\alpha\beta}\derpar{\phi}{x^\beta}\,.
\]

By a direct computation we obtain that this section $\gamma$ is a solution to the Hamilton-Jacobi equation (\ref{HJkcoseq}).

Now from (\ref{zgammaH}) and (\ref{partial Ham scalar k-co})  we obtain the $4$-vector field $Z^\gamma$ is given by
\begin{equation}\label{examplezgamma}
           Z^\gamma_\alpha = \derpar{}{x^\alpha} +\derpar{\phi}{x^\alpha}\derpar{}{q}\,.
\end{equation}

Let us observe that $Z^\gamma$ is an integrable $4$-vector field on $\mathbb{R}^4\times \mathbb{R} $. Using the Hamilton-Jacobi theorem we obtain that if $\sigma=(id_{\mathbb{R}^4},\phi)\colon \mathbb{R}^4\to \mathbb{R}^4\times \mathbb{R} $ is an integral section of the $4$-vector field $Z^\gamma$ defined by (\ref{examplezgamma}), then $\gamma\circ \sigma$ is  a solution of the Hamilton-De Donder Weyl equation associated with the Hamiltonian of the massive scalar field.

If we now consider the particular case $F(q)=m^2q^2$, we obtain the Klein-Gordon equation; this is just the case discussed in \cite{PR-2002-b}.

\newpage
\mbox{}
\thispagestyle{empty} 

\chapter{Lagrangian
 Classical Field Theories}

In a similar way to that developed in Chapter \ref{chapter: K-SympLagCFT}, we now give a  description of the Lagrangian Classical Field Theories using two different approaches: a variational principle and a $k$-cosymplectic approach.

\index{Lagrangian function}
\index{Euler-Lagrange field equations}
Given a Lagrangian $L\in\mathcal{C}^\infty(\rktkq)$, we shall obtain the local Euler-Lagrange field equations
\begin{equation}\label{EL field eq}
    \ds\sum_{\alpha=1}^k\ds\frac{\partial}{\partial
x^\alpha}\Big\vert_{x}\left(\ds\frac{\partial L}{\partial
v^i_\alpha}\Big\vert_{\varphi(x)}\right)=\ds\frac{\partial
L}{\partial q^i}\Big\vert_{\varphi(x)}\;,\quad
v^i_\alpha(\varphi(x))=\ds\frac{\partial (q^i\circ \varphi)}{\partial
x^\alpha}\Big\vert_{x}\,,
\end{equation}
 with $\varphi\colon \rk\to \rktkq$. First, we shall use a multiple integral variational problem approach, later we shall give a geometric version of these equations.

Finally, we shall define a Legendre transformation on this new setting which shall allows to prove the equivalence between both Hamiltonian and Lagrangian formalisms when the Lagrangian satisfies certain regularity condition.
We shall use the notation introduced in (\ref{notation partials}) and the  notion of  prolongations to $\rktkq$.

\section{The stable tangent bundle of $k^1$-veloci\-ties $\rktkq$}\label{section: rktkq}

\index{Stable tangent bundle of $k^1$-veloci\-ties}
\index{$\rktkq$}
    In Chapter \ref{chapter: k-cosympManifolds} we have introduced the model of the so-called $k$-cosymplectic manifolds that we have used to develop the geometric description of the Hamilton-De Donder-Weyl field equations when the Hamiltonian function depends on the  coordinates $(x^\alpha)$ on the base manifold. In this section we introduce its Lagrangian counterpart, i.e., a manifold where we shall develop the $k$-cosymplec\-tic Lagrangian formalism. Roughly speaking, this manifold is the Cartesian product of the $k$-dimensional euclidean space and the tangent bundle of $k^1$-velocities of a $n$-dimensional smooth manifold $Q$. In this section we shall introduce formally the manifold $\rktkq$ and some canonical geometric elements defined on it.

    Let us recall that in Remark \ref{j1rkq} we have introduced the manifold $J^1_0(\rk,Q)$ of $1$-jets of maps from $\rk$ to $Q$ with source $0\in \rk$. In an analogous way fixed a point $x\in \rk$, we can consider the manifold $J^1_x(\rk,Q)$ of $1$-jets of maps from $\rk$ to $Q$ with source $x\in \rk$, i.e.,
        \[
            J_x^1(\rk,Q)= \ds\bigcup_{q\in Q}J^1_{x,\,q}(\rk,Q)= \ds\bigcup_{q\in Q}\{j^1_{x,q}\phi\, \vert\, \phi\colon\rk\to Q\;\makebox{smooth},\, \phi(x)=q\}\,.
        \]

    Let $J^1(\rk,Q)$ be the set of $1$-jets from $\rk$ to $Q$, that is,
        \[
            J^1(\rk,Q)=\ds\bigcup_{x\in \rk}J^1_x(\rk,Q)\,.
        \]

    This space can be identified with $\rktkq$ as follows
        \begin{equation}\label{difeo j1rkq-rktkq}
        \begin{array}{ccccc}
            J^1(\rk,Q) & \to &\rk\times J^1_0(\rk,Q)&\to & \rktkq\\\noalign{\medskip}
            j^1_x\phi=j^1_{x,q}\phi & \to & (x,j^1_{0,x}\phi_x)& \to & (x,{v_1}_q,\ldots, {v_k}_q)
        \end{array}\,,
        \end{equation}
    where $\phi_x(\tilde{x})=\phi(x+\tilde{x})$, with $\tilde{x}\in \rk$ and
        \[
            {v_\alpha}_{q}=(\phi_x)_*(0)\Big(\frac{\partial}{\partial                 x^\alpha}\Big\vert_0\Big)= \phi_*(x)\Big(\frac{\partial}{\partial                 x^\alpha}\Big\vert_x\Big),
        \]
     being $ q=\phi_x (0)=\phi(x)$ and with $\ak$.

     Therefore, an element in $J^1(\rk,Q)$ can be thought as a family $$(x,{v_1}_q,\ldots, {v_k}_q)\in\rktkq$$ where $x\in \rk$ and $({v_1}_q,\ldots, {v_k}_q)\in\tkq$. Thus, we can consider the following canonical projections defined by
        \begin{equation}\label{projections rktkq}
            \begin{array}{ll}
                    (\pi_{\rk})_{1,0}(x,{v_1}_q,\ldots, {v_k}_q)=(x,q), &  \pi_{\rk}(x,q)=x,
                \\\noalign{\medskip}
                    (\pi_{\rk})_1(x,{v_1}_q,\ldots, {v_k}_q)=x, &  \pi_Q(x,q)= q,
                    \\\noalign{\medskip}
                    p_Q(x,{v_1}_q,\ldots, {v_k}_q)=q\, &
            \end{array}
        \end{equation}
  with $x\in \rk $, $q\in Q$ and $({v_1}_q,\ldots, {v_k}_q)\in\tkq$. The following diagram illustrates the situation:
  \begin{center}
     \begin{figure}[h]
\centering
\scalebox{1.3}{
$
            \xymatrix@C=20mm{ & \rktkq \ar[dl]_-{(\pi_{\rk})_1} \ar[dr]^-{p_Q} \ar[d]_-{(\pi_{\rk})_{1,0}}& \\
            \rk & \rkq \ar[l]_-{\pi_{\rk}}\ar[r]^-{\pi_Q}& Q}
$
}
\caption{Canonical projections associated to $\rktkq$.}
\label{K-cosymp-Maps Lag}
\end{figure}
\end{center}

\index{Stable tangent bundle of $k^1$-veloci\-ties!coordinates}
    If $(q^i)$ with $\n$, is a local \emph{coordinate system} defined on an open set $U\subset Q$, then the induced local coordinates $(x^\alpha, q^i,v^i_\alpha)$, $\n,\,\ak$ on $\rk\times T^1_kU= p_Q^{-1}(U)$ are given by
        \begin{equation}\label{canonical coordinates rktkq}
            \begin{array}{lcl}
                    x^\alpha(x,{v_1}_q,\ldots, {v_k}_q) = x^\alpha(x)=x^\alpha\, , \\\noalign{\medskip}
                    q^i(x,{v_1}_q,\ldots, {v_k}_q) = q^i(q)\, ,
                \\\noalign{\medskip}
                    v^i_\alpha(x,{v_1}_q,\ldots, {v_k}_q) ={v_\alpha}_q(q^i)\,.
            \end{array}
        \end{equation}
\index{$\rktkq$!canonical coordinates}
    Thus, $\rktkq$ is endowed with a   structure of differentiable manifold of  dimension $k+n(k+1)$, and the manifold $\rktkq$ with the projection $(\pi_{\rk})_1$ has a structure of vector bundle over $\rk$.

    Considering the identification (\ref{difeo j1rkq-rktkq}), the above coordinates can be defined in terms of $1$-jets of maps from $\rk$ to $Q$ with source in $0\in \rk$ as follows
            \[
                \begin{array}{lcl}
                x^\alpha(j^1_x\phi)&=&x^\alpha(x)=x^\alpha\,, \\\noalign{\medskip} q^i(j^1_x\phi)&=&q^i(\phi(x)))\,,\\\noalign{\medskip} v^i_\alpha(j^1_x\phi)&=&\derpar{\phi^i}{ x^\alpha}\Big\vert_{x}=\phi_*(x)\Big(\derpar{}{x^\alpha}\Big\vert_{x}
                \Big)(q^i)\,,
                \end{array}
            \]
    being $\phi\colon \rk\to Q$.

    \begin{remark}\label{difeo jipi}
    {\rm
    Let us observe that each map $\phi\colon \rk\to Q$ can be identified with a section $\tilde{\phi}$ of the trivial bundle $\pi_{\rk}\colon \rkq\to\rk$. Thus the manifold $J^1(\rk,Q)$ is diffeomorphic to $J^1\pi_{\rk}$ (see  \cite{Saunders-89} for a full description of the first-order jet bundle associated to an arbitrary bundle $E\to M$). The diffeomorphism between these two manifolds is given by
    \[
        \begin{array}{ccc}
            J^1\pi_{\rk} & \to & J^1(\rk,Q)\\\noalign{\medskip}
            j^1_x\tilde{\phi} & \to & j^1_{x,\phi(x)}\phi
        \end{array}
    \]
    being $\tilde{\phi}\colon \rk \to \rkq$ a section of $\pi_{\rk}$ and $\phi=\pi_Q\circ \tilde{\phi}\colon \rk\to Q$.
    \rqed}
    \end{remark}

    On $\rktkq$ there exist several canonical  structures which will allow us to introduce the necessary objets for develop a $k$-cosymplectic description of the Euler-Lagrange field equations. In the following subsections we introduce these geometric elements.
    \subsection{Canonical tensor fields}\label{section rktkq: canonical tensor fields}
\index{$\rktkq$!canonical tensor fields}

    We first define a family $(J^1,\ldots, J^k)$ of $k$ tensor fields of type $(1,1)$ on $\rktkq$. These tensors fields  allow us to define the Poincar\'{e}-Cartan  forms, in a similar way that in the $k$-symplectic setting.

    To introduce this family we will use the canonical $k$-tangent structure $\{J^1,$ $\ldots, J^k\} $ which we have introduced in section \ref{section tkq:geometry}.

    For each $\ak$ we consider the {\it natural extension}  of the tensor fields  $J^\alpha$ on $\tkq$ to $\rktkq$,  (we denote this tensor field also by $J^\alpha$) whose local expression is
   \begin{equation}\label{localSA} J^\alpha= \ds\frac{\partial}{\partial v^i_\alpha}\otimes dq^i .\end{equation}


\index{$\rktkq$!Liouville vector field}
\index{Liouville vector field}
    Another interesting group of canonical tensors defined on $\rktkq$ is the set of canonical vector fields on $\rktkq$ defined as follows:

    \begin{definition}
        The \emph{Liouville vector field} $\Delta$ on $\rktkq$ is the infinitesimal generator   of the flow
        \[
            \begin{array}{ccl}
            \r \times (\r^{k}\times T^1_kQ) & \longrightarrow & \r^{k}\times
            T^1_kQ  \\ \noalign{\medskip} (s,(x,{v_1}_{q },\ldots,
            {v_k}_{q })) & \mapsto & (x, e^s{v_1}_{q },
            \ldots,e^s{v_k}_{q })\,,
            \end{array}
        \]
        and its local expression is
        \begin{equation}\label{locci}
            \Delta =   \displaystyle\sum_{i,A} v^i_\alpha \frac{\displaystyle\partial}{\displaystyle\partial v^i_\alpha}\, .
        \end{equation}
        \end{definition}
        \begin{definition}
        The \emph{canonical vector fields} $\Delta_1,\ldots, \Delta_k$
        on $\rktkq$ are generators infinitesimals of the flows
            \[
                \begin{array}{ccl}
                \r \times (\r^{k}\times T^1_kQ) & \longrightarrow & \r^{k}\times
                T^1_kQ  \\ \noalign{\medskip} (s,(x,{v_1}_{q },\ldots,
                {v_k}_{q } )) & \mapsto & (x, {v_1}_{q }, \ldots,
                {v_{\alpha-1}}_{q }, e^s {v_\alpha}_{q }, {v_{\alpha+1}}_{q }, \ldots,
                {v_k}_{q })\,,
                \end{array}
            \]
        for each $\alpha=1,\ldots, k\,$, respectively. Locally
                \begin{equation}\label{LoccanCA}
                 \Delta_\alpha =   \displaystyle\sum_{i= 1}^n v^i_\alpha
                \frac{\displaystyle\partial}{\displaystyle\partial v^i_\alpha}\,,\quad
                1\leq \alpha\leq k\, .
                \end{equation}
        \end{definition}

From (\ref{locci}) y (\ref{LoccanCA})
we see that  $\Delta=\Delta_1+\ldots + \Delta_k\,.$

    \subsection{Prolongation of diffeomorphism and vector fields}\label{prolon-kco-l}

    In this section we shall describe how to lift a diffeomorphism of $\rkq$ to  $\rktkq$ and, as a consequence, we shall introduce the prolongation of $\pi_{\rk}$-projectable vector fields  on $\rkq$ to $\rktkq$.

    Firstly we introduce the following definition of first prolongation of a map $\phi\colon\rk\to Q$ to $\rktkq$.
\index{$\rktkq$!first prolongation}
\index{First prolongation}
    \begin{definition}\label{de652}
        Let $\phi:\r^k \rightarrow Q$  be a map, we define the \emph{first prolongation $\phi^{[1]}$ of  $\phi$ to $\rktkq$}
          as the map
            $$
            \begin{array}{rcl}
            \phi^{[1]}:\r^k & \longrightarrow &   \r^k \times T^1_kQ \\ x &
            \longrightarrow &
            (x,j^1_0\phi_{x})\equiv\left(x,\phi_*(x)\Big(\ds\frac{\partial}{\partial
            x^1}\Big\vert_{x}\Big), \ldots,
            \phi_*(x)\Big(\ds\frac{\partial}{\partial
            x^k}\Big\vert_{x}\Big) \right)
            \end{array}
            $$ where $\phi_{x}(y)=\phi(x+y)$.
    \end{definition}

    In local coordinates one has
    \begin{equation}\label{localfi2}
    \phi^{[1]}(x^1, \dots, x^k)=(x^1, \dots, x^k, \phi^i (x^1, \dots,
    x^k), \frac{\displaystyle\partial\phi^i}{\displaystyle\partial x^\alpha}
    (x^1, \dots, x^k))\,.
    \end{equation}
\begin{remark}
    {\rm
        Let us observe that $\phi^{[1]}$ can be defined as the pair $(id_{\rk},\phi^{(1)})$, where $\phi^{(1)}$ is the first prolongation of $\phi$ to $\tkq$ introduced in Definition \ref{first_prol}.
    }
\end{remark}

Comparing the local expression (\ref{localfi2}) with the second set of the equations (\ref{EL field eq}), one observes that a solution $\varphi\colon \rk\to \rktkq$ of the Euler-Lagrange equations (\ref{EL field eq}) is of the form $\varphi=\phi^{[1]}$, being $\phi$ the map given by the composition
    \[
        \xymatrix@C=29mm{
                            \rk \ar[r]^-{\varphi} \ar@/^{10mm}/[rr]^-{\phi}& \rktkq \ar[r]^-{p_Q}& Q
                        }
    \]

    Therefore, the equations (\ref{EL field eq}) can be written as follows:
    \begin{equation}\label{EL eq}
        \ds\sum_{\alpha=1}^k\derpar{}{x^\alpha}\left(\derpar{L}{x^\alpha}\Big\vert_{\phi^{[1]}(x)}\right)=\derpar{L}{q}\Big\vert_{\phi^{[1]}(x)}\,,
    \end{equation}
    where $\n$ and a solution is a map $\phi\colon \rk\to Q$.

    The equations (\ref{EL field eq}) are equivalent to
    \[{\tiny
         \ds\frac{\partial^2 L}{\partial x^\alpha \partial v^i_\alpha}\big\vert_{\phi^{[1]}(x)}+ \ds\frac{\partial^2 L}{\partial q^j \partial v^i_\alpha}\big\vert_{\phi^{[1]}(x)} \ds\frac{\partial \phi^j}{\partial x^\alpha}\big\vert_{x}+ \ds\frac{\partial^2 L}{\partial v^j_\beta \partial v^i_\alpha}\big\vert_{\phi^{[1]}(x)} \ds\frac{\partial^2 \phi^j}{\partial x^\alpha \partial x^\beta}\big\vert_{x}=\ds\frac{\partial L}{\partial q^i}\Big\vert_{\phi^{[1]}(x)}\,.}
    \]

    Let us observe that an element in $\rktkq$ is of the form $\phi^{[1]}(x)$ for some $\phi\colon\rk\to Q$ and some $x\in \rk$. We introduce the prolongation of diffeomorphisms using the first prolongation of maps from $\rk$ to $Q$.
\index{$\rktkq$!first prolongation of diffeomorphism}
\index{Prolongation of diffeomorphism}
    \begin{definition}
        Let $f\colon \rkq\to \rkq$ be a map and $f_{\rk}\colon \rk\to \rk$ be a diffeomorphism, such that $\pi_{\rk}\circ f=f_{\rk}\circ \pi_{\rk}$
         \footnote{These conditions are equivalent to say that the pair $(f,f_{\rk})$ is a bundle automorphism of the bundle $\rkq\to \rk$.}. The \emph{first prolongation} of $f$ is a map
            \[
                j^1f\colon J^1(\rk,Q)\equiv\rktkq\to J^1(\rk,Q)\equiv\rktkq
            \]
        defined by
            \begin{equation}\label{j1f}
                (j^1f)(\phi^{[1]}(x))= (\pi_Q\circ f\circ\tilde{\phi}\circ f_{\rk}^{-1})^{[1]}(f_{\rk}(x))
            \end{equation}
        where $\tilde{\phi}$ is the section of $\pi_{\rk}$ induced by $\phi$, that is, $\tilde{\phi}=(id_{\rk},\phi)$ and we are considering the first prolongation of the map given by the following composition:
    \end{definition}
\vspace{2cm}
\bigskip \makebox{}
\bigskip \makebox{}
\bigskip \makebox{}
\[
                \xymatrix@C=16mm{
                        \rk \ar@/^{16mm}/[rrrr]\ar[r]^-{f_{\rk}^{\,-1}}& \rk\ar[r]^-{\tilde{\phi} }& \rkq \ar[r]^-{f}& \rkq\ar[r]^-{\pi_Q} & Q
                        }
            \]
            \vspace{-0.5cm}
    \begin{remark}
    {\rm
        If we consider the identification between $J^1(\rk,Q)$ and $J^1\pi_{\rk}$ given in remark \ref{difeo jipi}, the above definition coincides with the definition 4.2.5 in \cite{Saunders-89} of the first prolongation of $f$ to the jet bundles.\rqed
        }
    \end{remark}

Locally, if $f(x^\alpha,q^i)=(f_{\rk}^\alpha(x^\beta)=f^\alpha(x^\beta),f^i(x^\beta,q^j))$ then
\begin{equation}\label{local j1f}
j^1f(x^\alpha,q^i,v^i_\alpha)=(f^\alpha(x^\beta),f^i(x^\beta,q^j)
,\ds\frac{df^i}{dx^\beta}\Big(\ds\frac{\partial
(f_{\rk}^{-1})^\beta}{\partial \bar{x}^\alpha}\circ f_{\rk}(x^\gamma)\Big))\;,
\end{equation}
where $(\bar{x}^1,\ldots, \bar{x}^k)$ are the coordinates on $\rk=f_{\rk}(\rk)$ and $df^i/dx^\beta$ is the total derivative defined by
$$
\ds\frac{df^i}{dx^\beta}=\ds\frac{\partial f^i}{\partial
x^\beta}+v^j_\beta\ds\frac{\partial f^i}{\partial q^j}\,.
$$

\index{$\rktkq$!first prolongation of vector fields}
\index{Prolongation of vector fields}
    Using the prolongation of diffeomorphism we can define the prolongation of vector field to $\rktkq$ in an analogous way that we did  in section \ref{section rktkqh prolongation}.  Given a $\pi_{\rk}$-projectable vector field $Z\in\vf(\rkq)$, we can define the canonical prolongation $Z^1\in\vf(\rktkq)$ using the prolongation of the diffeomorphism of the set $\{\sigma_s\}$, being this set the one-parameter group of diffeomorphism of $Z$.

    Locally if $Z\in\vf(\rkq)$ is a $\pi_{\rk}$-projectable vector field
with local expression
$$Z=Z^\alpha\ds\frac{\partial }{\partial
x^\alpha}+Z^i\ds\frac{\partial}{\partial q^i}\,,$$ then from (\ref{local j1f}) we deduce that the natural prolongation  $Z^{1}$  has the following local expression
 $$
Z^1= Z^\alpha\ds\frac{\partial}{\partial
x^\alpha}+Z^i\ds\frac{\partial}{\partial q^i} + \Big(
 \ds\frac{dZ^i}{d x^\alpha}- v_\beta^i  \ds\frac{dZ^\beta}{d
x^\alpha}\Big)  \ds\frac{\partial}{\partial v_\alpha^i} \,,
$$ where $d/dx^\alpha$ denotes the  total derivative, that is
$$
\ds\frac{d}{dx^\alpha}=\ds\frac{\partial}{\partial x^\alpha}+
v^j_\alpha\ds\frac{\partial }{\partial q^j} \; .
$$

\subsection{$k$-vector fields and {\sc sopdes}}\label{section rktkq: k-vf and sopde}

\index{$\rktkq$!sopdes}
In this section we shall consider again the notion of $k$-vector field introduced in section \ref{section k-vector field}, but in this case,  $M=\rktkq$. Moreover we describe a particular type of vector fields which are very important in the $k$-cosymplectic Lagrangian description of the field equations.

We consider $M=\rk\times \tkq$, with local coordinates $(x^\alpha,q^i,v^i_\alpha)$ on an open set $U$.

\index{$k$-vector field}
    A $k$-vector field $\mathbf{X}$ on $\rk\times\tkq$ is a family of $k$ vector fields $(X_1,\ldots, X_k)$ where each $X_\alpha\in \vf(\rk\times\tkq)$. The local expression of a $k$-vector field on $\rk\times\tkq$ is given by ($\ak$)
        \begin{equation}\label{tkq: k-vf local1}
            X_\alpha=(X_\alpha)_\beta\derpar{}{x^\beta}+(X_\alpha)^i\derpar{}{q^i}+ (X_\alpha)^i_\beta\derpar{}{v^i_\beta},
        \end{equation}

    Let $$\varphi\colon U_0\subset\rk \to  \rk\times \tkq$$    be an integral section of $(X_1,\ldots, X_k)$   with components
    $$\varphi(x)=(\psi_\alpha(x),\psi^i(x),\psi^i_\alpha(x))\quad   .$$

    Then, since
        \[
            \varphi_*(x)\Big(\derpar{}{x^\alpha}\Big\vert_{x}\Big)=
         \derpar{\psi^\beta}{x^\alpha}\Big\vert_{x}\derpar{}{x^\beta}\Big\vert_{\varphi(x)}+
            \derpar{\psi^i}{x^\alpha}\Big\vert_{x}\derpar{}{q^i}\Big\vert_{\varphi(x)} + \derpar{\psi^i_\beta}{x^\alpha}\Big\vert_{x}\derpar{}{v^i_\beta}\Big\vert_{\varphi(x)}
        \]
    the condition (\ref{integral section expr})  is locally equivalent to the following system of partial differential equations (condition (\ref{integral section equivalence cond}))
        \begin{equation}\label{rktkq integral section equivalence cond}
        \begin{array}{c}
           \derpar{\psi_\beta}{x^\alpha}\Big\vert_{x}= (X_\alpha)_\beta(\varphi(x))  \,,\quad
            \derpar{\psi^i}{x^\alpha}\Big\vert_{x}=(X_\alpha)^i(\varphi(x))\,,\quad \derpar{\psi^i_ \beta}{x^\alpha}\Big\vert_{x}=(X_\alpha)^i_\beta(\varphi(x)),
        \end{array}
        \end{equation}
    with $\n$ and $1\leq \alpha,\beta\leq k$.
Next, we shall characterize the integrable
$k$-vector fields on $\r^k \times T^1_kQ$ whose integral
sections are canonical prolongations of maps from $\r^k$ to $Q$.

\index{Second order partial differential equation}
\begin{definition}\label{sode2}
A $k$-vector field  ${\bf X}=(X_1,\dots,X_k)$ on $\r^k \times
T^1_kQ$ is a \emph{second order partial differential equation} ({\sc
sopde} for short)  if
$$
\eta^\alpha(X_\beta)= \delta_\beta^\alpha
$$
and
$$
J^\alpha(X_\alpha)=\Delta_\alpha \, ,
$$
for all $1\leq \alpha,\beta\leq k$.
\end{definition}

Let $(q^i)$ be a coordinate system on $Q$ and $(x^\alpha,q^i,v^i_\alpha)$  the
induced coordinate system on $\r^k \times T^1_kQ$.  From (\ref{localSA}) and (\ref{LoccanCA}) we deduce that   that the
local expression of a {\sc sopde} $(X_1 ,\ldots,X_k) $ is
\begin{equation}\label{localsode2}
X_\alpha=\frac{\partial}{\partial
x^\alpha}+v^i_\alpha\frac{\displaystyle
\partial} {\displaystyle
\partial q^i}+
(X_\alpha)^i_\beta \frac{\displaystyle\partial} {\displaystyle \partial v^i_\beta}\ ,
\end{equation}
where $(X_\alpha)^i_\beta $ are functions on $\r^k \times T^1_kQ$. As a
direct consequence of the above local expressions, we deduce that
the family of vector fields $\{X_1, \ldots , X_k\}$ are linearly
independent.

\begin{lemma}\label{lem0}
Let $(X_1 ,\ldots,X_k) $ be a {\sc sopde}. A map $\varphi:\rk \to \rk
\times T^1_kQ$, given by
$$\varphi(x)=(\psi_\alpha(x),\psi^i(x),\psi^i_\alpha(x))$$  is an integral section
of $(X_1 ,\ldots,X_k) $ if, and only if,
\begin{equation}\label{nn}
\psi_\alpha(x)=x^\alpha+c^\alpha \, , \quad \psi^i_\alpha(x)=\frac{\ds\partial \psi^i}
{\ds\partial x^\alpha}\Big\vert_{x}\, , \quad \frac{\ds\partial^2 \psi^i}
{\ds\partial x^\alpha \ds\partial x^\beta}\Big\vert_{x}=(X_\alpha)^i_\beta(\varphi(x)) \, ,
\end{equation}
where $c^\alpha$ is a constant.
\end{lemma}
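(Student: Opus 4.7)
The plan is to prove this lemma by a direct local coordinate computation, combining the definition of integral section of a $k$-vector field (equations (\ref{rktkq integral section equivalence cond})) with the canonical local form of a {\sc sopde} (equation (\ref{localsode2})). Since both the hypothesis (being a {\sc sopde}) and the conclusion are local statements, no global or gluing arguments are needed; the equivalence will follow by matching coefficients in a single coordinate chart.

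First I would recall that, by (\ref{localsode2}), a {\sc sopde} has components
\[
(X_\alpha)_\beta = \delta_\alpha^\beta, \qquad (X_\alpha)^i = v^i_\alpha, \qquad (X_\alpha)^i_\beta = (X_\alpha)^i_\beta,
\]
where the last entries are arbitrary smooth functions on $\rk \times T^1_kQ$. Substituting these into the integral-section conditions (\ref{rktkq integral section equivalence cond}) for $\varphi(x) = (\psi_\alpha(x), \psi^i(x), \psi^i_\alpha(x))$ yields the three systems
\[
\frac{\partial \psi_\beta}{\partial x^\alpha}\Big|_x = \delta_\alpha^\beta, \qquad \frac{\partial \psi^i}{\partial x^\alpha}\Big|_x = v^i_\alpha(\varphi(x)) = \psi^i_\alpha(x), \qquad \frac{\partial \psi^i_\beta}{\partial x^\alpha}\Big|_x = (X_\alpha)^i_\beta(\varphi(x)).
\]
Integrating the first set gives $\psi_\alpha(x) = x^\alpha + c^\alpha$ for constants $c^\alpha$; the second set is exactly the identification $\psi^i_\alpha = \partial \psi^i/\partial x^\alpha$; differentiating the second set in $x^\beta$ and comparing with the third yields $\partial^2\psi^i/\partial x^\alpha \partial x^\beta = (X_\alpha)^i_\beta(\varphi(x))$. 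This establishes the forward implication.

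For the converse, I would assume the three conditions (\ref{nn}) hold and verify directly that $\varphi$ satisfies (\ref{rktkq integral section equivalence cond}) with the components of $X_\alpha$ read off from (\ref{localsode2}); this is essentially just reversing the substitution. Since the arguments are entirely symbolic matchings of partial derivatives, there is no real obstacle: the main point to be careful about is that the second condition $\psi^i_\alpha(x) = \partial \psi^i/\partial x^\alpha|_x$ is precisely what encodes the {\sc sopde} property $(X_\alpha)^i = v^i_\alpha$ at integral sections, while the third condition encodes the remaining freedom in the $(X_\alpha)^i_\beta$. No integrability or compatibility hypothesis is needed at this stage, since we are only characterizing integral sections of an individual {\sc sopde}, not asserting their existence.
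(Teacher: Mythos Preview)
Your proposal is correct and follows exactly the approach the paper takes: the paper's proof simply states that equations (\ref{nn}) follow directly from (\ref{rktkq integral section equivalence cond}) and (\ref{localsode2}), and you have spelled out precisely that substitution and the elementary integration of $\partial\psi_\beta/\partial x^\alpha=\delta^\beta_\alpha$. There is nothing to add.
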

\proof  Equations (\ref{nn})  follow directly from (\ref{rktkq integral section equivalence cond}) and  (\ref{localsode2}).
\qed

\begin{remark}\label{rem1}
{\rm
The integral sections of a {\sc sopde} are given by
$$\varphi(x)=\left(x^\alpha+c^\alpha,\psi^i(x), \ds\frac{\partial \psi^i}{\partial
x^\alpha}(x)\right),$$ where the functions $(\psi^i(x))$ satisfy the
equation
$$\frac{\ds\partial^2 \psi^i}
{\ds\partial x^\alpha \ds\partial x^\beta}\Big\vert_{x}=(X_\alpha)^i_\beta(\psi(x)) $$ in (\ref{nn}), and the $c^\alpha$'s are constants.

In the particular
case when $c=0$, we have that $\varphi= \phi^{[1]}$  where $$\phi= p_Q\circ
\varphi:\rk \stackrel{\varphi}{\to}\r^k \times T^1_kQ
\stackrel{p_Q}{\to}Q$$  that is, $\phi(x)=(\psi^i(x))$.\rqed}
\end{remark}

\begin{lemma}\label{lem1} Let ${\bf X}=(X_1,\ldots , X_k)$ be
an integrable $k$-vector field on $\r^k \times T^1_kQ$. If  every
integral section of ${\bf X}$ is the  first prolongation
$\phi^{[1]}$ of map $\phi:\rk \to Q$, then ${\bf X}$ is a {\sc
sopde}.
\end{lemma}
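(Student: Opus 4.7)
My plan is to prove this pointwise using the local characterizations of integral sections and of \textsc{sopde}s, together with the fact that integrability of $\mathbf{X}$ guarantees an integral section through every point of $\rk\times T^1_kQ$.

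First, I would recall from the local expression (\ref{tkq: k-vf local1}) that $\mathbf{X}=(X_1,\ldots,X_k)$ is a \textsc{sopde} if and only if its coefficients satisfy $(X_\alpha)_\beta=\delta^\alpha_\beta$ and $(X_\alpha)^i=v^i_\alpha$; this follows by comparing (\ref{tkq: k-vf local1}) with (\ref{localsode2}), or equivalently by writing out $\eta^\alpha(X_\beta)=\delta^\alpha_\beta$ and $J^\alpha(X_\alpha)=\Delta_\alpha$ in canonical coordinates using (\ref{localSA}) and (\ref{LoccanCA}). The goal is therefore to show these two identities hold at every point of $\rk\times T^1_kQ$.

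Second, I would fix an arbitrary point $y=(x,q,v)\in\rk\times T^1_kQ$. Since $\mathbf{X}$ is integrable, there exists an integral section $\varphi\colon U_0\subset\rk\to \rk\times T^1_kQ$ passing through $y$. By hypothesis, $\varphi=\phi^{[1]}$ for some $\phi\colon\rk\to Q$, so by (\ref{localfi2}) its components are $\psi_\beta(\tilde{x})=\tilde{x}^\beta$, $\psi^i(\tilde{x})=\phi^i(\tilde{x})$, and $\psi^i_\beta(\tilde{x})=\partial\phi^i/\partial \tilde{x}^\beta$. Substituting into the integral section equations (\ref{rktkq integral section equivalence cond}) at the preimage of $y$ and using $v^i_\alpha(\varphi(\tilde{x}))=\partial\phi^i/\partial \tilde{x}^\alpha$, I obtain
\[
(X_\alpha)_\beta(y)=\frac{\partial \psi_\beta}{\partial x^\alpha}=\delta^\alpha_\beta, \qquad (X_\alpha)^i(y)=\frac{\partial \psi^i}{\partial x^\alpha}=v^i_\alpha(y).
\]

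Finally, since $y$ was an arbitrary point of $\rk\times T^1_kQ$, these identities hold globally, so $\mathbf{X}$ satisfies the local characterization of a \textsc{sopde} and we are done. The only subtle ingredient is the use of integrability to guarantee that every point lies in the image of some integral section of the required form; once this is in place, the rest is a routine reading-off of coefficients from (\ref{rktkq integral section equivalence cond}) and (\ref{localfi2}). I do not anticipate any genuine obstacle, as the argument is essentially a coordinate verification combined with the covering property provided by integrability.
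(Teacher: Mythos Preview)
Your proof is correct and follows essentially the same route as the paper's: both arguments write $X_\alpha$ in the general local form (\ref{tkq: k-vf local1}), take an integral section of the form $\phi^{[1]}$, and read off from (\ref{rktkq integral section equivalence cond}) and (\ref{localfi2}) that $(X_\alpha)_\beta=\delta^\alpha_\beta$ and $(X_\alpha)^i=v^i_\alpha$ along the section, hence everywhere by integrability. You are slightly more explicit than the paper in stressing that integrability is what guarantees an integral section through every point, which is exactly the step that makes the pointwise identities global.
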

\proof Let us suppose that each $X_\alpha$ is locally given by
\begin{equation}\label{localsode22}
X_\alpha=(X_\alpha)_\beta\frac{\partial}{\partial
x^\beta}+(X_\alpha)^i\frac{\displaystyle
\partial} {\displaystyle
\partial q^i}+
(X_\alpha)^i_\beta \frac{\displaystyle\partial} {\displaystyle
\partial v^i_\beta}\ .
\end{equation}

Let $\psi=\phi^{[1]}:\rk \to \r^k \times T^1_kQ$ be an integral
section of ${\bf X}$, then from (\ref{de652}), (\ref{rktkq integral section equivalence cond}),  (\ref {localsode2}) and (\ref
{localsode22}),   we
obtain
$$\begin{array}{c}
(X_\alpha)_\beta(\phi^{[1]}(x))=\delta_\alpha^\beta, \quad
(X_\alpha)^i(\phi^{[1]}(x))=\ds\frac{\partial \phi^i}{\partial
x^\alpha}\Big\vert_{x}=v^i_\alpha(\phi^{[1]}(x)),\quad \text{and}\quad
  (X_\alpha)^i_\beta(\phi^{[1]}(x))=
\ds\frac{\partial^2 \phi^i}{\partial x^\alpha
\partial x^\beta }\Big\vert_{x}\end{array}$$
thus $X_\alpha$ is locally given as in (\ref{localsode2}) and then it is a {\sc sopde}.
\qed

\section{Variational principle }

In this section we describe the problem in the setting of the calculus of variations for multiple integrals, which allows us to obtain the Euler-Lagrange field equations. The procedure is similar to section \ref{section: lag variational principle} but in this case the Lagrangian function depends on the coordinates of the basis space, that is, $L$ is defined on $\rktkq$. In particular, if $L$ does not depend on the space-time coordinates we obtain again the results in Section \ref{section: lag variational principle}.

Let us observe that given a section $\phi$ of $\pi_{\rk}:U_0\subset\rkq\to\rk$, it can be identified with the pair
\[
    \bar{\phi}=(id_{\rk},\phi)
\]
where $\bar{\phi}=\pi_Q\circ \phi$. Therefore, any section $\bar{\phi}$ of $\pi_{\rk}$ can be identified with a map $\phi\colon \rk\to Q$. Along this section we consider this identification.

\begin{definition} Let $L\colon \rktkq\to \r$ be a Lagrangian. Denote by $Sec_C(\rk,$ $\rkq)$ the set of sections of
$$\pi_{\rk}:U_0\subset\rkq\to\rk$$ with compact support. We define the action associated to $L$ by:
\[\begin{array}{lccl}
\mathbb{S}: & Sec_C(\rk,\rkq)&\to &\r\\\noalign{\medskip} & \bar{\phi}
&\mapsto & \mathbb{S}(\bar{\phi}) = \ds\int_{\rk} (\phi^{[1]})^*(Ld^kx)
\end{array}\]
\end{definition}

\begin{lemma}
Let $\bar{\phi}\in Sec_C(\rk,\rkq)$ be a section with compact support. If   $Z\in \vf(\rkq)$ is $\pi_{\rk}$-vertical then $$\bar{\phi}_s\colon= \tau_s\circ\bar{\phi}$$ is a  section of  $\pi_{\rk}:\rkq\to\rk$.\end{lemma}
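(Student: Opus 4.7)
The plan is to mimic the argument already used in the analogous Hamiltonian lemma earlier in the excerpt, now working downstairs on $\rkq$ instead of on $\rktkqh$. The content is essentially that a $\pi_{\rk}$-vertical vector field has a flow preserving fibers of $\pi_{\rk}$, hence composing with a section yields again a section.

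First I would unwind the definitions in local coordinates $(x^\alpha,q^i)$ on $\rkq$. Since $Z$ is $\pi_{\rk}$-vertical, $(\pi_{\rk})_*\circ Z=0$, so locally
\[
Z(x,q)=Z^i(x,q)\,\frac{\partial}{\partial q^i}\Big\vert_{(x,q)},
\]
with no $\partial/\partial x^\alpha$ components. Next, I would use the characterization of the flow of $Z$ as integral curves: for each $(x,q)\in\rkq$, the curve $s\mapsto \tau_s(x,q)$ satisfies $(\tau_{(x,q)})_*(0)(d/ds\vert_{0})=Z(x,q)$. Comparing the $x^\alpha$-component of this vector, one gets
\[
\frac{d(x^\alpha\circ \tau_{(x,q)})}{ds}\Big\vert_{0}=0,
\]
and since the same argument applies at every point along the curve, $x^\alpha\circ\tau_{(x,q)}$ is constant in $s$. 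Together with the initial condition $\tau_0=\mathrm{id}$, this gives $x^\alpha\circ \tau_s(x,q)=x^\alpha$, i.e.
\[
\pi_{\rk}\circ \tau_s=\pi_{\rk}.
\]
This is the same step used in the Hamiltonian analogue already proved in the excerpt.

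Finally, I would combine this with the hypothesis that $\bar{\phi}$ is a section of $\pi_{\rk}$, so $\pi_{\rk}\circ \bar{\phi}=\mathrm{id}_{\rk}$. Then
\[
\pi_{\rk}\circ \bar{\phi}_s \;=\; \pi_{\rk}\circ \tau_s\circ \bar{\phi} \;=\; \pi_{\rk}\circ \bar{\phi} \;=\; \mathrm{id}_{\rk},
\]
showing $\bar{\phi}_s$ is indeed a section of $\pi_{\rk}$. Compact support of $\bar{\phi}_s$ follows since $\tau_s$ maps the fiber over $x$ to itself and is a diffeomorphism, so the support of $\bar{\phi}_s$ is contained in the support of $\bar{\phi}$.

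There is no real obstacle here: the lemma is a direct consequence of the fact that verticality of $Z$ with respect to $\pi_{\rk}$ is equivalent to saying its flow preserves the fibration $\pi_{\rk}$, and the only verification needed is the coordinate computation above. If one wanted a coordinate-free phrasing, the same argument can be written as: $Z$ being $(\pi_{\rk})_*$-related to the zero vector field on $\rk$ implies $\tau_s$ is $\pi_{\rk}$-projectable with projection equal to the flow of the zero field, namely $\mathrm{id}_{\rk}$.
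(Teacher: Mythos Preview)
Your proposal is correct and follows essentially the same approach as the paper: write $Z$ in local coordinates with no $\partial/\partial x^\alpha$ components, deduce that $x^\alpha\circ\tau_s$ is constant along the flow and hence $\pi_{\rk}\circ\tau_s=\pi_{\rk}$, then compose with $\bar\phi$. Your additional remarks on compact support and the coordinate-free reformulation via $(\pi_{\rk})_*$-relatedness to the zero field are not in the paper but are welcome clarifications.
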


\proof Since $Z\in\vf(\rkq)$ is $\pi_{\rk}$-vertical, then it has the following local expression
\begin{equation}\label{Zloc}
Z(x,q)=Z^i(x,q)\ds\frac{\partial}{\partial
q^i}\Big\vert_{(x,q)}\,.
\end{equation}
Now, if  $\{\tau_s\}$ is the one-parameter group of diffeomorphisms generated by $Z$, then,  one has
$$
Z(x,q)=(\tau_{(x,q)})_*(0)\Big(\ds\frac{d}{ds}\Big\vert_{0}\Big)=\ds\frac{d
(x^\alpha\circ \tau_{(x,q)})}{ds}\Big\vert_{0}\ds\frac{\partial
}{\partial x^\alpha}\Big\vert_{(x,q)}+\ds\frac{d (q^i\circ
\tau_{(x,q)})}{ds}\Big\vert_{0}\ds\frac{\partial }{\partial
q^i}\Big\vert_{(x,q)}\,.
$$

Comparing  (\ref{Zloc}) and the above expression of $Z$, and taking into account that  it is valid for any point $(x,q)\in \rkq$, one has
$$
\ds\frac{d (x^\alpha\circ \tau_{(x,q)})}{ds}=0\,.
$$
Then $$(x^\alpha\circ \tau_{(x,q)})(s)=\makebox{constant}.$$
Moreover $\tau_{(x,q)}(0)=(x,q)$, and we obtain that $$(x^\alpha\circ
\tau_{(x,q)})(0)=x^\alpha\, .$$ Thus $(x^\alpha\circ \tau_{(x,q)})(s)=
x^\alpha$ or $(x^\alpha\circ\tau_{s})(x,q)=x^\alpha\, ,$ and hence
 $$\pi_{\rk}\circ\tau_{s}=\pi_{\rk}.$$

 Therefore, taking into account this  identity we deduce that $\bar{\phi}_s$ is a section of $\pi_{\rk}$. In fact,
\[\pi_{\rk}\circ\bar{\phi}_s=\pi_{\rk}\circ\tau_s\circ\bar{\phi}=\pi_{\rk}\circ\bar{\phi}=id_{\rk}\,,\]
where in the last identity we use that $\bar{\phi}$ is a section of $\pi_{\rk}$. \qed
\index{Extremal}
\begin{definition} A section
$\bar{\phi}=(id_{\rk},\phi)\colon\rk\to\rkq$, such that $\bar{\phi}\in Sec_C(\rk,\rkq)$, is an {\bf extremal} of $\mathbb{S}$ if \[\ds\frac{d}{ds}\Big\vert_{s=0}\mathbb{S}(\tau_s\circ\bar{\phi})=0\]
where $\{\tau_s\}$ is the one-parameter group of diffeomorphism for some $\pi_{\rk}$-vertical vector field $Z\in\vf(\rkq)$.\end{definition}

The variational problem associated with $L$ consists in calculate the extremals of the action $\mathbb{S}$.

\begin{theorem} Let
$\bar{\phi}=(id_{\rk},\phi)\in Sec_C(\rk,\rkq)$ and $L:\rk\times T^1_kQ\to \r$ be a Lagrangian. The following statements are equivalent:\begin{enumerate}
    \item $\bar{\phi}$ is an extremal of $\mathbb{S}$.
    \item $\ds\int_{\rk}(\phi^{[1]})^*(\mathcal{L}_{Z^1}(Ld^kx))=0$,
    for each $\pi_{\rk}$-vertical $Z\in\vf(\rkq)$.
    \item   $\bar{\phi}$  is a solution of the Euler-Lagrange field equations (\ref{EL eq}).
\end{enumerate}
\end{theorem}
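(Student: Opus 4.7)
The plan is to follow closely the scheme used earlier for Proposition~\ref{varprinL} and for the Hamiltonian variational principle in Section~\ref{section k-cosymp Ham variational}, adapted to the fact that the Lagrangian now depends also on the base coordinates. I would split the proof into two implications: $(1)\Leftrightarrow(2)$ and $(2)\Leftrightarrow(3)$.

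For $(1)\Leftrightarrow(2)$, the key preliminary is to identify $(\tau_s\circ\bar{\phi})^{[1]}$ with $j^{1}\tau_s\circ\phi^{[1]}$. Since $Z$ is $\pi_{\rk}$-vertical the preceding lemma shows that $\tau_s$ fixes the base, so $\phi_s:=\pi_Q\circ\tau_s\circ\bar{\phi}$ is a well-defined map $\rk\to Q$ and $\tau_s\circ\bar{\phi}=(id_{\rk},\phi_s)$. Then by the very definition (\ref{j1f}) of the prolongation of a bundle automorphism to $\rktkq$ one checks that $\phi_s^{[1]}=j^{1}\tau_s\circ\phi^{[1]}$. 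With this in hand, I would compute
\[
\frac{d}{ds}\Big|_{s=0}\mathbb{S}(\tau_s\circ\bar{\phi})
=\frac{d}{ds}\Big|_{s=0}\int_{\rk}(j^{1}\tau_s\circ\phi^{[1]})^{*}(L\,d^{k}x)
=\int_{\rk}(\phi^{[1]})^{*}\mathcal{L}_{Z^{1}}(L\,d^{k}x),
\]
by differentiating under the integral (legitimate thanks to the compact support of $\bar{\phi}$) and using that $Z^{1}$ is, by Definition~\ref{lift vf rktkq}, the infinitesimal generator of $\{j^{1}\tau_s\}$. This gives $(1)\Leftrightarrow(2)$ directly.

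For $(2)\Leftrightarrow(3)$ I would apply Cartan's magic formula $\mathcal{L}_{Z^{1}}(L\,d^{k}x)=d\iota_{Z^{1}}(L\,d^{k}x)+\iota_{Z^{1}}d(L\,d^{k}x)$. Because $\bar{\phi}$ has compact support, Stokes' theorem kills the exact piece, so (2) is equivalent to $\int_{\rk}(\phi^{[1]})^{*}\iota_{Z^{1}}(dL\wedge d^{k}x)=0$ for every $\pi_{\rk}$-vertical $Z$. In local coordinates $Z=Z^{i}(x,q)\,\partial/\partial q^{i}$ and its prolongation
\[
Z^{1}=Z^{i}\frac{\partial}{\partial q^{i}}+\frac{dZ^{i}}{dx^{\alpha}}\frac{\partial}{\partial v^{i}_{\alpha}}
\]
has no $\partial/\partial x^{\alpha}$-component, so $\iota_{Z^{1}}d^{k}x=0$ and $\iota_{Z^{1}}(dL\wedge d^{k}x)=Z^{1}(L)\,d^{k}x$. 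Pulling back by $\phi^{[1]}$ and integrating by parts on the terms containing $dZ^{i}/dx^{\alpha}$ (again using compact support), the condition becomes
\[
\int_{\rk}(Z^{i}\circ\bar{\phi})(x)\left[\frac{\partial L}{\partial q^{i}}\Big|_{\phi^{[1]}(x)}-\sum_{\alpha=1}^{k}\frac{\partial}{\partial x^{\alpha}}\!\left(\frac{\partial L}{\partial v^{i}_{\alpha}}\Big|_{\phi^{[1]}(x)}\right)\right]d^{k}x=0,
\]
for arbitrary $Z^{i}(x,q)$, hence for arbitrary functions of $x$. Lemma~\ref{variations} then yields the Euler--Lagrange equations (\ref{EL eq}), and the converse is immediate by reversing the argument.

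I expect the only delicate points to be: (i) justifying $\phi_s^{[1]}=j^{1}\tau_s\circ\phi^{[1]}$ from (\ref{j1f}), which requires that $\tau_s$ be $\pi_{\rk}$-vertical so that the ``change of base'' factor $f_{\rk}$ in (\ref{j1f}) is the identity and the composition reduces cleanly; and (ii) the integration by parts in the second term of $Z^{1}(L)$, where one must verify that $(dZ^{i}/dx^{\alpha})\circ\phi^{[1]}=\partial(Z^{i}\circ\bar{\phi})/\partial x^{\alpha}$ so that the boundary term vanishes by compact support. Everything else is routine bookkeeping with the canonical structures on $\rktkq$ already recalled in Section~\ref{section: rktkq}.
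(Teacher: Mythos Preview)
Your proposal is correct and follows essentially the same route as the paper: the identification $\phi_s^{[1]}=j^{1}\tau_s\circ\phi^{[1]}$ to obtain $(1)\Leftrightarrow(2)$, then Cartan's formula and Stokes' theorem to reduce $(2)$ to $\int(\phi^{[1]})^{*}\iota_{Z^{1}}(dL\wedge d^{k}x)=0$, followed by the local computation and integration by parts leading to the Euler--Lagrange equations. The two ``delicate points'' you flag are exactly the ones the paper handles (the second appears in the paper as the identity $\partial Z^{i}/\partial q^{j}\big|_{\bar\phi}\,\partial\phi^{j}/\partial x^{\alpha}=\partial(Z^{i}\circ\bar\phi)/\partial x^{\alpha}-\partial Z^{i}/\partial x^{\alpha}\big|_{\bar\phi}$), so there is no gap.
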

\dem

($1\Leftrightarrow 2$) Let $Z\in\mathfrak{X}(\rkq)$ a $\pi_{\rk}$-vertical vector field and $\{\tau_s\}$ the one-parameter group of diffeomorphism associated to $Z$.

Along this prove we denote by $\phi_s$ the composition $\phi_s=\pi_Q\circ \tau_s\circ \bar{\phi}$. Let us observe that $\bar{\phi}_s= \tau_s\circ \bar{\phi} = (id_{\rk},\phi_s)$.

Since $\phi_s^{[1]}=(\pi_Q\circ\tau_s\circ
\bar{\phi})_Q^{[1]}=j^1\tau_s\circ\phi^{[1]}$, then\[\begin{array}{lcl}
&&
\ds\frac{d}{ds}\Big\vert_{s=0}\mathbb{S}(\tau_s\circ\phi)=
\ds\frac{d}{ds}\Big\vert_{s=0}\ds\int_{\rk}
((\phi_s) ^{[1]})^*(Ld^kx)\\\noalign{\medskip}&=&\ds\lim_{s\to
0}\ds\frac{1}{h}\left(\ds\int_{\rk}((\phi_s) ^{[1]})^*(Ld^kx)
-\ds\int_{\rk}((\phi_0) ^{[1]})^*(Ld^kx)\right)\\\noalign{\medskip}
&=&
\ds\lim_{s\to
0}\ds\frac{1}{h}\left(\ds\int_{\rk}((\pi_Q\circ \tau_s\circ\phi) ^{[1]})^*(Ld^kx)
-\ds\int_{\rk}(\phi^{[1]})^*(Ld^kx)\right)\\\noalign{\medskip}
&=&
\ds\lim_{s\to
0}\ds\frac{1}{h}\left(\ds\int_{\rk}(\phi^{[1]})^*(j^1\tau_s)^*(Ld^kx)
-\ds\int_{\rk}(\phi^{[1]})^*(Ld^kx)\right)
\\\noalign{\medskip}
&=&
\ds\lim_{s\to
0}\ds\frac{1}{h}\ds\int_{\rk}(\phi^{[1]})^*[(j^1\tau_s)^*(Ld^kx)-(Ld^kx)]
\\\noalign{\medskip}&=&\ds\int_{\rk}(\phi^{[1]})^*\mathcal{L}_{Z^1}(Ld^kx)\,,
\end{array}\]
which implies the equivalence between items $(1)$ and $(2)$.

$(2\Leftrightarrow 3)\,$ We know that $\phi$ is an extremal or critical section of $\mathbb{S}$ if and only if for each $\pi_{\rk}$-vertical vector field
$Z$ one has
\[\ds\int_{\rk}(\phi^{[1]})^*(\mathcal{L}_{Z^1}(Ld^kx))=0\,.\]

Taking into account the identity
\begin{equation}\label{princ Hamilton:igualdad 1} \mathcal{L}_{Z^1}(Ld^kx)=
\iota_{Z^1}(dL\wedge d^kx)+d\iota_{Z^1}(Ld^kx)\,
\end{equation} and since $\phi$ has compact support, from Stokes' theorem one deduces that \begin{equation}\label{consecuencia stokes}\ds\int_{\rk}
(\phi^{[1]})^*d\iota_{Z^1}(Ld^kx)=
\ds\int_{\rk}d\big((\phi^{[1]})^*\iota_{Z^1}(Ld^kx)\Big)=0\,.
\end{equation}

Thus, from (\ref{princ Hamilton:igualdad 1}) and (\ref{consecuencia
stokes}) we obtain that $ \bar{\phi}$ is an extremal if and only if \[\ds\int_{\rk}(\phi^{[1]})^*\iota_{Z^1}(dL\wedge d^kx)=0\,.\]

 If $Z_{(x,q)}=Z^i(x,q)\ds\frac{\partial}{\partial
q^i}\Big\vert_{(x,q)}$ then the local expression of $Z^1$ is
\[Z^1=Z^i\ds\frac{\partial}{\partial
q^i}+\left(\ds\frac{\partial Z^i}{\partial x^\alpha}+\ds\frac{\partial
Z^i}{\partial q^j}v^j_\alpha\right)\ds\frac{\partial}{\partial
v^i_\alpha}\,,\] therefore
\begin{equation}\label{princ Hamilton:igualdad 2}
\iota_{Z^1}(dL\wedge d^kx)=\left(Z^i\ds\frac{\partial L}{\partial
q^i}+\left(\ds\frac{\partial Z^i}{\partial x^\alpha}+\ds\frac{\partial
Z^i}{\partial q^j}v^j_\alpha\right)\ds\frac{\partial L}{\partial
v^i_\alpha}\,\right )d^kx\,.
\end{equation}

Then, from (\ref{princ Hamilton:igualdad 2}) we obtain
\begin{equation}\label{princ Hamilton:igualdad 3}
\begin{array}{lcl} &&[(\phi^{[1]})^*\iota_{Z^1}(dL\wedge
d^kx)](q)=\\\noalign{\medskip} &=&\left((Z^i\circ
\bar{\phi})(x)\ds\frac{\partial L}{\partial
q^i}\Big\vert_{\phi^{[1]}(x)}+\left(\ds\frac{\partial
Z^i}{\partial x^\alpha}\Big\vert_{\bar{\phi}(x)}+\ds\frac{\partial
Z^i}{\partial q^j}\Big\vert_{\bar{\phi}(x)}\ds\frac{\partial
\phi^j}{\partial x^\alpha}\Big\vert_{x}\right)\ds\frac{\partial
L}{\partial v^i_\alpha}\Big\vert_{\phi^{[1]}(x)}\,\right
)d^kx\,.\end{array}
\end{equation}

Let us observe that the last term of  (\ref{princ Hamilton:igualdad
3}) satisfies
\[\ds\frac{\partial Z^i}{\partial
q^j}\Big\vert_{\bar{\phi}(x)}\ds\frac{\partial \phi^j}{\partial
x^\alpha}\Big\vert_{x}\ds\frac{\partial L}{\partial
v^i_\alpha}\Big\vert_{\phi^{[1]}(x)}d^kx=
\left(\ds\frac{\partial (Z^i\circ\bar{\phi})}{\partial
x^\alpha}\Big\vert_{x}-\ds\frac{\partial Z^i}{\partial
x^\alpha}\Big\vert_{\bar{\phi}(x)}\right)\ds\frac{\partial L}{\partial
v^i_\alpha}\Big\vert_{\phi^{[1]}(x)}d^kx\,.\]

After a easy computation we obtain
\[
{\footnotesize\ds\int_{\rk}(\phi^{[1]})^*\iota_{Z^1}(dL\wedge
d^kx)=\ds\int_{\rk}(Z^i\circ \phi)\ds\frac{\partial L}{\partial
q^i}\Big\vert_{\phi^{[1]}(x)}d^kx+\ds\int_{\rk}
\ds\frac{\partial (Z^i\circ\phi)}{\partial
x^\alpha}\Big\vert_{x}\ds\frac{\partial L}{\partial
v^i_\alpha}\Big\vert_{\phi^{[1]}(x)}d^kx\,.}
\]

Since $\bar{\phi}$ has compact support and using integration by parts, we have
{
\[\ds\int_{\rk} \ds\frac{\partial
(Z^i\circ\phi)}{\partial x^\alpha}\Big\vert_{x}\ds\frac{\partial
L}{\partial
v^i_\alpha}\Big\vert_{\phi^{[1]}(x)}d^kx=-\ds\int_{\rk}
(Z^i\circ\phi)(x)\ds\frac{\partial }{\partial
x^\alpha}\left(\ds\frac{\partial L}{\partial
v^i_\alpha}\Big\vert_{\phi^{[1]}(x)}\right)d^kx
\]}
and thus,
{
\[\ds\int_{\rk}(\phi^{[1]})^*\iota_{Z^1}(dL\wedge
d^kx)=\ds\int_{\rk}(Z^i\circ \phi)(x)\left(\ds\frac{\partial
L}{\partial
q^i}\Big\vert_{\phi^{[1]}(x)}-\ds\frac{\partial
}{\partial x^\alpha}\left(\ds\frac{\partial L}{\partial
v^i_\alpha}\Big\vert_{\phi^{[1]}(x)}\right)\right)d^kx=0\,.
\]}

Since the functions $Z^i$  are arbitrary, from the last identity we obtain the  Euler-Lagrange field equations,
\[\ds\frac{\partial
L}{\partial
q^i}\Big\vert_{\phi^{[1]}(x)}-\ds\frac{\partial
}{\partial x^\alpha}\left(\ds\frac{\partial L}{\partial
v^i_\alpha}\Big\vert_{\phi^{[1]}(x)}\right)=0\,, \quad 1\leq
i\leq n\,.
\]

\begin{remark}
{\rm

    In \cite{EMR-1996} the authors have considered a more general situation; instead of the bundle $\rktkq\to \rk$, they consider an arbitrary fiber-bundle $E\to M$.
    \rqed}
\end{remark}

\section{$k$-cosymplectic version of Euler-Lagran\-ge field equations}

In this section we give the $k$-cosymplectic description of the Euler-Lagrange field equations (\ref{EL field eq}). With this purpose, we introduce   some geometric elements associated to a Lagrangian function $L\colon \rktkq\to \r$

\subsection{Poincar\'{e}-Cartan forms on $\rktkq$}\label{kcolag}

\index{Poincar\'{e}-Cartan forms}
In a similar way that in the $k$-symplectic  approach,  one can define a family of $1$-forms $\Theta_L^1,\ldots, \Theta_L^k$ on $\rktkq$ associated with the Lagrangian function $L:\rktkq\to \r$, using the canonical tensor fields $J^1,\ldots, J^k$ defined in (\ref{localSA}). Indeed, we put
\begin{equation}\label{thetaLA cosim}
\Theta_L^\alpha =dL\circ J^\alpha\,,
\end{equation} with $\ak$. The exterior differential of these $1$-forms allows us to consider the family of $2$-forms  on  $\rktkq$ by
\begin{equation}\label{omegaLA cosim}
\Omega_L^\alpha =-\,d\Theta_L^\alpha \,.
\end{equation}

From (\ref{localSA}) and  (\ref{thetaLA cosim})  we obtain that  $\Theta_L^\alpha$ is locally given by
\begin{equation}\label{Loc thetala}
\Theta_L^\alpha =   \ds\frac{\displaystyle\partial
L}{\displaystyle\partial v^i_\alpha} \,  dq^i\,,\quad 1\leq  \alpha\leq k\,\end{equation} and from  (\ref{omegaLA cosim}) and  (\ref{Loc
thetala}) we obtain that  $\Omega_L^\alpha$ is locally given by
\begin{equation}\label{Loc omegaLa}
\small\Omega_L^\alpha =\derpars{L}{x^\beta}{v^i_\alpha}\,dq^i\wedge dx^\beta +
\derpars{L}{q^j}{v^i_\alpha}\,dq^i\wedge dq^j +
\derpars{L}{v^i_\beta}{v^i_\alpha}\,dq^i\wedge dv^i_\beta \,
\;.
\end{equation}

An important case
is when the Lagrangian is regular, i.e, when \[\det\left(\derpars{L}{v^i_\alpha}{v^j_\beta}\right)\neq 0\,.\]
The following proposition gives a characterization of the regular Lagrangians.
        \begin{prop}\cite{LMeS-2001}\label{lag kcosymp manifold}
         Given a Lagrangian function on $\rktkq$, the following conditions are  equivalent:
        \begin{enumerate}
         \item $L$ is regular.
          \item $(dx^\alpha, \Omega^1_L,\ldots, \Omega^k_L, V)$ is a  $k$-cosymplectic structure on $\rktkq$, where
        $$V=\ker ( (\pi_{\rk})_{1,0})_*=span \left\{\derpar{}{v^i_1},\ldots,
        \derpar{}{v^i_k}\right\}$$  with $1\leq i\leq n$, is the vertical distribution
        of the vector bundle $(\pi_{\rk})_{1,0}:\rktkq\to \rkq$.
        \end{enumerate}
    \end{prop}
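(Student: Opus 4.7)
\medskip

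\noindent\textit{Proof plan.} The goal is to verify, one by one, the four defining conditions of a $k$-cosymplectic structure in Definition \ref{deest} for the candidate family $(dx^\alpha,\Omega_L^\alpha,V)$, and then show that condition (3) in that definition precisely encodes the non-singularity of the Hessian matrix $\bigl(\partial^2 L/\partial v^i_\alpha\partial v^j_\beta\bigr)$. Several pieces are automatic and do not use regularity:
\begin{itemize}
\item $dx^1\wedge\dots\wedge dx^k\neq 0$ always, and each $dx^\alpha$ is closed.
\item $\Omega_L^\alpha=-d\Theta_L^\alpha$ is closed (in fact exact).
\item The distribution $V=\langle\partial/\partial v^i_\alpha\rangle$ is integrable (it is a coordinate distribution) and $nk$-dimensional.
\item From the local formulas (\ref{Loc thetala})--(\ref{Loc omegaLa}) one reads off directly that $dx^\alpha|_V=0$ and $\Omega_L^\alpha|_{V\times V}=0$ (every nonzero term in $\Omega_L^\alpha$ carries at least one $dq^i$).
\end{itemize}
Hence the whole content of the proposition lies in the two non-degeneracy conditions of Definition \ref{deest}(2): that $\cap_\alpha \ker dx^\alpha\cap\cap_\alpha\ker\Omega_L^\alpha=\{0\}$ and that $\dim(\cap_\alpha\ker\Omega_L^\alpha)=k$. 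This is where regularity will enter.

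\medskip

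The main computation is to contract a generic tangent vector
\[
X=A_\beta\frac{\partial}{\partial x^\beta}+B^i\frac{\partial}{\partial q^i}+C^i_\beta\frac{\partial}{\partial v^i_\beta}
\]
with $\Omega_L^\alpha$ using (\ref{Loc omegaLa}) and to isolate coefficients. I expect the coefficient of $dv^j_\beta$ in $\iota_X\Omega_L^\alpha$ to be $\sum_i (\partial^2 L/\partial v^j_\beta\partial v^i_\alpha)B^i$ and the coefficient of $dq^i$ (after using $B=0$) to be $-\sum_\beta(\partial^2 L/\partial x^\beta\partial v^i_\alpha)A_\beta-\sum_{j,\beta}(\partial^2L/\partial v^j_\beta\partial v^i_\alpha)C^j_\beta$. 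Assuming $L$ regular, the first family of equations (ranging over all $j,\beta,\alpha$) amounts to saying that the vectors $\sum_i B^i e_{i,\alpha}\in\mathbb{R}^{nk}$ lie in the kernel of the nondegenerate bilinear form $W=(\partial^2L/\partial v^i_\alpha\partial v^j_\beta)$, forcing $B^i=0$. Then the second family determines $C^j_\beta$ uniquely from $A_\beta$ via $W^{-1}$. So $\cap_\alpha\ker\Omega_L^\alpha$ is parametrized bijectively by the $k$ parameters $A_1,\dots,A_k$, giving dimension exactly $k$; and if additionally all $A_\beta$ vanish (i.e.\ $X\in\cap_\alpha\ker dx^\alpha$) then $X=0$. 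This proves $(1)\Rightarrow(2)$.

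\medskip

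For $(2)\Rightarrow(1)$ I would argue by contraposition. Suppose $L$ is not regular, so the $nk\times nk$ Hessian $W$ is singular; pick a nonzero $(\xi^i_\alpha)\in\ker W$ and set $X=\sum_{i,\alpha}\xi^i_\alpha\,\partial/\partial v^i_\alpha$. Then $X\in V\subset\cap_\alpha\ker dx^\alpha$ and the computation above with $A=B=0$, $C=\xi$ yields $\iota_X\Omega_L^\alpha=-\sum_{i,j,\beta}(\partial^2L/\partial v^j_\beta\partial v^i_\alpha)\xi^j_\beta\,dq^i=0$ for every $\alpha$, because $W\xi=0$. Thus $X$ is a nonzero element of $(\cap_\alpha\ker dx^\alpha)\cap(\cap_\alpha\ker\Omega_L^\alpha)$, contradicting Definition \ref{deest}(2). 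Hence the $k$-cosymplectic structure forces $W$ to be nonsingular, i.e.\ $L$ regular.

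\medskip

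The bookkeeping of indices in the contraction $\iota_X\Omega_L^\alpha$ is the only delicate step; the subtlety is that the $dv^j_\beta$-condition, although indexed by the free $\alpha$, is not a single equation $WB=0$ but the family that together says $B$ annihilates $W$ through every ``$\alpha$-slot''. Once one observes that this is equivalent to the standard nondegeneracy of $W$ as a bilinear form on $\mathbb{R}^{nk}$, both implications become routine linear algebra; this is the main conceptual hurdle, and everything else is book keeping in local coordinates.
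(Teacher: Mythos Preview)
Your argument is correct. The paper itself does not supply a proof of this proposition: it merely states the result and cites \cite{LMeS-2001} (and, in the parallel statement in Section~12.4, \cite{Tesis merino}), so there is no in-text proof to compare against. Your local-coordinate verification is the natural one and matches what those references carry out: the closedness, integrability, and vanishing-on-$V$ conditions are automatic from (\ref{Loc thetala})--(\ref{Loc omegaLa}), and the two nondegeneracy clauses of Definition~\ref{deest} reduce, via the contraction you compute, to the invertibility of the Hessian block $W=\bigl(\partial^2 L/\partial v^i_\alpha\partial v^j_\beta\bigr)$. The only cosmetic omission is the dimension count $\dim(\rktkq)=k+n(k+1)$ and $\dim V=nk$, which is immediate.
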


\subsection{$k$-cosymplectic Euler-Lagrange  equation.}\protect\label{Sec 6.2.3.}


We recall the   $k$-cosymplectic formulation of the   Eu\-ler-\-La\-gran\-ge  equations (\ref{EL eq}) developed by
M. de Le\'{o}n \textit{et al.} in \cite{LMeS-2001}.

 Let us consider the equations
\begin{equation}\label{lageq0}
\begin{array}{l}
dx^\alpha (X_\beta)  =   \delta^\alpha_\beta, \quad 1 \leq \alpha, \beta \leq k\,,\\
\noalign{\medskip} \ds\sum_{\alpha=1}^k \, \iota_{X_\alpha} \Omega_L^\alpha  = dE_L
+ \,\ds\sum_{\alpha=1}^k\ds\frac{\partial L}{\partial x^\alpha}dx^\alpha
\end{array}
\end{equation}
where $E_L=\Delta(L)-L$ and  denote by $\vf^k_L(\rk\times T^1_kQ)$ the
set of $k$-vector fields ${\bf X}=(X_1,\dots,X_k)$ on
$\rk\times T^1_kQ$ that are solutions of (\ref{lageq0}).

 Let us suppose that $(X_1,\ldots X_k)\in \vf^k_L(\rk\times T^1_kQ)$ and that each
$X_\alpha $ is locally given by
$$
X_\alpha=(X_\alpha)_\beta\frac{\partial}{\ds\partial x^\beta} +
(X_\alpha)^i\frac{\partial}{\ds\partial q^i}
  +(X_\alpha)^i_\beta\frac{\partial}{\ds\partial v_\beta^i}, \quad 1\leq  \alpha\leq k\,.
$$
Equations   (\ref{lageq0}) are  locally equivalent  to the equations
{\footnotesize
\begin{equation}\label{lform}\hspace{-0,5cm}\begin{array}{rcl}
  (X_\alpha)_\beta &=& \delta_\alpha^\beta\,, \\\noalign{\medskip}
  (X_\beta)^i \,
\ds\frac{\partial^2 L}{\partial x^\alpha \partial v^i_\beta}&=& v^i_\beta \,
\ds\frac{\partial^2 L}{\partial x^\alpha
\partial v^i_\beta}\,,  \\\noalign{\medskip}
  (X_\gamma)^j \, \ds\frac{\partial^2 L}{\partial v^i_\beta
\partial v^j_\gamma} &=& v^j_\gamma\,  \ds\frac{\partial^2 L}{\partial v^i_\beta
\partial v^j_\gamma}\,, \\\noalign{\medskip}
  \ds\frac{\partial^2 L}{\partial q^j \partial
v^i_\beta}\left(v^j_\beta -(X_\beta)^j\right)+ \ds\frac{\partial^2 L}{\partial
x^\beta \partial v^i_\beta}&+&v^k_\beta \ds\frac{\partial^2 L}{\partial q^k
\partial v^i_\beta} + (X_\beta)_\gamma^k \ds\frac{\partial^2 L}{\partial
v^k_\gamma \partial v^i_\beta} = \ds\frac{\partial L}{\partial q^i}\,.
\end{array}\end{equation}}

If $L$ is regular  then these equations are transformed in the following ones
\begin{equation}\label{xal}
 (X_\alpha)_\beta=\delta^\beta_\alpha,\quad (X_\alpha)^i = v^i_\alpha,\quad  \ds\sum_{\alpha=1}^kX_\alpha\Big(
 \ds\frac{\partial L}{\partial v^i_\alpha}\Big)=\ds\frac{\partial L}{\partial
 q^i} \; ,
\end{equation}
so that
$$
X_\alpha= \ds\frac{\partial}{\partial x^\alpha}+ v^i_\alpha
\ds\frac{\partial}{\partial q^i}+ (X_\alpha)_\beta^i \ds\frac{\partial
 }{\partial v^i_\beta}\, ,
$$
that is    $(X_1, \ldots, X_k)$ is a {\sc sopde}.

\begin{theorem}\label{rel} Let $L$ be a Lagrangian and
 ${\bf X}=(X_1,\dots,X_k)$ a  $k$-vector field such that
\[
dx^\alpha (X_\beta) = \delta^\alpha_\beta, \;, \quad
  \ds\sum_{\alpha=1}^k \, \iota_{X_\alpha} \Omega_L^\alpha  =\,
dE_L + \,\ds\sum_{\alpha=1}^k\ds\frac{\partial L}{\partial x^\alpha}dx^\alpha
\]
where $E_L=\Delta(L)-L$ and $1 \leq \alpha, \beta \leq k\,$. Then
\begin{enumerate}
\item If $L$ is regular,   ${\bf
X}=(X_1,\dots,X_k)$ is a  {\sc sopde}.

Moreover, if $\psi:\rk \to \rk\times T^1_kQ$ is   integral section of  ${\bf X}$, then $$\phi:\rk
\stackrel{\psi}{\longrightarrow}\rk \times T^1_kQ
 \stackrel{p_{Q}}{\longrightarrow}Q$$ is a solution of the
    Euler-Lagrange  equations (\ref{EL eq}).
\item If $(X_1,\dots,X_k)$ is integrable and $\phi^{[1]}:\rk
\to \rk\times T^1_kQ$ is an integral section,
then $\phi:\rk \to Q$ is   solution of
 the  Euler-Lagrange equations (\ref{EL eq}).
\end{enumerate}
\end{theorem}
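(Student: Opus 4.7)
\noindent\textit{Proof proposal.} The plan is to argue entirely in the adapted coordinate system $(x^\alpha,q^i,v^i_\alpha)$ on $\rk\times T^1_kQ$, exploiting the local equations (\ref{lform}) that have already been derived from the intrinsic system (\ref{lageq0}). In both parts the goal is to translate the local information carried by the components $(X_\alpha)_\beta$, $(X_\alpha)^i$ and $(X_\alpha)^i_\beta$ into statements about a map $\phi:\rk\to Q$, exactly as was done in the $k$-symplectic Lagrangian case (Theorem \ref{flks}), but now keeping track of the explicit dependence on $x^\alpha$.

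For part (1), suppose $L$ is regular, so that the Hessian matrix $\left(\partial^2L/\partial v^i_\alpha \partial v^j_\beta\right)$ is invertible. The first line of (\ref{lform}) already gives $(X_\alpha)_\beta=\delta_\alpha^\beta$, i.e.\ $\eta^\alpha(X_\beta)=\delta^\alpha_\beta$. Using regularity in the third line of (\ref{lform}) forces $(X_\alpha)^i=v^i_\alpha$, which is exactly the condition $J^\alpha(X_\alpha)=\Delta_\alpha$ in view of the local expressions (\ref{localSA}) and (\ref{LoccanCA}); hence $\mathbf{X}$ is a \textsc{sopde} in the sense of Definition \ref{sode2}. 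The fourth line of (\ref{lform}) then reduces to the last equation of (\ref{xal}). Now I apply Lemma \ref{lem0}: any integral section $\psi$ of the \textsc{sopde} $\mathbf{X}$ has the form $\psi(x)=(x^\alpha+c^\alpha,\phi^i(x),\partial\phi^i/\partial x^\alpha)$, so after the harmless relabelling noted in Remark \ref{rem1} we may assume $\psi=\phi^{[1]}$ with $\phi=p_Q\circ\psi$. Evaluating the last equation in (\ref{xal}) along $\phi^{[1]}$ and using that $X_\alpha$ applied to $\partial L/\partial v^i_\alpha$ amounts, on the image of $\phi^{[1]}$, to the total derivative $d/dx^\alpha$ acting on $\partial L/\partial v^i_\alpha\circ\phi^{[1]}$, yields exactly the Euler--Lagrange equations (\ref{EL eq}).

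For part (2), I drop regularity but assume $\mathbf{X}$ is integrable and that a first prolongation $\phi^{[1]}$ is an integral section. The integral section condition (\ref{rktkq integral section equivalence cond}) combined with the local form (\ref{localfi2}) of $\phi^{[1]}$ gives directly
\[
(X_\alpha)_\beta(\phi^{[1]}(x))=\delta^\beta_\alpha,\quad
(X_\alpha)^i(\phi^{[1]}(x))=\frac{\partial\phi^i}{\partial x^\alpha}\Big|_x=v^i_\alpha(\phi^{[1]}(x)),\quad
(X_\alpha)^i_\beta(\phi^{[1]}(x))=\frac{\partial^2\phi^i}{\partial x^\alpha\partial x^\beta}\Big|_x.
\]
Substituting these three identities into the fourth equation of (\ref{lform}) and simplifying, the terms multiplied by $v^j_\beta-(X_\beta)^j$ vanish on the image of $\phi^{[1]}$, and the remaining terms reorganise into
\[
\frac{\partial^2 L}{\partial x^\alpha\partial v^i_\alpha}\Big|_{\phi^{[1]}(x)}
+\frac{\partial^2 L}{\partial q^j\partial v^i_\alpha}\Big|_{\phi^{[1]}(x)}\frac{\partial\phi^j}{\partial x^\alpha}\Big|_x
+\frac{\partial^2 L}{\partial v^j_\beta\partial v^i_\alpha}\Big|_{\phi^{[1]}(x)}\frac{\partial^2\phi^j}{\partial x^\alpha\partial x^\beta}\Big|_x
=\frac{\partial L}{\partial q^i}\Big|_{\phi^{[1]}(x)},
\]
which is precisely the expanded form of the Euler--Lagrange equations (\ref{EL eq}).

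The only genuinely delicate point I expect is the bookkeeping in part (2): one must check that the two equations in (\ref{lform}) containing the factor $v^i_\beta-(X_\beta)^i$ collapse correctly on $\mathrm{Im}\,\phi^{[1]}$ (they do, because that factor vanishes there), and that the symmetry $(X_\alpha)^i_\beta(\phi^{[1]}(x))=(X_\beta)^i_\alpha(\phi^{[1]}(x))$ — which is automatic since mixed partial derivatives of $\phi^i$ commute, and is also consistent with the integrability of $\mathbf{X}$ — is used to identify the remaining second-order term with $\sum_\alpha \frac{d}{dx^\alpha}(\partial L/\partial v^i_\alpha\circ\phi^{[1]})$. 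Once this is done the Euler--Lagrange equations drop out with no further work, completing the proof. \qed
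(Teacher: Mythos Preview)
Your proposal is correct and follows essentially the same route as the paper's proof, which is extremely terse: the paper simply points to the third equation in (\ref{lform}) together with the third equation in (\ref{xal}) for part (1), and to the last equation in (\ref{lform}) together with the local expression (\ref{localfi2}) of $\phi^{[1]}$ for part (2). You have spelled out exactly these computations, including the appeal to Lemma \ref{lem0} and Remark \ref{rem1} in part (1) and the explicit vanishing of the $v^j_\beta-(X_\beta)^j$ terms on $\mathrm{Im}\,\phi^{[1]}$ in part (2), so there is nothing substantively different.
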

\dem

${\bf (1)}$ It is a direct consequence of  the third equation in
   (\ref{lform})  and the third equation in  (\ref{xal}).

 ${\bf (2)}$   If
$\phi^{[1]}$ is an integral section  of ${\bf X}$ then from the last equation in
   (\ref{lform})  and the local expression (\ref{localfi2}) of
$\phi^{[1]}$, we deduce that  $\phi$ is solution of the
Euler-Lagrange equations (\ref{EL eq}).\qed

Therefore, Equations (\ref{lageq0}) can be considered as a geometric version of the Euler-Lagrange field equations. From now, we will refer these equations (\ref{lageq0}) as \emph{$k$-cosymplectic Lagrangian equations}.
\index{k-cosymplectic Lagrangian equations}

\begin{remark}
{\rm
If $L:\rk\times T^1_kQ\longrightarrow \r$ is   regular, then $(dx^\alpha,\Omega_L^\alpha ,V)$ is     a $k$-cosymplectic structure on $\rk
\times T^1_kQ$. The Reeb vector fields    $(R_L)_\alpha$ corresponding to this structure
are characterized by the conditions
$$
\iota_{(R_L)_\alpha} \, dx^\beta=\delta_\alpha^\beta \;, \quad \iota_{(R_L)_\alpha} \,
\Omega^\beta_L=0 \, ,
$$
and   they satisfy  $$(R_L)_\alpha(E_L)=-\derpar{L}{x^\alpha}\,.$$

Hence, if we write the $k$-cosymplectic Hamiltonian system  (\ref{geonah})
for  $H=E_L$ and the  $k$-cosymplectic manifold
$$(M=\rk\times T^1_kQ,dx^\alpha,\Omega_L^\alpha ,V)$$ we obtain
$$
dx^\alpha(X_\beta)=\delta_\beta^\alpha, \quad  \displaystyle \sum_{\alpha=1}^k \,
 \iota_{X_\alpha}\Omega^\alpha =
dE_L-\displaystyle\sum_{\alpha=1}^k (R_L)_\alpha(E_L)dx^\alpha \, \, .
$$
which are the equations  (\ref{lageq0}). Therefore, the $k$-cosymplectic Lagrangian formalism developed in this section is a particular case of the $k$-cosymplectic formalism described in chapter \ref{k-cosymp eq}. As in the Hamiltonian case, when the Lagrangian is regular one can prove that there exists a solution $(X_1,\ldots, X_k)$ of the system (\ref{lageq0}) but this solution is not unique.\rqed}\end{remark}

\index{$k$-cosymplectic Lagrangian $k$-vector fields}
\begin{definition}
    A $k$-vector field $\mathbf{X}=(X_1,\ldots, X_k)\in \vf^k(\rktkq)$ is called a \emph{$k$-cosymplectic Lagrangian $k$-vector field} for a $k$-cosymplectic Hamiltonian system $(\rktkq,dx^\alpha,\Omega_L^\alpha, E_L)$ if $\mathbf{X}$ is a solution of (\ref{lageq0}).
    We denote by $\vf^k_L(\rktkq)$ the set of all $k$-cosymplectic Lagrangian $k$-vector fields.
\end{definition}

\begin{remark}
{\rm
 If we write the equations   (\ref{lageq0}) for the case $k=1$,
 we obtain
\begin{equation}\label{k1d}
dt(X)=1 \;, \quad \iota_{X_L}\Omega_L = dE_L + \ds\frac{\partial
L}{\partial t }dt\,,\end{equation} which are equivalent to the dynamical equations
 $$dt(X)=1 \;, \quad \iota_{X_L}\Omega_L =0\,,$$
where $\Omega_L=\Omega_L+dE_L\wedge dt$  is   Poincar\'{e}-Cartan  $2$-form
Poincar\'{e}-Cartan, see
\cite{EMR-1991}.

It is well know that these equations give the dynamics  of the    non-autonomous mechanics.\rqed}
\end{remark}

\section[$k$-cosymplectic Legendre transformation]{The Legendre transformation and the equivalence between $k$-cosymplectic Ha\-miltonian and  Lagrangian formulations of Classical Field Theories}

As in the $k$-symplectic case, the $k$-cosymplectic Hamiltonian and Lagrangian description of Classical Field Theories are two equivalent formulations when the Lagrangian function satisfies some  regularity condition. The $k$-cosym\-plectic Legendre transformation transforms one of these formalisms into the other. In this section we shall define the Legendre transformation in the $k$-cosymplectic approach and prove the equivalence between both Hamiltonian and Lagrangian settings.
  Recall that in the $k$-cosymplectic approach a Lagrangian is a function defined on $\rktkq$, i.e. $L\colon \rktkq\to \r$.


\index{Legendre transformation}
\index{$k$-cosymplectic approach!Legendre transformation}
\begin{definition}\label{leg trans k-co}
Let $L:\rk \times T^1_kQ \to \r$ be  a Lagrangian, then the
\emph{Legendre transformation} associated to $L$, $$FL: \rk\times T^1_kQ \longrightarrow \rk\times (T^1_k)^*Q$$
is defined as follows
$$FL(x,{\rm v}_q)=(x, [FL(x,{\rm v}_q)]^1,\ldots,[FL(x,{\rm v}_q)]^k )$$
where
$$ [FL(x,{\rm v}_q)]^\alpha(u_{q})=
\displaystyle\frac{d}{ds}\Big\vert_{s=0}\displaystyle L\left(
x,{v_1}_{q}, \dots,{v_\alpha}_{q}+su_{q}, \ldots, {v_k}_{q}
\right)\,,
$$
for $1\leq \alpha \leq    k $, being $u_q\in T_qQ$ and $(x,{\rm v}_q)=(x,{v_1}_{q},\ldots, {v_k}_{q})\in \rktkq$.\end{definition}

 Using canonical coordinates $(x^\alpha, q^i,v^i_\alpha)$ on $\rktkq$ and $(x^\alpha, q^i,p^\alpha_i)$ on $\rktkqh$, we deduce that $FL$ is locally given by
  \begin{equation}\label{locfl1}
    \begin{array}{rccl}
FL\colon & \rktkq & \to & \rktkqh\\\noalign{\medskip}
&(x^\alpha,q^i,v^i_\alpha) & \mapsto &  \Big(x^\alpha,q^i,
\frac{\displaystyle\partial L}{\displaystyle\partial v^i_\alpha } \Big)\, .
\end{array}
\end{equation}
 The Jacobian matrix of $FL$ is the following matrix of order $n(k+1)$,
    \[
    \left(
      \begin{array}{cccccc}
      I_k &  0  & 0 & \cdots & 0 \\
      0 & I_n & 0 & \cdots & 0 \\
 \derpars{L}{x^\alpha}{v^j_1}   &\derpars{L}{q^i}{v^j_1} & \derpars{L}{v^i_1}{v^j_1} & \cdots & \derpars{L}{v^i_k}{v^j_1} \\
   \vdots     & \vdots & \vdots &  & \vdots \\
       \derpars{L}{x^\alpha}{v^j_k}   & \derpars{L}{q^i}{v^j_k} & \derpars{L}{v^i_1}{v^j_k} & \cdots & \derpars{L}{v^i_k}{v^j_k} \\
      \end{array}
    \right)
    \]
    where $I_k$ and $I_n$ are the identity matrix of order $k$ and $n$ respectively and $1\leq i,j\leq n$. Thus we deduce that $FL$ is a local diffeomorphism if and only if \[det\Big(\derpars{L}{v^i_\alpha}{v^j_\beta} \Big)\neq 0\,.\]

\index{$k$-cosymplectic approach!regular Lagrangian}
\index{$k$-cosymplectic approach!hyperregular Lagrangian}
\index{$k$-cosymplectic approach!sigular Lagrangian}
     \begin{definition}
        A Lagrangian function  $L: \rk\times T^1_kQ\longrightarrow \r $ is said to be  \emph{regular} (resp. \emph{hyperregular}) if the
        Legendre transformation $FL$ is a local diffeomorphism (resp. global). Otherwise,  $L$ is said to be  \emph{singular}.
    \end{definition}

 From the local expressions   (\ref{Loc thetala}), (\ref{Loc omegaLa}) and
 (\ref{locfl1}) of $\Theta^\alpha,\;\Omega^\alpha,\;\Theta_L^\alpha $ y $\Omega_L^\alpha $
we deduce that the relationship between the canonical and Poincar\'{e}-Cartan forms  is given by ($\ak$)
  \begin{equation}\label{rfh-fl}
\Theta_L^\alpha =FL^*\Theta^\alpha\,, \quad\Omega_L^\alpha =FL^*\Omega^\alpha\,.
\end{equation}

     Consider $V=\ker ((\pi_{\rk})_{1,0})_*$ the vertical distribution of the bundle $(\pi_{\rk})_{1,0}\colon $ $\rktkq \to \rkq$, then one easily obtains the following  characterization of a regular Lagrangian (the proof of this result can be found in \cite{Tesis merino}).
    \begin{prop}
        Let $L\in \mathcal{C}^\infty(\rktkq)$ be a Lagrangian function. $L$ is regular if and only if $(dx^1,\ldots, dx^k,\Omega_L^1,\ldots, \Omega_L^k, V)$ is a $k$-cosymplectic structure on $\rktkq$.
    \end{prop}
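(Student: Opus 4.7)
\medskip

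\noindent\textbf{Proof plan.} The strategy is to check, one by one, the four defining conditions of a $k$-cosymplectic structure (Definition~\ref{deest}) on $\rktkq$ endowed with the forms $dx^\alpha$, $\Omega_L^\alpha$ and the distribution $V=\ker((\pi_{\rk})_{1,0})_*$, and to show that the non-trivial ones are controlled by the invertibility of the Hessian matrix $\left(\derpars{L}{v^i_\alpha}{v^j_\beta}\right)$, i.e.\ by the regularity of $L$.

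First I would dispose of the routine items. The $1$-forms $dx^\alpha$ are closed, wedge to the non-vanishing volume $dx^1\wedge\cdots\wedge dx^k$ on the $\rk$-factor, and plainly vanish on $V$. The forms $\Omega_L^\alpha=-d\Theta_L^\alpha$ are closed by construction, and their local expression (\ref{Loc omegaLa}) contains only wedges of the form $dq^i\wedge(\cdot)$, so $\Omega_L^\alpha|_{V\times V}=0$ since $V$ is locally generated by the $\partial/\partial v^i_\alpha$. The distribution $V$ has dimension $nk$ and is clearly integrable (the $\partial/\partial v^i_\alpha$ mutually commute). All of this works regardless of regularity; what remains are the two rank conditions
\begin{equation*}
\Big(\bigcap_\alpha\ker dx^\alpha\Big)\cap\Big(\bigcap_\alpha\ker\Omega_L^\alpha\Big)=\{0\},\qquad \dim\Big(\bigcap_\alpha\ker\Omega_L^\alpha\Big)=k,
\end{equation*}
which constitute the heart of the proof.

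Next I would compute $\iota_X\Omega_L^\alpha$ for a generic tangent vector
\[
X=A^\beta\derpar{}{x^\beta}+B^i\derpar{}{q^i}+C^i_\beta\derpar{}{v^i_\beta}
\]
using (\ref{Loc omegaLa}) and equate coefficients of $dx^\beta$, $dq^j$, $dv^j_\beta$ to zero. The coefficient of $dv^j_\beta$ yields the clean linear system
\begin{equation*}
\sum_{i}\derpars{L}{v^j_\beta}{v^i_\alpha}\,B^i=0\qquad(\text{for all }\alpha,\beta,j).
\end{equation*}
This is the decisive step: if $L$ is regular then, fixing $\alpha$, the $n$ columns of the Hessian indexed by $(i,\alpha)$ form part of an invertible $(nk)\times(nk)$ matrix and are therefore linearly independent, forcing $B^i=0$. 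Conversely, if $L$ is singular there exists a nonzero element $\widetilde B^{(i\alpha)}$ in the kernel of the Hessian, and then the purely vertical vector $X=\widetilde B^{(i\alpha)}\partial/\partial v^i_\alpha$ lies in $\bigcap_\alpha\ker dx^\alpha\cap\bigcap_\alpha\ker\Omega_L^\alpha$, violating condition~(3). Thus regularity is both necessary and sufficient for condition~(3).

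Finally, assuming $L$ regular and hence $B^i=0$, the coefficient of $dx^\beta$ vanishes automatically, while the coefficient of $dq^j$ reduces to
\begin{equation*}
\derpars{L}{x^\beta}{v^j_\alpha}A^\beta+\derpars{L}{v^k_\beta}{v^j_\alpha}C^k_\beta=0,
\end{equation*}
which, by invertibility of the Hessian, uniquely determines the $C^k_\beta$ in terms of the $k$ free parameters $A^\beta$. This shows that $\bigcap_\alpha\ker\Omega_L^\alpha$ is parametrised exactly by $(A^1,\dots,A^k)\in\rk$, giving dimension $k$ and furnishing condition~(4); the same computation shows that imposing $A^\beta=0$ (i.e.\ $X\in\bigcap_\alpha\ker dx^\alpha$) forces $X=0$, re-confirming condition~(3). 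The only substantive obstacle is bookkeeping with the multi-indices $(i,\alpha)$ in the Hessian; once one recognises that ``column-linear-independence of a block of an invertible matrix'' is what makes the $B^i=0$ step work, the argument is straightforward.
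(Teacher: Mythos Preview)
Your argument is correct. The paper itself does not supply a proof of this proposition: it states the result twice (as Proposition~\ref{lag kcosymp manifold} and again in the section on the Legendre transformation) and in both places defers to external references (\cite{LMeS-2001} and \cite{Tesis merino}). Your direct verification of the axioms in Definition~\ref{deest} via the local expression~(\ref{Loc omegaLa}) is the natural route, and all the steps check out: the ``routine'' conditions are indeed independent of regularity; the $dv^j_\beta$-coefficient of $\iota_X\Omega_L^\alpha$ is exactly $\sum_i(\partial^2 L/\partial v^j_\beta\partial v^i_\alpha)B^i$, and your column-independence remark correctly forces $B^i=0$ when the full Hessian is invertible; the singular case is handled by the purely vertical vector you exhibit; and the $dq^\ell$-coefficient, once $B=0$, is precisely the Hessian applied to $C$ plus an $A$-dependent term, giving the $k$-dimensional kernel. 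One cosmetic point: in the final displayed equation you use $C^k_\beta$ with $k$ as a summation index, which clashes with the fixed $k$ denoting the number of independent variables; better to write $C^l_\beta$ or similar.
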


    Therefore one can state the following theorem:
    \begin{theorem}
        Given a Lagrangian function $L\colon \rktkq\to \r$, the following conditions are equivalents:
        \begin{enumerate}
            \item $L$ is regular.
            \item $\det \left(\derpars{L}{v^i_\alpha}{v^j_\beta}\right)\neq 0$ with $1\leq i,j\leq n$ and $1\leq \alpha,\beta\leq k$.
            \item $FL$ is a local diffeomorphism.
        \end{enumerate}
    \end{theorem}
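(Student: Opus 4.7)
First I would observe that the equivalence between $(1)$ and $(3)$ is essentially tautological: the regularity of $L$ has been \emph{defined} just above the statement precisely as the condition that $FL$ be a local diffeomorphism. Thus the only substantive content lies in proving the equivalence $(2) \Longleftrightarrow (3)$, for which the main tool is the inverse function theorem applied to the explicit local expression \eqref{locfl1} of the Legendre transformation.

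The strategy is to compute the Jacobian of $FL$ in the adapted bundle coordinates $(x^\alpha, q^i, v^i_\alpha)$ on $\rktkq$ and $(x^\alpha, q^i, p^\alpha_i)$ on $\rktkqh$. Since $FL(x^\alpha, q^i, v^i_\alpha)= (x^\alpha, q^i, \partial L/\partial v^i_\alpha)$, the base coordinates $x^\alpha$ and the fibre-over-$\rk$ coordinates $q^i$ are transported by the identity, so the Jacobian takes the lower block-triangular form displayed immediately before the theorem, namely
\[
\left(
\begin{array}{ccc}
I_k & 0 & 0 \\
0 & I_n & 0 \\
\ast & \ast & \left(\dfrac{\partial^2 L}{\partial v^i_\alpha \partial v^j_\beta}\right)
\end{array}
\right),
\]
where the starred entries involve $\partial^2 L/\partial x^\alpha \partial v^j_\beta$ and $\partial^2 L/\partial q^i \partial v^j_\beta$ and play no role in the determinant computation.

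From this block-triangular structure one reads off
\[
\det (FL)_*(x,{\rm v}_q) \;=\; \det I_k \cdot \det I_n \cdot \det\!\left(\frac{\partial^2 L}{\partial v^i_\alpha \partial v^j_\beta}\right)(x,{\rm v}_q) \;=\; \det\!\left(\frac{\partial^2 L}{\partial v^i_\alpha \partial v^j_\beta}\right)(x,{\rm v}_q).
\]
Consequently, by the inverse function theorem, $FL$ is a local diffeomorphism at the point $(x,{\rm v}_q)$ if and only if the Hessian matrix $\bigl(\partial^2 L/\partial v^i_\alpha \partial v^j_\beta\bigr)$ is nonsingular at that point. Since this equivalence holds pointwise on $\rktkq$, the global statements $(2)$ and $(3)$ are equivalent, and combined with the definitional equivalence $(1)\Longleftrightarrow(3)$ the theorem follows.

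There is no genuine obstacle here: the proof is essentially a pointwise linear-algebra observation about a block-triangular matrix, and the argument is the direct $k$-cosymplectic analogue of the corresponding symplectic result proved earlier in the chapter on Lagrangian Mechanics. The only point demanding a small amount of care is checking that the additional space-time directions $x^\alpha$ do not interfere with the determinant, which is immediate from the identity block $I_k$ appearing in the upper-left corner.
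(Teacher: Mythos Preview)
Your proposal is correct and matches the paper's approach exactly: the paper establishes $(2)\Leftrightarrow(3)$ via the block-triangular Jacobian computation displayed just before the definition of regularity, and then $(1)\Leftrightarrow(3)$ is definitional, with the theorem stated explicitly as a summary of these facts (``Therefore one can state the following theorem''). There is nothing to add.
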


    Now we restrict ourselves to the case of hyperregular Lagrangian. In this case the  Legendre transformation $FL$ is a global  diffeomorphism and thus we can define a Hamiltonian  function    $H: \rk\times(T^1_k)^*Q \to \r$ by $$H=(FL^{-1})^*E_L=E_L \circ FL^{-1}$$ where $FL^{-1}$ is the inverse  diffeomorphism of $FL$.

    Under these conditions, we can state the  equivalence between both Hamiltonian and Lagrangian formalisms.

    \begin{theorem}\label{equivalence k-symp}
        Let $L\colon \rktkq\to \r$ be a hyperregular Lagrangian then:
        \begin{enumerate}
            \item $\mathbf{X}=(X_1,\ldots, X_k)\in \vf^k_L(\rktkq)$ if and only if $(T^1_kFL)(\mathbf{X})=(FL_*(X_1),$ $\ldots, FL_*(X_k))\in \vf^k_H(\rktkqh)$ where $H=E_L\circ FL^{-1}$.
            \item There exists a one to one correspondence between the set of maps $\phi\colon \rk\to Q$ such that $\phi^{[1]}$ is an integral section of some $(X_1,\ldots, X_k)\in \vf^k_L(\rktkq)$ and the set of maps $\psi\colon \rk\to \rktkqh$, which are integral section of some  $(Y_1,\ldots, Y_k)\in\vf^k_H(\rktkqh)$, being $H=(FL^{-1})^*E_L$.

        \end{enumerate}
    \end{theorem}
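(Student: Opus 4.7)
The proof will mirror the corresponding argument in the $k$-symplectic setting (Theorem 5.29), so my plan is to exploit the relations $\Theta_L^\alpha=FL^*\Theta^\alpha$, $\Omega_L^\alpha=FL^*\Omega^\alpha$ established in (\ref{rfh-fl}) together with the fact that $E_L=H\circ FL$ and that $FL$ is a diffeomorphism onto $\rktkqh$ covering the identity on the $\rk$-factor, i.e. $\bar\pi_1\circ FL=\bar\pi_1$, so in particular $FL^*(dx^\alpha)=dx^\alpha$.

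For part $(1)$, first I would observe that, since $FL$ is a diffeomorphism, $T^1_kFL$ is a diffeomorphism from $T^1_k(\rktkq)$ to $T^1_k(\rktkqh)$, hence every $k$-vector field on $\rktkqh$ is of the form $T^1_kFL(\mathbf X)=(FL_*X_1,\ldots,FL_*X_k)$ for a unique $\mathbf X\in\vf^k(\rktkq)$. The key computation is then to pull back the $k$-cosymplectic Hamiltonian equation (\ref{geonah}) through $FL$: because the Reeb vector fields of both structures are (locally) $\partial/\partial x^\alpha$ and $FL$ fixes the base coordinates, $R_\alpha(H)\circ FL=\partial (H\circ FL)/\partial x^\alpha=\partial E_L/\partial x^\alpha$, so the right-hand side of (\ref{geonah}) pulls back to $dE_L+\sum_\alpha (\partial L/\partial x^\alpha)\,dx^\alpha$, which is exactly the right-hand side of (\ref{lageq0}); meanwhile $FL^*(\iota_{FL_*X_\alpha}\Omega^\alpha)=\iota_{X_\alpha}FL^*\Omega^\alpha=\iota_{X_\alpha}\Omega_L^\alpha$ by (\ref{rfh-fl}), and the condition $dx^\alpha(FL_*X_\beta)=\delta^\alpha_\beta$ is equivalent to $dx^\alpha(X_\beta)=\delta^\alpha_\beta$ because $FL$ preserves the $x^\alpha$. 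Putting these together yields the equivalence in (1).

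For part $(2)$, the natural bijection is $\phi^{[1]}\mapsto \psi:=FL\circ\phi^{[1]}$, with inverse $\psi\mapsto \phi:=p_Q\circ FL^{-1}\circ\psi$. I would check two things. First, if $\phi^{[1]}$ is an integral section of $\mathbf X\in\vf^k_L(\rktkq)$, then applying $FL$ one obtains that $\psi=FL\circ\phi^{[1]}$ is an integral section of $T^1_kFL(\mathbf X)$, which by part (1) lies in $\vf^k_H(\rktkqh)$; and conversely given $\psi$ an integral section of some $(Y_1,\ldots,Y_k)\in\vf^k_H(\rktkqh)$, write $(Y_1,\ldots,Y_k)=T^1_kFL(\mathbf X)$ with $\mathbf X\in\vf^k_L(\rktkq)$ via part (1), so $\phi^{[1]}:=FL^{-1}\circ\psi$ is an integral section of $\mathbf X$. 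Second, and this is the one point that needs a small argument rather than mere transport of structure, I need to verify that $FL^{-1}\circ\psi$ is indeed of the form $\phi^{[1]}$ for some $\phi\colon\rk\to Q$; but since $L$ is regular, Theorem \ref{rel}(1) (applied to $\mathbf X\in\vf^k_L(\rktkq)$) guarantees that $\mathbf X$ is a \textsc{sopde}, so by Lemma \ref{lem0} every integral section of $\mathbf X$ has the form $\phi^{[1]}$ (up to a constant translation in $\rk$ that can be absorbed in the choice of source), with $\phi=p_Q\circ FL^{-1}\circ\psi$.

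The main obstacle, and what deserves to be written out carefully, is the pull-back calculation checking that (\ref{geonah}) for $H=E_L\circ FL^{-1}$ is literally the $FL$-pushforward of (\ref{lageq0}); the subtle point is handling the Reeb-correction term $\sum_\alpha R_\alpha(H)\,dx^\alpha$, where one must use that $FL$ is a bundle morphism over the identity of $\rk$ so that the canonical Reeb fields $\partial/\partial x^\alpha$ on $\rktkqh$ and on $\rktkq$ are $FL$-related, giving $R_\alpha(H)\circ FL=\partial E_L/\partial x^\alpha=-\partial L/\partial x^\alpha$ (up to sign conventions) and producing exactly the inhomogeneous term of (\ref{lageq0}). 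Everything else is formal manipulation using that $FL$ is a diffeomorphism and that the Poincaré--Cartan forms are the $FL$-pullbacks of the canonical forms on $\rktkqh$.
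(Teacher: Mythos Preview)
Your overall strategy matches the paper's: pull back the $k$-cosymplectic Hamiltonian equation through $FL$ using the relations $\Omega_L^\alpha=FL^*\Omega^\alpha$ and $FL^*dx^\alpha=dx^\alpha$, and use that $FL$ is a diffeomorphism. Your treatment of part~(2) is in fact slightly more careful than the paper's, since you explicitly invoke the \textsc{sopde} property (via Theorem~\ref{rel}(1)) to justify that $FL^{-1}\circ\psi$ has the form $\phi^{[1]}$; the paper just asserts this identification.

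However, the point you yourself flag as ``subtle'' is precisely where your argument slips. It is \emph{not} true that the coordinate vector fields $\partial/\partial x^\alpha$ on $\rktkq$ and on $\rktkqh$ are $FL$-related: from the local expression (\ref{locfl1}) one computes
\[
FL_*\Big(\derpar{}{x^\alpha}\Big)=\derpar{}{x^\alpha}+\derpars{L}{x^\alpha}{v^i_\beta}\,\derpar{}{p^\beta_i}\,,
\]
so your chain $R_\alpha(H)\circ FL=\partial(H\circ FL)/\partial x^\alpha=\partial E_L/\partial x^\alpha$ fails at the first step, and the further claim $\partial E_L/\partial x^\alpha=-\partial L/\partial x^\alpha$ is also false (differentiate $E_L=v^i_\beta\,\partial L/\partial v^i_\beta-L$ directly). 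What \emph{is} true, and what the paper uses, is that the Reeb fields of the two $k$-cosymplectic structures are $FL$-related: since $\Omega_L^\beta=FL^*\Omega^\beta$ and $FL^*dx^\beta=dx^\beta$, the Lagrangian Reeb fields $(R_L)_\alpha$ (characterized by $\iota_{(R_L)_\alpha}dx^\beta=\delta^\beta_\alpha$ and $\iota_{(R_L)_\alpha}\Omega_L^\beta=0$) satisfy $FL_*(R_L)_\alpha=R_\alpha$ by uniqueness. Hence $(R_\alpha(H))\circ FL=(R_L)_\alpha(E_L)=-\partial L/\partial x^\alpha$, using the remark following Theorem~\ref{rel}. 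You reach the correct value $-\partial L/\partial x^\alpha$, but only because your two errors cancel; the justification as written does not hold, and since this is exactly the term that distinguishes the $k$-cosymplectic case from the $k$-symplectic one, it needs to be fixed.
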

\proof

\begin{enumerate}
    \item  Given $FL$  we can consider the canonical prolongation $T^1_kFL$ following the definition given in (\ref{tkq: prolongation expr}). Thus given a $k$-vector field $\mathbf{X}=(X_1,\ldots, X_k)\in\vf_L^k(\rktkq)$, one can define a $k$-vector field on $\rktkqh$ by means of  the following diagram
            \[
                \xymatrix{
                \rktkq\ar[r]^-{FL}\ar[d]_-{\mathbf{X}} & \rktkqh\ar[d]^-{(T^1_kFL)(\mathbf{X})}\\
                T^1_k(\rktkq) \ar[r]^-{T^1_kFL} & T^1_k(\rktkqh)
                }
            \]
            that is, for each $\ak$, we consider the vector field on $\rktkqh$, $FL_*(X_\alpha)$.

            We now consider the function $H=E_L\circ FL^{-1}= (FL^{-1})^*E_L$; then
            \[
                (T^1_kFL)(\mathbf{X})=(FL_*(X_1),\ldots, FL_*(X_k))\in \vf^k_H(\rktkqh) \,
             \]

             if and only if
             \[
             \begin{array}{l}
              dx^\alpha( FL_*(X_\beta))=\delta^\alpha_\beta\,, \\\noalign{\medskip}
                \ds\sum_{\alpha=1}^k\iota_{FL_*(X_\alpha)}\Omega^\alpha - d\Big((FL^{-1})^*E_L\Big)+\displaystyle\sum_{\alpha=1}^k R_\alpha\Big((FL^{-1})^*E_L\Big)dx^\alpha   =0\,.
              \end{array}
            \]

            Since $FL$ is a diffeomorphism  the above condition  is equivalent to the condition
            \[
                dx^\alpha(X_\beta)=\delta^\alpha_\beta
            \]
             and
             \[
             \begin{array}{rl}
                0=& FL^*\Big(\ds\sum_{\alpha=1}^k\iota_{FL_*(X_\alpha)}\Omega^\alpha - d(FL^{-1})^*E_L +\displaystyle\sum_{\alpha=1}^k R_\alpha\Big((FL^{-1})^*E_L\Big)dx^\alpha  \Big) \\\noalign{\medskip}
                =& \ds\sum_{\alpha=1}^k\iota_{X_\alpha}(FL)^*\Omega^\alpha-dE_L + \displaystyle\sum_{\alpha=1}^k R_\alpha(E_L)dx^\alpha=\ds\sum_{\alpha=1}^k\iota_{X_\alpha}(FL)^*\Omega^\alpha-dE_L - \displaystyle\sum_{\alpha=1}^k\derpar{L}{x^\alpha}dx^\alpha\,.               \end{array}
            \]
            But from (\ref{rfh-fl}) this  occurs if and only if  $\mathbf{X}\in\vf^k_L(\rktkq)$.

            Finally, observe that since $FL$ is a diffeomorphism, $T^1_kFL$ is so also, and then all $k$-vector field on $\rktkqh$ is of the type $T^1_kFL(\mathbf{X})$ for some $\mathbf{X}\in \vf^k(\rktkq)$.

        \item Let $\phi\colon \rk\to Q$ be a map such that its first prolongation $\phi^{[1]}$ is an integral section of some $\mathbf{X}=(X_1,\ldots, X_k)\in \vf^k_L(\rktkq)$, then the map $\psi=FL\circ \phi^{[1]}$ is an integral section of $$T^1_kFL(\mathbf{X})=(FL_*(X_1),\ldots, FL_*(X_k))\, .$$
        Since we have proved in $(1)$ that  $T^1_kFL(\mathbf{X})\in \vf^k_H(\rktkqh)$, we obtain the first part of the  item $(2)$.

            The converse is proved in a similar way. Notice that any $k$-vector field on $\rktkqh$ is of the form  $T^1_k\mathbf{X}$ for some $\mathbf{X}\in \vf^k(\rktkq)$. Thus given $\psi\colon \rk\to \rktkqh$ integral section of any $(Y_1,\ldots, Y_k)\in\vf^k_H(\rktkqh)$, there exists a $k$-vector field $\mathbf{X}\in \vf^k_L(\rktkq)$ such that $T^1_kFL(\mathbf{X})=(Y_1,\ldots, Y_k)$. Finally, the map $\psi$ corresponds with $\phi^{[1]}$, where $\phi=(\pi_Q)_1\circ \psi$. \qed
\end{enumerate}

 \begin{remark}{\rm
 Throughout this chapter we have developed the $k$-cosymplectic Lagrangian formalism on the trivial bundle $\rktkq\colon \rk$. In \cite{MSV-2005} we study the consequences on this theory when we consider a nonstandard flat connection on the bundle $\rktkq\colon \rk$. This paper, \cite{MSV-2005}, is devoted to the analysis of the deformed dynamical equations and solutions, both in Hamiltonian and Lagrangian settings and we establish a characterization of the energy $E_L$ based on variational principles. We conclude that the energy function is the only function that performs the equivalence between the Hamiltonian and Lagrangian variational principles when a nonstandard flat connection is considering. As a particular case, when $k=1$ we obtain the results of the paper \cite{EMR-1995}.
 \rqed
 }\end{remark}

 \begin{remark}
 {\rm
 The $k$-cosymplectic Lagrangian and Hamiltonian formalism of first-order classical field theories are reviewed and completed in \cite{RRSV-2012}, where several alternative formulations are developed. First, generalizing the construction of Tulczyjew for mechanics \cite{T1,T2}, we give a new interpretation of the classical field equations (in the multisymplectic approach this study can be see, for instance, in \cite{LMS-03}). Second, the Lagrangian and Hamiltonian formalisms are unified by giving an extension of the Skinner-Rusk formulation on classical mechanics \cite{skinner2}.
 \rqed}
 \end{remark}
\newpage
\mbox{}
\thispagestyle{empty} 

\chapter{Examples}
In this chapter we shall present some physical examples which can be described using the $k$-cosymplectic formalism (see \cite{MSV-2009} for more details).


\section{Electrostatic equations}\label{example k-cosymp: hamiltonian electrostatic}

                   Consider the $3$-cosymplectic Hamiltonian  equations (\ref{geonah})
\begin{equation}\label{electroestatic_k-cosymp}
                        \begin{array}{l}
dx^\alpha(X_\beta)=\delta_{\alpha\beta}, \quad 1\leq \alpha,\beta\leq 3\\
\noalign{\medskip}\displaystyle \sum_{\alpha=1}^3 \,
 \iota_{X_\alpha}\Omega^\alpha =
dH-\displaystyle\sum_{\alpha=1}^3 R_\alpha(H)dx^\alpha \, \, .
\end{array}
                    \end{equation}
                    where   $H$ is the Hamiltonian function given  by           \begin{equation}\label{k-cosymp Hamil electro}
                \begin{array}{rccl}
                    H\colon & \r^3\times (T^1_3)^*\r & \to & \r\\\noalign{\medskip}
                     &(x^\alpha, q,p^\alpha) & \mapsto & 4\pi r(x)\sqrt{g}q+\ds\frac{1}{2\sqrt{g}}g_{\alpha\beta}p^\alpha p^\beta
                \end{array}\,,
            \end{equation}
        with $1\leq \alpha,\beta \leq 3$ and $r(x)$ is the scalar function on $\r^3$ determined by (\ref{density charge}), and $(X_1,X_2,X_3)$
                  is  a $3$-vector field   on $\r^3\times(T^1_3)^*\r$.

                 If $(X_1,X_2,X_3)$ is solution of (\ref{electroestatic_k-cosymp}) then,  from (\ref{k-cosymp condvf}), we deduce that each $X_\alpha$, with $1\leq \alpha \leq 3$  has the local expression
                    \[
                        X_\alpha= \derpar{}{x^\alpha} + \ds\frac{1}{\sqrt{g}}g_{\alpha\beta}p^\beta \derpar{}{q} + (X_\alpha)^\beta\derpar{}{p^\beta},
                    \]
                and  the  components $(X_\alpha)^\beta$, $1\leq \alpha,\beta \leq 3 $, satisfy the identity
                    \[
                         (X_1)^1+ (X_2)^2+ (X_3)^3=-4\pi r(x) \sqrt{g} \, .
                    \]

                Assume that $(X_1,X_2,X_3)$ is an integrable $3$-vector field; then, if \[
                    \begin{array}{ccccl}
                     \varphi &:& \r^3 &  \longrightarrow & \r^3\times (T^1_3)^*\r \\ \noalign{\medskip}
                                            & &   x  & \to &  \varphi(x)=(\psi(x),\psi^1(x),\psi^2(x),\psi^3(x))
                   \end{array} \]
                 is  an integral section of a $3$-vector field $(X_1,X_2,X_3)$ solution of (\ref{electroestatic_k-cosymp}),  we obtain that $\varphi$  is a solution of the electrostatic equations (\ref{local electrostatic eq}).

 \section{The massive scalar field}\label{example k-cosymp: hamiltonian scalar field}

               Consider the Hamiltonian function $H\colon \r^4\times (T^1_4)^*\r\to \r$ given by
                    \[
                        H(x^1,x^2,x^3,x^4,q, p^1, p^2, p^3, p^4)=\ds\frac{1}{2\sqrt{-g}}g_{\alpha\beta}p^\alpha p^\beta-\sqrt{-g}\left( F(q)-\ds\frac{1}{2}m^2q^2\right)\,,
                    \]
               where $(x^1,x^2,x^3,x^4)$ are the coordinates on $\r^4$, $q$ denotes the scalar field $\phi$ and $(x^1,x^2,x^3,x^4,q,p^1,p^2,$ $p^3,p^4)$ are the canonical coordinates on $\r^4\times (T^1_4)^*\r$.

                Consider the $4$-cosymplectic Hamiltonian  equation
                    \[
                        \begin{array}{l}
dx^\alpha(X_\beta)=\delta_{\alpha\beta}, \quad 1\leq \alpha,\beta\leq 4\\
\noalign{\medskip}\displaystyle \sum_{\alpha=1}^4 \,
 \iota_{X_\alpha}\Omega^\alpha =
dH-\displaystyle\sum_{\alpha=1}^4 R_\alpha(H)dx^\alpha \, \, .
\end{array}
                    \]
                associated to the above Hamiltonian function. From (\ref{partial Ham scalar k-co}) one obtains that, in natural coordinates, a $4$-vector field solution of this system of equations has the following local expression (with $1\leq \alpha\leq 4$)
                    \begin{equation}\label{vf scalar k-co}
                        X_\alpha=\derpar{}{x^\alpha}+\frac{1}{\sqrt{-g}}g_{\alpha\beta}p^\beta\derpar{}{q} + (X_\alpha)^\beta\derpar{}{p^\beta}\,,
                    \end{equation}
                where the functions $(X_\alpha)^\beta\in \mathcal{C}^\infty(\r^4\times (T^1_4)^*\r)$ satisfy
                    \begin{equation}\label{condition vf scalar k-co}
                        \sqrt{-g}\Big(F'(q)-m^2q\Big)= (X_1)^1+(X_2)^2+(X_3)^3+(X_4)^4\,.
                    \end{equation}

                Assume that $(X_1,X_2,X_3,X_4)$ is an integrable $4$-vector field. Let $\varphi\colon \r^4\to \r^4\times(T^1_4)^*\r,\, \varphi(x)=(x,\psi(x),\psi^1(x),\psi^2(x),\psi^3(x),\psi^4(x))$ be an integral section of a $4$-vector field solution of the $4$-cosymplectic Hamiltonian  equation. Then from (\ref{vf scalar k-co}) and (\ref{condition vf scalar k-co}) one obtains
                    \[
                        \begin{array}{l}
                            \derpar{\psi}{x^\alpha}= \frac{1}{\sqrt{-g}}g_{\alpha\beta}\psi^\beta\\
                            \sqrt{-g}\Big(F'(\psi)-m^2\psi\Big)=\derpar{\psi^1}{x^1}+ \derpar{\psi^2}{x^2}+ \derpar{\psi^3}{x^3}+\derpar{\psi^4}{x^4}\,.
                        \end{array}
                    \]

                Therefore, $\psi\colon \r^4\to \r$ is a solution of the equation
                    \[
                        \sqrt{-g}\Big(F'(\psi)-m^2\psi\Big)= \sqrt{-g}\derpar{}{x^\alpha}\left( g^{\alpha\beta}\derpar{\psi}{t^\beta} \right)\,,
                    \]
                that is, $\psi$ is a solution of the scalar field equation.
                \begin{remark}
                {\rm
                    Some particular case of the scalar field equation are the following:
                        \begin{enumerate}
                            \item If $F=0$ we obtain the linear scalar field equation.
                            \item If $F(q)=m^2q^2$, we obtain the Klein-Gordon equation \cite{saletan},
                                    \[
                                        (\square + m^2)\psi=0\,.
                                    \]
                        \end{enumerate}
                        \rqed}
                \end{remark}

For the Lagrangian counterpart, we consider again the Lagrangian (\ref{scalar lagrangian}).

Let ${\bf X}=(X_1,X_2,X_3,X_4)$ be an integrable solution of the equation (\ref{lageq0}) for $L$ and $k=4$, then if $\phi\colon\r^4\to\r$ is a solution of
${\bf X}$, then
we obtain that $\phi$ is a
solution of the equations:
\[\begin{array}{rl}0=&\derpars{L}{x^\alpha}{v_\alpha}\Big\vert_{\phi^{[1]}(t)} - \derpars{L}{q}{v_\alpha}\Big\vert_
{\phi^{[1]}(t)}\derpar{\phi}{x^\alpha}+
\derpars{L}{v_\alpha}{v_\beta}\Big\vert_{\phi^{[1]}(t)}\derpars{\phi}{x^\alpha}{x^\beta}-
\derpar{L}{q}\Big\vert_{\phi^{[1]}(t)} \\\noalign{\medskip} =& \sqrt{-g}\derpar{}{x^\alpha}\left(g^{\alpha\beta}
\derpar{\phi}{x^\beta}\right)-\sqrt{-g}(F'(\phi)-m^2\phi) \end{array}\] and thus, $\phi$
is a solution of the scalar field equation (\ref{scalar}).


\section{Harmonic maps}

 Let us recall that a smooth map $\varphi\colon M\to N$ between two Riemannian
manifolds $(M,g)$ and $(N,h)$ is called \textit{harmonic} if it is a critical point of the energy
functional $E$, which, when $M$ is compact, is defined as
$$E(\varphi)=\int_{M}\frac{1}{2}trace_g\varphi^*h\,dv_g,$$
where $dv_g$ denotes the measure on $M$ induced by its metric and, in local coordinates,
the expression $\frac{1}{2}{\rm trace}_g\varphi^*h$ reads
$$\frac{1}{2}{\rm trace}_g\varphi^*h=\frac{1}{2}
g^{ij}h_{\alpha\beta}\derpar{\varphi^\alpha}{x^i}\derpar{\varphi^\beta}{x^j},$$
$(g^{ij})$ being the inverse of the metric matrix $(g_{ij})$.

 This definition can be extended to the case where $M$ is not compact by requiring that the restriction
 of $\varphi$ to every compact domain be harmonic, (for more details see \cite{{CGR-2001}, CM-2008, ,EL-1978})

Now we consider the particular case $M=\rk$, with coordinates
$(x^\alpha)$. In this case, taking the Lagrangian
$$\begin{array}{rccl} L\colon  & \rk\times T^1_kN & \to &
\r\\\noalign{\medskip} &(x^\alpha,q^i,v^i_\alpha) & \mapsto &
\frac{1}{2}g^{\alpha\beta}(x)h_{ij}(q)v^i_\alpha v^j_\beta\end{array}$$ and the
$k$-cosymplectic Euler-Lagrange equations (\ref{lageq0}) associated to it,
we obtain the following result:  if $\varphi\colon \rk\to N$ is such that $\varphi^{[1]}$
is an integral section of ${\bf X}=(X_1,\ldots, X_k)$, being ${\bf
X}=(X_1,\ldots, X_k)$ a solution of the geometric equation
(\ref{lageq0}), then, $\varphi$ is a solution of the Euler-Lagrange
equations

\begin{equation}\label{harmonic}
 \derpars{\varphi^i}{x^\alpha}{x^\beta}-\Gamma^\gamma_{AB}\derpar{\varphi^i}{x^\gamma}+\widetilde{\Gamma}^i
 _{jk}\derpar{\varphi^j}{x^\alpha}\derpar{\varphi^k}{x^\beta}=0\,\qquad 1\leq i\leq n\,,
 \end{equation}
 where $\Gamma^\gamma_{\alpha\beta}$ and $\widetilde{\Gamma}^i_{jk}$ denote the Christoffel symbols
 of the Levi-Civita connections of $g$ and $h$, respectively.

Let us observe that these equations are the Euler-Lagrange equations associated to the
energy functional $E$, and (\ref{harmonic}) can be written  as $${\rm trace}_g\nabla d\varphi^*h=0,$$
where $\nabla$ is the connection on the vector bundle $T^*\r^k\otimes \varphi^*(TN)$
induced by the Levi-Civita  connections on $\rk$ and $N$ (see, for example, \cite{EL-1978}). Therefore, if
$\varphi\colon\rk\to N$ is a solution of (\ref{harmonic}), then $\varphi$ is harmonic.

\begin{remark}
{\rm Some examples of harmonics maps are the following ones:
\begin{itemize}
\item Identity and constant maps are harmonic.

\item In the case $k=1$, that is, when $\varphi\colon\r\to N$ is a curve on $N$, we deduce that $\varphi$
is a harmonic map if and only if it is a geodesic.

\item Now,   consider the case $N=\r$ (with the standard metric). Then $\varphi\colon\rk\to\r$ is a
harmonic map if and only if it is a harmonic function, that is, is a solution of the Laplace equation.
\end{itemize}\rqed}
\end{remark}

\section{Electromagnetic Field in vacuum: \newline Maxwell's equations.}

\index{Maxwell's equations}
As it is well know (see \cite{Frankel}),  Maxwell's equations in $\r^3$, are
\begin{eqnarray}
  \label{Gauss's LAwane-1994} \makebox{(Gauss's Law)}   &
    \nabla \cdot\mathbf{E} =  \rho  \\\noalign{\medskip}
\label{Ampere's LAwane-1994} \makebox{(Ampere's Law)}  &  \qquad  \nabla \times \mathbf{B} = \mathbf{j} +
\ds\frac{\partial\mathbf{E}}{\partial t} \\\noalign{\medskip}
\label{Faraday's LAwane-1994} \makebox{(Faraday's Law)} & \nabla\times \mathbf{E} + \ds\frac{\partial\mathbf{B}}
{\partial t}=0\\\noalign{\medskip}
\label{(Absence of Free Magnetic Poles2)} \makebox{(Absence of Free Magnetic Poles)} & \nabla\cdot
\mathbf{B} =0\,.
  \end{eqnarray}

Here, the symbols in bold represent vector quantities in
$\r^3$, whereas symbols in italics represent scalar quantities.

  The first two equations are inhomogeneous, while the other two are homogeneous. Here, $\rho$ is
  the charge density, $\mathbf{E}$ is the electric field vector, $\mathbf{B}$ is the magnetic field
  an,d   $\mathbf{j}$ is the   current density vector, which satisfies the continuity equation
  $$\derpar{ \rho}{ t} + \nabla\cdot \mathbf{j}=0\,.$$

In what follows, we consider a four-dimensional formulation of Maxwell's equations. To do that, one considers the Minkowski Space of Special Relativity. Therefore, the space-time is a
$4$-dimensional manifold $M^4$ that is topologically just $\r^4$. A point in space-time
has coordinates $(x,y,z,t)$ which we shall write as $(x^1,x^2,x^3,x^4)$ instead. In this
space we consider the Minkowski metric ($ds^2=\d r^2 - \d x^2$ where $\d r^2$ denotes the
euclidean metric of $\r^3$), that is, (for simplicity we shall assume the velocity of
light $c=1$):
$$ds^2=  \d (x^1)^2+\d (x^2)^2+\d (x^3)^2-\d(x^4)^2\,.$$

In the four-dimensional Minkowski space, Maxwell's equations assume an extremely compact
form, which we recall now, (see \cite{Frankel,SCYYQLY-2008,WR-2006} for more details).

First, we consider the Faraday $2$-form
\begin{equation}\label{faraday form}
\begin{array}{lcl}
\mathcal{F}&=& E_1\d x^1\wedge\d x^4 + E_2\d x^2\wedge\d x^4 + E_3\d x^3\wedge\d x^4 \\& +& B_1\d x^2\wedge
\d x^3 + B_2\d x^3\wedge \d x^1 + B_3\d x^1\wedge \d x^2\,.
\end{array}
\end{equation}

If we compute $\d \mathcal{F}$, we obtain that the homogeneous Maxwell equations
 (\ref{Faraday's LAwane-1994}-\ref{(Absence of Free Magnetic Poles2)}) are equivalent to $\d\mathcal{F}=0$,
 that is, the Faraday form is closed.

Since $\d\mathcal{F}=0$ in $\r^4$, we must have
\begin{equation}\label{Faraday exact}
\mathcal{F}=d\mathcal{A}
\end{equation}
  $\mathcal{A}$ being the ``potential" $1$-form,  which is written  as
\begin{equation}\label{potencial form}
\mathcal{A}= A_1\d x^1 +  A_2 \d x^2 +  A_3 \d x^3+\Phi\d x^4\in \Lambda^1(\r^4)\;,
\end{equation} where $A_1,A_2,A_3$ are the components of the magnetic vector potential
and $\Phi$ is the scalar electric potential.

To develop a four-dimensional formulation of  the divergence law for the electric flux
density (\ref{Gauss's LAwane-1994}) and Ampere's law (\ref{Ampere's LAwane-1994}), we introduce the
four-current differential form
\begin{equation}\label{four-current}
\mathcal{J}= j_1\d x^1 + j_2 \d x^2 + j_3\d x^3-\rho\d x^4\in \Lambda^1(\r^4)
\end{equation} where $j_1,j_2,j_3$ are the components of the electric current and $\rho$ is the
density of electric charge.

The four-dimensional formulation of  the divergence law (\ref{Gauss's LAwane-1994}) and Ampere's
law (\ref{Ampere's LAwane-1994}), is
\begin{equation}\label{inhomogeneous}
\delta \mathcal{M}= \mathcal{J}
\end{equation}
where $\mathcal{M}\in \Lambda^2(\r^4)$ is the {\it Maxwell form} defined by $\mathcal{M}=\star
\mathcal{F}$ and $\delta\colon=\star \d \star$ is the coderivative; here $\star\colon \Omega^k
(\r^4)\to \Omega^{4-k}(\r^4)$ denotes the four-dimensional Hodge operator for Minkowski's space.

%

In conclusion, in a four-dimensional Minkowski's space, Maxwell's equations can be written
as follows
\begin{equation}\label{4-maxwell eq}
\begin{array}{rcl}
\d\mathcal{F} &=& 0\,,\\\noalign{\medskip}
\delta \mathcal{M} &=& \mathcal{J}\,.
\end{array}
\end{equation}

Now we show that, since $\mathcal{F}= \d\mathcal{A}$, then the inhomogeneous equation $\delta \mathcal{M}=
\mathcal{J}$ is equivalent to the Euler-Lagrange equations for some Lagrangian $L$.

In that case, a solution of Maxwell's equations is a $1$-form $\mathcal{A}$ on the  Minkow\-ski's space, that
is, $\mathcal{A}$ is a section of the canonical projection $\pi_{\r^4}\colon $ $ T^*\r^4\cong \r^4\times \r^4\to
\r^4$. Here $Q= \r^4$. Moreover, see \cite{{EM-92},s1,s2}, $\r^4\times T^1_4\r^4$ is canonically
isomorphic to $(\pi_{\r^4})^*T^*\r^4\otimes (\pi_{\r^4})^*T^*\r^4$ via the identifications
$$\begin{array}{ccc}
\r^4\times x^1_4\r^4 & \equiv& (\pi_{\r^4})^*T^*\r^4\otimes (\pi_{\r^4})^*T^*\r^4\,\\\noalign{\medskip}
A^{[1]}({\bf t})=(x^j,A_i({\bf t}),\derpar{A_i}{x^j}({\bf t})) & \equiv & \derpar{A_j}{x^i}({\bf t})(dx^i\otimes
dx^j)\end{array}$$ where $1\leq i,j\leq 4$ and $A_4= \Phi$, and $\mathcal{A}^{[1]}\colon \r^4\to \r^4
\times T^1_4\r^4$ is the first prolongation of a section $\mathcal{A}\in \Lambda^1(\r^4)$ of $\pi_{\r^4}$.

  Then the Lagrangian  $L\colon \r^4\times T^1_4\r^4=(\pi_{\r^4})^*T^*\r^4\otimes (\pi_{\r^4})^*T^*\r^4 \to \r$ is given by
$$ L(\mathcal{A}^{[1]})=\ds\frac{1}{2}||\mathfrak{A}(\mathcal{A}^{[1]})|| - <\mathcal{J},\mathcal{A}> =
\ds\frac{1}{2}||\d \mathcal{A} ||  - <\mathcal{J},\mathcal{A}> ,$$ where $\mathfrak{A}$
is the alternating operator, and we have used the induced metric on
$(\pi_{\r^4})^*T^*\r^4\otimes (\pi_{\r^4})^*T^*\r^4$ by the metric on $\r^4$, see
\cite{Poor}. Here, $<\mathcal{J},\mathcal{A}>$ denotes the scalar product in
$(\r^4)^*$ given by the scalar product on $\r^4$, see \cite{Poor},
$$ <\mathcal{J},\mathcal{A}>=j_1A_1+j_2A_2+j_3A_3+\rho\Phi$$

As in the above section, if we take $(x^\alpha)=(x^1,x^2,x^3,x^4)$  coordinates on $\r^4$, $\, q^i$  are the
coordinates on the fibres of $T^*\r^4=\r^4\times \r^4$ and $v^i_\alpha$ are the induced coordinates on the fibres
of $\r^4\times T^1_4\r^4$, then $L$ is locally given by
\begin{equation}\label{Maxwell Lagrangian}\begin{array}{lcl}
L(x^\alpha,q^i,v^i_\alpha)&=&\ds\frac{1}{2}((v^2_1-v^1_2)^2+(v^3_1-v^1_3)^2+(v^3_2-v^2_3)^2
-(v^4_1-v^1_4)^2  \\ \noalign{\medskip}  &  &-(v^4_2-v^2_4)^2
 -(v^4_3-v^3_4)^2
)   \\ \noalign{\medskip}  &  & -j_1q^1-j_2q^2-j_3q^3-\rho q^4\,.\end{array}\end{equation}

\begin{remark}
{\rm
Let us observe that for a section $\mathcal{A}=A_1\d x^1+A_2\d x^2+A_3\d x^3+\Phi\d x^4$, if
$\mathcal{F}= \d \mathcal{A}$, we have:
$$||\mathfrak{A}(\mathcal{A}^{[1]})||=||\d \mathcal{A} ||=\ds\sum_{i<j<4}(\derpar{A_j}{x^i}-
\derpar{A_i}{x^j})^2-\ds\sum_{i<4}(\derpar{\Phi}{x^i}-\derpar{A_i}{x^4})^2 = ||B||^2-||E||^2$$\,.
\rqed}
\end{remark}

Now, we consider the $4$-cosymplectic equation \begin{equation}\label{geome4}\begin{array}{l}
\d x^\alpha (X_\beta) \, = \, \delta^\alpha_\beta, \quad 1 \leq A , B \leq 4\, ,\\
\noalign{\medskip} \ds\sum_{\alpha=1}^4 \, i_{X_\alpha} \Omega_L^\alpha =\,
dE_L + \,\ds\sum_{\alpha=1}^4\ds\frac{\partial L}{\partial x^\alpha}\d x^\alpha
\end{array}\end{equation}
 where the Lagrangian $L$ is given by (\ref{Maxwell Lagrangian}) and
  ${\bf X}=(X_1,X_2,X_3,X_4)$ is a $4$-vector field on $\r^4\times T^1_4\r^4$.

   Let $\mathcal{A}\in \Lambda^1(\r^4)$ be a section of $\pi_{\r^4}$ which is a solution of ${\bf X}$,
   then from  (\ref{Maxwell Lagrangian}) we obtain that $\mathcal{A}$ is a solution of
   the following system of equations:
   \begin{equation}\label{maxwell EL-eq}
   \begin{array}{rcl}
   \derpars{A_1}{x^2}{x^2}- \derpars{A_2}{x^1}{x^2} + \derpars{A_1}{x^3}{x^3}-\derpars{A_3}{x^1}{x^3} -
   \derpars{A_1}{x^4}{x^4} + \derpars{\Phi}{x^1}{x^4} &=& -j_1\\\noalign{\medskip}
   \derpars{A_2}{x^1}{x^1}- \derpars{A_1}{x^2}{x^1} + \derpars{A_2}{x^3}{x^3}-\derpars{A_3}{x^2}{x^3} -
   \derpars{A_2}{x^4}{x^4} + \derpars{\Phi}{x^2}{x^4} &=& -j_2\\\noalign{\medskip}
   \derpars{A_3}{x^1}{x^1}- \derpars{A_1}{x^3}{x^1} + \derpars{A_3}{x^2}{x^2}-\derpars{A_2}{x^3}{x^2} -
   \derpars{A_3}{x^4}{x^4} + \derpars{\Phi}{x^3}{x^4} &=& -j_3\\\noalign{\medskip}
   \derpars{A_1}{x^4}{x^1}- \derpars{\Phi}{x^1}{x^1} + \derpars{A_2}{x^4}{x^2}-\derpars{\Phi}{x^2}{x^2} +
   \derpars{A_3}{x^4}{x^3} - \derpars{\Phi}{x^3}{x^3} &=& -\rho
   \end{array}
   \end{equation}

   On the other hand, using  $\mathcal{F} =\d \mathcal{A}$, from (\ref{faraday form})
   one obtains that the equations (\ref{maxwell EL-eq}) can we written as follow
   $$\begin{array}{rcl}
   -\derpar{B_3}{x^2}+\derpar{B_2}{x^3}+\derpar{E_1}{x^4}&=& -j_1
   \\\noalign{\medskip}
   \derpar{B_3}{x^1}-\derpar{B_1}{x^3}+\derpar{E_2}{x^4}&=& -j_2
   \\\noalign{\medskip}
   -\derpar{B_2}{x^1}+\derpar{B_1}{x^2}+\derpar{E_3}{x^4}&=& -j_3
   \\\noalign{\medskip}
   \derpar{E_1}{x^1}+\derpar{E_2}{x^2}+\derpar{E_3}{x^3}&=& \rho
   \end{array}$$
   which is equivalent to the condition $\delta\mathcal{M}= \mathcal{J}$.

   In conclusion, the $4$-cosymplectic equation (\ref{geome4}) is a geometric version of
   the inhomogeneous Maxwell equation $\delta\mathcal{M}= \mathcal{J}$, and considering
   $\mathcal{F}\colon = \d \mathcal{A}$ we also recover the homogeneous Maxwell equation
   $\d \mathcal{F}= 0$.

\begin{remark}\
{\rm
\begin{enumerate}
\item In the particular case $\mathcal{J}=0$, that is when $\rho=0, \mathbf{j}=0$, the Lagrangian
(\ref{Maxwell Lagrangian}) is a function defined on $\mathcal{C}^\infty(T^1_4\r^4)$.
Therefore, it is another example of the $k$-symplectic Lagrangian formalism. This
Lagrangian corresponds to the electromagnetic field without currents.
\item The Lagrangian (\ref{Maxwell Lagrangian})  can also be written as follows:
$$L=-\ds\frac{1}{4}f_{ik}f^{ik} - <\mathcal{J},\mathcal{A}>\,,$$ where
$$f_{ik}=\derpar{A_k}{x^i}-\derpar{A_i}{x^k} \quad \makebox{and}\quad f_{ik}f^{ik}=g^{il}g^{km}f_{ik}f_{lm}\,.$$

This Lagrangian  can be extended, in presence of gravitation, as follows (see \cite{Carmeli}):
     \begin{equation}\label{maxwell grav}
     L=-\ds\frac{1}{4}\sqrt{-g}f_{ik}f^{ik} -  \sqrt{-g}<\mathcal{J},\mathcal{A}>\,,
     \end{equation} where now we have used the  space-time metric tensor $(g_{ij})$ to raise the indices
     of the Maxwell tensor,
     $$ f^{ik}=g^{il}g^{km} f_{lm}\,.$$ In this case, in a similar way to the above discussion, and using that
     $$\derpar{L}{v^4_\beta}= \sqrt{-g}f^{4\beta}\quad , \quad \derpar{L}{v^i_\beta}=\sqrt{-g}f^{i\beta}\;,\quad 1\leq \beta,
     i\leq 3,$$ we obtain that the equations (\ref{geome4}) for the Lagrangian  are the geometric version
     of the following equations
     \begin{equation}\label{maxwell grav eq}\begin{array}{l}
     \nabla_{k}f^{4k}= \rho\\\noalign{\medskip}
     \nabla_{k}f^{ik}=j_i,\quad i=1,2,3\\\noalign{\medskip}
     \nabla_{l}f_{ik}+ \nabla_{i}f_{kl} +\nabla_{k}f_{li}=0\,,
     \end{array}\end{equation} where
     $$\nabla_kf^{ik}:=\ds\frac{1}{\sqrt{-g}}\frac{\partial}{\partial x^k}\left( \sqrt{-g}f^{ik}
     \right)$$ is the covariant divergence of a skew-symmetric tensor in the curved spacetime. These equations (\ref{maxwell grav eq}) are called the  Maxwell equations in the
     presence of gravitation, see \cite{Carmeli}.
\end{enumerate}
\rqed}
\end{remark}

Finally it is important to observe that all these physical examples  that can be described using the $k$-symplectic formalism can be also described using the $k$-cosymplectic approach.

\newpage
\mbox{}
\thispagestyle{empty} 

\chapter{$k$-symplectic systems versus autonomous $k$-cosymplectic systems}

In this book we are presenting two different approaches to describe first-order Classical Field Theories: first, when the Lagrangian and Hamiltonian
do not  depend on the base coordinates, and, later, when the Lagrangian and Hamiltonian also depend on the ``space-time'' coordinates. However if we observe the corresponding   descriptions we see that in local coordinates they give a geometric description of the same system of partial differential equations. Therefore the natural question is: \textit{Is there any relationship between $k$-symplectic and $k$-cosymplectic systems?} In this section we give an affirmative answer to this question. Naturally, this relation will be establish only when the Lagrangian and Hamiltonian do not depend on the base coordinates.

Along this section we work over the geometrical models of $k$-symplectic and $k$-co\-sym\-plec\-tic manifolds, that is $\tkqh$ and $\rktkqh$ and the Lagrangian counterparts $\tkq$ and $\rktkq$. However the following results and comments can be extend to the case $\rk\times M$ and $M$, being $M$ an arbitrary $k$-symplectic manifold.

Following a similar terminology   to that in Mechanics, we introduce the following definition.
\index{Autonomous Hamiltonian}
\begin{definition}
A $k$-cosymplectic Hamiltonian system
 $(\rk\times(T^1_k)^*Q,{\mathcal H})$
is said to be \emph{autonomous} if  $\ds\Lie{R_\alpha}{\mathcal
H}=\nicefrac{\partial\mathcal H}{\partial x^\alpha}=0$ for all $\ak$.
\label{autonomous}
\end{definition}

Observe that the condition in definition \ref{autonomous} means
that ${\mathcal H}$ does not depend on the variables $x^\alpha$, and thus
${\mathcal H}=\bar\pi_2^*H$ for some $H\in\Cinfty((T^1_k)^*Q)$, being $\bar{\pi}_2\colon \rktkqh\to\tkqh$ the canonical projection.

For an autonomous $k$-cosymplectic Hamiltonian system, the equations
(\ref{geonah}) become
\begin{equation}
\label{geonahaut}
 \ds\sum_{\alpha=1}^k\iota_{\bar X_\alpha}\Omega^\alpha =\d {\mathcal H}, \quad
\eta^\alpha(\bar X_\beta)=\delta^\alpha_\beta\ .
\end{equation}

Therefore:

\begin{prop}
 Every autonomous $k$-cosymplectic Hamiltonian system
 $(\rk\times(T^1_k)^*Q,{\mathcal H})$
defines a $k$-symplectic Hamiltonian system $((T^1_k)^*Q,H)$, where
${\mathcal H}=\bar\pi_2^*H$, and conversely.
\end{prop}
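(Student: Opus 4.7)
The plan is to verify the claimed bijection between autonomous $k$-cosymplectic Hamiltonian systems on $\rk\times(T^1_k)^*Q$ and $k$-symplectic Hamiltonian systems on $(T^1_k)^*Q$ in two steps: first at the level of Hamiltonian functions, and then (as a brief check) at the level of the defining geometric equations, so that the correspondence is structurally meaningful.

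For the forward direction, I would start from the autonomy hypothesis $\mathcal{L}_{R_\alpha}\mathcal{H}=0$ for $\ak$. In the canonical coordinates $(x^\alpha,q^i,p^\alpha_i)$ on $\rk\times(T^1_k)^*Q$ the Reeb vector fields are $R_\alpha=\partial/\partial x^\alpha$ (noted just after the definition of Reeb vector fields in the canonical model), so the condition reads $\partial\mathcal{H}/\partial x^\alpha=0$. Consequently $\mathcal{H}$ is constant along the fibers of $\bar{\pi}_2\colon\rk\times(T^1_k)^*Q\to(T^1_k)^*Q$, and since $\bar{\pi}_2$ is a surjective submersion with connected fibers, there exists a unique $H\in\Cinfty((T^1_k)^*Q)$ with $\mathcal{H}=\bar{\pi}_2^{\,*}H$. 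Because $(T^1_k)^*Q$ carries the canonical $k$-symplectic structure $(\omega^\alpha,V)$, the pair $((T^1_k)^*Q,H)$ is a $k$-symplectic Hamiltonian system in the sense of Definition~\ref{shks}.

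For the converse, given $H\in\Cinfty((T^1_k)^*Q)$ I would set $\mathcal{H}:=\bar{\pi}_2^{\,*}H$. Since $R_\alpha=\partial/\partial x^\alpha$ is tangent to the fibers of $\bar{\pi}_2$ (i.e., $\bar{\pi}_2$-vertical), one has $(\bar{\pi}_2)_*R_\alpha=0$, and therefore
\[
\mathcal{L}_{R_\alpha}\mathcal{H}=R_\alpha(\bar{\pi}_2^{\,*}H)=dH\bigl((\bar{\pi}_2)_*R_\alpha\bigr)=0,
\]
so $(\rk\times(T^1_k)^*Q,\mathcal{H})$ is autonomous in the sense of Definition~\ref{autonomous}. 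The two assignments $\mathcal{H}\mapsto H$ and $H\mapsto\bar{\pi}_2^{\,*}H$ are mutually inverse, yielding the claimed bijection.

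To finish, I would check that the dynamical equations themselves match under the correspondence, which justifies calling it a reduction rather than a mere relabelling. Using the relation $\Omega^\alpha=\bar{\pi}_2^{\,*}\omega^\alpha$ from (\ref{symcosym}) and $d\mathcal{H}=\bar{\pi}_2^{\,*}dH$, the autonomous equation (\ref{geonahaut}) pulls back to the canonical $k$-symplectic Hamilton equation (\ref{ecHksym}) on $(T^1_k)^*Q$; conversely, if $(X_1,\dots,X_k)$ solves (\ref{ecHksym}) on $(T^1_k)^*Q$, then $\bar{X}_\alpha:=R_\alpha+\widetilde{X}_\alpha$ (with $\widetilde{X}_\alpha$ any $\bar{\pi}_2$-lift of $X_\alpha$) solves (\ref{geonahaut}). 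The only mild obstacle is keeping track of the non-uniqueness of lifts (the fiber of $\bar{\pi}_2$ is $\rk$), but since the condition $\eta^\alpha(\bar{X}_\beta)=\delta^\alpha_\beta$ fixes the $\rk$-component of $\bar{X}_\alpha$ to be exactly $R_\alpha$, and since the ambiguity in lifting the $(T^1_k)^*Q$-component lies in $\ker\flat_\omega$ (whose elements annihilate $\Omega^\sharp$ by construction), the correspondence is well defined modulo the same indeterminacy already present in each of the two formalisms.
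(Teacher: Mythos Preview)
Your proof is correct and follows essentially the same route as the paper: the paper presents this proposition as an immediate consequence (``Therefore:'') of the preceding observation that autonomy means $\partial\mathcal{H}/\partial x^\alpha=0$, whence $\mathcal{H}=\bar\pi_2^*H$ for some $H\in\Cinfty((T^1_k)^*Q)$, together with the reduction of (\ref{geonah}) to (\ref{geonahaut}). Your additional paragraph checking the correspondence at the level of $k$-vector fields is not part of the paper's argument for this proposition---the paper defers that content to the subsequent Theorem and Propositions (where the ``suspension'' construction plays the role of your lift $\bar X_\alpha=R_\alpha+\widetilde X_\alpha$)---so while it is a useful sanity check, it is extra relative to what is needed here.
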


We have the following result for the solutions of the HDW equations.
\begin{theorem}
Let $(\rk\times(T^1_k)^*Q,{\mathcal H})$ be an autonomous
$k$-cosymplectic Ha\-miltonian system and let $((T^1_k)^*Q,H)$ be its
associated $k$-symplectic Hamiltonian system. Then, every section
$\bar\psi\colon\rk\to\rk\times(T^1_k)^*Q$, that is a solution of the
HDW-equation (\ref{HDW field eq}) for the system $(\rk\times(T^1_k)^*Q,{\mathcal
H})$ defines a map $\psi\colon\rk\to (T^1_k)^*Q$ which is a solution of the HDW-equation (\ref{HDW_eq}) for the system $((T^1_k)^*Q,H)$;
and conversely. \label{onetoone}
\end{theorem}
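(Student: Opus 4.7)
The plan is to exploit the fact that, under the autonomy hypothesis, the HDW equations in both settings coincide formally once one discards the trivial base components. I would first unpack the canonical identification: since $\eta^\alpha=\d x^\alpha$ and $R_\alpha=\partial/\partial x^\alpha$, the condition $\Lie{R_\alpha}\mathcal{H}=0$ for all $\alpha$ is equivalent to saying that $\mathcal{H}$ factors through $\bar\pi_2\colon\rk\times(T^1_k)^*Q\to (T^1_k)^*Q$, so that there exists a unique $H\in\Cinfty((T^1_k)^*Q)$ with $\mathcal{H}=\bar\pi_2^{\,*}H$. In canonical coordinates $(x^\alpha,q^i,p^\alpha_i)$ this simply says $\mathcal{H}(x^\alpha,q^i,p^\alpha_i)=H(q^i,p^\alpha_i)$, so that $\partial\mathcal{H}/\partial q^i=\partial H/\partial q^i$ and $\partial\mathcal{H}/\partial p^\alpha_i=\partial H/\partial p^\alpha_i$ under $\bar\pi_2$.

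Next I would exhibit the bijection at the level of maps. Any solution $\bar\psi\colon\rk\to\rk\times(T^1_k)^*Q$ of the HDW-equation (\ref{HDW field eq}) is a section of $\pi_{\rk}\circ(\pi_Q)_{1,0}$, hence of the form $\bar\psi(x)=(x,\psi^i(x),\psi^\alpha_i(x))$; set $\psi:=\bar\pi_2\circ\bar\psi\colon\rk\to(T^1_k)^*Q$, with components $(\psi^i(x),\psi^\alpha_i(x))$. Conversely, to any $\psi\colon\rk\to(T^1_k)^*Q$ I associate $\bar\psi(x):=(x,\psi(x))$. These assignments are mutually inverse, so the correspondence $\bar\psi\leftrightarrow\psi$ is bijective on the level of smooth maps.

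Finally I would verify that this bijection preserves the property of being a solution. Writing out (\ref{HDW field eq}) for $\bar\psi$ with Hamiltonian $\mathcal{H}$ gives
\[
 \derpar{\mathcal{H}}{q^i}\Big\vert_{\bar\psi(x)} \;=\; -\sum_{\alpha=1}^{k}\derpar{\psi^\alpha_i}{x^\alpha}\Big\vert_{x}\,,\qquad \derpar{\mathcal{H}}{p^\alpha_i}\Big\vert_{\bar\psi(x)} \;=\; \derpar{\psi^i}{x^\alpha}\Big\vert_{x}\,,
\]
and substituting $\mathcal{H}=\bar\pi_2^{\,*}H$ together with $\bar\pi_2(\bar\psi(x))=\psi(x)$ turns this into precisely (\ref{HDW_eq}) for $\psi$ and $H$. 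Running the computation backwards proves the converse. The argument is essentially a bookkeeping one; I do not foresee a serious obstacle, the only subtlety being to confirm that solutions of (\ref{HDW field eq}) in the $k$-cosymplectic setting must indeed be sections of $\pi_{\rk}\circ(\pi_Q)_{1,0}$ (equivalently, their $\rk$-component is the identity), which is automatic from the requirement that $\bar\psi$ project down to the identity on the base, a condition built into the statement and compatible with the first group of equations in (\ref{tkqh integral section equivalence cond1}).
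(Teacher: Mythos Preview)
Your proposal is correct and follows essentially the same route as the paper: both proofs use the identity $\mathcal{H}=\bar\pi_2^{\,*}H$ to match the partial derivatives $\partial\mathcal{H}/\partial q^i=\partial H/\partial q^i$ and $\partial\mathcal{H}/\partial p^\alpha_i=\partial H/\partial p^\alpha_i$, set up the bijection $\psi=\bar\pi_2\circ\bar\psi$ (and conversely $\bar\psi=(Id_{\rk},\psi)$), and then simply read off that the local HDW equations (\ref{HDW field eq}) and (\ref{HDW_eq}) coincide. The paper's proof is slightly terser but there is no substantive difference in strategy or content.
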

\begin{proof} Since ${\mathcal H}=\bar\pi_2^*H$ we have
\begin{equation}\label{h}
\ds\frac{\partial{\mathcal H}}{\partial q^i}= \ds\frac{\partial
H}{\partial q^i}\,,\quad \ds\frac{\partial {\mathcal H}}{\partial p^\alpha_i}= \ds\frac{\partial H}{\partial p^\alpha_i}\;.
\end{equation}

Let $\bar\psi\colon\rk\to\rk\times(T^1_k)^*Q$ be a section of the
projection $\bar\pi_1$, which in coordinates is expressed by
$\bar\psi(x)=(x,\bar\psi^i(x),\bar\psi^\alpha_i(x))$. Then we construct
the map $\psi=\bar\pi_2\circ\bar\psi\colon\rk\to(T^1_k)^*Q$, which
in coordinates is expressed as
$\psi(x)=(\psi^i(x),\psi^\alpha_i(x))=(\bar\psi^i(x),\bar\psi^\alpha_i(x))$.
Thus  if $\bar\psi$ is a solution of the HDW-equations (\ref{HDW field eq}),
from (\ref{h}) we obtain that $\psi$ is a solution of the
HDW-equations (\ref{HDW_eq}), and the statement holds.

Conversely, consider a map $\psi\colon\rk\to(T^1_k)^*Q$. We define
$\bar\psi=(Id_{\r^k},\psi):\rk\to\rk\times(T^1_k)^*Q$.
Furthermore, if $\psi(x)=(\psi^i(x),\psi^\alpha_i(x))$, then
$\bar\psi(x)=(x,\bar\psi^i(x),\bar\psi^\alpha_i(x))$, with
$\bar\psi^i(x)=\psi^i(x)$ and $\bar\psi^\alpha_i(x)=\psi^\alpha_i(x)$ (observe
that, in fact,  ${\rm Im}\,\bar\psi={\rm graph}\,\psi$). Hence,  if
$\psi$ is a solution of the HDW-equations (\ref{HDW_eq}), from
(\ref{h}) we obtain that $\bar\psi$ is a solution of the
HDW-equations (\ref{HDW field eq}), and the statement holds.
 \end{proof}

For $k$-vector fields that are solutions of the geometric field
equations (\ref{ecHksym}) and (\ref{geonahaut}) we have:

\begin{prop}
Let $(\rk\times(T^1_k)^*Q,{\mathcal H})$ be an autonomous
$k$-cosymplectic Ha\-miltonian system and let $((T^1_k)^*Q,H)$ be its
associated $k$-symplectic Hamiltonian system. Then every $k$-vector
field ${\bf X}\in\vf^k_H(T^1_k)^*Q)$ defines a  $k$-vector field
${\bf \bar X}\in\vf^k_{\mathcal H}(\rk\times(T^1_k)^*Q)$.

Furthermore, ${\bf X}$ is integrable if, and only if, its associated
${\bf \bar X}$ is integrable too.

\end{prop}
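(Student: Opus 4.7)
\noindent\emph{Proof proposal.} The natural candidate is to lift each $X_\alpha$ on $(T^1_k)^*Q$ horizontally through the trivial product $\bar{\pi}_2\colon \rk\times (T^1_k)^*Q\to (T^1_k)^*Q$ and add the corresponding Reeb field. Concretely, I would define
\[
\bar X_\alpha \;:=\; R_\alpha \;+\; \widetilde{X}_\alpha, \qquad 1\leq\alpha\leq k,
\]
where $R_\alpha=\partial/\partial x^\alpha$ are the Reeb vector fields of the canonical $k$-cosymplectic structure and $\widetilde{X}_\alpha$ is the unique vector field on $\rk\times (T^1_k)^*Q$ that projects onto $X_\alpha$ under $\bar{\pi}_2$ and has no $\partial/\partial x^\alpha$-components (i.e.\ the horizontal lift with respect to the canonical flat connection on the trivial bundle $\rk\times (T^1_k)^*Q\to \rk$). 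Locally, if $X_\alpha=(X_\alpha)^i\partial/\partial q^i+(X_\alpha)^\beta_i\partial/\partial p^\beta_i$, then $\bar X_\alpha=\partial/\partial x^\alpha+(X_\alpha)^i\partial/\partial q^i+(X_\alpha)^\beta_i\partial/\partial p^\beta_i$.

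Next I would verify that $\bar{\bf X}=(\bar X_1,\ldots,\bar X_k)$ solves the $k$-cosymplectic equations (\ref{geonahaut}). The condition $\eta^\alpha(\bar X_\beta)=\delta^\alpha_\beta$ is immediate since $\eta^\alpha=dx^\alpha$ and $\widetilde{X}_\beta$ is $\pi_1$-vertical. For the contraction equation I would use the relation $\Omega^\alpha=\bar{\pi}_2^{\,*}\omega^\alpha$ from (\ref{symcosym}): for any $Y\in T(\rk\times (T^1_k)^*Q)$,
\[
\iota_{\bar X_\alpha}\Omega^\alpha(Y)=\omega^\alpha\bigl((\bar{\pi}_2)_*\bar X_\alpha,(\bar{\pi}_2)_*Y\bigr)=\omega^\alpha(X_\alpha,(\bar{\pi}_2)_*Y)=\bar{\pi}_2^{\,*}(\iota_{X_\alpha}\omega^\alpha)(Y),
\]
so summing over $\alpha$ and using ${\bf X}\in\vf^k_H((T^1_k)^*Q)$ together with $\mathcal{H}=\bar{\pi}_2^{\,*}H$ gives $\sum_\alpha \iota_{\bar X_\alpha}\Omega^\alpha=\bar{\pi}_2^{\,*}(dH)=d\mathcal{H}$. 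Thus $\bar{\bf X}\in\vf^k_{\mathcal H}(\rk\times (T^1_k)^*Q)$.

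For the integrability part, the key observation is that in the local form above the coefficients $(X_\alpha)^i$ and $(X_\alpha)^\beta_i$ depend only on $(q^j,p^\gamma_j)$ and not on the $x^\beta$, while $R_\alpha=\partial/\partial x^\alpha$ commutes with every such coefficient. A short computation then yields
\[
[\bar X_\alpha,\bar X_\beta]\;=\;\widetilde{[X_\alpha,X_\beta]},
\]
so $[\bar X_\alpha,\bar X_\beta]=0$ if and only if $[X_\alpha,X_\beta]=0$. Since both families consist of linearly independent vector fields, this is equivalent (by the Frobenius-type criterion recalled after Definition \ref{integral section def}) to the integrability of $\bf X$ and $\bar{\bf X}$ respectively.

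The only delicate point I anticipate is making precise the claim that the construction $X_\alpha\mapsto \bar X_\alpha$ is canonical (independent of coordinates), which amounts to the existence of the flat Ehresmann connection on the trivial bundle $\rk\times (T^1_k)^*Q\to\rk$; once that is acknowledged, the remaining verifications are straightforward local computations using (\ref{symcosym}), the identification $R_\alpha=\partial/\partial x^\alpha$, and the fact that $\bar{\pi}_2$-related vector fields have $\bar{\pi}_2$-related Lie brackets.
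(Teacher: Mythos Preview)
Your construction and verification are correct and essentially coincide with the paper's. The paper defines $\bar X_\alpha$ as the \emph{suspension} of $X_\alpha$ (following Abraham--Marsden), obtained by lifting integral curves of $X_\alpha$ to curves $s\mapsto(x_0^1,\ldots,x_0^\alpha+s,\ldots,x_0^k;\gamma_p(s))$ in $\rk\times(T^1_k)^*Q$; this yields exactly your $\bar X_\alpha=\partial/\partial x^\alpha+\widetilde{X}_\alpha$, and the paper then checks $\sum_\alpha\iota_{\bar X_\alpha}\Omega^\alpha=d\mathcal{H}$ via the same pullback computation using $\Omega^\alpha=\bar\pi_2^*\omega^\alpha$ that you give.

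Where you diverge is in the integrability argument. The paper works directly with integral sections: it shows that $\psi$ is an integral section of $\mathbf{X}$ if and only if $\bar\psi=(Id_{\rk},\psi)$ is an integral section of $\bar{\mathbf{X}}$, which immediately transfers integrability in both directions. Your route---computing $[\bar X_\alpha,\bar X_\beta]=\widetilde{[X_\alpha,X_\beta]}$ and invoking the bracket criterion---is equally valid and arguably cleaner, since it avoids tracking sections altogether. One small caution: you assert that ``both families consist of linearly independent vector fields,'' which is automatic for the $\bar X_\alpha$ (because of the $\partial/\partial x^\alpha$ terms) but not guaranteed for the $X_\alpha$ on $(T^1_k)^*Q$; the paper's stated Frobenius-type criterion is phrased only for the linearly independent case. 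This does not break your argument---the commutation condition $[X_\alpha,X_\beta]=0$ characterizes integrability of the PDE system $\partial\varphi/\partial x^\alpha=X_\alpha(\varphi)$ regardless---but you should either justify that point separately or, more simply, note that $\bar\pi_2$-relatedness of the $\bar X_\alpha$ and $X_\alpha$ already gives the bijection between integral sections (which is precisely the paper's move).
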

\begin{proof}
 Let ${\bf X}=(X_1,\dots,X_k)\in\vf^k_H((T^1_k)^*Q)$. For
every $\alpha=1,\ldots,k$, let $\bar X_\alpha\in\vf(\rk\times(T^1_k)^*Q)$ be
the {\sl suspension} of the corresponding vector field
$X_\alpha\in\vf((T^1_k)^*Q)$, which is defined as follows (see \cite{AM-1978},
p. 374, for this construction in mechanics): for every ${\rm
p}\in(T^1_k)^*Q$, let $\gamma^\alpha_{\rm p}\colon\r\to(T^1_k)^*Q$ be the
integral curve of $X_\alpha$ passing through ${\rm p}$; then, if
$x_0=(x_0^1,\ldots,x_0^k)\in\rk$, we can construct the curve
$\bar\gamma^\alpha_{\bar{\rm p}}\colon\r\to\rk\times(T^1_k)^*Q$, passing
through the point $\bar{\rm p}\equiv(x_0,{\rm
p})\in\rk\times(T^1_k)^*Q$, given by $\bar\gamma^\alpha_{\bar{\rm
p}}(x^\alpha)=(x_0^1,\ldots,x^\alpha+x_0^\alpha,\ldots,x_0^k;\gamma_{\rm p}(x^\alpha))$.
Therefore, $\bar X_\alpha$ is the vector field tangent to
$\bar\gamma^\alpha_{\bar{\rm p}}$ at $(x_0,{\rm p})$. In natural
coordinates, if $X_\alpha$ is locally given by \[
X_\alpha= (X_\alpha)^i\frac{\partial}{\partial  q^i}+
(X_\alpha)_i^\beta\frac{\partial}{\partial p_i^\beta}
\] then $\bar
X_\alpha$ is locally given by
$$
\bar X_\alpha =\derpar{}{x^\alpha}+(\bar X_\alpha)^i\frac{\partial}{\partial q^i}+
(\bar X_\alpha)_i^\beta\frac{\partial}{\partial p_i^\beta}=
\derpar{}{x^\alpha}+\bar\pi_2^*(X_\alpha)^i\frac{\partial}{\partial  q^i}+
\bar\pi_2^*(X_\alpha)_i^\beta\frac{\partial}{\partial p_i^\beta}\, .
$$
Observe that the $\bar X_\alpha$ are $\bar\pi_2$-projectable vector fields,
and $(\bar\pi_2)_*\bar X_\alpha=X_\alpha$. In this way we have defined a
$k$-vector field ${\bf\bar  X}=(\bar X_1,\dots,\bar X_k)$ in
$\rk\times(T^1_k)^*Q$. Therefore, taking (\ref{symcosym}) into
account, we obtain
$$
\ds\sum_{\alpha=1}^k\iota_{\bar X_\alpha}\Omega^\alpha-\d{\mathcal
H}=\ds\sum_{\alpha=1}^k\iota_{\bar
X_\alpha}\bar\pi_2^*\omega^\alpha-\d(\bar\pi_2^*H)=
\bar\pi_2^*(\sum_{\alpha=1}^k\iota_{(\pi_2)_*\bar X_\alpha}\omega^\alpha-\d H)=0 \ ,
$$
since ${\bf X}=(X_1,\dots,X_k)\in\vf^k_H(T^1_k)^*Q)$, and therefore
${\bar  X}=(\bar X_1,\dots,\bar X_k)\in\vf^k_{\mathcal
H}(\rk\times(T^1_k)^*Q)$.

Furthermore, if $\psi\colon\rk\to (T^1_k)^*Q$ is an integral section
of ${\bf X}$, then  $\bar\psi\colon\rk\to\rk\times(T^1_k)^*Q$ such
that $\bar\psi=(Id_{\rk},\psi)$ (see Theorem \ref{onetoone}) is an
integral section of ${\bf \bar X}$.

Now, if $\bar\psi$ is an integral section of ${\bf \bar X}$, the
equations (\ref{HDW field eq}) hold for
$\bar\psi(x)=(x,\bar\psi^i(x),\bar\psi^\alpha_i(x))$ and, since $(\bar
X_\alpha)^i=\bar\pi_2^*(X_\alpha)^i$ and $(\bar
X_\alpha)_i^\beta=\bar\pi_2^*(X_\alpha)_i^\beta$, this is equivalent to say that
the equations (\ref{HDW_eq}) hold for
$\psi(x)=(\psi^i(x),\psi^\alpha_i(x))$; in other words, $\psi$ is an integral
section of ${\bf X}$. \end{proof}

\begin{remark}
{\rm The converse statement is not true. In fact, the
$k$-vector fields that are solution of the geometric field equations
(\ref{geonahaut}) are not completely determined, and then there are $k$-vector fields in
$\vf^k_{\mathcal H}(\rk\times(T^1_k)^*Q)$ that are not
$\bar\pi_2$-projectable (in fact, it suffices to take their
undetermined component functions to be not $\bar\pi_2$-projectable).
However, we have the following partial result:}
\end{remark}

\begin{prop}
Let $((T^1_k)^*Q,H)$ be an admissible $k$-symplectic Hamiltonian
system, and we consider $(\rk\times(T^1_k)^*Q,$  ${\mathcal H})$ its associated
autonomous $k$-cosymplectic Hamiltonian system. Then, every
integrable $k$-vector field ${\bf \bar X}\in\vf^k_{\mathcal
H}(\rk\times(T^1_k)^*Q)$ is associated with an integrable $k$-vector field
${\bf X}\in\vf^k_H((T^1_k)^*Q)$.
\end{prop}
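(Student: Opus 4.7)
The strategy is to turn integral sections of $\bar{\mathbf{X}}$ on $M = \rk \times (T^1_k)^*Q$ into a candidate $k$-vector field $\mathbf{X}$ on $(T^1_k)^*Q$, and then to use admissibility to upgrade the integrability of $\bar{\mathbf{X}}$ to that of $\mathbf{X}$. First I would fix $p \in (T^1_k)^*Q$ and, using the integrability of $\bar{\mathbf{X}}$, produce the unique integral section $\bar\psi_p \colon U_p \subset \rk \to M$ with $\bar\psi_p(0) = (0,p)$. The condition $\eta^\alpha(\bar X_\beta) = \delta^\alpha_\beta$ forces $\bar\psi_p(x) = (x, \psi_p(x))$ for a uniquely determined smooth map $\psi_p \colon U_p \to (T^1_k)^*Q$ with $\psi_p(0) = p$, and Theorem \ref{onetoone} identifies each $\psi_p$ as a solution of the Hamilton-De Donder-Weyl equation (\ref{HDW_eq}) for $H$. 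Smooth dependence of the flow on initial conditions yields a smooth $p$-family.

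Next, I would define
$$X_\alpha(p) \; := \; (\psi_p)_*(0)\!\left(\tfrac{\partial}{\partial x^\alpha}\big|_0\right),$$
which in canonical coordinates reads $X_\alpha(p) = (\bar X_\alpha)^i(0,p)\,\tfrac{\partial}{\partial q^i}\big|_p + (\bar X_\alpha)^\beta_i(0,p)\,\tfrac{\partial}{\partial p^\beta_i}\big|_p$, a smooth $k$-vector field on $(T^1_k)^*Q$. Using $\mathcal{H} = \bar\pi_2^* H$, so that the identities (\ref{h}) hold, I would evaluate the autonomous equations (\ref{geonahaut}) at points of the form $(0,p)$ to obtain exactly the local $k$-symplectic Hamiltonian equations (\ref{ecHDWloc}) for the coefficients just defined. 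Hence $\mathbf{X} \in \vf^k_H((T^1_k)^*Q)$.

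The main obstacle is showing that $\mathbf{X}$ is integrable, since $[X_\alpha, X_\beta]$ is built only from the $x=0$ slice of the coefficients of $\bar{\mathbf{X}}$, whereas $[\bar X_\alpha, \bar X_\beta] = 0$ involves their full $x$-dependence and could, a priori, hide cancellations between $\partial_{x^\alpha}$-terms and phase-space terms. The plan is to invoke the admissibility of $((T^1_k)^*Q, H)$: each solution $\psi_p$ of (\ref{HDW_eq}) is an integral section of some integrable $\mathbf{Y}^{(p)} \in \vf^k_H((T^1_k)^*Q)$ through $p$, and by construction $Y^{(p)}_\alpha(p) = X_\alpha(p)$. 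Because $\mathbf{Y}^{(p)}$ is integrable, its integral section through any point $p'$ near $p$ is uniquely determined; matching $1$-jets at $p'$ with the map $\psi_{p'}$ coming from $\bar{\mathbf{X}}$ should force $Y^{(p)}_\alpha = X_\alpha$ on a neighborhood of $p$, and hence $[X_\alpha, X_\beta] = 0$ locally. The hardest step will be justifying this matching: it requires exploiting the commutativity $[\bar X_\alpha, \bar X_\beta] = 0$ to guarantee that the family $\{\psi_{p'}\}_{p' \sim p}$ coming from the leaves of $\bar{\mathbf{X}}$ fits together coherently as a foliation on the $(T^1_k)^*Q$ side, so that it coincides with the foliation defined by $\mathbf{Y}^{(p)}$ and not merely tangentially agrees at a single point.
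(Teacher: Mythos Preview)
Your route differs from the paper's, and the difference exposes a real gap in your argument. The paper does \emph{not} build $\mathbf X$ explicitly. It simply observes that the integral sections of $\bar{\mathbf X}$ project (via Theorem~\ref{onetoone}) to solutions $\psi$ of the HDW equation for $H$; admissibility of $((T^1_k)^*Q,H)$ then furnishes, by definition, an integrable $k$-vector field $\mathbf X\in\vf^k((T^1_k)^*Q)$ having these $\psi$'s as integral sections. Only \emph{after} obtaining this integrable $\mathbf X$ does the paper invoke Proposition~\ref{int} to conclude $\mathbf X\in\vf^k_H((T^1_k)^*Q)$ (since the images of the $\psi$'s cover $(T^1_k)^*Q$). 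So the paper lets admissibility hand over integrability for free and then checks the Hamiltonian equation; you do the reverse, and that is where your argument breaks.

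Concretely, your $\mathbf X$ is the $x=0$ slice of $\bar{\mathbf X}$, so $X_\alpha(p)$ is the projection of $\bar X_\alpha(0,p)$. This is indeed Hamiltonian, but there is no reason your $\psi_p$'s should be integral sections of \emph{this} $\mathbf X$: for $x\neq 0$ the tangent to $\psi_p$ at $x$ is the projection of $\bar X_\alpha(x,\psi_p(x))$, which in general differs from $X_\alpha(\psi_p(x))=$ projection of $\bar X_\alpha(0,\psi_p(x))$, since the coefficients of $\bar{\mathbf X}$ may depend on $x$. Consequently your matching argument with the admissibility-provided $\mathbf Y^{(p)}$ fails already at the level of $1$-jets away from $p$: you only know $Y^{(p)}_\alpha(p)=X_\alpha(p)$, not $Y^{(p)}_\alpha(p')=X_\alpha(p')$ for nearby $p'$, and nothing in the commutativity of $\bar{\mathbf X}$ forces the $x=0$ slice to be involutive. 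The fix is to abandon the explicit slice construction and, as the paper does, take $\mathbf X$ to be whatever integrable $k$-vector field admissibility provides; Proposition~\ref{int} then gives the Hamiltonian property along every $\psi$, hence everywhere.
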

\begin{proof}
 If ${\bf \bar X}\in\vf^k_{\mathcal H}(\rk\times(T^1_k)^*Q)$ is
an integrable $k$-vector field, denote by $\bar{\mathcal S}$ the set of
its integral sections (i.e., the solutions of the HDW-equation
(\ref{HDW field eq})). Let ${\mathcal S}$ be the set of maps
$\psi\colon\rk\to(T^1_k)^*Q$ associated with these sections by
Theorem \ref{onetoone}. But, since that
$((T^1_k)^*Q,\omega^\alpha,H)$ is an admissible $k$-symplectic
Hamiltonian system, we have that they are admissible solutions of the
HDW-equation (\ref{HDW_eq}),  and then they are integral sections of some
integrable $k$-vector field ${\bf  X}\in\vf^k((T^1_k)^*Q)$. Then,
by Proposition \ref{int}, ${\bf X}$ satisfies the field equation (\ref{ecHksym})
on the image of $\psi$, for every $\psi\in{\mathcal S}$,
and thus ${\bf X}\in\vf^k_H((T^1_k)^*Q)$
since every point of $(T^1_k)^*Q$ is on the image of one of these sections. \qed
 \end{proof}

We now consider the Lagrangian case. In this situation we define

\index{Autonomous Lagrangian system}
\begin{definition}
A $k$-cosymplectic (or  $k$-precosymplectic) Lagrangian system
is said to be \emph{autonomous} if \ $\ds \derpar{\mathcal L}{x^\alpha}=0$ or,
what is equivalent, $\ds \derpar{{\mathcal E}_{\Lag}}{x^\alpha}=0$.
 \label{autonomousL}
\end{definition}

Now, all the results obtained for the Hamiltonian case can be stated
and proved in the same way for Lagrangian approach, considering the systems
 $(\rk\times T^1_kQ,\Lag)$
and $(T^1_kQ,L)$ instead of $(\rk\times (T^1_k)^*Q, {\mathcal H})$ and
$((T^1_k)^*Q,H)$.

\newpage
The following table summarizes  the above discussion (we also include the particular case of Classical Mechanics).

\begin{center}
\begin{table}[h]
\scalebox{0.88}{\begin{tabular}{|c|c|c|}
\hline
 & \begin{tabular}{c}
 {\sc Hamiltonian formalism}\\
 Geometric Hamiltonian equations
 \end{tabular}
  & \begin{tabular}{c}
 {\sc Lagrangian formalism}\\
 Geometric Lagrangian equations
 \end{tabular}\\\hline
 \begin{tabular}{c}
  $k$-cosymplectic \\ formalism\end{tabular} & $\begin{array}{c}\\
   dt^A(X_B)=\delta^A_B\\  \noalign{\medskip}
 \ds\sum_{A=1}^k \iota_{X_A}\omega^A
= dH-\displaystyle\sum_{A=1}^k \ds\frac{\partial H}{\partial
t^A}dt^A\\  \noalign{\bigskip}
  (X_1,\dots,X_k)\,\, \mbox{k-vector field on} \,\, \rk\times (T^1_k)^{\;*}Q\\
  \quad
 \end{array}$ & $\begin{array}{c}\\
 dt^A(Y_B)=\delta^A_B \\  \noalign{\medskip}
 \ds\sum_{A=1}^k \, i_{Y_A} \omega_L^A =\,
dE_L + \,\ds\sum_{A=1}^k\ds\frac{\partial L}{\partial t^A}dt^A  \\  \noalign{\bigskip}
(Y_1,\dots,Y_k) \,\, \mbox{k-vector field on} \,\, \rk\times T^1_kQ\\\quad
 \end{array}$\\\hline
 \begin{tabular}{c}
  $k$-symplectic \\ formalism\end{tabular} & $\begin{array}{c}\\
   \ds\sum_{A=1}^k \iota_{X_A}\omega^A
= dH\\  \noalign{\bigskip}
  (X_1,\dots,X_k)\,\, \mbox{k-vector field on} \,\, (T^1_k)^{\;*}Q\\
  \quad
 \end{array}$ & $\begin{array}{c}\\
   \ds\sum_{A=1}^k \, i_{Y_A} \omega_L^A =\,
dE_L  \\  \noalign{\bigskip}
(Y_1,\dots,Y_k) \,\, \mbox{k-vector field on} \,\,   T^1_kQ\\\quad
 \end{array}$\\\hline
 \begin{tabular}{c}
Cosymplectic \\ formalism\\ $k=1$\\
(Non-autonomous \\Mechanics)\end{tabular} & $\begin{array}{c}\\
   dt(X)=1\\  \noalign{\medskip}
 \iota_{X}\omega
= dH- \ds\frac{\partial H}{\partial
t}dt \\  \noalign{\bigskip}
  X\,\, \mbox{vector field on} \,\, \r\times T^*Q \\\noalign{\bigskip} \mbox{or equivalently}\\\noalign{\bigskip} dt(X)=1\, , \, i_{X}\Omega =0\\\mbox{where} \;\Omega=\omega+dH\wedge dt
  \\
  \quad
 \end{array}$ & $\begin{array}{c}\\
dt(Y)=1 \\  \noalign{\medskip}
 i_{Y} \omega_L =\,
dE_L + \,\ds\frac{\partial L}{\partial t}dt  \\  \noalign{\bigskip}
Y \,\, \mbox{vector field on} \,\, \r\times TQ\\\noalign{\bigskip} \mbox{or equivalently}\\\noalign{\bigskip} dt(Y)=1 \, , \, i_{Y}\Omega_L =0\\\mbox{where} \;\Omega_L=\omega_L+dE_L\wedge dt\\\quad
 \end{array}$\\\hline
  \begin{tabular}{c}
Symplectic \\ formalism\\ $k=1$\\
(Autonomous \\Mechanics)\end{tabular} & $\begin{array}{c}\\
    \iota_{X}\omega
= dH \\  \noalign{\bigskip}
  X\,\, \mbox{vector field on} \,\,   T^*Q
  \\
  \quad
 \end{array}$ & $\begin{array}{c}\\
   i_{Y} \omega_L =\,
dE_L   \\  \noalign{\bigskip}
Y \,\, \mbox{vector field on} \,\,   TQ\\\\\quad
 \end{array}$\\\hline
\end{tabular}}
\caption{$k$-cosymplectic and $k$-symplectic formalisms}
\end{table}
\end{center}

\part{Relationship between $k$-symplectic and $k$-cosymplectic  approaches and the multisymplectic formalism}\label{relation_k-cosym_multi}

\chapter{Multisymplectic  formalism}\label{multy}

In this book, we have developed a framework for describing Classical Field Theories using $k$-symplectic and $k$-cosymplectic manifolds. An alternative geometric framework is  the multisymplectic formalism \cite{CCI-91,Gymmsy, Gymmsy2,MS-99,RR-2009},
first introduced in \cite{Ki-73,KS-75,KT-79,Snia},
 which is based on the use of multisymplectic manifolds.
In particular, jet bundles are the appropriate domain to develop the Lagrangian formalism \cite{Saunders-89},
and different kinds of multimomentum bundles are used for
developing the Hamiltonian description
\cite{EMR-00,HK-04,LMM-2009}.
In these models, the field equations can be  also obtained in terms of
multivector fields \cite{EMR-1998,EMR-1999,PR-2002}.

Multisymplectic models allow us to describe a higher variety of field theories
than the $k$-cosymplectic or $k$-symplectic models,
since for the latter the configuration bundle of the theory
must be a trivial bundle; which is not the case for the multisymplectic formalism.
The mail goal of this chapter is to show the equivalence between
 the multisymplectic and $k$-cosymplectic descriptions,
 when theories with trivial configuration bundles are considered,
for both the Hamiltonian and Lagrangian formalisms (for more details see \cite{RRSV-2011}).
In this way we complete the results obtained in \cite{LMcNRS-2002,{LMORS-1998}},
where an initial analysis about the relation between multisymplectic, $k$-cosymplectic and
$k$-symplectic structures was carried out.

\section{First order jet bundles.}

For a more detail discussion of the contents of this section, we refer to
\cite{EMR-1996,Saunders-89}.

Let $\pi:E\rightarrow M$ be a bundle where $E$ is an $(m+n)$-dimensional
manifold, which is fibered over an $m$-dimensional manifold $M$.

If $(y^i)$ are coordinates on $M$, where $1\leq i\leq m$, then we denote the fibered coordinates on $E$ by $(y^i,u^\alpha)$ where $1\leq \alpha \leq n$.

\index{local section}
\begin{definition}
If $(E,\pi,M)$ is a fiber bundle then a \emph{local section} of $\pi$ is  a map $\phi:W\subset M \to E$, where $W$ is an open set of $M$, satisfying the condition   $\pi\circ \phi=id_W$.  If $p\in M$ then the set of all sections of $\pi$ whose domains contain $p$ will be denoted by $\Gamma_p(\pi)$.
\end{definition}

 \begin{definition} Define the local sections $\phi,\psi\in \Gamma_p(\pi)$ to be equivalent if $\phi(p)=\psi(p)$ and if, in some fibered coordinate system $(y^i,u^\alpha)$ around $\phi(p)$
 $$
 \derpar{\phi^\alpha}{y^i}\Big\vert_{p}=\derpar{\psi^\alpha}{y^i}\Big\vert_{p}
 $$
 for $1\leq i \leq n$, $1\leq \alpha\leq n$. The equivalence class containing $\phi$ is called the $1$-jet of $\phi$ at $p$ and   is denoted $j^1_p\phi$.
\end{definition}

Let us observe that $j^1_p\phi=j^1_p\psi$ if, and only if, $\phi_*(p)=\psi_*(p)$.

The set of all $1$-jets of local sections of $\pi$ has a natural structure  as a differentiable manifold. The atlas which describe this structure  is constructed from an atlas of fibered coordinate charts on the total space $E$, in much the same way that the induced atlas on the tangent bundle of $TM$  (or on the $k$-tangent bundle $T^1_kM$)  is constructed from an atlas on $M$.

 The first jet manifold of  $\pi$ is the set
 $$
 \{ j^1_p\phi\, \vert \, p\in M\, , \, \phi\in \Gamma_p(\pi) \}
 $$
 and is denoted $J^1\pi$. The functions $\pi_1$ and $\pi_{1,0}$ called the source and target projections respectively, are defined by
 $$
 \begin{array}{ccccl}
 \pi_1 & : & J^1\pi & \longto & M \\ \noalign{\medskip}
       &  &      j^1_p\phi     & \to & p
 \end{array}
 $$and
 $$\begin{array}{ccccl}
 \pi_{1,0} & : & J^1\pi & \longto & E \\ \noalign{\medskip}
       &  &      j^1_p\phi     & \to & \phi(p)

  \end{array}$$

Let $(U,y^i,u^\alpha)$ be an adapted coordinate system on $E$. The induced coordinate system $(U^1,y^i,u^\alpha, u^\alpha_i)$ on $J^1\pi$ is defined on
 $$U^1=\{j^1_p\phi\, : \, \phi(p)\in U\}
  $$ where
 \begin{equation}\label{coorj1pi}
  y^i(j^1_p\phi)=y^i(p)  \; , \quad  u^\alpha(j^1_p\phi)=u^\alpha(\phi(p))    \; , \quad  u^\alpha_i(j^1_p\phi)=\derpar{u^\alpha \circ \phi}{y^i}\Big\vert_{p}
  \end{equation}
  and are known as {\it derivative coordinates}.

$J^1\pi $ is a manifold of dimension $m+n(1+m)$.
The canonical projections $\pi_{1,0}$ and $\pi_{1}$ are smooth surjective submersions.

\begin{remark}
{\rm
If we consider Remarks \ref{difeo jipiQ} and \ref{difeo jipi} one obtain that the manifolds $\rktkqh$ and $\rktkq$ are two examples of jet bundles.
\rqed}
\end{remark}

\section{Multisymplectic Hamiltonian formalism}

\subsection{Multimomentum bundles}
\protect\label{mmb}

A more completed description of the multisymplectic manifolds can be found in \cite{{CIL-1996},CIL-1999,CCI-91,EMR-00,{Sarda2},Go1,Go2,Go3}.

\index{multisymplectic manifold}
\begin{definition}
The couple $({\mathcal M},\Omega)$, with $\Omega\in\Omega^{k+1}({\mathcal
M})$ ($2\leq k+1\leq\dim\,{\mathcal M}$), is a \emph{multisymplectic
manifold}
 if $\Omega$ is closed and $1$-nondegenerate;
that is, for every $p\in{\mathcal M}$, and $X_p\in T_p{\mathcal M}$, we have
that $\iota_{X_p}\Omega_p=0$ if, and only if, $X_p=0$.
\end{definition}

\index{Multimomentum bundle}
\index{Extended multimomentum bundle}
A very important example of multisymplectic manifold is the {\sl
multicotangent bundle} $\Lambda^k Q$ of a manifold $Q$, which
is the bundle of $k$-forms in $Q$, and is endowed with a canonical
multisymplectic  $(k+1)$-form.
 Other examples of multisymplectic manifolds which are relevant in
field theory are the so-called {\sl multimomentum bundles}: let
$\pi\colon E\to M$ be a fiber bundle, ($\dim\, M=k$, $\dim\,
E=n+k$), where $M$ is an oriented manifold with volume form
 $\omega\in\Omega^k(M)$, and denote by $(x^\alpha,q^i)$
 the natural coordinates in $E$
 adapted to the bundle, such that
 $\omega=\d x^1\wedge\ldots\wedge\d x^k\equiv{\rm d}^kx$.
We denote by $\Lambda^k _2 E$ the
bundle of $k$-forms on $E$ vanishing by the action of two
$\pi$-vertical vector fields. This is called the {\sl extended
multimomentum bundle}, and its canonical submersions are denoted by
$$
\kappa\colon{\Lambda^k _2 E}\to E \quad ; \quad
 \bar{\kappa}=\pi\circ\kappa\colon{\Lambda^k _2 E}\to M
$$
We can introduce natural coordinates in ${\Lambda^k _2 E}$ adapted to the
bundle $\pi\colon E\to M$, which are denoted by $(x^\alpha,q^i,p^\alpha_i,p)$,
 such that $\omega=d^kx$. Then, denoting
$\ds\d^{k-1}x_\alpha=\iota_{\frac{\partial }{\partial x^\alpha}}{\rm
d}^kx$,  the elements of ${\Lambda^k _2 E}$ can be written as
   $p^\alpha_i \, \d q^i \wedge d^{k-1}x_\alpha \, + \,p\,\d^kx$.

${\Lambda^k _2 E}$ is a subbundle of $\Lambda^kE$, and hence ${\Lambda^k _2 E}$ is also endowed with canonical forms. First we have the
``tautological form'' $\Theta\in\Omega^k({\Lambda^k _2 E})$, which is
defined as follows: let $\nu_x\in\Lambda^k _2 E $, with $x\in
E$ then, for every
$X_1,\ldots,X_k\in T_{\nu_x}({\Lambda^k _2 E})$, we have
$$
\Theta \nu_x(X_1,\ldots,X_k):= \nu
(x)(T_{\nu_x}\kappa(X_1),\ldots ,T_{\nu_x}\kappa(X_k)).
$$
Thus we define the multisymplectic form
$$
\Omega:=-{\rm d}\Theta\in\Omega^{k+1}({\Lambda^k _2 E})
$$
and  the local expressions of the above forms are
\begin{equation}
 \Theta=p^\alpha_i{\rm d} q^i\wedge{\rm d}^{k-1}x_\alpha+p\,d^kx
 \ , \
 \Omega=
-{\rm d} p^\alpha_i\wedge{\rm d} q^i\wedge{\rm d}^{k-1}x_\alpha-{\rm d}p\wedge d^kx
 \label{coormult}
\end{equation}
\index{Restricted multimomentum bundle}
Consider $\pi^*\Lambda^kT^*M$, which is another bundle over $E$,
whose sections are the $\pi$-semibasic $k$-forms on $E$, and denote
by $J^1\pi^*$ the quotient $\ds\frac{\Lambda^k _2 E}{\pi^*\Lambda^kT^*M}$.
$J^1\pi^*$ is usually called the {\sl restricted multimomentum
bundle} associated with the bundle $\pi\colon E\to M$. Natural
coordinates in $J^1\pi^*$ (adapted to the bundle $\pi\colon E\to M$)
are denoted by $(x^\alpha,q^i,p^\alpha_i)$.
 We have the natural submersions specified in the following diagram
$$
\begin{array}{ccc}
{\Lambda^k _2 E} &
\begin{picture}(135,20)(0,0)
\put(65,8){\mbox{$\mu$}} \put(0,3){\vector(1,0){135}}
\end{picture}
& J^1\pi^*
\\ &
\begin{picture}(135,100)(0,0)
\put(34,84){\mbox{$\kappa$}} \put(93,82){\mbox{$\sigma$}}
\put(7,55){\mbox{$ \bar{\kappa}$}}
 \put(115,55){\mbox{$ \bar{\sigma}$}}
 \put(58,30){\mbox{$\pi$}}
\put(65,55){\mbox{$E$}}
 \put(65,0){\mbox{$M$}}
\put(0,102){\vector(3,-2){55}}
 \put(135,102){\vector(-3,-2){55}}
\put(0,98){\vector(2,-3){55}}
 \put(135,98){\vector(-2,-3){55}}
\put(70,48){\vector(0,-1){35}}
\end{picture} &
\end{array}
$$

\subsection{Hamiltonian systems}
\protect\label{hsjpistar}

The Hamiltonian formalism in $J^1\pi^*$
presented here is based on the construction made in \cite{CCI-91}
(see also \cite{ELMR-2005, ELMMR-2004,EMR-00,RR-2009}).

\index{Hamiltonian section}
\index{Hamiltonian Cartan forms}
\begin{definition}
 A section $h\colon J^1\pi^*\to{\Lambda^k _2 E}$ of the projection
 $\mu$ is called a \emph{Hamiltonian section}.
 The differentiable forms
 $\Theta_{h}:=h^*\Theta$ and $\Omega_{h}:=-{\rm d}\Theta_{h}=h^*\Omega$
 are called the \emph{Hamilton-Cartan $k$ and $(k+1)$ forms} of $J^1\pi^*$
 associated with the Hamiltonian section $h$.
$(J^1\pi^*,h)$ is said to be a {\sl Hamiltonian system} in $J^1\pi^*$.
\end{definition}

 In natural coordinates we have that
 $$h(x^\alpha,q^i,p^\alpha_i)= (x^\alpha,q^i,p^\alpha_i,p=-{\mathcal H}(x^\alpha,q^i,p^\alpha_i)),$$
and ${\mathcal H}\in C^\infty  (U)$, $U\subset J^1\pi^*$,
is a {\sl local Hamiltonian function}. Then we have
\begin{equation}\label{multiformh}
 \Theta_h = p^\alpha_i{\rm d} q^i\wedge{\rm d}^{k-1}x_\alpha -{\mathcal H}{\rm
 d}^kx
 \ , \
 \Omega_h = -{\rm d} p^\alpha_i\wedge{\rm d} q^i\wedge{\rm d}^{k-1}x_\alpha +
 {\rm d}{\mathcal H}\wedge d^kx \ .
\end{equation}

The Hamilton-De Donder-Weyl equations can also be derived from the corresponding  {\it Hamilton--Jacobi variational principle}.
In fact:

 \begin{definition}
 Let $(J^1\pi^*,h)$ be a Hamiltonian system.
 Let $\Gamma(M,J^1\pi^*)$ be
 the set of sections of $ \sigma$. Consider the map
  $$\begin{array}{cccl}
 {\bf H} \colon & \Gamma(M,J^1\pi^*) & \longrightarrow & \r, \\ \noalign{\medskip}
                 & \psi               &  \mapsto  & \int_M\psi^*\Theta_{h},
  \end{array}$$
 where the convergence of the integral is assumed.
 The {\rm variational problem} for this Hamiltonian system
 is the search for the critical (or
 stationary) sections of the functional ${\bf H}$,
 with respect to the variations of $\psi$ given
 by $\psi_t =\sigma_t\circ\psi$, where $\{\sigma_t\}$ is the
 local one-parameter group of any compact-supported
 $Z\in\vf^{{\rm V}( \sigma)}(J^1\pi^*)$
 (the module of $ \sigma$-vertical vector fields in $J^1\pi^*$), that is:
 \[
  \frac{\d}{\d t}\Big\vert_{t=0}\int_M\psi_t^*\Theta_{h} = 0   .
  \]
 \label{hjvp}
 \end{definition}

 The field equations for these multisymplectic Hamiltonian systems can be stated as follows

 \begin{theorem}\label{equics}
 The following assertions on a
 section $\psi\in\Gamma(M,J^1\pi^*)$ are equivalent:
 \begin{enumerate}\itemsep=0pt
 \item[$1.$]
 $\psi$ is a critical section for the variational problem posed by
the Hamilton--Jacobi principle.
 \item[$2.$]
 $\psi^*\iota_X\Omega_{h}= 0$, $\forall\, X\in\vf (J^1\pi^*)$.
\end{enumerate}
 \end{theorem}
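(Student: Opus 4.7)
\bigskip

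\noindent\textbf{Proof plan.} The strategy is the standard one for variational problems in the multisymplectic setting: convert the derivative of the action into the integral of a Lie derivative, apply Cartan's formula together with Stokes' theorem to reduce to a contraction against $\Omega_h$, and then use the fundamental lemma of the calculus of variations to pass from an integral identity to a pointwise one. Finally, one enlarges the class of test vector fields from $\sigma$-vertical ones to arbitrary ones by a dimensional/verticality argument.

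\medskip

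\noindent\textbf{Step 1 (From the variational derivative to a Lie derivative).} Let $Z\in\mathfrak{X}^{\mathrm{V}(\sigma)}(J^1\pi^*)$ be compact-supported, with local one-parameter group $\{\sigma_t\}$, and set $\psi_t=\sigma_t\circ\psi$. By the very definition of the Lie derivative,
\[
\frac{d}{dt}\Big|_{t=0}\int_M\psi_t^{\,*}\Theta_h \;=\;\int_M\psi^*\,\mathcal{L}_Z\Theta_h .
\]
This computation is essentially identical to the one already carried out for the Hamilton--Jacobi-type principles in Sections \ref{section k-cosymp Ham variational} and \ref{section HDW eq variational}, so nothing new is needed here.

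\medskip

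\noindent\textbf{Step 2 (Cartan formula plus Stokes).} Applying Cartan's identity and using $\Omega_h=-d\Theta_h$,
\[
\psi^*\mathcal{L}_Z\Theta_h \;=\; \psi^*(d\iota_Z\Theta_h)+\psi^*(\iota_Z\,d\Theta_h) \;=\; d(\psi^*\iota_Z\Theta_h)\;-\;\psi^*\iota_Z\Omega_h .
\]
Because $Z$ has compact support, so does $\psi^*\iota_Z\Theta_h$, and Stokes' theorem eliminates the exact term. Thus the variational condition $(1)$ is equivalent to
\[
\int_M\psi^*\iota_Z\Omega_h=0 \qquad\text{for every compact-supported }Z\in\mathfrak{X}^{\mathrm{V}(\sigma)}(J^1\pi^*).
\]

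\medskip

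\noindent\textbf{Step 3 (Fundamental lemma, vertical test fields).} Using bump functions to localize $Z$ and the fundamental lemma of the calculus of variations (Lemma \ref{variations}, applied component-wise in the local coordinates $(x^\alpha,q^i,p^\alpha_i)$ and the corresponding vertical directions $\partial/\partial q^i$, $\partial/\partial p^\alpha_i$), this integral identity is equivalent to the pointwise identity
\[
\psi^*\iota_Z\Omega_h=0 \qquad\text{for every } Z\in\mathfrak{X}^{\mathrm{V}(\sigma)}(J^1\pi^*).
\]

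\medskip

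\noindent\textbf{Step 4 (From vertical to arbitrary vector fields).} This is the step where one must be careful, and it is the main conceptual (rather than computational) point. Given any $X\in\mathfrak{X}(J^1\pi^*)$, decompose it along the image of $\psi$ as
\[
X\circ\psi \;=\; X^v \;+\; \psi_{*}(Y),
\]
where $Y:=\sigma_{*}\circ X\circ\psi\in\mathfrak{X}(M)$ and $X^v$ is $\sigma$-vertical; such a decomposition extends to a global vertical vector field on $J^1\pi^*$ by standard arguments and any two such extensions differ by something vanishing on $\mathrm{Im}(\psi)$. For the $\psi_*(Y)$-part one has
\[
\psi^*\iota_{\psi_*(Y)}\Omega_h \;=\; \iota_Y(\psi^*\Omega_h),
\]
and since $\psi^*\Omega_h$ is a $(k+1)$-form on the $k$-dimensional manifold $M$, it vanishes identically. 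Hence $\psi^*\iota_X\Omega_h=\psi^*\iota_{X^v}\Omega_h$, and Step 3 promotes the vertical pointwise identity to one valid for all $X\in\mathfrak{X}(J^1\pi^*)$, giving $(2)$. The converse implication $(2)\Rightarrow(1)$ is immediate by reading the above chain of equalities backwards.

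\medskip

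\noindent\textbf{Main obstacle.} The only subtle point is Step~4: one must justify the $\sigma$-vertical/horizontal-along-$\psi$ decomposition and observe the crucial dimensional fact $\psi^*\Omega_h=0$. Steps~1--3 are routine; Step~4 is what makes the equivalence genuine, and it is where the multisymplectic nature of $\Omega_h$ (degree $k+1$ against a base of dimension $k$) enters decisively.
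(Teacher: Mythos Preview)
Your proof is correct and follows the standard route for this result in multisymplectic field theory. Note, however, that the paper itself does \emph{not} supply a proof of this theorem: it is stated in the review Section~\ref{hsjpistar} as a known fact (with references to \cite{CCI-91,ELMR-2005,ELMMR-2004,EMR-00,RR-2009}) and then immediately used to write down the Hamilton--De~Donder--Weyl equations in coordinates. So there is no ``paper's proof'' to compare against in the strict sense.

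That said, your argument matches precisely the pattern the paper \emph{does} carry out in detail for the analogous $k$-cosymplectic variational principle (the theorem in Section~\ref{section k-cosymp Ham variational}): Lie-derivative reformulation, Cartan plus Stokes, then the fundamental lemma. The one genuinely new ingredient in the multisymplectic version is your Step~4, the passage from $\sigma$-vertical test fields to arbitrary ones via the dimensional observation $\psi^*\Omega_h\in\Omega^{k+1}(M)=0$; you have identified this correctly as the crux, and your splitting $X\circ\psi=X^v+\psi_*(\sigma_*X\circ\psi)$ along $\mathrm{Im}\,\psi$ together with $\psi^*\iota_{\psi_*Y}\Omega_h=\iota_Y\psi^*\Omega_h$ is exactly what is needed. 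This is the standard argument in the cited literature.
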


If $(U;x^\alpha,q^i,p^\alpha_i )$ is a natural system of
 coordinates in $J^1\pi^*$, then  $\psi$
 satisfies the  Hamilton-De Donder-Weyl equations in $U$
 \begin{equation}
 \derpar{ \psi^i}{x^\alpha}\Big\vert_{x}=
 \derpar{\mathcal{H}}{p^\alpha_i}\Big\vert_{\psi(x)}
,\qquad
 \derpar{\psi^\alpha_i}{x^\alpha}=
 - \derpar{\mathcal{H}}{q^i}\Big\vert_{\psi(x)}\, ,
 \label{HDWeqs}
 \end{equation}
where $\psi(x)=(x,\psi^i(x),\psi^\alpha_i(x))$.

%

%

\subsection{Relation with the $k$-cosymplectic Hamiltonian formalism}
\protect\label{rkcf}

In order to compare the multisymplectic and the $k$-cosymplectic formalisms
for field theories,  we consider the case
when $\pi\colon E\to M$ is the trivial bundle
$\rk\times Q \rightarrow \rk$.
Then we can establish some relations between the canonical
multisymplectic form on ${\Lambda^k _2 E}\equiv\Lambda^k _2 (\rk\times Q)$
and the canonical $k$-cosymplectic structure on $\rk\times(T^1_k)^*Q$.

First recall that on $M=\rk$ we have the canonical volume form
 $\omega=\d x^1\wedge\ldots\wedge\d x^k\equiv d^kx$.
Then:

\begin{prop}If $\pi\colon E\to M$ is the trivial bundle $\rkq\to\rk$, we have the following diffeomorphisms:
\ben
\item
${\Lambda^k _2 E}\equiv\Lambda^k _2 (\rk\times Q)$
 is diffeomorphic to $\rk\times\r\times (T^1_k)^*Q$.
\item
$J^1\pi^*=J^{1*}(\rkq)$ is diffeomorphic to $\rk\times (T^1_k)^*Q$.
\een
\end{prop}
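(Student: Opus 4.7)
The plan is to construct both diffeomorphisms pointwise by exploiting the canonical decomposition of $2$-horizontal $k$-forms over a trivial bundle, and then to verify that the resulting maps are smooth with smooth inverses by reading off their expressions in the natural fibered coordinates already introduced in Section~\ref{mmb}.

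For part $(1)$, the idea is to use the canonical vector fields $\partial_\alpha := \partial/\partial x^\alpha$ on $\rk$ (lifted to $\rkq$ through the product structure) together with the canonical volume form $d^kx$ to split any $\nu_{(x,q)}\in\Lambda^k_2(\rkq)$ into horizontal and mixed parts. Concretely, I would define
\[
\Phi\colon \Lambda^k_2(\rkq)\longrightarrow \rk\times\r\times (T^1_k)^*Q,\qquad \Phi(\nu_{(x,q)})=(x,p,\nu_1,\ldots,\nu_k),
\]
where $p:=\nu_{(x,q)}(\partial_1,\ldots,\partial_k)$ and, for each $\alpha$ and each $v_q\in T_qQ$,
\[
\nu_\alpha(v_q):=(-1)^{\alpha-1}\nu_{(x,q)}(v_q,\partial_1,\ldots,\widehat{\partial_\alpha},\ldots,\partial_k).
\]
The key observation is that, since $\nu_{(x,q)}$ annihilates pairs of $\pi$-vertical vectors (i.e.\ vectors in $T_qQ$), and since $T_{(x,q)}(\rkq)=T_x\rk\oplus T_qQ$, the only non-trivial evaluations of $\nu_{(x,q)}$ on a basis are those involving at most one vector tangent to $Q$. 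This yields the intrinsic decomposition
\[
\nu_{(x,q)}=p\cdot d^kx\big\vert_{(x,q)}+\sum_{\alpha=1}^k\nu_\alpha\wedge d^{k-1}x_\alpha\big\vert_{(x,q)},
\]
and the inverse of $\Phi$ is just $(x,p,\nu_1,\ldots,\nu_k)\mapsto p\cdot d^kx+\sum_\alpha \nu_\alpha\wedge d^{k-1}x_\alpha$. In the natural coordinates $(x^\alpha,q^i,p^\alpha_i,p)$ on $\Lambda^k_2(\rkq)$ (see equation~\eqref{coormult}) and the coordinates $(x^\alpha,p,q^i,p^\alpha_i)$ on $\rk\times\r\times(T^1_k)^*Q$ induced from~\eqref{canonical coordinates rktkqh}, the map $\Phi$ is the identity on coordinates up to reordering, and therefore a smooth diffeomorphism.

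For part $(2)$, I would use the description $J^1\pi^*=\Lambda^k_2 E/\pi^*\Lambda^k T^*M$. Under the identification $\Phi$ of part $(1)$, the subbundle $\pi^*\Lambda^k T^*M$ corresponds exactly to forms of type $f(x,q)\,d^kx$, i.e.\ to the $\r$-factor carrying the coordinate $p$. Quotienting by this subbundle therefore kills that factor and induces a bijection $\bar\Phi\colon J^{1*}(\rkq)\to\rk\times(T^1_k)^*Q$; in coordinates $\bar\Phi$ sends $[x^\alpha,q^i,p^\alpha_i]\mapsto (x^\alpha,q^i,p^\alpha_i)$, so it is a diffeomorphism.

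The main obstacle is making the pointwise decomposition genuinely intrinsic rather than coordinate-dependent: one must check that $\Phi$ does not depend on any auxiliary choice once the canonical volume form $d^kx$ and the product structure $E=\rk\times Q$ are fixed. This is not deep, but it is the step that makes the identification global. After that, everything reduces to reading off the local coordinate expressions and invoking the fact that a bijection which is the identity (modulo reordering of coordinates) in natural charts is automatically a diffeomorphism.
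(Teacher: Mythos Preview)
Your proposal is correct and follows essentially the same route as the paper: the paper defines the very same map $\Psi$ via $p=\nu_{(x,q)}(\partial_1,\ldots,\partial_k)$ and $\nu^\alpha_q(X)=\nu_{(x,q)}(\partial_1,\ldots,(\iota_x)_*X,\ldots,\partial_k)$ (your formula with $v_q$ in the first slot and the sign $(-1)^{\alpha-1}$ is just a transposition of this), writes the inverse exactly as you do, notes that $\Psi$ is the identity in natural coordinates, and deduces part~(2) from the quotient $\Lambda^k_2 E/\pi^*\Lambda^kT^*M$ by killing the $\r$-factor. The only cosmetic difference is that the paper writes $(\pi_Q)^*\nu^\alpha_q\wedge d^{k-1}x_\alpha$ in the inverse to make the pull-back of $\nu_\alpha\in T^*_qQ$ to $T^*_{(x,q)}(\rkq)$ explicit.
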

\begin{proof}

  \begin{enumerate}
\item
Consider the canonical embedding $\iota_x\colon Q\hookrightarrow
\rk\times Q$ given by $i_x(q)=(x,q)$, and the canonical submersion
$\pi_Q\colon\rk\times Q\rightarrow Q$.
 We can define the map
$$
\begin{array}{cccc}
 \Psi\colon& \Lambda^k _2 (\rk\times Q) & \longrightarrow & \rk\times\r\times (T^1_k)^*Q \\
 & \nu_{(x,q)} & \mapsto & (x,p,\nu^1_q, \dots ,\nu^k_q)
\end{array}
$$
where
{$$
 \begin{array}{rl}
p=&\nu_{(x,q)}\left(\derpar{}{x^1}\Big\vert_{(x,q)},\dots,\derpar{}{x^k}
\Big\vert_{(x,q)}\right)
\\
\nu^\alpha_q(X) =&
\nu_{(x,q)}\left(\derpar{}{x^1}\Big\vert_{(x,q)},\dots,
\derpar{}{x^{\alpha-1}}\Big\vert_{(x,q)},(\iota_x)_*X,
 \derpar{}{x^{\alpha+1}}\Big\vert_{(x,q)},\ldots,
\derpar{}{x^k}\Big\vert_{(x,q)}\right)
 \ ,  \\  X \in&  \vf(Q)
      \end{array}$$}
\noindent(note that $x^\alpha$ and $p$ are now global coordinates in the
corresponding fibres and the global coordinate $p$
can be identified also with the natural projection
 $p\colon\rk\times\r\times (T^1_k)^*Q\to\r$). The inverse of $\Psi$ is given by
$$
\nu_{(x,q)}= p \, \d^k
x\vert_{(x,q)}+(\pi_Q)_{(x,q)}^*\nu^\alpha_q\wedge\d^{k-1}x^\nu\alpha_{(x,q)}
\ .
$$
Thus, $\Psi$ is a diffeomorphism. Locally $\Psi$ is written
as the identity.
\item
It is a straightforward consequence of $(1)$, because
$$
J^1\pi^*=\Lambda^k _2 E/\pi^*\Lambda^kT^*M\simeq
\rk\times\r\times (T^1_k)^*Q/\r\simeq\rk\times (T^1_k)^*Q\,.
$$
\end{enumerate}
\end{proof}

%
%
%

It is important to point out that since the bundle
$$\mu\colon{\Lambda^k _2 E}\simeq\rk\times\r\times (T^1_k)^*Q\to
 J^1\pi^*\simeq\rk\times (T^1_k)^*Q$$
is trivial, then the Hamiltonian sections can be taken to be global
sections of the projection $\mu$ by giving a global Hamiltonian
function ${\rm H}\in C^\infty(\rk\times (T^1_k)^*Q)$.

Then we can
  relate the non-canonical multisymplectic form $\Omega_h$ with the
$k$-cosymplectic structure in $\rk\times (T^1_k)^*Q$ as follows:
\begin{itemize}

\item Starting from the forms $\Theta_h$ and $\Omega_h$ in
$\rk\times(T^1_k)^*Q$, we can define the forms $\Theta^\alpha$ and
$\Omega^\alpha$ on $\rk\times(T^1_k)^*Q$ as follows:
for $  X,  Y\in\vf(\rk\times(T^1_k)^*Q)$,
 \bea
 \Theta^\alpha(  X) &=& -
\left(\iota_{\ds\frac{\partial}{\partial x^k}}\ldots
\iota_{\ds\frac{\partial}{\partial x^1}}(\Theta_h\wedge\d x^\alpha)\right)(  X)
\nonumber \\
\Omega^\alpha(  X,  Y)&=&-\d\Theta^\alpha(  X,  Y)
= (-1)^{k+1}\left(\iota_{\ds\frac{\partial}{\partial x^k}}\ldots
\iota_{\ds\frac{\partial}{\partial x^1}}(\Omega_h\wedge\d x^\alpha)\right)(  X,  Y) \ ,
 \label{relatomega5}
\eea
and the $1$-forms $\eta^\alpha=\d x^\alpha$ are canonically defined.

\item Conversely, starting from the canonical $k$-cosymplectic structure on $\rk\times(T^1_k)^*Q$,
and from ${\mathcal H}$, we can construct
\begin{equation}
\begin{array}{c}
\Theta_h= -{\mathcal H} \d^k t+\Theta^\alpha\wedge\d^{k-1}x_\alpha \; , \\\noalign{\medskip}
\Omega=-\d\Theta=\d{\mathcal H}\wedge\d^k t+\Omega^\alpha\wedge\d^{k-1}x_\alpha \ .\end{array}
 \label{relatomega6}
\end{equation}
\end{itemize}

Let  $\vf^k_h(J^1\pi^*)$ be the set of
$k$-vector fields ${\bf   X}=(  X_1,\ldots,  X_k)$ in $J^1\pi^*$
 which are solution of the equations
 \begin{equation}
 \iota_{\bf  X}\Omega_h=\iota_{X_1}\ldots\iota_{X_k}\Omega_h=0
 \; , \quad
\iota_{\bf  X}\omega=
 \iota_{X_1}\ldots\iota_{X_k}\omega\not=0 \ ,
 \label{hameq1}
\end{equation}
(we denote by $\omega=\d^kx$ the volume form in $M$ as well as its pull-backs to all the manifolds).

In a system of natural coordinates, the components of  ${\bf  X}$
are given by $$
X_\alpha=(X_\alpha)_\beta\derpar{}{x^\alpha}+ (X_\alpha)^i\derpar{}{q^i} +(X_\alpha)^\beta \derpar{}{v^i_\beta}\,.
$$

Then,
in order to assure the so-called  ``transversal condition'' $\iota_{\bf X}\omega\not=0$,
we can take $(  X_\alpha)^\beta=\delta^\beta_\alpha$, which leads to $\iota_{\bf X}\omega=1$,
and hence the other equation (\ref{hameq1}) give become
 \begin{equation}
\derpar{{\mathcal H}}{q^i}=-\ds\sum_{\alpha=1}^k ( X_\alpha)^\alpha_i
 \quad , \quad
 \derpar{{\mathcal H}}{p_i^\alpha}= ( X_\alpha)^i \  .
 \label{eqsG2}
 \end{equation}

 Let us observe that these equations coincide with equations (\ref{k-cosymp condvf}). Thus we obtain

 \begin{theorem}
A $k$-vector field ${\bf   X}=(  X_1,\ldots,  X_k)$ on $J^1\pi^*\simeq\rk\times (T^1_k)^*Q$
   is a solution of the equations
(\ref{hameq1}) if, and only if, it is also a solution of the equations (\ref{geonah});
that is, we have that
$\vf^k_{h}(\rk\times (T^1_k)^*Q)=\vf^k_{\mathcal{H}}(\rk\times (T^1_k)^*Q)$.
\end{theorem}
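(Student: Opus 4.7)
The plan is to prove this equivalence by a direct coordinate computation exploiting the relation (\ref{relatomega6}) between $\Omega_h$ and the canonical $k$-cosymplectic forms $(\eta^\alpha,\Omega^\alpha)$. Since both the multisymplectic and $k$-cosymplectic equations are equations on the same manifold $J^1\pi^{*}\simeq\rk\times(T^1_k)^*Q$, and since the canonical Hamiltonian is globally well-defined (as remarked above, the bundle $\mu$ is trivial), the equivalence can be established pointwise in a natural coordinate chart $(x^\alpha,q^i,p_i^\alpha)$.

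First I would handle the transversality. A $k$-vector field $\mathbf{X}=(X_1,\ldots,X_k)$ with components
\[
X_\alpha=(X_\alpha)_\beta\derpar{}{x^\beta}+(X_\alpha)^i\derpar{}{q^i}+(X_\alpha)^\beta_i\derpar{}{p_i^\beta}
\]
satisfies $\iota_{\mathbf X}\omega=\det((X_\alpha)_\beta)\,d^0x$, so $\iota_{\mathbf X}\omega\neq 0$ is equivalent (after the indicated normalization $\iota_{\mathbf X}\omega=1$, which is always possible rescaling within the same integrable distribution) to $\det((X_\alpha)_\beta)=1$. I would then argue, as is standard in the multisymplectic literature, that by an $\mathrm{SL}(k)$-type reparametrization one may take $(X_\alpha)_\beta=\delta^\beta_\alpha$, which is precisely the first $k$-cosymplectic condition $\eta^\alpha(X_\beta)=\delta^\alpha_\beta$.

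Next, using (\ref{relatomega6}), I would expand
\[
\iota_{\mathbf X}\Omega_h=\iota_{X_1}\cdots\iota_{X_k}\bigl(dH\wedge d^kx+\Omega^\alpha\wedge d^{k-1}x_\alpha\bigr).
\]
With $(X_\alpha)_\beta=\delta^\beta_\alpha$ the horizontal parts of the $X_\alpha$ are just $\partial/\partial x^\alpha=R_\alpha$, so $\iota_{X_k}\cdots\iota_{X_1}d^kx=1$ and $\iota_{X_k}\cdots\iota_{X_1}d^{k-1}x_\alpha=0$; but the terms where exactly one contraction hits $d^{k-1}x_\alpha$ while the remaining $k-1$ act diagonally on $d^{k-1}x_\alpha$ survive and produce $(-1)^{k-1}\iota_{X_\alpha}\Omega^\alpha$ (summed over $\alpha$). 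A careful sign and combinatorial check, using $\iota_{R_\beta}d^{k-1}x_\alpha=\delta^\alpha_\beta\cdot(\text{something})$ and $\iota_{R_\alpha}\Omega^\beta=0$, reduces the equation $\iota_{\mathbf X}\Omega_h=0$ to
\[
\sum_{\alpha=1}^k\iota_{X_\alpha}\Omega^\alpha=dH-\sum_{\alpha=1}^k R_\alpha(H)\,\eta^\alpha,
\]
which is exactly the second $k$-cosymplectic Hamiltonian equation (\ref{geonah}). Alternatively, and perhaps more transparently, one may simply compare local coordinate expressions: from the local form (\ref{multiformh}) of $\Omega_h$ and the computation leading to (\ref{eqsG2}), the multisymplectic system reduces to
\[
\derpar{\mathcal H}{q^i}=-\sum_\alpha (X_\alpha)^\alpha_i,\qquad \derpar{\mathcal H}{p_i^\alpha}=(X_\alpha)^i,
\]
which is identical to the local $k$-cosymplectic system (\ref{k-cosymp condvf}).

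The main obstacle is the combinatorial bookkeeping in contracting the $k$-form $\Omega^\alpha\wedge d^{k-1}x_\alpha$ with $k$ vector fields whose horizontal parts span $\langle\partial/\partial x^1,\ldots,\partial/\partial x^k\rangle$: one must carefully track the signs produced by permuting the interior products and verify that the ``cross terms'' (where more than one $X_\beta$ contracts against the vertical factor $\Omega^\alpha$) actually vanish by virtue of $\Omega^\alpha|_{V\times V}=0$ on the canonical vertical distribution. Once this sign/multilinear algebra is settled, the equivalence follows immediately, and the inclusion $\vf^k_h\subseteq\vf^k_{\mathcal H}$ together with its converse gives the stated equality of solution sets.
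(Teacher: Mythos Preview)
Your proposal is correct, and your ``alternatively'' route---comparing the local coordinate equations (\ref{eqsG2}) obtained from (\ref{hameq1}) after the normalization $(X_\alpha)_\beta=\delta^\beta_\alpha$ with the $k$-cosymplectic conditions (\ref{k-cosymp condvf})---is precisely the paper's argument: the paper simply observes that these local systems coincide and states the theorem. Your additional intrinsic computation via the relation (\ref{relatomega6}) is not in the paper but is a legitimate parallel route once the combinatorics of the iterated contractions is checked; note also that the normalization issue you flag (passing from $\iota_{\mathbf X}\omega\neq 0$ to $(X_\alpha)_\beta=\delta^\beta_\alpha$ by reparametrization) is handled by the paper in the same informal ``we can take'' fashion.
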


Let us observe that when $E=\rk\times Q$, then,  if the section
$\psi: \rk=M\to \rk\times (T^1_k)^*Q=J^1\pi^*$ is a integral section  of the $k$-vector field ${\bf X}$, $\psi$ is a solution of the Hamilton-De Donder-Weyl equations (\ref{HDWeqs}), (as a consequence of (\ref{eqsG2})).

\section{Multisymplectic Lagrangian formalism}

\subsection{Multisymplectic Lagrangian systems}
\protect\label{mls}

  A {\sl Lagrangian density} is a
$\pi$-semibasic $k$-form on $J^1\pi$, and hence it can be
expressed as ${\mathbb{L}} =L\, \omega$, where ${L}\in
C^{\infty}(J^1\pi)$ is the {\sl Lagrangian function} associated with
$\mathbb{L}$ and $\omega$, where $\omega$ is a volume form on $M$. Using the canonical structures of $J^1\pi$, we
can define the  Poincar\'e-Cartan $k$-form $\Theta_{\mathbb{L}}$ and  Poincar\'e-Cartan   $(k+1)$-form $ \Omega_{\mathbb{L}}=-d\Theta_{\mathbb{L}}$,
which have the following local expressions \cite{EMR-1996}:
\beann
\Theta_{\mathbb{L}}&=&\derpar{L}{v^i_\alpha}\d q^i\wedge\d^{k-1}x_\alpha -
\left(\derpar{L}{v^i_\alpha}v^i_\alpha -L\right)\d^kx
\\
\Omega_{\mathbb{L}}&=& -d\left( \derpar{L}{v^i_\alpha}  \right)\wedge\d q^i\wedge\d^{k-1}x_\alpha
+d\left(\derpar{L}{v^i_\alpha}v^i_\alpha -L\right)\wedge\d^kx
\eeann
$(J^1\pi,\mathbb{L})$ is said to be a {\rm Lagrangian system}.

The Lagrangian system and the Lagrangian function are
{\sl regular} if $\Omega_{\mathbb{L}}$ is a multisymplectic
$(k+1)$-form. The regularity condition is locally equivalent to demand that the matrix
$\ds\left(\frac{\partial^2L}{\partial v_\alpha^i\partial v_\beta^j}\right)$
is regular at every point in $J^1\pi$.

The Lagrangian field equations can be derived from a variational principle.
In fact:

\begin{definition} \label{hvp}
 Let $(J^1\pi,\mathbb{L})$  be a Lagrangian system.
 Let $\Gamma(M,E)$ be the set of sections of $\pi$. Consider the map
$$ \begin{array}{cccl}
 {\bf L} \colon & \Gamma(M,E) & \longrightarrow & \r,
 \\ \noalign{\medskip}
   &  \phi &\mapsto & \int_M(j^1\phi)^*\Theta_\mathbb{L}   ,
  \end{array}$$
where the convergence of the integral is assumed.
 The {\rm variational problem} for this Lagrangian system
 is the search of the critical (or
 stationary) sections of the functional ${\bf L}$,
 with respect to the variations of $\phi$ given
 by $\phi_t =\sigma_t\circ\phi$, where $\{\sigma_t\}$ is a
 local one-parameter group of any compact-supported
 $Z\in\vf^{{\rm V}(\pi)}(E)$
 (the module of $\pi$-vertical vector fields in $E$), that is:
 \[
  \frac{\d}{\d t}\Big\vert_{t=0}\int_M\big(j^1\phi_t\big)^*\Theta_\mathbb{L} = 0   .
  \]
   This is the {\it Hamilton principle} of the Lagrangian formalism.
\end{definition}

\begin{theorem}\label{equics}
 The following assertions on a
 section $\phi\in\Gamma(M,E)$ are equivalent:
 \begin{enumerate}\itemsep=0pt
 \item[$1.$]
 $\phi$ is a critical section for the variational problem posed by
the Hamilton principle.
 \item[$2.$] $
(j^1 \phi)^*\iota_X\Omega_{\mathbb{L}}= 0 \,\, $ for
every $ X\in\vf (J^1\pi)$,
where $j^1\phi:M\to  J^1\pi$ is the section defined by $j^1\phi(x)=j_x^1\phi$.

\end{enumerate}
 \end{theorem}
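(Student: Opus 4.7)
\proof (Plan.) The implication $(2) \Rightarrow (1)$ is the easy half: assuming $(j^1\phi)^*\iota_X\Omega_{\mathbb{L}} = 0$ holds for \emph{every} $X \in \mathfrak{X}(J^1\pi)$, it holds in particular for the jet prolongation $X = j^1 Z$ of any compact-supported $\pi$-vertical $Z \in \mathfrak{X}^{V(\pi)}(E)$, and the variational calculation below will turn this into the critical-section condition. So the substantive content of the theorem is in proving $(1) \Rightarrow (2)$, which I will split into two steps: first derive the equation for $X$ of the form $j^1 Z$, and then bootstrap it to arbitrary $X$.

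For the first step, I would follow the pattern already used twice in this monograph for the $k$-symplectic and $k$-cosymplectic variational principles. If $\{\sigma_t\}$ is the local flow of $Z \in \mathfrak{X}^{V(\pi)}(E)$, then $\{j^1\sigma_t\}$ is the flow of its canonical prolongation $j^1 Z$ on $J^1\pi$ and $j^1\phi_t = j^1\sigma_t \circ j^1\phi$. Differentiating under the integral sign,
\[
    \frac{d}{dt}\Big\vert_{t=0}\int_M (j^1\phi_t)^*\Theta_{\mathbb{L}} \;=\; \int_M (j^1\phi)^*\mathcal{L}_{j^1 Z}\Theta_{\mathbb{L}}.
\]
Cartan's formula gives $\mathcal{L}_{j^1 Z}\Theta_{\mathbb{L}} = -\iota_{j^1 Z}\Omega_{\mathbb{L}} + d\iota_{j^1 Z}\Theta_{\mathbb{L}}$, and since $Z$ has compact support so does $\iota_{j^1 Z}\Theta_{\mathbb{L}}$; Stokes' theorem kills the exact piece. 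Hence $\phi$ is critical if and only if $\int_M (j^1\phi)^*\iota_{j^1 Z}\Omega_{\mathbb{L}} = 0$ for every compact-supported $\pi$-vertical $Z$, and the fundamental lemma of the calculus of variations (Lemma \ref{variations}, applied componentwise) turns this into the pointwise identity $(j^1\phi)^*\iota_{j^1 Z}\Omega_{\mathbb{L}} = 0$ for every such $Z$.

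The second step, upgrading to arbitrary $X \in \mathfrak{X}(J^1\pi)$, is where the real work lies, and it is the step I expect to be the main obstacle. My plan is to split $X$ along the image of $j^1\phi$ into three pieces and handle each separately. (a) Vectors tangent to $\mathrm{Im}\,j^1\phi$: if $X|_{j^1_x\phi} = (j^1\phi)_*(x) Y$ for some $Y \in T_xM$, then for any $V_1,\dots,V_k \in T_xM$,
\[
    [(j^1\phi)^*\iota_X\Omega_{\mathbb{L}}](V_1,\dots,V_k) \;=\; [(j^1\phi)^*\Omega_{\mathbb{L}}](Y, V_1,\dots,V_k) \;=\; 0,
\]
because $(j^1\phi)^*\Omega_{\mathbb{L}}$ is a $(k+1)$-form on the $k$-dimensional manifold $M$. (b) Jet prolongations $j^1 Z$ of $\pi$-vertical fields: these were handled in the first step. (c) Vectors in the $\pi_{1,0}$-vertical directions, locally of the form $\partial/\partial u^i_\alpha$: a direct coordinate computation using the local expression of $\Omega_{\mathbb{L}}$ recorded just before Definition \ref{hvp} shows that
\[
    \iota_{\partial/\partial u^i_\alpha}\Omega_{\mathbb{L}} \;=\; -\frac{\partial^2 L}{\partial u^i_\alpha \partial u^j_\beta}\,du^j\wedge d^{k-1}x_\beta \;+\; u^j_\beta\frac{\partial^2 L}{\partial u^i_\alpha \partial u^j_\beta}\,d^k x,
\]
and on $\mathrm{Im}\,j^1\phi$ one has $u^j_\beta \circ j^1\phi = \partial\phi^j/\partial x^\beta$, so pulling back by $j^1\phi$ uses $du^j \mapsto (\partial\phi^j/\partial x^\gamma)dx^\gamma$ and the two terms cancel identically. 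Since every $X \in \mathfrak{X}(J^1\pi)$ decomposes pointwise along $\mathrm{Im}\,j^1\phi$ as a sum of contributions of types (a), (b) and (c) (the prolongations of $\pi$-vertical fields together with the $\pi_{1,0}$-vertical directions span a complement of the tangent space to $\mathrm{Im}\,j^1\phi$), linearity of $X \mapsto (j^1\phi)^*\iota_X\Omega_{\mathbb{L}}$ over $\mathcal{C}^\infty(J^1\pi)$ finishes the proof. The converse implication $(2)\Rightarrow(1)$ is then the trivial half, since it suffices to take $X = j^1 Z$ and reverse the chain of equalities in the first step. \qed
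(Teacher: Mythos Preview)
The paper does not actually prove this theorem: it is stated without proof in Section~\ref{mls} as a standard result from the multisymplectic literature (see the references \cite{EMR-1996,CCI-91,Gymmsy} cited in that chapter), and the text moves directly from the statement to the local Euler--Lagrange equations~(\ref{ELeqs}). So there is no ``paper's own proof'' to compare against; your sketch is essentially the standard argument found in those references.

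Your overall architecture is correct, and the three pieces (a), (b), (c) of the decomposition are exactly the right ones. In particular, the computation in (c) is correct: contracting $\Omega_{\mathbb{L}}$ with $\partial/\partial v^i_\alpha$ and pulling back along $j^1\phi$ gives zero identically, with no hypothesis on $\phi$; and the argument in (a) via the dimension count is clean.

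There is, however, a logical ordering issue in your ``first step''. You claim that from
\[
\int_M (j^1\phi)^*\iota_{j^1Z}\Omega_{\mathbb{L}}=0 \quad\text{for all compact-supported vertical }Z
\]
the fundamental lemma (Lemma~\ref{variations}) yields the \emph{pointwise} identity $(j^1\phi)^*\iota_{j^1Z}\Omega_{\mathbb{L}}=0$. As stated this is not justified: the integrand depends not only on $Z^i\circ\phi$ but also on its partial derivatives (coming from the $\partial/\partial v^i_\alpha$-components of $j^1Z$), so Lemma~\ref{variations} does not apply directly. What rescues the argument is precisely your item (c): the $\pi_{1,0}$-vertical part of $j^1Z$ contributes nothing after pullback, so the integrand reduces to $(Z^i\circ\phi)\cdot[\text{EL}]_i\,d^kx$, with no derivatives of $Z$, and \emph{then} the fundamental lemma gives $[\text{EL}]_i=0$, hence $(j^1\phi)^*\iota_{\partial/\partial q^i}\Omega_{\mathbb{L}}=0$. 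The fix is simply to reorder: establish (a) and (c) first (both are purely algebraic and require no hypothesis on $\phi$), and only then run the variational argument for the remaining $\partial/\partial q^i$-directions, where the fundamental lemma now applies cleanly.
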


If $\phi(x^\alpha)=(x^\alpha,\phi^i(x^\alpha))$ is a critical section then
$$j^1\phi (x^\alpha)=\big(x^\alpha ,\phi^i(x^\alpha),\derpar{\phi^i}{x^\alpha}\big)$$
   satisfies the Euler-Lagrange field equations
  \begin{equation}
\derpar{}{x^\alpha}\left(\derpar{L}{v^i_\alpha}\circ j^1\phi\right)- \derpar{L}{q^i}\circ j^1\phi
 = 0   .
 \label{ELeqs}
  \end{equation}

\index{Extended Legendre transformation}
\index{Restricted Legendre transformation}
Finally, $\Theta_{\mathbb{L}}\in\df^1(J^1\pi)$ being $\pi$-semibasic,
we have a natural map
$\widetilde{FL}\colon$ $ J^1\pi\to{\mathcal M}\pi$,
given by
   $$
  \widetilde{FL}({  y})=\Theta_{\mathbb{L}}({  y}) \quad ; \quad   y\in J^1\pi
  $$
which is called the \emph{extended Legendre transformation} associated to the
Lagrangian $L$.
The \emph{restricted Legendre transformation} is
$F{L} =\mu\circ\widetilde{F{L}}\colon J^1\pi\to J^1\pi^*$. Their local expressions are
\bea
\widetilde{F{L}} &\colon& (x^\alpha, q^i,v^i_\alpha)\mapsto
\left( x^\alpha, q^i,\derpar{{L}}{v^i_\alpha},{L}-v^i_\alpha\derpar {{L}}{v^i_\alpha}\right)
 \nonumber \\
F{L} &\colon& (x^\alpha, q^i,v^i_\alpha)\mapsto \left( x^\alpha, q^i,\derpar{{L}}{v^i_\alpha}\right)
\label{legmulti}
\eea
Moreover, we have $\widetilde{F{L}}^*\Theta=\Theta_{\mathbb{L}}$,
and $\widetilde{F{L}}^*\Omega=\Omega_{\mathbb{L}}$.
Observe that the Legendre transformations $F{L}$
defined for the $k$-cosymplectic and the multisymplectic formalisms
are the same, as their local expressions (\ref{locfl1}) and (\ref{legmulti}) show.

\subsection{Relation with the  $k$-cosymplectic Lagrangian formalism}
\protect\label{rmk}
Like in the Hamiltonian case, in order to compare the multisymplectic Lagrangian formalism and the $k$-cosymplectic Lagrangian formalism for field theories, we consider the case when $\pi\colon \mathbb{R}\to M$ is the trivial bundle $\mathbb{R}^k\times Q\to \mathbb{R}^k$.
We can define the energy  function $E_L$ as
$$
E_L=\Theta_{\mathbb{L}}\left(\derpar{}{x^1},\dots,\derpar{}{x^k}\right)
$$
whose local expression is $\ds E_L =v^i_\alpha\frac{\partial
L}{\partial v^i_\alpha}-L$. Then we can write
$$
\begin{array}{lcl}
\Theta_{\mathbb{L}}&=&\derpar{L}{v^i_\alpha}\d q^i\wedge d^{k-1}x_\alpha - E_L\d^kx \, , \\\noalign{\medskip} \Omega_{\mathbb{L}} &=&
-\d\left(\derpar{L}{v^i_\alpha}\right)\wedge\d q^i\wedge d^{k-1}x_\alpha +
dE_L\wedge\d^kx\,.
\end{array}
$$

In this particular case, like in the Hamiltonian case, we can relate
the non-canonical Lagrangian multisymplectic (or
pre-multisymplectic) form $\Omega_{\mathbb{L}}$ with the non-canonical
Lagrangian $k$-cosymplectic (or $k$-precosymplectic) structure in
$\rk\times T^1_kQ$ constructed in Section \ref{kcolag} as follows:
starting from the forms $\Theta_{\mathbb{L}}$ and $\Omega_{\mathbb{L}}$ in
$J^1\pi \simeq\rk\times T^1_kQ$, we can define the forms
$\Theta_{L}^\alpha$ and $\Omega_{L}^\alpha=-\d\Theta_{L}^\alpha$ on $\rk\times T^1_kQ$,
as follows:  for $X,Y\in\vf(\rk\times T^1_kQ)$,
\bea
 \Theta_{L}^\alpha(X) &=& -\left(\iota_{\derpar{}{x^k}}\ldots
\iota_{\derpar{}{x^1}}(\Theta_{\mathbb{L}}\wedge\d x^\alpha)\right)(X)
\nonumber \\
\Omega_{L}^\alpha(X,Y)&=&
(-1)^{k+1}\left(\iota_{\derpar{}{x^k}}\ldots
 \iota_{\derpar{}{x^1}}(\Omega_{\mathbb{L}}\wedge\d x^\alpha)\right)(X,Y)
 \label{relatomegal}
\eea
and the $1$-forms $\eta^\alpha=\d x^\alpha$ are canonically defined.

Conversely, starting from the Lagrangian $k$-cosymplectic (or
$k$-preco\-symplectic) structure on $\rk\times T^1_kQ$, and from
$E_L$, we can construct on $ \rk\times T^1_kQ\simeq
J^1\pi$
\begin{equation}
\begin{array}{c}
\Theta_{\mathbb{L}}= -E_L\d^k t+\Theta^\alpha_{L}\wedge d^{k-1}x_\alpha, \\\noalign{\medskip}
\Omega_{\mathbb{L}}=-d\Theta_{\mathbb{L}}=dE_L\wedge d^k x+\Omega_{L}^\alpha\wedge d^{k-1}x_\alpha
\,.\end{array}
 \label{relatomegal2}
\end{equation}

So we have proved the following theorem, which allows us to relate
the non-canonical Lagrangian multisymplectic (or
pre-multisymplectic) forms  with the non-canonical
Lagrangian $k$-cosymplectic (or $k$-precosymplectic) structure in
$\rk\times T^1_kQ$

\begin{theorem}
The Lagrangian multisymplectic (or pre-multisymplectic) form and the
Lagrangian $2$-forms of the $k$-cosymplectic (or
$k$-precosymplectic) structure on $J^1\pi\equiv\rk\times  T^1_kQ$
are related by (\ref{relatomegal}) and (\ref{relatomegal2}).
\end{theorem}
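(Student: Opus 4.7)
The statement reduces to verifying two pairs of local identities, so the plan is to work entirely in the natural coordinates $(x^\alpha, q^i, v^i_\alpha)$ on $J^1\pi\equiv\rk\times T^1_kQ$ in which every object in sight has been written explicitly. The ingredients already in the paper are
\[
\Theta_{\mathbb{L}}=\derpar{L}{v^i_\alpha}\d q^i\wedge d^{k-1}x_\alpha - E_L\,d^kx,\qquad \Theta_L^\alpha=\derpar{L}{v^i_\alpha}\d q^i,
\]
together with $E_L=v^i_\alpha\partial L/\partial v^i_\alpha-L$ and $\Omega_{\mathbb{L}}=-d\Theta_{\mathbb{L}}$, $\Omega_L^\alpha=-d\Theta_L^\alpha$. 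Thus once the two identities of (\ref{relatomegal2}) and the first identity of (\ref{relatomegal}) are checked, the remaining ones follow by applying $-d$.

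First I would dispatch (\ref{relatomegal2}). Wedging the local formula for $\Theta_L^\alpha$ against $d^{k-1}x_\alpha$ and summing in $\alpha$ gives exactly $\partial L/\partial v^i_\alpha\,dq^i\wedge d^{k-1}x_\alpha$, which together with the term $-E_L\,d^kx$ reproduces $\Theta_{\mathbb{L}}$. Taking $-d$ of both sides then yields the corresponding formula for $\Omega_{\mathbb{L}}$, noting that $-d(E_L\,d^kx)=-dE_L\wedge d^kx$ picks up the sign in $+dE_L\wedge d^kx$ on the right because the outer $-d$ is absorbed into the definition $\Omega_{\mathbb{L}}=-d\Theta_{\mathbb{L}}$.

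The heart of the proof is the first identity in (\ref{relatomegal}), and the key auxiliary computation is the sign lemma
\[
d^{k-1}x_\beta\wedge dx^\alpha = (-1)^{k-1}\delta^\alpha_\beta\,d^kx,
\]
which follows by writing $d^{k-1}x_\beta=(-1)^{\beta-1}dx^1\wedge\cdots\widehat{dx^\beta}\cdots\wedge dx^k$ and transposing $dx^\alpha$ into place. Using it, the $d^kx$-term of $\Theta_{\mathbb{L}}$ is killed by the wedge with $dx^\alpha$ and one finds
\[
\Theta_{\mathbb{L}}\wedge dx^\alpha = (-1)^{k-1}\derpar{L}{v^i_\alpha}\,dq^i\wedge d^kx.
\]
A direct induction on the iterated interior products, using $\iota_{\partial/\partial x^\beta}(dq^i\wedge\cdots)$ only hits the $d^kx$-factor since $dq^i$ is $\pi_1$-horizontal on the fibre coordinates $x$, gives
\[
\iota_{\partial/\partial x^k}\cdots \iota_{\partial/\partial x^1}(dq^i\wedge d^kx) = (-1)^k\,dq^i.
\]
Combining the two yields $\iota_{\partial/\partial x^k}\cdots\iota_{\partial/\partial x^1}(\Theta_{\mathbb{L}}\wedge dx^\alpha) = -(\partial L/\partial v^i_\alpha)dq^i=-\Theta_L^\alpha$, which is the first line of (\ref{relatomegal}). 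The formula for $\Omega_L^\alpha$ then follows from $\Omega_L^\alpha=-d\Theta_L^\alpha$ together with the fact that exterior differentiation commutes with the constant-coefficient contractions $\iota_{\partial/\partial x^\beta}$ and with the wedge by $dx^\alpha$, the extra sign $(-1)^{k+1}$ on the right of (\ref{relatomegal}) simply collecting the sign produced when $d$ is passed through the $k$ interior products together with the outer minus sign.

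The main obstacle is purely the sign bookkeeping in the iterated contractions and in the auxiliary lemma $d^{k-1}x_\beta\wedge dx^\alpha=(-1)^{k-1}\delta^\alpha_\beta\,d^kx$; once that is set up cleanly the rest of the proof is a one-page verification. A coordinate-free alternative would exploit that $\Theta_L^\alpha$ is the $\pi_1$-semibasic part of $\Theta_{\mathbb{L}}$ of ``type $\alpha$'' selected by the $k$-volume form on $\rk$, but pushing that viewpoint through requires no less care with signs, so the local computation is in practice the cleanest route.
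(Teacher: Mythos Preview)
Your verification of (\ref{relatomegal2}) and of the first line of (\ref{relatomegal}) is correct and in fact more explicit than the paper, which simply writes the two displays and then declares ``So we have proved the following theorem'' without carrying out any computation. Your sign lemma $d^{k-1}x_\beta\wedge dx^\alpha=(-1)^{k-1}\delta^\alpha_\beta\,d^kx$ and the iterated contraction $\iota_{\partial/\partial x^k}\cdots\iota_{\partial/\partial x^1}(dq^i\wedge d^kx)=(-1)^k dq^i$ genuinely fill in the argument for $\Theta_L^\alpha$.

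The gap is in your passage to the second line of (\ref{relatomegal}). You assert that exterior differentiation ``commutes with the constant-coefficient contractions $\iota_{\partial/\partial x^\beta}$'', but Cartan's identity $d\,\iota_X=\mathcal{L}_X-\iota_X\,d$ shows this requires $\mathcal{L}_{\partial/\partial x^\beta}$ to annihilate the form in question, and that fails on $\Theta_{\mathbb{L}}\wedge dx^\alpha$ whenever $L$ depends explicitly on $x^\beta$. A direct computation confirms the problem: from your own intermediate step $\Theta_{\mathbb{L}}\wedge dx^\alpha=(-1)^{k-1}(\partial L/\partial v^i_\alpha)\,dq^i\wedge d^kx$ one gets $\Omega_{\mathbb{L}}\wedge dx^\alpha=(-1)^k\mu^\alpha_i\wedge dq^i\wedge d^kx$, where $\mu^\alpha_i$ is the part of $d(\partial L/\partial v^i_\alpha)$ with no $dx^\gamma$ component (the $dx^\gamma$ piece is killed by $\wedge\, d^kx$). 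Since $\mu^\alpha_i\wedge dq^i$ is a $2$-form annihilated by each $\partial/\partial x^\beta$, the $k$ contractions pass through it with no sign and one finds
\[
(-1)^{k+1}\,\iota_{\partial/\partial x^k}\cdots\iota_{\partial/\partial x^1}(\Omega_{\mathbb{L}}\wedge dx^\alpha)=dq^i\wedge\mu^\alpha_i,
\]
which differs from $\Omega_L^\alpha=dq^i\wedge d(\partial L/\partial v^i_\alpha)$ precisely by the terms $\dfrac{\partial^2 L}{\partial x^\gamma\partial v^i_\alpha}\,dq^i\wedge dx^\gamma$ (compare (\ref{Loc omegaLa})). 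Thus the second identity in (\ref{relatomegal}) holds as written only when $L$ is autonomous; for a general $k$-cosymplectic Lagrangian the right-hand side recovers only the part of $\Omega_L^\alpha$ having no $dx^\gamma$ factor. The paper supplies no argument at this point either, so the gap you hit is one the text itself leaves open, but your commutation claim cannot be repaired as stated.
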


Let $\vf^k_{\mathbb{L}} (J^1\pi)$ be the set
of $k$-vector fields ${\bf
 \Gamma}=( \Gamma_1,\dots, \Gamma_k)$ in $J^1\pi$,
 that are solutions of the equations
 \begin{equation}
 \iota_{\bf  \Gamma}\Omega_{\mathbb{L}}=0 \quad , \quad
 \iota_{\bf  \Gamma}\omega\not=0  \ .
 \label{lageq1}
 \end{equation}
In a system of natural coordinates the components of  ${\bf  \Gamma}$
are given by
$$
\Gamma_\alpha=(\Gamma_\alpha)^\beta \derpar{}{x^\beta}
+ (\Gamma_\alpha)^i \derpar{}{q^i}+ (\Gamma_\alpha)^i_\beta \derpar{}{v^i_\beta}
$$

 Then,
in order to assure the condition $\iota_{\bf  \Gamma}\omega\not=0$,
we can take $( \Gamma_\alpha)^\beta=\delta^\beta_\alpha$, which leads to $\iota_{\bf \Gamma}\omega=1$,
and  thus ${\bf  \Gamma}$ is a solution of (\ref{lageq1}) if, and only if,
  $( \Gamma_\alpha)^i$ and $( \Gamma _\alpha)^i_\beta$ satisfy the equations (\ref{lform}).
When $L$ is regular, we obtain that $( \Gamma_\alpha)^i=v^i_\alpha$, and the
equations (\ref{xal}) hold.


Then we can assert the following.
\begin{theorem}
A $k$-vector field ${\bf \Gamma}=(\Gamma_1,\ldots,\Gamma_k)$
 in $J^1\pi\simeq\rk\times T^1_kQ$ is a solution of the equations
(\ref{lageq1}) if, and only if, it is also a solution of the equations (\ref{lageq0});
that is, we have that
$\vf^k_{\mathbb{L}}(\rk\times T^1_kQ)=\vf^k_{L}(\rk\times T^1_kQ)$.
\end{theorem}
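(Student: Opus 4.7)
The plan is to mirror the argument already carried out in the Hamiltonian case in Section~\ref{rkcf}, exploiting the decomposition formulas (\ref{relatomegal}) and (\ref{relatomegal2}) that bridge $\Omega_{\mathbb{L}}$ and the family $(dx^\alpha,\Omega_L^\alpha)$. First, I would reduce the problem to a local computation in coordinates $(x^\alpha,q^i,v_\alpha^i)$, writing a generic $k$-vector field as
\[
\Gamma_\alpha=(\Gamma_\alpha)^\beta\,\derpar{}{x^\beta}+(\Gamma_\alpha)^i\,\derpar{}{q^i}+(\Gamma_\alpha)^i_\beta\,\derpar{}{v^i_\beta}.
\]
The transversality condition $\iota_{\mathbf{\Gamma}}\omega\ne 0$ is equivalent, after a suitable normalization, to taking $(\Gamma_\alpha)^\beta=\delta_\alpha^\beta$, which is precisely the first half $dx^\alpha(\Gamma_\beta)=\delta_\alpha^\beta$ of system (\ref{lageq0}). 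This identifies the transversality condition of the multisymplectic formulation with the normalization condition of the $k$-cosymplectic formulation.

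The core step is then to show that, once this normalization is imposed, the equation $\iota_{\mathbf{\Gamma}}\Omega_{\mathbb{L}}=0$ is equivalent to
\[
\sum_{\alpha=1}^{k}\iota_{\Gamma_\alpha}\Omega_L^\alpha=dE_L+\sum_{\alpha=1}^{k}\derpar{L}{x^\alpha}dx^\alpha.
\]
Using (\ref{relatomegal2}), $\Omega_{\mathbb{L}}=dE_L\wedge d^kx+\Omega_L^\alpha\wedge d^{k-1}x_\alpha$, I would expand $\iota_{\Gamma_1}\cdots\iota_{\Gamma_k}\Omega_{\mathbb{L}}$ by distributing the interior products through each summand. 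The summand $dE_L\wedge d^kx$ contributes $(-1)^{\text{sign}}\,dE_L$ (after contracting the $d^kx$ factor with the Reeb-like directions singled out by $(\Gamma_\alpha)^\beta=\delta_\alpha^\beta$), together with the term $\sum_\alpha\bigl(\Gamma_\alpha(E_L)\bigr)dx^\alpha$, which a direct computation reduces to $-\sum_\alpha(\partial L/\partial x^\alpha)\,dx^\alpha$ modulo the other contributions. The summand $\Omega_L^\alpha\wedge d^{k-1}x_\alpha$ produces exactly $\sum_\alpha\iota_{\Gamma_\alpha}\Omega_L^\alpha$ plus terms that cancel against the remaining pieces, again thanks to the transversality normalization. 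Alternatively, and equivalently, one can just check directly that, after imposing $(\Gamma_\alpha)^\beta=\delta_\alpha^\beta$, both $\iota_{\mathbf{\Gamma}}\Omega_{\mathbb{L}}=0$ and (\ref{lageq0}) reduce to the same local system (\ref{lform}); this is the shortest route and parallels exactly the argument already written for the Hamiltonian theorem leading to (\ref{eqsG2}).

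Finally, for the converse, assuming (\ref{lageq0}), the normalization $dx^\alpha(\Gamma_\beta)=\delta^\alpha_\beta$ immediately yields $\iota_{\mathbf{\Gamma}}\omega=1\neq 0$ and the remaining equation reconstructs $\iota_{\mathbf{\Gamma}}\Omega_{\mathbb{L}}=0$ by applying the inverse decomposition (\ref{relatomegal}) to rebuild $\Omega_{\mathbb{L}}$ from $(dx^\alpha,\Omega_L^\alpha,E_L)$. Thus the two solution sets coincide: $\vf^k_{\mathbb{L}}(\rk\times T^1_kQ)=\vf^k_L(\rk\times T^1_kQ)$.

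The main obstacle, in my view, is purely computational: handling the signs and the combinatorics of the iterated interior products $\iota_{\Gamma_1}\cdots\iota_{\Gamma_k}$ acting on the wedge products $d^kx$ and $d^{k-1}x_\alpha$. Keeping track of these signs is unavoidable if one does the expansion globally, which is why I would favor the local coordinate route, showing that both (\ref{lageq1}) and (\ref{lageq0}) collapse to the identical system (\ref{lform}) under the normalization $(\Gamma_\alpha)^\beta=\delta^\beta_\alpha$; this bypasses the sign bookkeeping and makes the equivalence manifest. No regularity hypothesis on $L$ is needed for the equivalence itself, exactly as in the Hamiltonian counterpart.
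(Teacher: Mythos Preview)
Your proposal is correct and matches the paper's approach essentially verbatim: the paper does not give a separate formal proof but simply observes, in the paragraph preceding the theorem, that after imposing the transversality normalization $(\Gamma_\alpha)^\beta=\delta_\alpha^\beta$ the multisymplectic equation $\iota_{\mathbf{\Gamma}}\Omega_{\mathbb{L}}=0$ reduces to the same local system (\ref{lform}) already shown to be equivalent to (\ref{lageq0}). Your ``shortest route'' via (\ref{lform}) is exactly the paper's argument, and your remark that no regularity of $L$ is needed is also in line with the paper's treatment.
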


Observe also that, when $E=\rk\times Q$ and $J^1\pi\simeq\rk\times T^1_kQ$,
we have that $j^1\phi=\phi^{[1]}$, and hence, if $j^1\phi$ is an integral section of
${\bf \Gamma}=(\Gamma_1,\ldots,\Gamma_k)$,
then $\phi$ is a solution  to the Euler-Lagrange equations.

\section[Correspondences]{Correspondences between the $k$-sym\-plec\-tic, $k$-cosymplectic and multisymplectic formalisms}

%
\begin{table}[!h]
\begin{center}
\scalebox{1}{
    \begin{tabular}{|c|c|c|c|}
        \hline
        \multicolumn{4}{|c|}{\sc Hamiltonian approach}\\
        \hline
         & $k$-symplectic & $k$-cosymplectic & Multisymplectic\\
         \hline
         Phase space & $(T^1_k)^*Q$ & $\rk\times  (T^1_k)^*Q$ &  ${\Lambda^k _2 E}\to J^1\pi^*$ \\
         \hline
         \multirow{2}{*}{Canonical forms} & $\theta^\alpha\in\df^1((T^1_k)^*Q )$ & $\Theta^\alpha\in\df^1(\rk\times(T^1_k)^*Q )$  &  $\Theta\in\df^k({\Lambda^k _2 E})$\\
         \cline{2-4}
          & $\omega^\alpha=-d\theta^\alpha$ & $\Omega^\alpha= -d\Theta^\alpha$ & $\Omega=-d\Theta$\\\hline
         Hamiltonian & $H:(T^1_k)^*Q\to \r$  & ${\mathcal H}:\rk\times (T^1_k)^*Q\to \r$  & $h:J^1\pi^* \to {\Lambda^k _2 E}$ \\
         \hline
         \multirow{2}{*}{\begin{tabular}{c}Geometric\\ equations\end{tabular}} & ${\ds\sum_{\alpha=1}^k\iota_{X_\alpha}\omega^\alpha=dH}$ & $\begin{array}{c}    \ds\sum_{\alpha=1}^k\iota_{{X}_\alpha}  \Omega^\alpha=\d H-\frac{\partial   H}{\partial  x^\alpha}\d x^\alpha   \\ \d x^\alpha( {X}_\beta)=\delta^\alpha_\beta \end{array}$ & $ \begin{array}{c}  \iota_{\bf   X}\Omega_h=0 \\  \iota_{\bf   X}\omega=1 \end{array}$\\
         & $\begin{array}{c}{\bf  X}\in\vf^k((T^1_k)^*Q)    \end{array} $& $\begin{array}{c}{\bf  X}\in\vf^k(\rk\times (T^1_k)^*Q)    \end{array}$  & ${\bf  X}\in\vf^k(J^1\pi^*)$\\
%
        \hline
        \multicolumn{4}{|c|}{\sc Lagrangian approach}\\
        \hline
         & $k$-symplectic & $k$-cosymplectic & Multisymplectic\\
         \hline
         Phase space & $T^1_kQ$ & $\rk\times  T^1_kQ$ &  $J^1\pi$ \\
         \hline
         Lagrangian & $L:T^1_kQ\to \r$  & ${\mathcal L}:\rk\times T^1_kQ\to \r$  & $\mathcal{L}:J^1\pi \to \mathcal{R}, \mathbb{L}=\mathcal{L}\omega$ \\
         \hline
         \multirow{2}{*}{Cartan forms} & $\theta^\alpha_L\in\df^1(T^1_kQ )$ & $\Theta^\alpha_{\mathcal{L}}\in\df^1(\rk\times T^1_kQ )$  &  $\Theta_{\mathbb{L}}\in\df^k(J^1\pi)$\\
         \cline{2-4}
          & $\omega^\alpha_L=-d\theta^\alpha_L$ & $\Omega^\alpha_\mathcal{L}= -d\Theta^\alpha_{\mathcal{L}}$ & $\Omega_{\mathcal{L}}=-d\Theta_{\mathcal{L}}$\\\hline
         \multirow{2}{*}{\begin{tabular}{c}Geometric\\ equations\end{tabular}} & ${\ds\sum_{\alpha=1}^k\iota_{\Gamma_\alpha}\omega^\alpha_L=dE_L}$ & $\begin{array}{c}    \ds\sum_{\alpha=1}^k\iota_{\Gamma_\alpha}  \Omega^\alpha_{\mathbb{L}}=\d \mathcal{E}_L-\frac{\partial   \mathcal{L}}{\partial  x^\alpha}\d x^\alpha   \\ \d x^\alpha( \Gamma_\beta)=\delta^\alpha_\beta \end{array}$ & $ \begin{array}{c}  \iota_\Gamma\Omega_{\mathbb{L}}=0 \\  \iota_\Gamma\omega=1 \end{array}$\\
         & $\begin{array}{c}\Gamma\in\vf^k(T^1_kQ)    \end{array} $& $\begin{array}{c}\Gamma\in\vf^k(\rk\times T^1_kQ)    \end{array}$  & $\Gamma\in\vf^k(J^1\pi)$\\
        \hline
    \end{tabular}
    }
    \caption{$k$-symplectic, $k$-cosymplectic and multisymplectic formalism.}
\end{center}
\end{table}



            \appendix

            \cleardoublepage
            \addappheadtotoc
            \appendixpage

\chapter{Symplectic  manifolds}\label{symma}

In chapter \ref{chapter: Mechanics} we have presented a review of the Hamiltonian Mechanics on the cotangent bundle, using the canonical symplectic form on $T^*Q$ and also the time-dependent counterpart. This approach can be extended, in a similar way, to the case of an arbitrary symplectic manifolds and cosymplectic manifolds, respectively.

Here  we recall the formal definition of symplectic and cosymplectic manifolds.

The canonical model of symplectic structure is the cotangent bundle $T^*Q$ with its canonical symplectic form.
\index{Symplectic}
\index{Symplectic!Manifold}
\index{Symplectic!Structure}
\index{Presymplectic structure}
\index{Almost symplectic structure}
\begin{definition}
    Let $\omega$ be an arbitrary $2$-form on a manifold $M$. Then
    \begin{enumerate}
        \item $\omega$ is called a \emph{presymplectic structure on $M$} if $\omega$ is a closed $2$-form, that is, $d\omega=0$.
        \item $\omega$ is called an \emph{almost symplectic structure on $M$} if it is non-degenerate.
        \item $\omega$ is called a \emph{symplectic structure} if it is a closed non-degenerated $2$-form.
    \end{enumerate}
\end{definition}

  Let us observe that if $\omega$ is an almost symplectic structure on $M$, then $M$ has even dimension, say $2n$, and we have an  isomorphism of $\mathcal{C}^\infty(M)$-modules
$$\flat : \mathfrak{X}(M) \longrightarrow \textstyle\bigwedge^1(M) \, ,  \quad \flat(Z)=\iota_Z\omega\,.$$

 Let $(x^1,\ldots,x^n,y^1,\ldots,y^n)$ be the standard coordinates on $\r^{2n}$. The canonical
 symplectic form on $\r^{2n}$ is
 $$\omega_0=dx^1\wedge dy^1+\ldots + dx^n\wedge dy^n.$$

 The most important theorem in Symplectic Geometry is the following.
\index{Darboux Theorem!Symplectic}
 \begin{theorem}[Darboux Theorem] Let $\omega$ be an almost symplectic $2$-form on a $2n$-dimensional manifold $M$. Then $d\omega=0$ (that is, $\omega$ is symplectic), if and only if for each $x\in M$ there exists a coordinate neighborhood $U$ with local coordinates $(x^1,\ldots,x^{n},y^1,\ldots,y^n)$ such that
 $$
 \omega=\ds\sum_{i=1}^n dx^i\wedge dy^i\,.
 $$
 \end{theorem}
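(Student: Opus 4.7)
The easy direction is immediate: if $\omega=\sum_{i=1}^n dx^i\wedge dy^i$ on $U$, then $\omega$ is locally exact ($\omega=-d\theta$ with $\theta=-\sum y^i\,dx^i$, say), hence closed. So the plan is devoted entirely to the nontrivial direction: assuming $d\omega=0$, I will produce Darboux coordinates around an arbitrary point $p\in M$ using Moser's path method.

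The first step is a pointwise reduction. Fix $p\in M$. By the algebraic Darboux result on the symplectic vector space $(T_pM,\omega(p))$ (this is Proposition~\ref{algprop} applied at $p$), I can choose a basis of $T_pM$ in which $\omega(p)$ takes the standard form. Extending this linear basis to a coordinate chart $(U_0;u^1,\ldots,u^n,w^1,\ldots,w^n)$ centered at $p$, I obtain a $2$-form $\omega_0:=\sum_{i=1}^n du^i\wedge dw^i$ on $U_0$ satisfying $\omega_0(p)=\omega(p)$. Note $\omega_0$ is closed. So I have reduced the problem to showing that two closed non-degenerate $2$-forms that agree at a single point $p$ are locally equivalent via a diffeomorphism fixing $p$.

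Now I carry out Moser's trick. On a (possibly smaller) neighborhood $V\subset U_0$ of $p$, the interpolation $\omega_t:=(1-t)\omega_0+t\,\omega$ is non-degenerate for every $t\in[0,1]$, because non-degeneracy is an open condition and $\omega_t(p)=\omega(p)$ for all $t$. The $2$-form $\omega-\omega_0$ is closed and vanishes at $p$, so by the Poincaré lemma (applied on a contractible neighborhood of $p$) I can write $\omega-\omega_0=d\alpha$ for some $1$-form $\alpha$ on $V$, and by subtracting the constant $\alpha(p)$ via a linear change I may arrange $\alpha(p)=0$. Non-degeneracy of $\omega_t$ lets me define a time-dependent vector field $X_t$ on $V$ by the equation
\[
\iota_{X_t}\omega_t=-\alpha,
\]
and $X_t(p)=0$ because $\alpha(p)=0$. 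Consequently the flow $\phi_t$ of $X_t$ is defined for all $t\in[0,1]$ on a still smaller neighborhood of $p$ and fixes $p$. The Cartan magic formula together with $d\omega_t=0$ gives
\[
\frac{d}{dt}\phi_t^{\,*}\omega_t=\phi_t^{\,*}\!\left(\mathcal{L}_{X_t}\omega_t+\frac{d\omega_t}{dt}\right)=\phi_t^{\,*}\bigl(d\iota_{X_t}\omega_t+(\omega-\omega_0)\bigr)=\phi_t^{\,*}\bigl(-d\alpha+d\alpha\bigr)=0.
\]
Thus $\phi_1^{\,*}\omega=\omega_0$, and pulling back the coordinates $(u^i,w^i)$ by $\phi_1$ yields local coordinates $(x^i,y^i)$ around $p$ in which $\omega=\sum dx^i\wedge dy^i$.

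The main obstacle, and the only point requiring genuine care, is ensuring the existence of the flow $\phi_t$ on a common neighborhood for all $t\in[0,1]$; this is where $X_t(p)=0$ matters, for it guarantees that integral curves starting near $p$ remain in $V$ for the whole time interval. Everything else is bookkeeping: shrinking $V$ to secure non-degeneracy of $\omega_t$, applying the Poincaré lemma on a contractible chart, and normalizing $\alpha$ so that it vanishes at $p$. Linear-algebraic normalization reduces the problem to an infinitesimal statement, and Moser's method integrates it.
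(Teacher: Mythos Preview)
Your proof is correct. The paper, however, does not prove this theorem at all: it appears in Appendix~\ref{symma} purely as a review statement, with no argument given. So there is nothing to compare against on the paper's side.

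For the record, your Moser path method is one of the two standard routes to Darboux; the other is the classical inductive construction (find a pair $(x^1,y^1)$ with $\{x^1,y^1\}=1$, restrict to a level set, and repeat). Moser's approach has the advantage of generalizing cleanly to relative and equivariant versions, and it is the method behind the $k$-symplectic Darboux theorem (Theorem~\ref{Darboux k-simp}) referenced elsewhere in the paper. One minor stylistic point: rather than ``subtracting the constant $\alpha(p)$ via a linear change'', it is cleaner to note that the radial Poincar\'e homotopy operator centered at $p$ automatically produces a primitive $\alpha$ with $\alpha(p)=0$; this avoids any ambiguity about what ``linear change'' means.
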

 Taking into account this result, one could develop the Hamiltonian formulation described in section \ref{hamquaions} substituting the cotangent bundle $T^*Q$ by an arbitrary symplectic manifold, see \cite{AM-1978,Arnold-1978, god, Goldstein, HSS-2009, Holm-2008}.

\newpage
\mbox{}
\thispagestyle{empty} 

\chapter{Cosymplectic manifolds}\label{cosymma}.

\index{Cosymplectic manifold}
\begin{definition}
A \emph{cosymplectic manifold} is a triple $(M, \eta, \omega)$ consisting of a smooth $(2n+1)$-dimen\-sio\-nal manifold $M$ equipped with a closed $1$-form $\eta$ and a closed $2$-form  $\omega$, such that $\eta\wedge \omega^n\neq 0$.

In particular,  $\eta\wedge \omega^n$ is a volume form on $M$.
\end{definition}

The standard example of a cosymplectic manifold is provided by the extended cotangent bundle $(\mathbb{R}\times T^*N,dt,\pi^*\omega_N)$ where $t\colon \mathbb{R}\times T^*N\to N$ and $\pi\colon \mathbb{R}\times T^*N\to T^*N$ are the canonical projections and $\omega_N$ is the canonical symplectic form on $T^*N$.

Consider the vector bundle homomorphism
\[
    \begin{array}{rcl}
        \flat\colon  TM & \to & T^*M\\\noalign{\medskip}
             v & \mapsto & \flat(v)=\iota_v\omega + (\iota_v\eta)\eta\,.
    \end{array}
\]

Then $\flat$ is a vector bundle isomorphism with inverse $\sharp$. Of course, the linear homomorphism
\[
    \flat_x\colon T_xM\to T_x^*M
\]
induced by $\flat$ is an isomorphism too, for all $x\in M$.

Given a cosymplectic manifold $(M,\eta,\omega)$, then there exists a distinguished vector field $\mathcal{R}$ (called the \emph{Reeb vector field}) such that
\[
    \iota_\mathcal{R}\eta= 1,\quad \iota_\mathcal{R}\omega=0\,,
\]
or, in other form,
\[
    \mathcal{R}=\sharp(\eta)\,.
\]

\index{Darboux theorem!cosymplectic manifold}
\begin{theorem}[Darboux theorem for cosymplectic manifolds]
    Given a cosymplectic manifold \newline $(M,\eta,\omega)$, there exists, around each point $x$ of $M$, a coordinate neighborhood with coordinates $(t,q^i, p_i)$, $\,1\leq i\leq n$, such that
    \[
        \eta=dt,\quad \omega=dq^i\wedge dp_i\,.
    \]
    These coordinates are called \emph{Darboux coordinates}.
\end{theorem}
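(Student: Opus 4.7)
The plan is to combine the flow of the Reeb vector field $\mathcal{R}$ with the standard symplectic Darboux theorem (Theorem A.1 in Appendix \ref{symma}) applied to a suitable transversal submanifold. The argument proceeds in five steps.

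First, since $d\eta = 0$, the Poincar\'e lemma furnishes, in a neighborhood of the chosen point $x_0$, a smooth function $t$ with $\eta = dt$. Translating if necessary, I may assume $t(x_0)=0$. Consider the level set $N = \{t = 0\}$ through $x_0$. Since $\mathcal{R}(t) = dt(\mathcal{R}) = \eta(\mathcal{R}) = 1$, the Reeb vector field is transverse to $N$, so $N$ is a smooth codimension-one submanifold of dimension $2n$.

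Second, I will show that $\omega_N := \omega|_N$ is a symplectic form on $N$. Closedness is immediate. For nondegeneracy: at any point $x\in N$, we have $T_xM = \mathbb{R}\,\mathcal{R}(x) \oplus T_xN$ because $\mathcal{R}$ is transverse to $N$. The volume condition $\eta\wedge\omega^n\neq 0$ combined with $\iota_{\mathcal{R}}\omega = 0$ forces $\ker\omega_x$ to be exactly the line spanned by $\mathcal{R}(x)$: indeed, if $v\in\ker\omega_x$, write $v = a\mathcal{R}(x) + w$ with $w\in T_xN$; then $\iota_w\omega = \iota_v\omega - a\iota_\mathcal{R}\omega = 0$, and if additionally $w\neq 0$ one contradicts $(\eta\wedge\omega^n)(\mathcal{R},w,\ldots)\neq 0$. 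Hence $\omega_N$ is nondegenerate on $N$.

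Third, the symplectic Darboux theorem applied to $(N,\omega_N)$ yields coordinates $(q^1,\ldots,q^n,p_1,\ldots,p_n)$ on an open neighborhood $U_0\subset N$ of $x_0$ such that $\omega_N = \sum_i dq^i\wedge dp_i$. Fourth, I extend these functions to a neighborhood $U\subset M$ of $x_0$ by declaring them to be constant along the flow $\phi_s$ of $\mathcal{R}$: define $\tilde q^i(\phi_s(y)) := q^i(y)$ and $\tilde p_i(\phi_s(y)) := p_i(y)$ for $y\in U_0$ and $s$ small. Together with $t$, the $2n+1$ functions $(t,\tilde q^i,\tilde p_i)$ are a coordinate system on $U$, since their differentials are independent at $x_0$ (the $\tilde q^i,\tilde p_i$ restrict to a coordinate system on $N$ and are annihilated by $\iota_\mathcal{R}$ at $x_0$, while $dt(\mathcal{R})=1$). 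By construction $\mathcal{R} = \partial/\partial t$ in these coordinates.

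The main obstacle, and the final step, is to verify that in these coordinates $\omega = \sum_i d\tilde q^i\wedge d\tilde p_i$. The key observation is that $\mathcal{L}_{\mathcal{R}}\omega = d\iota_{\mathcal{R}}\omega + \iota_{\mathcal{R}}d\omega = 0$, so $\omega$ is invariant under the Reeb flow. Hence its coefficients in the basis $\{dt, d\tilde q^i, d\tilde p_i\}$ are independent of $t$. Moreover $\iota_{\mathcal{R}}\omega = 0$ rules out every term involving $dt$. Therefore $\omega$ has the form $\omega = \sum_{i<j} a_{ij}\,d\tilde q^i\wedge d\tilde q^j + \sum_{i,j} b_i^{\ j}\,d\tilde q^i\wedge d\tilde p_j + \sum_{i<j} c^{ij}\,d\tilde p_i\wedge d\tilde p_j$ with coefficients depending only on $(\tilde q,\tilde p)$. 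Restricting to $N = \{t=0\}$ reproduces $\omega_N$, which by step three equals $\sum_i dq^i\wedge dp_i$; hence the coefficients on $N$ match those of $\sum_i d\tilde q^i\wedge d\tilde p_i$. Since the coefficients are $t$-independent, the identity $\omega = \sum_i d\tilde q^i\wedge d\tilde p_i$ propagates to all of $U$, completing the proof. Renaming $\tilde q^i,\tilde p_i$ back to $q^i,p_i$ yields the stated Darboux coordinates.
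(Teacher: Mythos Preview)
Your proof is correct. The paper itself states this Darboux theorem without proof in Appendix~\ref{cosymma}, merely referring the reader to the literature (e.g.\ \cite{CLL-1992,CLM-91}); so your argument supplies what the text omits.

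The approach you take---pass to a transversal level set of a local primitive $t$ of $\eta$, apply the symplectic Darboux theorem there, and then propagate the resulting coordinates along the Reeb flow using $\mathcal{L}_{\mathcal{R}}\omega=0$ and $\iota_{\mathcal{R}}\omega=0$---is the standard one and is carried out cleanly. One small remark: in Step~2 the contradiction ``$(\eta\wedge\omega^n)(\mathcal{R},w,\ldots)\neq 0$'' is stated a bit informally; the precise version is that if $w\in T_xN$ with $\iota_w\omega=0$ and $\iota_w\eta=0$, then $\iota_w(\eta\wedge\omega^n)=0$, which forces $w=0$ since $\eta\wedge\omega^n$ is a volume form. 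With that clarification the argument is complete.
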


In Darboux coordinates, we have $\mathcal{R}=\nicefrac{\partial}{\partial t}.$

Using the isomorphisms $\flat$ and $\sharp$ one can associate with every function $f\in\mathcal{C}^\infty(M)$ these  following vector fields:
\begin{itemize}
\index{Gradient vector field}
    \item The \emph{gradient vector field}, ${\rm grad}\, f$  defined by
        \[
            {\rm grad}\, f= \sharp(df)\,,
        \]
         or, equivalently,
         \[
            \iota_{{\rm grad}\, f}\eta = \mathcal{R}(f),\quad \iota_{{\rm grad}\, f}\omega=df-\mathcal{R}(f)\eta\,.
         \]
\index{Hamiltonian!vector field on cosymplectic manifolds}
\index{Cosymplectic manifold!Hamiltonian vector field}
    \item The \emph{Hamiltonian vector field} $X_f$ defined by
        \[
            X_f = \sharp(df-\mathcal{R}(f)\eta)\,,
        \]
        or, equivalently,
        \[
            \iota_{X_f}\eta=0,\quad \iota_{X_f}\omega=df-\mathcal{R}(f)\eta\,.
        \]
    \index{Evolution vector field}
    \index{Cosymplectic manifold!Evolution vector field}
    \item The \emph{evolution vector field} $E_f$ defined by
        \[
            E_f=\mathcal{R}+ X_f\,,
        \]
        or, equivalently,
        \[
            \iota_{E_f}\eta=1,\quad \iota_{E_f}\omega = df- \mathcal{R}(f)\eta\,.
        \]
\end{itemize}

In Darboux coordinates we have
\begin{align*}
    {\rm grad}\,f = & \ds\frac{\partial f}{\partial t}\ds\frac{\partial}{\partial t} + \ds\frac{\partial f}{\partial p_ i}\ds\frac{\partial}{\partial q^i}- \ds\frac{\partial f}{\partial q^i}\ds\frac{\partial}{\partial p_i}\,,\\\noalign{\medskip}
    X_f =& \ds\frac{\partial f}{\partial p_ i}\ds\frac{\partial}{\partial q^i}- \ds\frac{\partial f}{\partial q^i}\ds\frac{\partial}{\partial p_i}\,,\\\noalign{\medskip}
    E_f =& \ds\frac{\partial}{\partial t} + \ds\frac{\partial f}{\partial p_ i}\ds\frac{\partial}{\partial q^i}- \ds\frac{\partial f}{\partial q^i}\ds\frac{\partial}{\partial p_i}\,.
\end{align*}

Consider now an integral curve $c(s)=(t(s),q^i(s), p_i(s))$ of the evolution vector field $E_f$: this implies that $c(s)$ should satisfy the following system of differential equations
\[
    \ds\frac{dt}{ds}=1,\quad \ds\frac{dq^i}{ds}=\ds\frac{\partial f}{\partial p_i},\quad \ds\frac{dp_i}{ds}= -\ds\frac{\partial f}{\partial q^i}\,.
\]

Since $\ds\frac{dt}{ds}=1$ implies $t(s)=s+ constant$, we deduce that
\[
    \ds\frac{dq^i}{dt} = \ds\frac{\partial f}{\partial p_i},\quad \ds\frac{dp_i}{dt}= -\ds\frac{\partial f}{\partial q^i}\,,
\]
since $t$ is an affine transformation of $s$.

As in symplectic hamiltonian mechanics, we can define a Poisson bracket. Indeed, if $f,g\in\mathcal{C}^\infty(M)$, then
\[
    \{f,q\}=\omega({\rm grad}\, f,{\rm grad}\, g)
\]
such that we obtain the usual expression for this Poisson bracket
\[
    \{f,g\}= \ds\frac{\partial f}{\partial p_i}\ds\frac{\partial g}{\partial q^i}- \ds\frac{\partial f}{\partial q^i}\ds\frac{\partial g}{\partial p_i}\,.
\]

Observe that a cosymplectic manifold is again a Poisson manifold when it is equipped with this bracket $\{\cdot,\cdot\}$.

For more details of cosymplectic manifolds see, for instance \cite{CLL-1992,CLM-91}.




    \backmatter

            \cleardoublepage
            \addcontentsline{toc}{chapter}{Bibliography}
            \markboth{{\rm Bibliography}}{{\rm Bibliography}}

            \markboth{{\rm Index}}{{\rm Index}}
            \printindex
\end{document}